\documentclass[11pt, letterpaper]{article}

\usepackage[margin=1in]{geometry}
\usepackage{lmodern}
\usepackage{graphicx}
\usepackage{multirow}
\usepackage{amsmath,amssymb,amsfonts,amsthm}
\usepackage{bm}
\usepackage{mathrsfs}
\usepackage{physics}
\usepackage{xcolor}
\usepackage{booktabs}
\usepackage{algorithm}
\usepackage{algorithmicx}
\usepackage[noend]{algpseudocode}
\usepackage{mathtools}
\usepackage{nicematrix}
\usepackage[sort&compress,numbers]{natbib}
\usepackage[inline]{enumitem}
\usepackage[normalem]{ulem}
\usepackage[hypertexnames=false,colorlinks=true,citecolor=black,linkcolor=black]{hyperref}
\usepackage[capitalize]{cleveref}
\usepackage[labelformat=simple]{subcaption}

\mathtoolsset{centercolon}

\usepackage{thmtools,thm-restate}

\declaretheoremstyle[
  shaded={bgcolor=gray!15},
]{thmsty}

\declaretheorem[
  name=Theorem,
  refname={Theorem,Theorems},
  style=thmsty,
  numberwithin=section,
]{theorem}
\declaretheorem[
  name=Corollary,
  refname={Corollary,Corollaries},
  style=thmsty,
  numberlike=theorem,
]{corollary}
\declaretheorem[
  name=Proposition,
  refname={Proposition,Propositions},
  style=thmsty,
  numberlike=theorem,
]{proposition}
\declaretheorem[
  name=Lemma,
  refname={Lemma,Lemmas},
  style=thmsty,
  numberlike=theorem,
]{lemma}
\declaretheorem[
  name=Definition,
  refname={Definition,Definitions},
  style=thmsty,
  numberlike=theorem,
]{definition}
\declaretheorem[
  name=Claim,
  refname={Claim,Claims},
  style=thmsty,
  numberlike=theorem,
]{claim}
\declaretheorem[
  name=Remark,
  refname={Remark,Remarks},
  style=thmsty,
  numberlike=theorem,
]{remark}

\crefname{algorithm}{Algorithm}{Algorithms}
\crefname{definition}{Definition}{Definitions}
\crefname{line}{Line}{Lines}
\crefname{lemma}{Lemma}{Lemmas}
\crefname{theorem}{Theorem}{Theorems}
\crefname{section}{Section}{Sections}
\crefname{appendix}{Appendix}{Appendices}
\crefname{table}{Table}{Tables}
\crefname{figure}{Figure}{Figures}
\crefname{equation}{}{}
\crefname{item}{}{}
\Crefname{equation}{Eq.}{Eqs.}
\crefname{assumption}{Assumption}{Assumptions}
\crefname{question}{Question}{Questions}

\allowdisplaybreaks[4]
\DeclareMathOperator{\Order}{O}

\DeclareMathOperator{\sgn}{sgn}
\DeclareMathOperator{\rmodd}{odd}
\DeclareMathOperator{\Pf}{Pf}
\DeclareMathOperator{\dist}{dist}
\DeclareMathOperator{\wt}{length}

\newcommand{\relmiddle}[1]{\mathrel{}\middle#1\mathrel{}}

\newcommand{\bfk}{\mathbf{K}}
\newcommand{\bfm}{\mathbf{M}}

\newcommand{\calk}{\mathcal{K}}
\newcommand{\call}{\mathcal{L}}
\newcommand{\calp}{\mathcal{P}}
\newcommand{\calq}{\mathcal{Q}}
\newcommand{\calr}{\mathcal{R}}
\newcommand{\calt}{\mathcal{T}}
\newcommand{\calw}{\mathcal{W}}

\newcommand{\symgrp}{\mathfrak{S}}

\newcommand{\symdiff}{\mathbin{\triangle}}

\newcommand{\skewsymmetric}{alternating}

\newcommand{\lv}{\ensuremath{\mathsf{level}}}
\newcommand{\setlower}{\ensuremath{\mathsf{AssignLowerLevel}}}
\newcommand{\setupper}{\ensuremath{\mathsf{AssignUpperLevel}}}
\newcommand{\grow}{\ensuremath{\mathsf{Grow}}}
\newcommand{\blossom}{\ensuremath{\mathsf{Blossom}}}
\newcommand{\union}{\ensuremath{\mathsf{Union}}}
\newcommand{\find}{\ensuremath{\mathsf{Find}}}
\newcommand{\vl}{\ensuremath{\mathsf{VL}}}

\newcommand{\kmin}{k_{\min}}
\newcommand{\kmax}{k_{\max}}

\newcommand{\rmb}{\mathrm{B}}
\newcommand{\rmn}{\mathrm{N}}

\newcommand{\st}{\calt}

\algrenewcommand\textproc{\textsf}
\algnewcommand{\Continue}{\textbf{continue}}

\NiceMatrixOptions{code-for-first-row = \scriptstyle, code-for-first-col = \scriptstyle, cell-space-limits = 1pt}

\title{A Shortest Augmenting Path Algorithm for Linear Matroid Parity}

\author{Kou Hamada\thanks{Department of Mathematical Informatics, The University of Tokyo (Research Fellow of Japan Society for the Promotion of Science), Tokyo 113-8656, Japan. kouhamada2002@gmail.com} \and Satoru Iwata\thanks{Department of Mathematical Informatics, The University of Tokyo, Tokyo 113-8656, Japan, and Institute for Chemical Reaction Design and Discovery, Hokkaido University, Sapporo, Hokkaido 001-0021, Japan. iwata@mist.i.u-tokyo.ac.jp}}
\date{October 15, 2026}

\begin{document}
\maketitle

\begin{abstract}
  The matroid parity problem serves as a fundamental framework that generalizes both graph matching and matroid intersection. Although the general version is intractable, Lov\'{a}sz (1981) developed a polynomial-time algorithm for the linear matroid parity problem, assuming the availability of matrix representations. Subsequently, Gabow and Stallmann (1986) presented an augmenting path algorithm, which has long been recognized as one of the fastest deterministic algorithms. 
  Since the use of shortest augmenting paths successfully improved algorithms for graph matching (Micali \& Vazirani, 1980) and linear matroid intersection (Cunningham, 1986), extending these techniques to linear matroid parity appears to be a natural progression. Such an approach is expected to reduce the cost of matrix updates and yield a faster algorithm, yet its development has remained elusive for four decades.

  In this paper, we present the first shortest augmenting path algorithm for linear matroid parity. Our approach synthesizes the augmenting path algorithm of Gabow and Stallmann with the synchronized blossom formation of the Micali--Vazirani framework, whose correctness was recently proven by Vazirani (2024). Our key technical contributions are threefold: (i) a linear-algebraic argument that bounds the lengths of shortest augmenting paths for linear matroid parity, which generalizes Cunningham's bound for linear matroid intersection; (ii) an a priori characterization of shortest search paths through lower bounds on their lengths; and (iii) an extension of the structural properties for graph matching established by Izumi, Kitamura, and Yamaguchi~(2025) to the linear matroid parity setting.
 
  Our algorithm deterministically solves the linear matroid parity problem in $\Order(n r^2 \log r)$ time, where $n$ is the size of the ground set and $r$ is the rank of the matroid. By incorporating fast matrix multiplication, this complexity can be further reduced to $\Order(n r^2)$. These results improve upon the long-standing deterministic bounds of $\Order(n r^3)$ and $\Order(n r^\omega)$.
\end{abstract}

\pagenumbering{gobble}
\clearpage

\tableofcontents

\clearpage
\pagenumbering{arabic}

\section{Introduction}

\subsection{Background}
\label{sec:background}

The \textit{matroid parity} problem, introduced by Lawler~\cite{Lawler1976CombinatorialOptimization}, serves as a unifying framework for graph matching and matroid intersection. Although this problem necessitates an exponential number of independence oracle calls in general~\cite{Lovasz1981MatroidMatching, JensenK1982ComplexityMatroid}, Lov\'{a}sz~\cite{Lovasz1981MatroidMatching} developed the first polynomial-time algorithm for \textit{linear matroid parity}, in which a matrix representation of the matroid is available.

The linear matroid parity problem is formulated as follows. Given an $r \times n$ matrix $A$ of full row rank, where $n$ is even and the columns are partitioned into pairs, find a maximum \textit{independent parity set}, which is defined as a set of linearly independent columns that is a union of such pairs.

As a broad generalization of fundamental combinatorial problems, this problem finds applications in diverse areas, including path packing problems~\cite{Lovasz1980MatroidMatching,Schrijver2003CombinatorialOptimization,TanigawaY2016PackingNonzero,Yamaguchi2016Packing$A$paths}, the rigidity of planar structures~\cite{Lovasz1980MatroidMatching}, maximum-genus cellular embedding~\cite{FurstGM1988FindingMaximumgenus}, and the unique solvability of electric networks~\cite{Milic1974GeneralPassive,Recski2019HybridDescription}.

Currently, the fastest known deterministic algorithm is the augmenting path algorithm of Gabow and Stallmann~\cite{GabowS1986AugmentingPath}, which runs in $\Order(n r^{\omega})$ time, where $\omega \in [2, 3)$ is the exponent of fast matrix multiplication originated by Strassen~\cite{Strassen1969GaussianElimination} and improved by subsequent studies. Analogous to Edmonds' blossom algorithm~\cite{Edmonds1965MaximumMatching} for graph matching, the Gabow--Stallmann algorithm utilizes blossoms along with a linear-algebraic technique to identify augmenting paths.

Meanwhile, in the 1980s, significant progress was made in graph matching and linear matroid intersection through the study of shortest augmenting paths. Micali and Vazirani~\cite{MicaliV1980$Osqrt|V||E|$Algorithm} leveraged this approach to develop a faster graph matching algorithm. Similarly, Cunningham~\cite{Cunningham1986ImprovedBounds} refined the time complexity for linear matroid intersection by bounding the lengths of shortest augmenting paths. Although Cunningham's bound applies to general matroids, its algorithmic benefits are primarily realized in the linear setting.

Despite these advancements, the linear matroid parity problem has resisted attempts to leverage shortest augmenting paths for four decades. This absence can be attributed to several fundamental bottlenecks.
\begin{enumerate}
  \item While extending Cunningham's bound is essential for guaranteeing the efficiency of a shortest augmenting path algorithm for linear matroid parity, its original graph-theoretic proof appears inapplicable to this setting; such an approach fails to exploit the matrix representation, which is integral to the tractability of linear matroid parity.
  \item Augmenting paths in the Gabow--Stallmann algorithm are sensitive to the search order. Unlike graph matching, skipping a valid concatenation of the search paths can trigger numerical cancellations that invalidate subsequent paths, which precludes a straightforward a priori definition of ``shortest search paths.''
  \item The correctness proof of the Micali--Vazirani algorithm remained open for decades. Their original proposal~\cite{MicaliV1980$Osqrt|V||E|$Algorithm} provided an algorithm description without a proof of correctness. A definitive proof has emerged only recently through successive attempts~\cite{Vazirani1994TheoryAlternating,Vazirani2013SimplificationMV,Vazirani2024TheoryAlternating}. Furthermore, the resulting proof requires a substantial inductive argument, mainly due to a ``chicken-and-egg'' problem: the key to defining blossoms is the identification of their base vertices, yet these vertices must be identified through the structural properties of the blossoms themselves.\footnote{Even if one focuses on finding a single shortest augmenting path in graph matching, which suffices for linear matroid parity, this inherent challenge remains. See \cite{IzumiKY2025ForgettingAlternation,Vazirani2024TheoryAlternating} for details.}
\end{enumerate}

However, recent breakthroughs have laid a foundation for addressing these issues. Iwata and Kobayashi~\cite{IwataK2022WeightedLinear} provided a combinatorial deterministic algorithm for the weighted linear matroid parity problem, reformulating the linear-algebraic technique of Gabow and Stallmann. More recently, Vazirani~\cite{Vazirani2024TheoryAlternating} provided the correctness proof of the Micali--Vazirani algorithm through a substantial inductive argument. Building on this work, Izumi, Kitamura, and Yamaguchi~\cite{IzumiKY2025ForgettingAlternation} proposed a new framework for graph matching that yields a shortest augmenting path algorithm with a drastically simplified correctness proof, at the cost of a logarithmic factor in time complexity. The framework bypasses the complexity of blossoms via \textit{alternating base trees} and employs \textit{incoming edges} to exactly characterize the edges necessary for shortest search paths. These advancements enable us to bridge the gap between the Gabow--Stallmann and Micali--Vazirani paradigms.

\subsection{Techniques and Contributions}

In this paper, we provide the first shortest augmenting path algorithm for the linear matroid parity problem. Our approach synthesizes the Gabow--Stallmann algorithm with the synchronized blossom formation of the Micali--Vazirani framework, while incorporating the recent linear-algebraic reformulation by \cite{IwataK2022WeightedLinear}. To establish its efficiency and correctness, we address the aforementioned challenges as follows.
\begin{enumerate}
  \item Building on the algebraic formulation of linear matroid parity~\cite{GeelenI2005MatroidMatching}, we provide a novel linear-algebraic approach to bounding the shortest augmenting path length. Our bound generalizes Cunningham's bound in the linear matroid intersection case.
  \item Leveraging the linear-algebraic technique of Iwata and Kobayashi~\cite{IwataK2022WeightedLinear}, we identify an a priori lower bound on search path lengths. We demonstrate that this bound is tight for the proposed algorithm, thereby characterizing ``shortest search paths'' without explicitly enumerating all candidate search paths.
  \item We extend the structural properties of incoming edges introduced by Izumi, Kitamura, and Yamaguchi~\cite{IzumiKY2025ForgettingAlternation} to linear matroid parity, proving that the incoming edges examined by the algorithm yield the shortest search paths and augmenting paths.
\end{enumerate}

Consequently, we improve the best known deterministic complexity for solving the unweighted linear matroid parity problem from $\Order(n r^3)$ or $\Order(n r^\omega)$ by Gabow and Stallmann~\cite{GabowS1986AugmentingPath} to $\Order(n r^2 \log r)$ or $\Order(n r^2)$, depending on whether fast matrix multiplication is employed or not. This improvement significantly narrows the performance gap with the current best randomized algorithm of Cheung, Lau, and Leung~\cite{CheungLL2014AlgebraicAlgorithms}, which runs in $\Order(n r^2)$ or $\Order(n r^{\omega - 1})$ time. Notably, the former of our time bounds aligns with Cunningham's time bound for linear matroid intersection.

The following theorem formally states the time complexity, where the matroid is represented by an $r \times n$ matrix over a field $\bfk$. Following \cite{GabowS1986AugmentingPath,OrlinV1990SolvingLinear}, we adopt the uniform cost model for $\bfk$. This model assumes that each arithmetic operation on $\bfk$ takes constant time, which holds for any fixed finite field. Although the case $\bfk = \mathbb{Q}$ does not generally satisfy this assumption, it can be reduced to polynomially many instances over finite fields~\cite[Theorem~11.3]{IwataK2022WeightedLinear}.

\begin{restatable}{theorem}{ourtimebound}
  \label{thm:our_bound}
  Under the uniform cost model for $\bfk$, the unweighted linear matroid parity problem can be solved deterministically in $\Order(n r^2 \log r)$ time. This complexity is reduced to $\Order(n r^2)$ if fast matrix multiplication is employed.
\end{restatable}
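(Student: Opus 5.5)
The proof follows the Cunningham / Hopcroft--Karp template. We run the augmenting path algorithm in phases. In each phase all augmenting paths of the current shortest length are found and applied, and the cost of matrix updates is amortized against the number of phases.

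\textbf{Step 1: bound the shortest augmenting path length.} Let $I$ be the current independent parity set with $|I| = 2k$, and let $2k^*$ be the optimum. Using the algebraic formulation of Geelen and Iwata, we represent $I$ by a skew-symmetric (alternating) matrix whose Pfaffian-type rank reflects the parity optimum. We then show that a shortest augmenting path has length $\Order(r/(k^*-k))$. This generalizes Cunningham's bound $\Order(r/(k^*-k))$ for matroid intersection. The intended argument is to take a maximum solution, decompose the symmetric difference linear-algebraically into $k^*-k$ disjoint augmenting structures, and observe that these together use $\Order(r)$ elements. Hence one of them is short, and the shortest augmenting path length is bounded by the length of that structure.

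\textbf{Step 2: count the phases.} A standard consequence of Step~1 is the following. Once the shortest augmenting path length exceeds $\sqrt{r}$, only $\Order(\sqrt{r})$ augmentations remain. In the first part, the length strictly increases between phases.

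\textbf{Step 3: bound the cost of a single augmentation.} Rather than counting phases, I would use the finer amortization. An augmenting path of length $\ell$ is found by a search in $\Order(n r)$ time per phase, amortized. Updating the pivoted matrix along a path of length $\ell$ costs $\Order(n r \ell)$ naively, or less with fast matrix multiplication. Summing $\ell \approx r/(k^*-k)$ over $k$ gives
\[
\sum_{j=1}^{r} \frac{r}{j} \cdot \Order(n r) = \Order(n r^2 \log r).
\]
This is exactly Cunningham's accounting. With fast matrix multiplication, the batched rank-$\ell$ update costs $\Order(n r \ell^{\omega-2})$. The resulting sum $\sum_j n r (r/j)^{\omega-2}$ is $\Order(n r^2)$, since $\sum_j j^{2-\omega} = \Order(r^{3-\omega})$ for $\omega<3$.

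\textbf{Step 4: per-augmentation search cost (the main obstacle).} We must show that a shortest augmenting path can be found in $\Order(n r)$ time, i.e.\ linear in the size of the matrix, with the $\log$ factor absorbed as above. For this I would run a Gabow--Stallmann style search with levels assigned in BFS order, forming blossoms in the synchronized Micali--Vazirani manner. Correctness relies on two ingredients:
\begin{itemize}
\item the a priori lower bounds on search path lengths, which come from the Iwata--Kobayashi reformulation, and their tightness for the algorithm;
\item the extension of the Izumi--Kitamura--Yamaguchi incoming-edge properties. These ensure that the examined incoming edges produce genuinely shortest paths, with no numerical cancellation invalidating them.
\end{itemize}
The blossom bookkeeping uses union--find. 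The $\Order(nr)$ elimination work per search, namely expanding the fundamental cocircuits of each of the $\Order(r)$ basis elements, dominates the search cost.

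I expect the delicate part to be proving that this search returns a shortest path despite order sensitivity. I would first try an induction on levels, showing that every element receives exactly its lower-bound level. The complexity statement then follows by combining Steps~1--4.
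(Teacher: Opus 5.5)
Your Step~3 is exactly the paper's proof of the theorem: one $\Order(nr)$ search and one pivot of cost $\Order(nr\ell)$ (or $\Order(nr\ell^{\omega-2})$) per augmentation, with $\ell=\Order(\nu/(\nu-|J|))$ summed harmonically or by an integral. The phase framing and Step~2 are unnecessary, and the claimed increase of shortest lengths between phases is not established for parity, so drop them. The genuine gaps are in the two inputs. In Step~1 you assert a linear-algebraic decomposition into $k^*-k$ disjoint ``augmenting structures,'' but you give no mechanism ensuring that each piece $P$ has $C[P]$ nonsingular, and that is the entire difficulty. The natural route is to pick $\sigma$ with a nonzero term in $\Pf\Phi$ and read off the components of $M_\sigma\uplus L_\Phi$. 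This yields disjoint alternating paths for which $G[P]$ has a perfect matching, but $\det C[P]$ can still vanish by cancellation, so no bound on augmenting path length follows. The missing idea is to group the terms of $\Pf\Phi$ by their partition into connected components, and to choose a partition $\calp$ that is minimal under refinement among those with nonzero group sum. Since all strictly finer groups sum to zero, the sum over all refinements of $\calp$ is nonzero. That sum equals the Pfaffian of the block-diagonal matrix with blocks $\Phi[X]$, $X\in\calp$, so every block is nonsingular. For a path block (disjoint from $J'$) this means exactly that $C[P]$ is nonsingular. You also need the reduction to the case where a parity base exists: take a maximum $J^*$ whose span contains $J$ (via an exchange argument), then contract the singletons. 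Otherwise $\Phi$ is singular and there is no nonzero Pfaffian to expand.

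Step~4 names the right ingredients but supplies neither the correctness argument nor a cost analysis that identifies the real bottleneck. Scanning $C$ is indeed $\Order(nr)$. What threatens the bound is the pivot performed at every dummy-line insertion during blossom formation. There can be $\Theta(n)$ of these in one search, and done naively each is a rank-one update costing $\Theta(nr)$, or more as $C^*$ grows. The paper keeps $C^*$ of size $r\times n$ by deleting tips and retired vertices right after each pivot. It uses \cref{lem:at_most_two} to show that each pivot touches only $\Order(1)$ rows or columns. It then charges $\Order(n)$ to each of the $\Order(r)$ basic blossoms and $\Order(r)$ to each of the $\Order(n)$ nonbasic ones (\cref{lem:blm_count,lem:pivoting_time}). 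It also avoids the inverse-Ackermann overhead of plain union--find by running the disjoint-set structure only on the $\Order(r)$ lines inside $B$. On correctness, the induction you propose (every labeled $x$ receives $\lv(x)=\dist(\bar{x})$) needs several ingredients. You must define $\dist$ via coherent nonsingular paths and prove it invariant under dummy-line insertion. You need a Laplace-expansion argument for the $\grow$ trigger. You also need a second refinement-minimal-partition argument (\cref{lem:yz_decomp}) that produces an incoming edge whose endpoints are labeled early enough. As written, the proposal states the goal of this induction without any of these means.
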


The time bounds can be further refined by using the maximum size $\nu$ of an independent parity set, which is at most $r$. In particular, the complexity of the randomized algorithm is improved to $\Order(nr + n \nu^{\omega-1})$ by Cheung, Kwok, and Lau~\cite{CheungKL2013FastMatrix}. The time bounds of our algorithm are refined as follows.
\begin{restatable}{remark}{ourrefinedtimebound}
  \label{rem:refined_bound}
  In terms of the maximum cardinality $\nu$ of an independent parity set, the complexity bounds in \cref{thm:our_bound} can be refined to $\Order(n r \nu \log \nu)$ and $\Order(n r \nu)$, respectively.
\end{restatable}

\subsection{Related Work}

\paragraph{Deterministic Algorithms for Linear Matroid Parity.}

Lov\'{a}sz discovered a min-max formula \cite{Lovasz1980SelectingIndependent} and presented the first polynomial-time algorithm \cite{Lovasz1981MatroidMatching}, albeit with a high degree, for linear matroid parity. Gabow and Stallmann~\cite{GabowS1986AugmentingPath} provided an augmenting path algorithm, which runs in $\Order(n r^\omega)$ time. Subsequently, Orlin and Vande Vate~\cite{OrlinV1990SolvingLinear} designed an algorithm that solves the linear matroid parity problem via a reduction to a sequence of matroid intersection problems. Orlin~\cite{Orlin2008FastSimpler} later improved this algorithm to run in $\Order(n r^3)$ time, matching the running time bound of the Gabow--Stallmann algorithm when fast matrix multiplication is not employed. 
Regarding the weighted version of this problem, Iwata and Kobayashi~\cite{IwataK2022WeightedLinear} presented a deterministic polynomial-time algorithm based on the augmenting path approach of Gabow and Stallmann, establishing a primal-dual relation for this problem. Independently, Pap~\cite{Pap2013WeightedLinear} obtained a deterministic polynomial-time algorithm based on a different approach.

\paragraph{Randomized Algorithms for Linear Matroid Parity.}

The influential work of 
Lov\'{a}sz~\cite{Lovasz1979DeterminantsMatchings} characterized the linear matroid parity problem via the rank of an \skewsymmetric{} matrix containing indeterminates, thereby obtaining a randomized algorithm based on random substitution. Building upon this approach, Cheung, Lau, and Leung~\cite{CheungLL2014AlgebraicAlgorithms} developed an $\Order(n r^{\omega - 1})$-time algorithm. Furthermore, Cheung, Kwok, and Lau~\cite{CheungKL2013FastMatrix} combined this algorithm with their fast rank computation, leading to a complexity of $\Order(n r + n \nu^{\omega - 1})$. This randomized framework initiated by Lov\'{a}sz has also been extended to the weighted linear matroid parity problem~\cite{CameriniGM1992RandomPseudopolynomial,CheungLL2014AlgebraicAlgorithms}, although the resulting time complexity is pseudopolynomial.

\paragraph{Approximation Algorithms for General Matroid Parity.}

Since solving the general matroid parity problem demands an exponential number of oracle calls~\cite{JensenK1982ComplexityMatroid,Lovasz1980MatroidMatching}, researchers have focused on developing approximation algorithms. A simple greedy algorithm provides an approximate solution of factor $2$, where the approximation ratio is defined as the optimal value divided by the approximate value. Fujito~\cite{Fujito1993$23$approximation} improved this ratio to $3/2$ by refining the greedy solution via a local search that replaces one element with two. Subsequently, Lee, Sviridenko, and Vondr{\'a}k~\cite{LeeSV2013MatroidMatching} demonstrated that a local search algorithm with a sufficiently large neighborhood yields a polynomial-time approximation scheme (PTAS).

\subsection{Organization}

The rest of the paper is organized as follows. \cref{chap:preliminaries} introduces necessary preliminaries, including notation and problem formulation.
\cref{chap:sap_length} provides a linear-algebraic proof for our bound on the shortest augmenting path lengths, which generalizes Cunningham's bound for linear matroid intersection.  
\cref{chap:sap_algo_lmp} presents a shortest augmenting path algorithm for linear matroid parity, followed by its correctness proof. \cref{sec:overall_time_bound} analyzes the overall time complexity of the algorithm. Finally, \cref{chap:conclude} summarizes our results and offers possible future directions.

\section{Augmenting Path Algorithm for Linear Matroid Parity}
\label{chap:preliminaries}

In this section, we formalize the linear matroid parity problem and review the Gabow--Stallmann algorithm, an augmenting path algorithm for linear matroid parity~\cite{GabowS1986AugmentingPath}.

\subsection{Notation and Problem Formulation}

\paragraph{Linear Matroids.} Let $A$ be an $r \times n$ matrix of full row rank over a field $\bfk$. Throughout this paper, we assume the uniform cost model, where arithmetic operations over $\bfk$ are performed in constant time. The \textit{linear matroid} $\bfm(A)$ represented by $A$ has the column vectors of $A$ as its \textit{ground set} $V$, and its \textit{independent sets} are the subsets of columns that are linearly independent. A \textit{base} is an independent set of maximum cardinality $r$. Note that elementary row operations preserve the structure of linear matroids.

\paragraph{Linear Matroid Parity.} Let $\bfm(A)$ be a linear matroid represented by an $r \times n$ matrix $A$ of full row rank over a field $\bfk$, where $n$ is even. The ground set $V$ is partitioned into a collection $L$ of pairs, referred to as \textit{lines}. For each $v \in V$, the paired column is called its \textit{mate} and is denoted by $\bar{v}$. A set is termed a \textit{parity set} if it is a union of lines. The linear matroid parity problem seeks an independent parity set of maximum cardinality. A parity set that is also a base of $\bfm(A)$ is called a \textit{parity base}.

\subsection{Linear-Algebraic Framework}
\label{sec:linear_algebraic_setup_technical_outline}

Conventional augmenting path approaches typically maintain a solution $J$, which is initialized to $\emptyset$, and iteratively expand it via the symmetric difference $J \gets J \symdiff P$ with an augmenting path $P$. 
This section describes a linear-algebraic setup for linear matroid parity that uses elementary row operations to transform the matrix $A$ into a representation suitable for augmenting path searches. 

To this end, the Gabow--Stallmann algorithm~\cite{GabowS1986AugmentingPath} extends the current independent parity set $J$ with a set of auxiliary columns $S$, termed \textit{singletons}, such that $B = J \cup S$ forms a base of the resulting matrix $\bar{A}$ with column set $V \cup S$. Within this framework, a set $P$ is defined as an augmenting path if $B \gets B \symdiff P$ remains a base while the number of lines in $B$ increases by one.

For the extended matrix $\bar{A}$ and a subset of its columns $X \subseteq V \cup S$, let $\bar{A}[X]$ denote the submatrix of $\bar{A}$ consisting of the columns in $X$. By applying row operations to $\bar{A}$ so that the submatrix indexed by $B$ forms the identity matrix, we obtain $\qty[I \mid C]$, where  $C \coloneqq \bar{A}[B]^{-1} \bar{A}[V \setminus J]$ is referred to as the \textit{fundamental cocircuit matrix} with respect to the base $B$. The entries of $C$ are naturally indexed by $(u, v) \in B \times (V \setminus J)$, and $C_{uv}$ is nonzero if and only if $B - u + v$ is a base. More generally, $B \symdiff P$ remains a base if and only if the submatrix indexed by the row set $B \cap P$ and the column set $(V \setminus J) \cap P$, denoted by $C[P]$, is nonsingular; this equivalence follows since the eliminated matrix $\bar{A}[B]^{-1} \bar{A}$ has the following form, where $\alpha = C[P]$ holds:
\vspace{0.9em}
\begin{equation}
\label{eq:pivoting_before}
\bar{A}[B]^{-1} \bar{A} =
\begin{bNiceArray}{cccc}[first-col,first-row,margin]
  & B \cap P & B \setminus P & (V \setminus J) \cap P & (V \setminus J) \setminus P \\
  B \cap P      &        I &             O & \alpha & \beta  \\
  B \setminus P &        O &             I & \gamma & \delta
  \CodeAfter
  \OverBrace[yshift=1pt]{0-2}{0-3}{\raisebox{-4pt}{$\scriptstyle B \symdiff P$}}
\end{bNiceArray}.
\end{equation}
This leads to the following definition of augmenting paths for linear matroid parity.
\footnote{This definition does not require an augmenting path $P$ to be a path in the auxiliary graph $G$ constructed by the augmenting path search (defined in \cref{sec:gs_overview}). In general, $P$ may decompose into an alternating path between singletons and additional alternating cycles.}
\begin{definition}
  \label{def:lmp_aug}
  Let $J$ be an independent parity set and $S$ be a set of singletons such that $B = J \cup S$ forms a base. A set $P \subseteq V \cup S$ is defined to be an \textit{augmenting path} if the following three conditions hold:
  \begin{enumerate*}[label=(\roman*)]
    \item $P$ contains exactly two singletons, i.e., $\abs{P \cap S} = 2$;
    \item $P \setminus S$ is a union of lines; and
    \item the submatrix $C[P]$ is nonsingular.
  \end{enumerate*}
  Moreover, we define the length of an augmenting path $P$ as the number of constituent lines, i.e., $\frac{\abs{P \setminus S}}{2} = \frac{\abs{P}}{2} - 1$.
\end{definition}

\begin{remark}
    \label{rem:ap_odd_length}
    The length of an augmenting path is always odd. Indeed, if $P \cap J$ consists of $k$ lines, $P \setminus B$ must consist of $k + 1$ lines so that $C[P]$ is square, resulting in a total length of $2k + 1$.
\end{remark}

Next, we provide the update formula of the fundamental cocircuit matrix under the base exchange $B \gets B \symdiff P$. As a result of row operations, we obtain $\bar{A}[B \symdiff P]^{-1} \bar{A}$ in the form of
\begin{equation}
\bar{A}[B \symdiff P]^{-1} \bar{A} =
  \begin{bNiceArray}{cccc}[first-col,first-row,margin]
    & B \cap P & B \setminus P & (V \setminus J) \cap P & (V \setminus J) \setminus P \\
    (V \setminus J) \cap P & \alpha^{-1} &             O &      I & \alpha^{-1} \beta \\
    B \setminus P & - \gamma \alpha^{-1} &             I &      O & \delta - \gamma \alpha^{-1} \beta
  \end{bNiceArray}.
\end{equation}
This gives the following update formula of the fundamental cocircuit matrix, which we refer to as \textit{pivoting around} $P$:
\begin{equation}
  \label{eq:pivoting}
  C =
  \begin{bNiceArray}{cc}[first-col,first-row,margin]
    & (V \setminus J) \cap P & (V \setminus J) \setminus P \\
    B \cap P      & \alpha & \beta  \\
    B \setminus P & \gamma & \delta
  \end{bNiceArray}
  \quad \xrightarrow{\text{Pivoting around $P$}} \quad
  \begin{bNiceArray}{cc}[first-col,first-row,margin]
    & B \cap P & (V \setminus J) \setminus P \\
    (V \setminus J) \cap P & \alpha^{-1} & \alpha^{-1} \beta \\
    B \setminus P & - \gamma \alpha^{-1} & \delta - \gamma \alpha^{-1} \beta
  \end{bNiceArray}.
\end{equation}
This operation is also utilized for blossom formation in the weighted linear matroid parity algorithm of \cite{IwataK2022WeightedLinear}. The update requires $\Order(n r \ell)$ time, where $\ell \coloneqq \abs{P}$. This complexity can be further improved by using fast matrix multiplication, initiated by Strassen~\cite{Strassen1969GaussianElimination}. Specifically, the resulting complexity is $\Order(n r \ell^{\omega - 2})$, where $\omega \in [2, 3)$ denotes the exponent of matrix multiplication, with the best published value being $\omega \approx 2.37$ due to \cite{AlmanDWX2025MoreAsymmetry}.

Finally, while singletons facilitate the characterization of augmenting paths, it is unnecessary to explicitly maintain these singletons as extended columns when running the algorithm. Assume that $S$ initially constitutes a unit matrix attached to $A$. In this case, the matrix $A$ itself serves as the fundamental cocircuit matrix $C$. Also, the subsequent updates can be performed by applying Eq.~\eqref{eq:pivoting} and discarding singletons that are excluded from the base.

\subsection{Search Procedure}
\label{sec:gs_overview}

To find an augmenting path that satisfies the conditions in \cref{def:lmp_aug}, the Gabow--Stallmann algorithm utilizes an auxiliary graph $G$ derived from the fundamental cocircuit matrix $C$. The graph $G$ is defined as a bipartite undirected graph on the vertex set $V \cup S$, partitioned into $B = J \cup S$ and $V \setminus J$. An edge exists between $u \in B$ and $v \in V \setminus J$ if and only if $C_{uv} \neq 0$.

The search for an augmenting path traverses the auxiliary graph $G$ in a similar fashion to breadth-first search by maintaining the following two primary structures.
\begin{itemize}
  \item \textbf{Search Paths:} The algorithm explores the graph $G$ to establish search paths starting from singletons. Once a search path $P(v)$ is found for a vertex $v$, the path remains fixed during the search, and $v$ is thereafter referred to as \textit{labeled}. This path $P(v)$ is an odd-length sequence of vertices $v_1 v_2 \dots v_{2k+1}$, where $v_1$ is a singleton and $v_{2k + 1} = v$. The sequence alternates such that $\qty{v_{2i - 1}, v_{2i}}$ forms an edge and $\qty{v_{2i}, v_{2i + 1}}$ represents a line for each $i = 1, \dots, k$.
  \item \textbf{Blossoms:} When the search encounters certain cycles, the algorithm identifies a blossom, a union of lines that effectively functions as a single line. These blossoms are constructed by merging lines and existing outermost blossoms, thereby forming a laminar family.
\end{itemize}

The search process generally consists of two main operations: $\grow$ and $\blossom$. Intuitively, $\grow$ extends the search forest by attaching half-labeled lines, whereas $\blossom$ forms a blossom on the search forest and labels the remaining halves of the lines within the blossom.
\begin{itemize}
  \item \textbf{Grow:} Given a labeled vertex $v$, the $\grow(v)$ operation identifies an adjacent vertex $u$ such that neither $u$ nor $\bar{u}$ is labeled. The operation then assigns a search path to its mate $\bar{u}$, while $u$ remains unlabeled at this stage. When both $u$ and $\bar{u}$ are adjacent to $v$, a special process forming a blossom is triggered, the details of which are omitted here.
  \item \textbf{Blossom:} Suppose that two labeled vertices $u$ and $v$ form an edge, where $P(u)$ and $P(v)$ stem from a common singleton. The $\blossom(u, v)$ operation first backtracks along $P(u)$ and $P(v)$ to their intersection. Each unlabeled vertex encountered during this traversal is assigned a search path formed by extending $P(u)$ or $P(v)$ with the traced segment of the opposite path. Afterward, the process merges all traversed blossoms into a single new blossom. To guarantee the nonsingularity of $C[P]$ for the final output path $P$, a new vertex is inserted at the intersection point via a linear-algebraic technique, the details of which are omitted here.
\end{itemize}

Utilizing these two operations, the search procedure first labels all the singletons and then iterates through labeled vertices in the order of their assignment. For each such vertex $u$, the algorithm executes $\grow(u)$ and subsequently performs $\blossom(u, v)$ for each adjacent labeled vertex $v$ also in the order of assignment, as long as $P(u)$ and $P(v)$ share a common singleton. If $P(u)$ and $P(v)$ stem from different singletons, the search terminates by returning their concatenation as an augmenting path. 

A rigorous analysis of the processing order, combined with the linear-algebraic treatment for blossom formation, ensures that this identified path is indeed an augmenting path. Furthermore, if no augmenting path is identified at the end of the search procedure, one can obtain a dual certificate to show the optimality of the independent parity set, which provides an alternative proof of the min-max formula of Lov\'asz~\cite{Lovasz1980SelectingIndependent}.

\section{Analysis of Shortest Augmenting Path Lengths}
\label{chap:sap_length}

\subsection{Motivation and Overview}

In the augmenting path algorithm for linear matroid intersection, the pivoting operation along an augmenting path, which incurs a cost proportional to the path length $\ell$, dominates the overall time complexity. Consequently, as demonstrated by Cunningham~\cite{Cunningham1986ImprovedBounds}, the total time complexity can be reduced by deriving a tighter bound on the lengths of shortest augmenting paths. 

Since pivoting is also the computational bottleneck in the Gabow--Stallmann algorithm, establishing an equivalent bound is vital for the shortest augmenting path approach to linear matroid parity. To this end, we prove the following lemma, generalizing Cunningham's result for the linear matroid intersection case to the parity setting.
\begin{restatable}{lemma}{paritybound}
  \label{lem:matroid_parity_number_of_augmenting_paths}
  Let $\nu$ denote the maximum size of an independent parity set. For any independent parity set $J$, there exist $\qty(\nu - \abs{J}) / 2$ disjoint augmenting paths.
\end{restatable}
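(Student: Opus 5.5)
The plan is to compare the current base $B = J \cup S$ with a base built from a maximum independent parity set, and then decompose their symmetric difference algebraically.

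\textbf{Setup.} Fix a maximum independent parity set $J^*$ with $\abs{J^*} = \nu$. Since $B = J \cup S$ spans the column space of $\bar{A}$, $J^*$ can be extended by singletons to a base $B^* = J^* \cup T$ with $T \subseteq S$ and $\abs{T} = r - \nu$. Put $X \coloneqq B \symdiff B^*$. Then $X \cap S = S \setminus T$, so $\abs{X \cap S} = (r - \abs{J}) - (r - \nu) = \nu - \abs{J}$. Because $J$ and $J^*$ are both unions of lines, $X \setminus S = J \symdiff J^*$ is a union of lines. By the base-exchange characterization in Eq.~\eqref{eq:pivoting_before}, $C[X]$ is nonsingular. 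Thus $X$ meets conditions (ii) and (iii) of \cref{def:lmp_aug}, but it carries $\nu - \abs{J}$ singletons instead of two. The goal is to split $X$ into $(\nu - \abs{J})/2$ disjoint augmenting paths; any leftover part free of singletons is discarded.

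\textbf{Combinatorial skeleton.} Since $\det C[X] \neq 0$, some permutation term is nonzero, giving a perfect matching $M$ between the rows $B \cap X$ and the columns $X \setminus B$ with $C_{uv} \neq 0$ on every $uv \in M$. Overlay $M$ with the lines contained in $X$. Every non-singleton then has degree two and every singleton has degree one. Hence the components are alternating paths joining pairs of singletons, which gives exactly $(\nu - \abs{J})/2$ paths, together with alternating cycles. Each component $Q$ is a union of lines plus its singletons and is "square", that is, $\abs{Q \cap B} = \abs{Q \setminus B}$. What remains is to guarantee that $C[Q]$ is nonsingular for each path component $Q$, or at least to choose a partition for which this holds.

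\textbf{Algebraic step (main obstacle).} A single matching term does not control $\det C[Q]$, since other matchings inside $Q$ may cancel it. I would handle this with the alternating-matrix formulation of Geelen and Iwata. Form
\[
  \Phi = \begin{bmatrix} O & C[X] \\ -C[X]^\top & D(t) \end{bmatrix},
\]
where $D(t)$ is block diagonal with a $2\times 2$ block $\begin{bmatrix} 0 & t_\ell \\ -t_\ell & 0 \end{bmatrix}$ for each line $\ell \subseteq X \cap B$, after transposing appropriately so that all of $X$ is indexed symmetrically. Expand $\Pf \Phi$ over perfect matchings of its support. Terms are grouped by the family $\calq$ of vertex sets of the components formed by the line edges chosen (those carrying a $t_\ell$) together with the matrix edges. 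Because a Pfaffian factorizes over a block partition, each group contributes $\prod_{Q \in \calq} \pm\det C[Q]$ times a monomial in $t$. Distinct families $\calq$ yield distinct monomials, because the set of lines used determines which variables appear, and nonsingularity of $C[X]$ forces a nonzero coefficient for the full-line monomial. Comparing the coefficient of that monomial should give
\[
  \det C[X] = \sum_{\calq} \pm \prod_{Q \in \calq} \det C[Q],
\]
where the sum ranges over partitions of $X$ into singleton-to-singleton paths and singleton-free parts, each a union of lines plus its singletons. Since the left side is nonzero, some partition $\calq$ has all $\det C[Q] \neq 0$. Its path parts, $(\nu - \abs{J})/2$ in number, are then disjoint augmenting paths.

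The delicate point is justifying this factorized expansion, in particular that cross terms linking different components cannot appear with the same monomial. If this fails, my fallback is induction. Pick a minimal singleton-pair subset $P \subseteq X$ with $C[P]$ nonsingular, which I expect to exist by a Laplace expansion along the two singleton rows or columns. Pivot around $P$ using Eq.~\eqref{eq:pivoting}, and show that the Schur complement $C[X \setminus P] - \gamma \alpha^{-1} \beta$ restricted to $X \setminus P$ still certifies a solution. In that route the main work is translating nonsingularity after pivoting back to disjoint sets in the original $C$.
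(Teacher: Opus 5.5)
Two steps fail.

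\emph{The setup.} For an arbitrary maximum $J^*$, you cannot in general complete $J^*$ to a base using singletons from $S$ alone. The fact that $B=J\cup S$ spans only lets you complete $J^*$ with elements of $J\cup S$. Completing with $S$ alone requires $J\subseteq\operatorname{span}(J^*\cup S)$.

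Here is a counterexample. Take $r=5$, lines $\{e_1,e_2\}$, $\{e_3,e_4\}$, $\{e_5,e_1+e_2\}$, $J=\{e_1,e_2\}$, and singletons $e_3,e_4,e_5$. Then $\nu=4$ and $J^*=\{e_3,e_4,e_5,e_1+e_2\}$ is maximum, yet $J^*\cup S$ has rank $4$.

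You must choose $J^*$ with $J\subseteq\operatorname{span}(J^*)$, which is the first step of the paper's proof. Take a maximum $J^*$ minimizing $\abs{J\symdiff J^*}$. If some $v\in J$ were not spanned, $\bar v$ would be a combination of $J^*\cup\{v\}$ with a nonzero coefficient on some $w\in J^*\setminus J$. Swapping the line of $w$ for $\{v,\bar v\}$ would then give a maximum parity set closer to $J$. With that choice, your $X=B\symdiff B^*$ and your combinatorial skeleton are correct.

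\emph{The algebraic step.} This is the more serious gap. Grouping Pfaffian terms by monomials in the line variables cannot separate component structures. A monomial records only which lines are used, not how matrix edges link them into components. So distinct families $\calq$ share monomials. In a matrix of this type, the coefficient of a fixed line monomial is, up to sign, a single Pfaffian of the variable-free part with those lines deleted, not a product over components.

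Moreover, the terms with a prescribed component structure do not sum to $\prod_{Q}\det C[Q]$. That product is the sum over all structures \emph{refining} the partition. This fact is the missing idea:
\begin{enumerate}
  \item Group the terms of $\det C[X]$ by the partition $\Gamma(\sigma)$ of $X$ into connected components of $M_\sigma\uplus L$.
  \item Pick $\calp$ minimal under refinement with nonzero group sum.
  \item Then the sum of the groups over all $\calq\le\calp$ equals that single nonzero group sum.
  \item This sum is also the determinant of $C[X]$ with all entries between distinct blocks of $\calp$ set to zero, i.e.\ $\pm\prod_{Q\in\calp}\det C[Q]$.
\end{enumerate}
Hence every block is nonsingular. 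Since $\calp=\Gamma(\sigma)$ for some $\sigma$, each singleton-containing block is a single path with exactly two singletons.

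Your pivoting fallback does not replace this. After pivoting around $P$, nonsingularity of $C'[P_2]$ is equivalent to nonsingularity of $C[P\cup P_2]$, not of $C[P_2]$. So it does not produce paths that are augmenting for the original $B$.

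With both repairs, your argument becomes a valid and somewhat more direct variant of the paper's. The paper instead contracts to a matroid in which $J^*$ is a parity base and applies the same minimal-refinement trick to $\Pf\Phi$. Working with $\det C[X]$ avoids both $\Phi$ and the contraction.
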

This lemma bounds the length $\ell$ of the shortest augmenting path as $\Order(\nu / (\nu - \abs{J}))$ via a simple counting argument (see \cref{cor:lmp_sap_length}). Consequently, assuming that the shortest augmenting path can be efficiently identified in each iteration, this yields an improved total time complexity of $\Order(n r \nu \log \nu) = \Order(n r^2 \log r)$, which matches Cunningham's time bound.

Cunningham originally derived the counterpart of this bound for general matroid intersection via a graphical approach. However, such a strategy appears difficult to extend to linear matroid parity, because it neglects the linear-algebraic representation that is indispensable for efficiently solving linear matroid parity.

To overcome this challenge, we develop a novel linear-algebraic approach to prove \cref{lem:matroid_parity_number_of_augmenting_paths}. In what follows, we first introduce linear-algebraic tools employed in the proof, then present an initial attempt that highlights a technical difficulty, and finally detail the complete proof. 

While our result for linear matroid parity encompasses Cunningham's result for the linear matroid intersection case, our proof can also be tailored to the intersection setting using analogous tools. Although we omit the details, the corresponding difficulty can be readily resolved in this setting, which indicates that the technical difficulty is intrinsic to linear matroid parity.

\subsection{Linear-Algebraic Tools}

\paragraph{Pfaffians.} A square matrix $K$ is \textit{\skewsymmetric{}} if $K = -K^{\top}$ and its diagonal entries are all zero. We note that the latter requirement is redundant unless the underlying field has characteristic two. Let $\symgrp_{n}$ be the symmetric group of degree $n$. We define $F_{2k} \subseteq \symgrp_{2k}$ as the set of permutations $\sigma$ such that $\sigma(2i - 1) < \sigma(2i)$ for each $i$ and $\sigma(1) < \sigma(3) < \dots < \sigma(2k - 1)$. This set $F_{2k}$ corresponds to the set of perfect matchings on $\qty{1, \dots, 2k}$. The \textit{Pfaffian} of a $2k \times 2k$ \skewsymmetric{} matrix $K$ is then defined by
\begin{equation}
  \Pf K = \sum_{\sigma \in F_{2k}} \sgn \sigma \prod_{i = 1}^{k} K_{\sigma(2i - 1), \sigma(2i)}.
\end{equation}
A fundamental property of the Pfaffian is the identity $(\Pf K)^2 = \det K$, which implies that an \skewsymmetric{} matrix $K$ of even size is nonsingular if and only if $\Pf K \neq 0$.

\paragraph{Algebraic Formulation of Linear Matroid Parity.} The following proposition characterizes the maximum cardinality of an independent parity set via an \skewsymmetric{} matrix containing indeterminates.

\begin{proposition}[\cite{GeelenI2005MatroidMatching}]
  \label{prop:rank_lmp}
  Let $x_1, \dots, x_{n/2}$ be distinct indeterminates, and $\Delta$ the $n \times n$ \skewsymmetric{} block-diagonal matrix consisting of $n / 2$ blocks, each of which is the $2 \times 2$ matrix $\mqty(0 & x_i \\ -x_i & 0)$.
  Consider the \skewsymmetric{} matrix $\Phi = \begin{bmatrix}
      O         & A      \\
      -A^{\top} & \Delta
    \end{bmatrix}$.
  Then, the maximum size $\nu$ of an independent parity set is characterized by the rank of $\Phi$ as $\nu = \rank \Phi - n$.
\end{proposition}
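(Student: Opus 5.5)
The plan is to compute $\rank \Phi$ through principal minors, using the standard fact that the rank of an \skewsymmetric{} matrix equals the maximum size of a nonsingular principal submatrix. Such a submatrix always has even size. The problem then reduces to deciding when a principal submatrix $\Phi[R \cup Y]$ is nonsingular, where $R \subseteq \qty{1,\dots,r}$ indexes rows of $A$ and $Y \subseteq V$ indexes columns. By $(\Pf K)^2 = \det K$, this is equivalent to $\Pf \Phi[R \cup Y] \neq 0$. We regard $\Phi$ as a matrix over the rational function field $\bfk(x_1,\dots,x_{n/2})$.

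\textbf{Step 1: expand the Pfaffian.} Each term of the Pfaffian corresponds to a perfect matching on $R \cup Y$. Nonzero entries occur only between $R$ and $Y$ (entries of $A$) and between the two elements of a line that lies inside $Y$ (the entries $\pm x_i$). Because the $R \times R$ block is $O$, every element of $R$ must be matched into $Y$. So a matching is described by a set $T \subseteq Y$ that is a union of lines matched along their lines, together with a bijection between $R$ and $W \coloneqq Y \setminus T$. Grouping terms by $T$ gives
\begin{equation*}
  \Pf \Phi[R \cup Y] = \sum_{T} \pm \Bigl(\prod_{\text{lines } i \subseteq T} x_i\Bigr) \det A[R, Y \setminus T],
\end{equation*}
where the sum runs over unions of lines $T \subseteq Y$ with $\abs{Y \setminus T} = \abs{R}$. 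Different $T$ give different monomials in the $x_i$, so no cancellation can occur. Hence $\Phi[R \cup Y]$ is nonsingular if and only if some such $T$ makes $A[R, Y \setminus T]$ nonsingular. Getting the signs straight is the only fiddly point, and it is harmless because only nonvanishing matters.

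\textbf{Step 2: lower bound.} Let $I$ be an independent parity set with $\abs{I} = \nu$. Take $Y = V$ and $T = V \setminus I$; this $T$ is a union of lines because $I$ is. Choose $R$ to be $\nu$ rows with $A[R, I]$ nonsingular, which exist since the columns of $I$ are independent. Step 1 then gives a nonsingular principal submatrix of size $n + \nu$, so $\rank \Phi \geq n + \nu$.

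\textbf{Step 3: upper bound.} This is the only step requiring a small counting idea. Take a nonsingular principal submatrix with its witness $T$, and set $W = Y \setminus T$. Then $W$ is independent and $\abs{W} = \abs{R}$, so the size is
\begin{equation*}
  \abs{R} + \abs{Y} = 2\abs{W} + \abs{T}.
\end{equation*}
Split $W = W_2 \cup W_1$, where $W_2$ is the union of the lines fully contained in $W$ and $W_1$ consists of the elements of $W$ whose mate is not in $W$.

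Three disjoint families of lines appear: the lines inside $T$, the lines inside $W_2$, and, for each element of $W_1$, its own line. The last family is disjoint from the others because the mate of such an element lies neither in $W$ nor in $T$; it cannot lie in $T$ since $T$ is a union of lines disjoint from $W$. Counting elements of $V$ across these families gives $\abs{T} + \abs{W_2} + 2\abs{W_1} \leq n$.

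Moreover, $W_2$ is an independent parity set, so $\abs{W_2} \leq \nu$. Combining the two inequalities,
\begin{equation*}
  2\abs{W} + \abs{T} = 2\abs{W_2} + 2\abs{W_1} + \abs{T} \leq n + \abs{W_2} \leq n + \nu.
\end{equation*}
Therefore $\rank \Phi \leq n + \nu$, which together with Step 2 proves $\nu = \rank \Phi - n$.
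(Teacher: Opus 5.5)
Your proof is correct. The paper gives no proof of this proposition. It imports it from \cite{GeelenI2005MatroidMatching} and uses it only as a black box: through \cref{cor:pf_lmp} in \cref{sec:complete_approach}, and through the identity $\rank\Phi = \nu + n$ in \cref{sec:lmp_optimality}. Your argument is therefore a self-contained, elementary replacement for the citation rather than a variant of an argument in the paper.

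Step~1 groups the terms of $\Pf\Phi[R\cup Y]$ by the set $T$ of elements matched along lines. Distinct groups carry distinct squarefree monomials $\prod_{\ell\subseteq T}x_\ell$ with coefficients in $\bfk$, so they cannot cancel. For $R$ equal to all rows and $Y=V$, this specializes precisely to \cref{cor:pf_lmp}. It is also close in spirit to the paper's own grouping of Pfaffian terms by combinatorial type in \cref{sec:complete_approach}.

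Steps~2 and~3 pass from nonsingularity to rank by characterizing the rank of an alternating matrix as the maximum order of a nonsingular principal submatrix. This is essentially the fact the paper invokes via \cite[Proposition~7.3.6]{Murota2010MatricesMatroids} in the remark after \cref{lem:matroid_parity_number_of_augmenting_paths}. The counting in Step~3 is sound. This includes the disjointness of the three families of lines: the mate of an element of $W_1$ cannot lie in $T$, since $T$ is a union of lines disjoint from $W$. The bound $\abs{W_2}\le\nu$ is also correct.

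One point needs a line more than you give it. ``Only nonvanishing matters'' disposes of the overall sign of each $T$-group. But identifying the group sum with $\pm\bigl(\prod_{\ell\subseteq T}x_\ell\bigr)\det A[R,Y\setminus T]$ also requires the relative signs inside a group to match the Leibniz signs of the determinant. This is standard and easy to supply. Since $x_\ell$ occurs only in the two entries of its line, expanding the Pfaffian along those entries shows the following: the coefficient of $\prod_{\ell\subseteq T}x_\ell$ in $\Pf\Phi[R\cup Y]$ equals, up to sign, $\Pf\Phi[R\cup W]$ evaluated at $x=0$. That is the Pfaffian of $\bigl[\begin{smallmatrix} O & A[R,W] \\ -A[R,W]^{\top} & O \end{smallmatrix}\bigr]$, which equals $\pm\det A[R,W]$.
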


\begin{corollary}
  \label{cor:pf_lmp}
  A parity base exists if and only if the matrix $\Phi$ is nonsingular.
\end{corollary}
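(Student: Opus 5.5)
The statement to prove is \cref{cor:pf_lmp}: a parity base exists if and only if $\Phi$ is nonsingular. It follows almost immediately from \cref{prop:rank_lmp}.

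The plan is to reduce both sides to the equation $\nu = r$. A parity base is a parity set that is a base of $\bfm(A)$. Since $A$ has full row rank $r$, every independent set has size at most $r$, and every independent set of size $r$ is a base. Hence a parity base exists if and only if some independent parity set has cardinality $r$, that is, if and only if $\nu = r$. This step uses only the definitions together with the fact that $\nu \le r$.

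Next, apply \cref{prop:rank_lmp}, which gives $\rank \Phi = \nu + n$. The matrix $\Phi$ is $(r+n) \times (r+n)$, so it is nonsingular exactly when $\rank \Phi = r + n$. By the proposition this holds exactly when $\nu = r$. Combining with the first step proves the equivalence.

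There is essentially no obstacle, since the content lies in \cref{prop:rank_lmp}. The only point needing care is the bookkeeping: nonsingularity means full rank $r+n$ over the rational function field $\bfk(x_1,\dots,x_{n/2})$, which is the field in which \cref{prop:rank_lmp} computes the rank. One may also note, via $(\Pf \Phi)^2 = \det \Phi$, that the condition is equivalent to $\Pf \Phi \neq 0$; this is the form used later, where terms of the Pfaffian expansion correspond to parity bases.
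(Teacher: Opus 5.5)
Your proof is correct and is the intended derivation. The paper states this as an immediate consequence of \cref{prop:rank_lmp} without giving a separate proof. Reducing both sides to $\nu = r$, using $\rank \Phi = \nu + n$ and the fact that $\Phi$ has order $r+n$, is the entire content.
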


\subsection{Initial Strategy}
\label{sec:initial_strategy}

Our strategy is based on two key observations. First, the auxiliary graph $G$ from the Gabow--Stallmann algorithm (defined in \cref{sec:gs_overview}) is a subgraph of the \textit{support graph} $G_\Phi$ of the \skewsymmetric{} matrix $\Phi$, which we define later. Second, the nonsingularity of $\Phi$, combined with the definition of the Pfaffian, yields a collection of ``alternating'' paths. However, these paths may fail to be augmenting paths, necessitating a more sophisticated strategy for path construction in \cref{sec:complete_approach}. 

In what follows, we detail the procedure for constructing the alternating paths. For simplicity, we assume the existence of a parity base, i.e., the nonsingularity of $\Phi$ (\cref{cor:pf_lmp}). Since the fundamental cocircuit matrix $C$ is obtained via row operations on $A$, we may apply row and column operations to $\Phi$ to yield the following form:
\begin{equation}
  \label{eq:phi_reduced}
\Phi=
  \begin{bNiceArray}{cccc}[first-row,first-col,margin,hvlines,corners]
                  & S                                         & J  & J'                  & V \setminus J  \\
    S             & \Block{2-2}{O}                            &    & O                   & \Block{2-1}{C} \\
    J             &                                           &    & I                   &                \\
    J'            & O                                         & -I & \Block{2-2}{\Delta} &                \\
    V \setminus J & \Block{1-2}{\raisebox{-2pt}{$-C^{\top}$}} &    &                     &                \\
  \end{bNiceArray}.
\end{equation}
For notational simplicity, we continue to denote the resulting matrix by $\Phi$. Since the second and third blocks both have $J$ as their index set, we denote the latter by $J'$ to avoid ambiguity. Accordingly, we extend the set $L$ of lines to include the corresponding lines on $J'$, and denote the resulting set of lines by $L_\Phi$.

\begin{figure}
  \centering
  \includegraphics[width=0.45\linewidth]{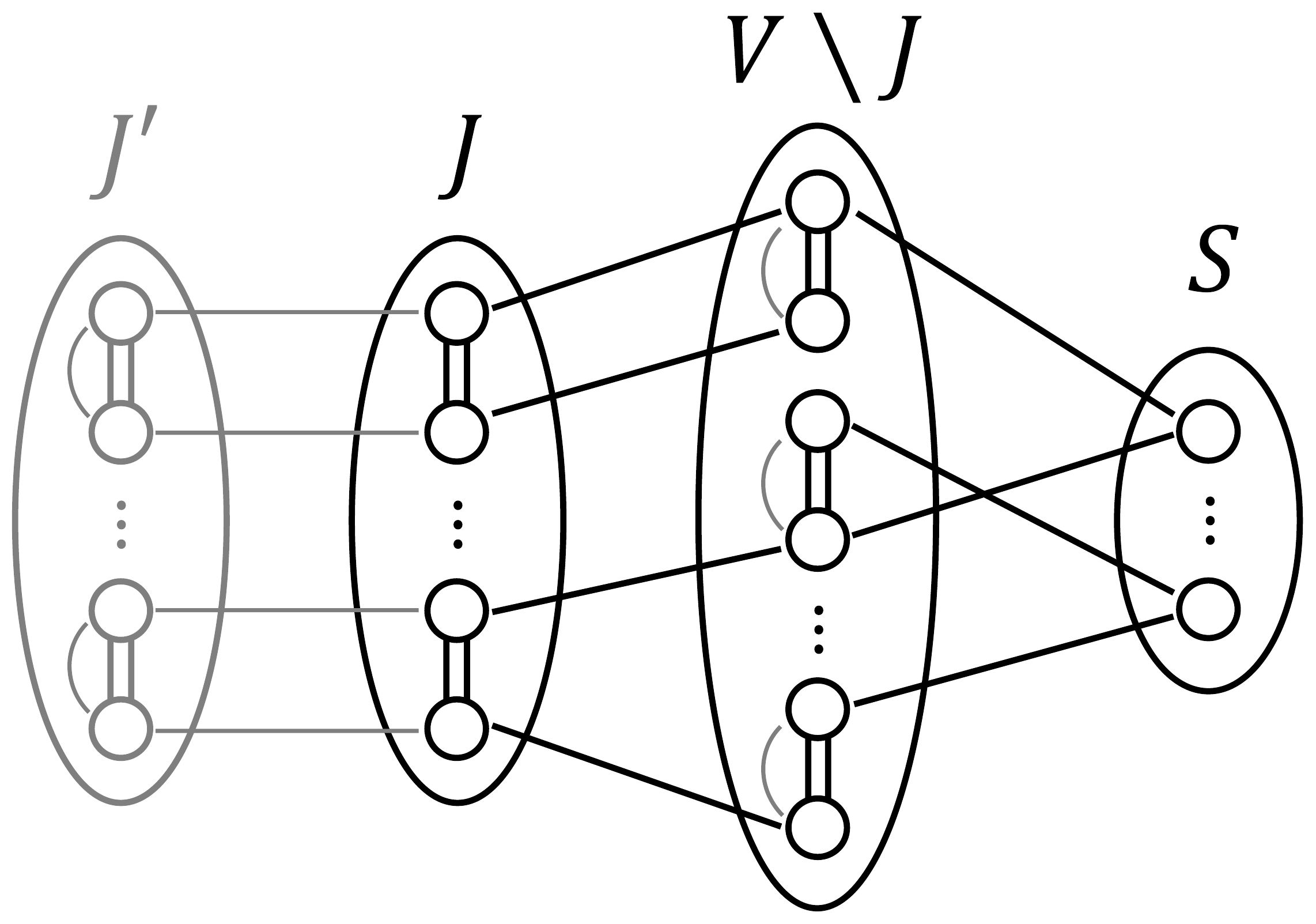}
  \caption{The undirected graph $G_{\Phi}$ and the set $L_\Phi$ of lines defined over the index set of $\Phi$. Single bonds represent the edges of $G_{\Phi}$ and double bonds represent the lines in $L_\Phi$. The subgraph $G$ and the line set $L$ are shown in black, while the remaining parts of $G_\Phi$ and $L_\Phi$ are shown in gray.}
  \label{fig:matroid_parity_pfaffian_graph}
\end{figure}

The support graph $G_\Phi = (V_\Phi, E_\Phi)$ of the matrix $\Phi$ is an unweighted undirected graph (see \cref{fig:matroid_parity_pfaffian_graph}), where $V_\Phi$ is the row/column set of $\Phi$, and $(u, v) \in E_\Phi$ if and only if $\Phi_{uv} \neq 0$. Notably, this construction embeds the auxiliary graph $G$ as a subgraph of $G_\Phi$.

Let $2k = (r + n)$ be the order of the matrix $\Phi$, which is necessarily even because of the existence of a parity base. Since the matrix $\Phi$ is nonsingular, it holds that $\Pf \Phi \neq 0$, i.e.,
\begin{equation}
  \label{eq:Pfaffian_nonzero_technical_outline}
  \sum_{\sigma \in F_{2k}} \qty(\sgn \sigma \prod_{i = 1}^{k} \Phi_{\sigma(2i - 1), \sigma(2i)}) \neq 0.
\end{equation}
Let $\sigma \in F_{2k}$ be a permutation such that the corresponding term $f(\sigma)$ in the summation is nonzero. Since this implies $\Phi_{\sigma(2i - 1), \sigma(2i)} \neq 0$ for all $i = 1, \dots, k$, the permutation directly defines a perfect matching $M_\sigma \coloneqq \qty{(\sigma(2i - 1), \sigma(2i)) \mid 1 \le i \le k}$ on $G_{\Phi}$. We now consider the graph $G_{\sigma} \coloneqq (V_\Phi, M_{\sigma} \uplus L_\Phi)$, where $\uplus$ denotes the union as multisets; $G_\sigma$ may thus contain parallel edges. Since $L_\Phi$ forms a perfect matching on $V_{\Phi} \setminus S$, the graph $G_{\sigma}$ consists of alternating cycles (potentially of length two) and $\abs{S} / 2$ alternating paths between two distinct vertices in $S$. Here, ``alternating'' signifies that the edges of these components alternate between $M_\sigma$ and $L_\Phi$. Moreover, if an edge of $G_\sigma$ connects two corresponding vertices $u \in J$ and $u' \in J'$, this edge must be part of a cycle on $\qty{u, \bar{u}, u', \bar{u}'}$. Consequently, the alternating paths never intersect $J'$ and are restricted to $V \cup S$, implying that they alternate between the edges of $G$ and the lines in $L$. 

The above discussion constructs $\abs{S} / 2$ disjoint alternating paths from the auxiliary graph $G$. For each path $P$, the induced subgraph $G[P]$ possesses a perfect matching, implying that $\det C[P]$ contains a nonzero term in its summation. However, this argument alone cannot eliminate the possibility of cancellation in the final sum, and thus falls short of guaranteeing that these paths are augmenting paths.\footnote{This issue is specific to linear matroid parity. In linear matroid intersection, since alternating paths without shortcuts are always augmenting paths, removing shortcuts from the obtained paths readily yields augmenting paths.}

\subsection{Complete Approach}
\label{sec:complete_approach}

A naive choice of $\sigma \in F_{2k}$ with $f(\sigma) \neq 0$ does not necessarily prevent an accidental cancellation within $\det C[P]$ for each constructed path $P$. To ensure nonsingularity, we group the terms of $\Pf \Phi$ according to their associated path and cycle structures. Among groups with a nonzero sum, we select the one corresponding to the finest partition of the vertex set. This choice induces a decomposition of $\Phi$ into nonsingular blocks, which guarantees that the paths are augmenting paths.

To formalize this, we define a partial order on the set of partitions based on refinement. Let $\Pi(X)$ denote the collection of all partitions of a finite set $X$. A partition $\calq \in \Pi(X)$ is a \textit{refinement} of $\calp \in \Pi(X)$, denoted by $\calq \le \calp$, if every component of $\calq$ is a subset of some component of $\calp$. 

Building upon the initial attempt, we first present a complete proof under the assumption that a parity base exists, and then reduce the general case to this setting.

\begin{lemma}
  \label{lem:matroid_parity_number_of_augmenting_paths_parity_base}
  Consider an instance of the linear matroid parity problem in which a parity base exists. Then, for any independent parity set $J$, there exist $\qty(r - \abs{J}) / 2$ disjoint augmenting paths.
\end{lemma}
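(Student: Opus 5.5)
We follow the hint in the text: group the terms of $\Pf \Phi$ by the partition of $V_\Phi$ that their path and cycle structure induces, and select a finest partition whose group sum is nonzero. We work with the reduced form~\eqref{eq:phi_reduced} of $\Phi$. Since a parity base exists, $\Phi$ is nonsingular by \cref{cor:pf_lmp}, so $\Pf \Phi \neq 0$.

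\textbf{Grouping the terms.} For each $\sigma \in F_{2k}$ with $f(\sigma) \neq 0$, let $\calp(\sigma) \in \Pi(V_\Phi)$ be the partition into the vertex sets of the connected components of $G_\sigma = (V_\Phi, M_\sigma \uplus L_\Phi)$. Write $\Pf \Phi = \sum_{\calp} g(\calp)$, where $g(\calp)$ is the sum of $f(\sigma)$ over all $\sigma$ with $\calp(\sigma) = \calp$.

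\textbf{Choice of $\calp^*$.} Among the partitions $\calp$ with $g(\calp) \neq 0$ (there is at least one), choose $\calp^*$ minimal with respect to refinement. Let $\Phi_{\calp^*}$ be the matrix obtained from $\Phi$ by zeroing every entry whose row and column lie in different blocks of $\calp^*$. Every component of $\calp^*$ is closed under lines, so each block of $\Phi_{\calp^*}$ is itself of the form (restricted matching support plus $\Delta$-type entries).

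\textbf{Key identity.} We claim
\[
\Pf \Phi_{\calp^*} = \sum_{\calq \le \calp^*} g(\calq).
\]
The reason is that a term of $\Pf\Phi_{\calp^*}$ is exactly a term $f(\sigma)$ of $\Pf \Phi$ with $M_\sigma$ inside the blocks of $\calp^*$. Each block is a union of lines, so the components of $G_\sigma$ then refine $\calp^*$, and conversely. By minimality, $g(\calq) = 0$ for every $\calq < \calp^*$, hence $\Pf \Phi_{\calp^*} = g(\calp^*) \neq 0$. Because $\Phi_{\calp^*}$ is block diagonal, its Pfaffian factors up to sign as $\prod_{X \in \calp^*} \Pf \Phi[X]$, where $\Phi[X]$ is the principal submatrix on $X$. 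So every $\Phi[X]$ is nonsingular.

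\textbf{Reading off the augmenting paths.} Take any $\sigma$ with $\calp(\sigma) = \calp^*$. The path components of $G_\sigma$ give $|S|/2 = (r - |J|)/2$ blocks $X$. Each such $X$ contains exactly two singletons, lies in $V \cup S$ (the argument in \cref{sec:initial_strategy} excludes $J'$), and $X \setminus S$ is a union of lines. For such $X$, the principal submatrix has the form
\[
\Phi[X] = \begin{bmatrix} O & C[X] & \\ -C[X]^\top & \Delta' \end{bmatrix},
\]
with the row set $X\cap B$ and the column set $X \setminus B$. Here we must be careful about the blocks $J$ versus $V \setminus J$: entries between $J$-vertices in $V\cap X$ come only through $J'$, which is absent. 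Hence the $X\cap B$ rows have zero diagonal block.

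\textbf{Main obstacle.} The hard step is to deduce that $C[X]$ itself is nonsingular from the nonsingularity of $\Phi[X]$. By \cref{rem:ap_odd_length}, $C[X]$ is square exactly when $|X\cap B| = |X\setminus B|$. A nonsingular matrix of the form $\begin{bmatrix} O & C \\ -C^\top & D\end{bmatrix}$ with $C$ of size $p\times q$ forces $q \ge p$ and $\rank C = p$. So I would first show $p=q$ by counting. The $\sigma$-path alternates matching edges and lines, starts and ends in $S\subseteq B$, and matching edges pair $B$ with $V\setminus J$ (within $V\cup S$, $M_\sigma$ edges lie in $G$). This gives $|X\cap B| = |X \setminus B|$, so $C[X]$ is square of full rank and $X$ is an augmenting path. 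I also need to check that the lines in $X\setminus S$ are genuine lines of $L$, which holds because $J'$ is avoided.

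The distinct blocks give disjoint augmenting paths. The cycle components are simply discarded.
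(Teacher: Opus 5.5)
Your proposal is correct and follows the paper's proof essentially step for step: group the Pfaffian terms by the component partition of $G_\sigma$, take a refinement-minimal partition with nonzero group sum, identify the refinement sum with the Pfaffian of the block-diagonal matrix $\Phi_{\calp^*}$, and read off the nonsingularity of $C[X]$ from the form $\begin{bmatrix} O & C[X] \\ -C[X]^\top & * \end{bmatrix}$ of each path block. Your explicit count $\abs{X \cap B} = \abs{X \setminus B}$ fills in what the paper only asserts, but your parenthetical claim that $M_\sigma$-edges within $V \cup S$ lie in $G$ needs restricting to path components: an $M_\sigma$-edge given by a $\Delta$-entry joins the two vertices of a line in $V \setminus J$, yet it closes a $2$-cycle with that line and so never lies on a path component.
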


\begin{proof}
In the following proof, we use the notation introduced in \cref{sec:initial_strategy}. We first define a map $\Gamma\colon F_{2k} \to \Pi(V_{\Phi})$, where $\Gamma(\sigma)$ is the set of all connected components of $G_{\sigma}$. Then, the sum in the definition of the Pfaffian can be partitioned as
\begin{equation}
  \Pf \Phi = \sum_{\sigma \in F_{2k}} f(\sigma) = \sum_{\calp \in \Pi(V_{\Phi})} \qty(\sum_{\sigma \in \Gamma^{-1}(\calp)} f(\sigma)).
\end{equation}
Since this is nonzero, there exists a partition $\calp \in \Pi(V_{\Phi})$ such that $\sum_{\sigma \in \Gamma^{-1}(\calp)} f(\sigma) \neq 0$. We choose $\calp$ to be minimal under the refinement order; that is, $\sum_{\sigma \in \Gamma^{-1}(\calq)} f(\sigma) = 0$ for every $\calq \lneq \calp$. The sum over all refinements of $\calp$ then yields
\begin{equation}
  \label{eq:refinements_sum_technical_outline}
  \sum_{\calq \in \Pi(V_\Phi) \colon \calq \le \calp} \qty(\sum_{\sigma \in \Gamma^{-1}(\calq)} f(\sigma)) \neq 0.
\end{equation}
Crucially, the set $\bigcup_{\calq \le \calp} \Gamma^{-1}(\calq)$ corresponds to the set of matchings whose edges are restricted to the individual components of $\calp$. Consequently, the sum in Eq.~\eqref{eq:refinements_sum_technical_outline} is equivalent to the Pfaffian of a block-diagonal matrix $\Phi_\calp$ with each block being the principal submatrix $\Phi[X]$ for $X \in \calp$. Since Eq.~\eqref{eq:refinements_sum_technical_outline} ensures the nonsingularity of $\Phi_\calp$, each block $\Phi[X]$ is also nonsingular.

Since there exists $\sigma \in \Gamma^{-1}(\calp)$ with $f(\sigma) \neq 0$, the discussion in \cref{sec:initial_strategy} shows that $\calp$ contains $\abs{S} / 2$ alternating paths that connect distinct vertices in $S$ and are disjoint from $J'$. Hence, it remains to verify that each alternating path $P \in \calp$ is an augmenting path. Since $P \cap J' = \emptyset$, the nonsingular block $\Phi[P]$ can be expressed as
\begin{equation}
  \Phi[P]=
  \begin{bNiceArray}{cc}[first-row,first-col,margin]
    & P \cap (J \cup S) & P \cap (V \setminus J)  \\
    P \cap (J \cup S) & O & C[P] \\
    P \cap (V \setminus J) & -C[P]^{\top} & * \\
  \end{bNiceArray}.
\end{equation}
Since these sub-blocks of $\Phi[P]$ are square and share the same size, the upper-right sub-block $C[P]$ is nonsingular. As the remaining requirements in \cref{def:lmp_aug} follow directly, $P$ is an augmenting path. 
\end{proof}

Next, we generalize this lemma to general cases where a parity base may not exist. In the following proof, we leverage the proof of \cite[Theorem 2.1]{GabowS1986AugmentingPath} to identify a maximum independent parity set $J^*$ whose span includes that of $J$. While we prove the existence of $J^*$ via a combinatorial argument, it also follows from the validity of augmenting path algorithms.
\paritybound*
\begin{proof}
    We first prove the existence of a maximum independent parity set $J^*$ whose span includes that of $J$. Let $J^*$ be a maximum independent parity set that minimizes $\abs{J \symdiff J^*}$. Assume for contradiction that there exists a vector $v \in J$ that is not spanned by $J^*$. Then, $J^* \cup \qty{v}$ is independent, whereas $J^* \cup \qty{v, \bar{v}}$ is dependent by the maximality of $J^*$. Thus, $\bar{v}$ can be uniquely represented as a linear combination of $J^* \cup \qty{v}$. Since $J$ is independent and contains $\qty{v, \bar{v}}$, this representation must have a nonzero coefficient for some $w \in J^* \setminus J$. Consequently, $J^* \cup \qty{v, \bar{v}} \setminus \qty{w}$ is independent. However, this implies that $J^* \cup \qty{v, \bar{v}} \setminus \qty{w, \bar{w}}$ is also a maximum independent parity set, contradicting the minimality of $\abs{J \symdiff J^*}$.

    To facilitate the reduction to \cref{lem:matroid_parity_number_of_augmenting_paths_parity_base}, let $S$ be a set of singletons such that $B = J^* \cup S$ forms a base of the linear matroid $\bfm(\bar{A})$ on $V \cup S$ such that its restriction on $V$ is identical to $\bfm(A)$. In matroid theoretic terminology, we now contract $S$ from the linear matroid $\bfm(\bar{A})$ to obtain a new linear matroid $\bfm(A')$ on $V$. This can be done by performing Gaussian elimination on the columns of $B$ in $\bar{A}$ and removing the rows and columns corresponding to $S$, which yields a $\nu \times n$ matrix $A'$. 
    
    The resulting linear matroid $\bfm(A')$ has a ground set identical to that of $A$ and satisfies the following properties:
    \begin{enumerate*}[label=(\arabic*)]
        \item any independent set of $\bfm(A')$ is also independent in $\bfm(A)$;
        \item $J$ remains an independent parity set of $\bfm(A')$; and
        \item $J^*$ is a parity base of $\bfm(A')$.
    \end{enumerate*}
    Consequently, applying \cref{lem:matroid_parity_number_of_augmenting_paths_parity_base} to this matroid yields $(\nu - \abs{J}) / 2$ disjoint augmenting paths for $J$ in $\bfm(A')$, which are also augmenting paths for $J$ in $\bfm(A)$.
\end{proof}
\begin{remark}
    Following the reduction in \cite[Section 6.5]{CheungLL2014AlgebraicAlgorithms}, one can develop an alternative linear-algebraic reduction. Because the principal submatrix $\Phi[V_\Phi \setminus S]$ is nonsingular, it can be extended to a nonsingular submatrix of order $\rank \Phi = \nu + n$. Although this submatrix may not be a principal submatrix, the argument in \cite[Proposition 7.3.6]{Murota2010MatricesMatroids} allows for transforming it into a nonsingular principal submatrix of the same size containing $\Phi[V_\Phi \setminus S]$. Consequently, applying the same argument as in \cref{lem:matroid_parity_number_of_augmenting_paths_parity_base} to this principal submatrix yields the desired augmenting paths.
\end{remark}

Finally, we derive the bound on the length of the shortest augmenting path, which is used in \cref{sec:overall_time_bound} to analyze the time complexity of our shortest augmenting path algorithm.
\begin{corollary}
  \label{cor:lmp_sap_length}
  Let $\nu$ denote the maximum size of an independent parity set. For a non-maximum independent parity set $J$, the shortest augmenting path length is at most $\frac{2\abs{J}}{\nu-|J|} + 1$.
\end{corollary}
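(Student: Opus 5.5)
We will derive the bound by a counting argument on top of \cref{lem:matroid_parity_number_of_augmenting_paths}. That lemma gives $m \coloneqq (\nu - \abs{J})/2$ pairwise disjoint augmenting paths $P_1, \dots, P_m$ for $J$. Here $m \ge 1$ because $J$ is not maximum, and $\nu - \abs{J}$ is even because both are sizes of parity sets. We count how many lines of $J$ these paths can consume in total, and then average.

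First, we bound the length of each path in terms of its $J$-lines. By \cref{rem:ap_odd_length}, if $P_i \cap J$ consists of $k_i$ lines, then $P_i$ has length $2k_i + 1$. This holds because the square-ness of $C[P_i]$, which has rows $B \cap P_i$ and columns $(V \setminus J) \cap P_i$, forces $2k_i + 2 = 2(\#\text{lines in } P_i \setminus B)$.

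Next, we bound the total number of $J$-lines used. The paths are disjoint, so the sets $P_i \cap J$ are disjoint subsets of $J$. Each is a union of lines, since $P_i \setminus S$ is a union of lines and $J$ is a parity set. Hence
\[
  \sum_{i=1}^{m} 2k_i \le \abs{J}.
\]

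Finally, we average. Some index $i$ satisfies $k_i \le \abs{J}/(2m) = \abs{J}/(\nu - \abs{J})$. For that path the length is
\[
  2k_i + 1 \le \frac{2\abs{J}}{\nu - \abs{J}} + 1.
\]
The shortest augmenting path is no longer than $P_i$, which gives the claimed bound.

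The only step needing care is the second, specifically confirming that the disjointness in \cref{lem:matroid_parity_number_of_augmenting_paths} refers to disjointness as subsets of $V \cup S$. This is clear from the construction, where the paths are distinct components of $G_\sigma$, or from the contraction argument, where the path sets in $\bfm(A')$ are the same subsets. The remaining bookkeeping is the line count in \cref{rem:ap_odd_length}, which is already established.
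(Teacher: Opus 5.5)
Your proposal is correct and follows essentially the same argument as the paper. Both take the $(\nu-\abs{J})/2$ disjoint augmenting paths from \cref{lem:matroid_parity_number_of_augmenting_paths}, distribute the $\abs{J}/2$ lines of $J$ among them to find a path using at most $\abs{J}/(\nu-\abs{J})$ of those lines, and invoke \cref{rem:ap_odd_length} to bound that path's length by $2\abs{J}/(\nu-\abs{J})+1$.
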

\begin{proof}
  Consider the $(\nu - \abs{J}) / 2$ disjoint augmenting paths provided by \cref{lem:matroid_parity_number_of_augmenting_paths}. Since $J$ contains $\abs{J} / 2$ lines, the shortest among these paths contains at most $\frac{\abs{J}}{\nu - \abs{J}}$ lines from $J$. \cref{rem:ap_odd_length} then ensures that this path contains at most $\frac{\abs{J}}{\nu - \abs{J}} + 1$ lines from $V \setminus J$, yielding a total length of at most $\frac{2\abs{J}}{\nu-|J|} + 1$. Since the overall shortest augmenting path cannot be longer than this path, the desired bound follows. 
\end{proof}

\subsection{Application to Local Search}
\label{sec:local_search}

We conclude this section by demonstrating a direct consequence of applying \cref{cor:lmp_sap_length} to the local search algorithm of Lee, Sviridenko, and Vondr{\'a}k~\cite{LeeSV2013MatroidMatching}. Their algorithm iteratively performs a brute-force search for augmenting paths of length at most $2 s(\epsilon) + 1$, effectively replacing $s(\epsilon)$ lines in $J$ with $s(\epsilon) + 1$ lines. They analyzed this algorithm for general matroid parity and established a $(1 + \epsilon)$-approximation by setting $s(\epsilon) = 5^{\lceil 1 / (2 \epsilon) \rceil}$, thereby providing a polynomial-time approximation scheme (PTAS).

Our analysis of the shortest augmenting paths demonstrates that $s(\epsilon) = \lceil 1 / \epsilon \rceil$ is sufficient for linear matroid parity, generalizing the well-known parameter bound for graph matching.
\begin{corollary}
  \label{cor:local_search_lmp}
  The local search algorithm of Lee, Sviridenko, and Vondr\'ak with the neighborhood size $s(\epsilon) = \lceil 1 / \epsilon \rceil$ gives a $(1 + \epsilon)$-approximation algorithm for linear matroid parity.
\end{corollary}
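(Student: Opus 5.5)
We argue from the local optimum back to the optimum using \cref{cor:lmp_sap_length}. Let $J$ be the independent parity set at which the local search of Lee, Sviridenko, and Vondr\'ak terminates with neighborhood size $s = s(\epsilon) = \lceil 1/\epsilon \rceil$. Termination means that no improvement replacing at most $s$ lines of $J$ by at most $s+1$ lines outside $J$ exists. We need to connect this to our notion of augmenting paths. If $P$ is an augmenting path for $J$ with respect to a singleton set $S$ (so that $B = J \cup S$ is a base), then $C[P]$ is nonsingular, so $B \symdiff P$ is a base. Removing the singletons from $B \symdiff P$ leaves $(J \setminus P) \cup (P \setminus B)$, which is therefore an independent parity set with one more line than $J$. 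By \cref{rem:ap_odd_length}, an augmenting path of length $2k+1$ replaces exactly $k$ lines of $J$ by $k+1$ new lines, so it is precisely a local move of size $k$. Hence, at termination, every augmenting path for $J$ has length at least $2s+3$; equivalently, no augmenting path of length at most $2s+1$ exists.

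Next we suppose $J$ is not maximum and apply \cref{cor:lmp_sap_length}. It gives an augmenting path of length at most $\frac{2\abs{J}}{\nu - \abs{J}} + 1$, so
\[
  2s + 3 \le \frac{2\abs{J}}{\nu - \abs{J}} + 1, \qquad\text{i.e.,}\qquad s + 1 \le \frac{\abs{J}}{\nu - \abs{J}}.
\]
Rearranging gives $(s+1)(\nu - \abs{J}) \le \abs{J}$, that is, $\nu \le \frac{s+2}{s+1}\abs{J}$. This bound is even slightly better than needed. Using only $s \le \frac{\abs{J}}{\nu-\abs{J}}$, we get $\nu/\abs{J} \le 1 + 1/s \le 1 + \epsilon$, since $s \ge 1/\epsilon$. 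If $J$ is maximum, the ratio is $1$ and there is nothing to prove.

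The one step that needs care is the correspondence between local moves and augmenting paths in the sense of \cref{def:lmp_aug}. That definition allows $P$ to consist of a singleton-to-singleton path together with extra alternating cycles, whereas the local search only cares about the counts of removed and added lines. Because $(J \setminus P)\cup(P\setminus B)$ is independent and the count $k$ versus $k+1$ is exactly what the remark records, the correspondence goes through regardless of this structure. The only other point to check is that the algorithm's search covers every such exchange of $k \le s$ lines for $k+1$ lines. The brute-force search examines all such exchanges for independence, so it does.
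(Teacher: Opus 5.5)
Your proof is correct and follows the paper's argument: at a local optimum no augmenting path of length at most $2s(\epsilon)+1$ exists, and \cref{cor:lmp_sap_length} then forces $\nu/\abs{J} \le 1 + 1/s(\epsilon) \le 1+\epsilon$. Two of your steps go slightly beyond what the paper writes out, and both are valid: you check explicitly that an augmenting path of length $2k+1$ is a local move of size $k$, and you use \cref{rem:ap_odd_length} to sharpen the bound to $\nu \le \frac{s+2}{s+1}\abs{J}$.
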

\begin{proof}
  When the algorithm terminates, the length of the shortest augmenting path is more than $2 s(\epsilon) + 1$. Hence, \cref{cor:lmp_sap_length} gives
  $\displaystyle 2 s(\epsilon) + 1 < \frac{2 \abs{J}}{\nu - \abs{J}} + 1$, which is equivalent to $\displaystyle\frac{s(\epsilon)}{1 + s(\epsilon)} \nu < \abs{J}$. 
 This guarantees a $(1 + \epsilon)$-approximation by setting $s(\epsilon) = \lceil 1 / \epsilon \rceil \ge 1 / \epsilon$.
\end{proof}

While known exact algorithms for linear matroid parity rely on the matrix representation of the linear matroid, this algorithm also applies to cases where such a representation is unavailable, as long as the independence oracle is accessible.

\section{Search Procedure for a Shortest Augmenting Path}
\label{chap:sap_algo_lmp}

\subsection{Technical Outline}

\subsubsection{Key Features of Our Algorithm}

Although our search procedure for a shortest augmenting path follows the Gabow--Stallmann framework presented in \cref{sec:gs_overview}, it is primarily distinguished by the synchronization strategy for $\grow$ and $\blossom$ operations. Specifically, while the Gabow--Stallmann algorithm processes vertices in a simple sequential order, our algorithm adopts level-based synchronization inspired by the Micali--Vazirani algorithm for graph matching~\cite{MicaliV1980$Osqrt|V||E|$Algorithm}.

\cref{alg:sap_lmp_informal} outlines our proposed search procedure for a shortest augmenting path. We define the \textit{level} of a vertex $v$, denoted by $\lv(v)$, as the number of lines within $P(v)$. The process is partitioned into search levels $k = 0, 1, 2, \dots$, where each phase $k$ first executes $\grow(v)$ on vertices of level $k$ and then performs $\blossom(u, v)$ for edges with a level sum of $2k + 1$. Because a parity argument ensures this sum is always odd, the procedure eventually exhausts all edges between labeled vertices. Within each phase $k$, the $\blossom$ operations are ordered by their maximum level; any remaining ties in $\grow$ or $\blossom$ are broken by search order, following the Gabow--Stallmann algorithm.

\begin{algorithm}[tb]
  \caption{Ordering of operations for finding a shortest augmenting path (informal).}
  \label{alg:sap_lmp_informal}

  \begin{algorithmic}[1]
    \For{$k = 0, 1, \dots$} \Comment{Vertices are always selected in the order they were labeled.}
        \For{$v$ with $\lv(v) = k$}
            \State $\grow(v)$
        \EndFor
        \For{$\kmax = k+1, k+2, \dots, 2k+1$}
            \State $\kmin \gets 2k + 1 - \kmax$
            \For{$u$ with $\lv(u) = \kmax$}
                \For{$v$ with $\lv(v) = \kmin$}
                    \State $\blossom(u, v)$ \Comment{An augmenting path is found if their starting singletons differ.}
                \EndFor
            \EndFor
        \EndFor
    \EndFor
  \end{algorithmic}
\end{algorithm}

While this modification is conceptually straightforward and seems promising for finding a shortest augmenting path, establishing a formal proof of correctness presents significant challenges identified in \cref{sec:background}. In particular, the challenge of defining shortest search paths a priori compels us to replace the linear-algebraic process in blossom formation with the variant employed in the weighted linear matroid parity algorithm~\cite{IwataK2022WeightedLinear}, as it is more suitable for analyzing the properties of search path lengths.

We briefly review this linear-algebraic process, which is applied to the intersection of $P(u)$ and $P(v)$ during the execution of $\blossom(u, v)$. Every time this process is invoked, it introduces a \textit{dummy line}, which consists of two auxiliary vertices. These lines facilitate the search for an augmenting path, enabling the algorithm to identify a path $P$ such that, even if $P$ contains dummy lines, the remaining component induces a nonsingular submatrix. Since dummy lines are not used in the base update, they are excluded from the definition of levels, which is intended to quantify the augmenting path length and its corresponding base update cost.

Each dummy line is inserted by first extending the fundamental cocircuit matrix with a new row and column, and then pivoting around it. At any point during the algorithm, we denote the set of these newly added vertices by $T$, and the matrix obtained so far by $C^*$. Then, the nonsingularity of $C[P \setminus T]$ is guaranteed via the nonsingularity of $C^*[P]$ and the ``coherence'' of $P$ with dummy lines, a structural property defined later.

Finally, since a naive implementation of the entire algorithm suffers several computational bottlenecks, such as blossom management and pivoting operations, \cref{sec:implementation_details} describes an efficient procedure for these processes. Consequently, given the fundamental cocircuit matrix, one can identify a shortest augmenting path in $\Order(n r)$ time. Accordingly, the total $\nu / 2$ search iterations remain within the time bounds claimed in \cref{thm:our_bound} and \cref{rem:refined_bound}.

\subsubsection{Correctness Proof}

We establish the correctness of our algorithm by proving the following three properties: (i) \textit{validity:} the identified path is indeed an augmenting path; (ii) \textit{minimality:} the discovered path is the shortest; and (iii) \textit{optimality:} the termination without finding a path guarantees that the current independent parity set is maximum.

The proofs for validity and optimality are built upon the theoretical frameworks of \cite{GeelenIM2003LinearDeltamatroid,IwataK2022WeightedLinear}, which relate the nonsingularity of \skewsymmetric{} matrices to perfect matchings on their support graphs. In what follows, we focus on the proof of minimality. As noted in the introduction, we characterize shortest search paths by first establishing an a priori lower bound on search path lengths and then verifying that our algorithm attains this bound.

\paragraph{Lower Bound on Search Path Lengths.}
The validity proof demonstrates that, for every labeled vertex $v$, the set $P(v) \setminus \qty{v}$ is structurally analogous to an augmenting path: it induces a nonsingular submatrix $C^*[P(v) \setminus \qty{v}]$ and remains ``coherent'' with dummy lines, as defined later. For an augmenting path $P$, these properties collectively guarantee the nonsingularity of $C[P \setminus T]$. By formally characterizing such a structure as a \textit{coherent nonsingular path} to $\bar{v}$, we can naturally derive a lower bound on $\lv(v)$ as follows. We define the length of a path $P$ as the number of lines (excluding dummy lines) that intersect $P$, and denote by $\dist(x)$ the minimum length of a coherent nonsingular path to $x$. Since $P(v) \setminus \qty{v}$ is a coherent nonsingular path to $\bar{v}$, we obtain the lower bound $\lv(v) \ge \dist(\bar{v})$. Crucially, this bound is invariant under the insertion of dummy lines. Furthermore, this bound naturally applies to the case where $v \in T$, whereas establishing a similar bound within the linear-algebraic framework of Gabow and Stallmann is nontrivial.

\paragraph{Attaining the Bound.}
We prove that the proposed algorithm attains the bound $\lv(v) = \dist(\bar{v})$ by induction on the search level $k$. Intuitively, since the algorithm assigns levels from smaller to larger values, each vertex $v$ has a ``deadline'' for level assignment derived from the bound. To ensure that each assignment is performed within this deadline, our core task is to verify the existence of an associated ``trigger'' for the $\grow$ and $\blossom$ operations. In the following proof sketches, we omit dummy lines for clarity.

For $\grow$ operations, the trigger for labeling $v$ is a labeled vertex $u$ adjacent to $\bar{v}$. The existence of such a vertex $u$ is guaranteed via the Laplace expansion. Suppose a $\grow$ operation must assign level $k + 1$ to a vertex $v$, i.e., $\dist(\bar{v}) = k + 1$. By applying the Laplace expansion to the row or column of $\bar{v}$ in $C^*[P]$, where $P$ is a shortest coherent nonsingular path to $\bar{v}$, we can decompose $P$ into an edge $(u, \bar{v})$ and a shorter coherent nonsingular path $P \setminus \qty{u, \bar{v}}$ to $\bar{u}$. Since this subpath ensures $\dist(\bar{u}) = k$, the inductive assumption guarantees that $\lv(u) = k$. Consequently, $\grow(u)$ will identify $\bar{v}$ and assign level $k + 1$ to $v$ as required.  

For $\blossom$ operations, the trigger is an \textit{incoming edge} between labeled vertices, a concept originally introduced by Izumi, Kitamura, and Yamaguchi~\cite{IzumiKY2025ForgettingAlternation} for graph matching. Informally, an incoming edge is one that enters the subtree of the search forest rooted at $v$, excluding the forest edge itself. This characterization ensures that, if $\blossom$ is performed on the incoming edge,  the backtracking process labels $v$. To guarantee the existence of an incoming edge that meets the deadline, we employ a variant of the decomposition technique developed in \cref{sec:complete_approach}. Specifically, a shortest coherent nonsingular path to $v$ is decomposed into an incoming edge $(y, z)$ and two coherent nonsingular paths $Y$ and $Z$ (see \cref{lem:yz_decomp}). By the inductive assumption, these paths ensure that $y$ and $z$ are labeled with levels sufficiently small to trigger the operation in time.

\subsubsection{Section Organization}

The remainder of this section is organized as follows. \cref{sec:lmp_algo_description} introduces necessary notation and details our procedure for finding a shortest augmenting path.
\cref{sec:super_singleton_reduction} provides a reduction from the multi-source to a single-source setting, which facilitates the subsequent proofs.
\cref{sec:basic_properties} establishes several fundamental properties of the algorithm.
\cref{sec:lmp_validity} shows the validity: the algorithm finds an augmenting path defined in \cref{def:lmp_aug}.
\cref{sec:lmp_minimality} proves the minimality: the augmenting path found by the algorithm is the shortest.
\cref{sec:lmp_optimality} shows the optimality: the current solution is optimal if the algorithm does not find any augmenting paths.
\cref{sec:implementation_details} presents the implementation details required to achieve the desired efficiency.

\subsection{Algorithm Description}
\label{sec:lmp_algo_description}

\subsubsection{Notation and Definitions}

This section introduces the notation and definitions required to describe our shortest augmenting path algorithm and verify its correctness.

\paragraph{Search Paths.}
For a search path $P(x)$ and an intermediate vertex $u \in P(x)$, let $P(x|u)$ denote the subpath of $P(x)$ starting from the successor of $u$ (thereby excluding $u$ itself). We also write $\overline{P}$ for the reverse of a path $P$, and $PQ$ for the concatenation of paths $P$ and $Q$ in this order. While each search path $P(x)$ is formally defined as a sequence of vertices, we also let $P(x)$ denote its vertex set when the context is clear.

As discussed in \cref{sec:gs_overview}, the Gabow--Stallmann algorithm maintains search paths to labeled vertices, which remain fixed once assigned. While our algorithm also relies on search paths, these paths may be modified upon the insertion of dummy lines, as detailed later in \ref{item:detail_4}. Although this process appears to drastically alter the paths, \cref{cor:justify_line_4} demonstrates that the procedure simply inserts the new dummy line into the middle of each path.

\paragraph{Dummy Lines and Blossoms.}

Each blossom is assigned a specific auxiliary line called its \textit{dummy line}. When the algorithm forms a new blossom $H$, it either introduces a new dummy line or reuses the dummy line of a sub-blossom $H' \subseteq H$; in the latter case, the same dummy line is shared by both $H$ and $H'$. While $H$ may contain multiple dummy lines as part of its constituent lines, the dummy line of $H$ only refers to the one assigned at its formation.

Each dummy line consists of a \textit{bud} $b$ and a \textit{tip} $t$, which are both contained in the associated blossom $H$.\footnote{Note that this definition differs from that of \cite{IwataK2022WeightedLinear}, where blossoms act as shrunk vertices. Following \cite{GeelenIM2003LinearDeltamatroid}, blossoms act as shrunk lines in this paper.} For convenience, we define $\bar{b} = t$ and $\bar{t} = b$. A dummy line is introduced by first appending a new row and a new column to the fundamental cocircuit matrix and then pivoting around them. We distinguish the initially existing lines as \textit{original lines}, and collectively refer to both original lines and dummy lines as \textit{lines}. Let $T$ denote the set of buds and tips. Since introducing these vertices affects several algorithmic notions, let $B^*$, $C^*$, and $G^* = (V^*, E^*)$ denote the resulting base, fundamental cocircuit matrix, and graph, respectively. It follows that $V^* = V \cup S \cup T$.

As discussed later, the insertion of a dummy line involves a specific vertex $g$, whose incident edges relevant to the search are effectively removed by the associated pivoting operation. Since $g$ can no longer be used for further search, the algorithm not only labels $g$ but also marks $g$ as \textit{retired}. Hence, retired vertices are always labeled. This definition is crucial for establishing the subsequent validity proof and achieving the desired time bound. For convenience, a vertex is termed \textit{active} if it is not retired.

For each vertex $x \in V \cup T$, we define $K(x)$ as the maximal blossom $H$ containing $x$. If no such blossom exists, we define $K(x) = \qty{x, \bar{x}}$. Note that $K(x)$ is not defined on singletons, because they never belong to any blossom during the search. While blossoms are crucial for the proof of correctness, the algorithm only needs to maintain the maximal blossoms $K(x)$ during its execution.

\paragraph{Grow and Blossom Operations.}

For each line, we define the \textit{lower} vertex as the one labeled first and the \textit{upper} vertex as the other. The lower vertex of an original line is labeled by $\grow$, whereas that of a dummy line is labeled upon insertion. In either case, the upper vertex of any line is always labeled by $\blossom$. The terms ``lower'' and ``upper'' reflect the algorithmic property that, once both levels are determined, the level of the former is no greater than that of the latter. Note that these definitions do not yet apply to lines whose vertices both remain unlabeled.

The $\grow$ operation records a predecessor $\rho(u) \coloneqq v$ when finding a new original line $\qty{u, \bar{u}}$ via an edge $(v, u)$. The predecessor $\rho(u)$ is defined exclusively for upper vertices $u$ and is utilized when labeling $u$ in the $\blossom$ operation. 

Upon labeling a vertex $x$, the algorithm description also specifies a suffix $L(x)$ of the search path $P(x)$, which is defined only for the sake of proof and is not required for running the algorithm. Specifically, this fragment is utilized for providing a search path decomposition and proving that the found path is an augmenting path. Unlike search paths, $L(x)$ remains fixed once established.

\paragraph{Synchronization Strategy for Operations.}

Finally, we introduce the notation necessary for formally describing our synchronization strategy. The \textit{level} of a labeled vertex $v$, denoted by $\lv(v)$, is defined as the number of original lines contained within the search path $P(v)$. Although this count excludes dummy lines, $\lv(v)$ is still defined for labeled vertices of dummy lines.

The algorithm also maintains an array of lists $\vl$, where each list can be scanned sequentially and extended only at its end. The $k$-th list $\vl[k]$ stores the vertices of level $k$ in the order they were labeled. For computational efficiency, the algorithm omits from these lists vertices that do not contribute to the search, namely tips and retired vertices. Initially, $\vl[0]$ is set to $S$ in an arbitrary order, and the other lists are set to empty. The search order is determined based on $\vl$.

For ease of description, we define a total order $\le$ on the set of all labeled vertices. For labeled vertices that are inserted into $\vl$, we write $u<v$ if $\lv(u) < \lv(v)$, or if $\lv(u) = \lv(v) \eqqcolon k$ and $u$ precedes $v$ in $\vl[k]$. For tips and retired vertices, we specify their positions in the total order at the time they are labeled. We write $u \le v$ if $u<v$ or $u=v$. 

\subsubsection{Search Procedure}

\cref{alg:sap_lmp} presents the procedure for finding a shortest augmenting path. It utilizes the following two subroutines: $\setlower(k)$ and $\setupper(k, \kmax)$, detailed in \cref{alg:set_lower,alg:set_upper}. The subroutine $\setlower(k)$ executes $\grow(v)$ on vertices $v$ of level $k$ to find new original lines and assign the lower level $k + 1$. The subroutine $\setupper(k, \kmax)$ assigns upper levels by identifying edges $(u, v)$ such that $\lv(u) = \kmax$ and $\lv(v) = 2k + 1 - \kmax$. For each such edge, it performs $\blossom(u, v)$ if $P(u)$ and $P(v)$ stem from the same singleton; otherwise, it returns $P(u)\overline{P(v)}$ as a shortest augmenting path.
Finally, if no augmenting path is returned and no vertices or edges remain to process, i.e., the condition in Line~\ref{line:while_worth_searching} of \cref{alg:sap_lmp} is no longer met, the search procedure reports that no augmenting path exists.

\begin{algorithm}[tb]
  \caption{Procedure for finding a shortest augmenting path.}
  \label{alg:sap_lmp}

  \begin{algorithmic}[1]
    \Procedure{FindShortestAugmentingPath}{}
    \For{$s \in S$}
    \State $P(s) \gets s$
    \State $L(s) \gets s$
    \State Insert $s$ into $\vl[0]$.
    \EndFor
    \State $k \gets 0$
    \While{$\exists v \in V^*$ s.t. $\lv(v) \ge k$} \label{line:while_worth_searching}
    \State $\setlower(k)$
    \For{$\kmax=k+1, k+2, \dots, 2k+1$}
    \State $P \gets \setupper(k, \kmax)$ \Comment{This may find an augmenting path.}
    \If{$P \neq \emptyset$}
    \State \Return $P$
    \EndIf
    \EndFor
    \State $k \gets k + 1$
    \EndWhile
    \State \Return $\emptyset$ \Comment{No augmenting path has been found.}
    \EndProcedure
  \end{algorithmic}
\end{algorithm}

\begin{algorithm}[tb]
  \caption{Subroutine for assigning lower levels.}
  \label{alg:set_lower}

  \begin{algorithmic}[1]
    \Procedure{\setlower}{$k$}
    \For{$v \in \vl[k]$}
    \State $\grow(v)$
    \EndFor
    \EndProcedure
  \end{algorithmic}
\end{algorithm}

\begin{algorithm}[tb]
  \caption{Subroutine for assigning upper levels.}
  \label{alg:set_upper}

  \begin{algorithmic}[1]
    \Procedure{\setupper}{$k$, $\kmax$}
    \State $\kmin \gets 2k + 1 - \kmax$
    \For{$u \in \vl[\kmax]$}
    \For{$v \in \vl[\kmin]$}
    \If{$(u, v) \in E^*$ and $K(u) \neq K(v)$}
    \If{$P(u)$ and $P(v)$ stem from the same singleton}
    \State $\blossom(u, v)$
    \Else \Comment{Found an augmenting path.}
    \State \Return $P(u) \overline{P(v)}$
    \EndIf
    \EndIf
    \EndFor
    \EndFor
    \State \Return $\emptyset$ \Comment{No augmenting path has been found.}
    \EndProcedure
  \end{algorithmic}
\end{algorithm}

Although $\grow$ internally uses $\blossom$ in a special case,\footnote{The Gabow--Stallmann algorithm handles this case by creating a \textit{degenerate blossom}; this approach was later unified with the handling of ordinary blossoms in \cite{IwataK2022WeightedLinear}.} we first present $\grow$ in \cref{alg:grow} because of its simplicity. Line~\ref{line:degblm} addresses the special case where a blossom consisting of $\qty{u, \bar{u}}$ and a dummy line should be formed, which is handled instantly during the $\grow$ operation.

\begin{algorithm}[tb]
  \caption{Operation for identifying new original lines.}
  \label{alg:grow}

  \begin{algorithmic}[1]
    \Procedure{\grow}{$v$}
    \For{$u \in V^*$ s.t. $(u, v) \in E^*$}
    \If{neither $u$ nor $\bar{u}$ is labeled}
    \State $P(\bar{u}) \gets P(v) u\bar{u}$.
    \State Insert $\bar{u}$ at the end of $\vl[\lv(\bar{u})]$.
    \State $\rho(u) \gets v$
    \If{$(\bar{u}, v) \in E^*$}
    \State $\blossom(\bar{u}, v)$ \label{line:degblm}
    \Comment{This overwrites $\rho(u)$ with the tip of a new blossom.}
    \State $L(\bar{u}) \gets \rho(u)u\bar{u}$
    \Else
    \State $L(\bar{u}) \gets \bar{u}$
    \EndIf
    \EndIf
    \EndFor
    \EndProcedure
  \end{algorithmic}
\end{algorithm}

We then detail the $\blossom$ operation. The precondition for $\blossom(u, v)$ is $u > v$, which always holds whenever the operation is called. While $\grow$ can be simply described in pseudocode, $\blossom$ involves the following complex steps. See \cref{fig:blossom_pivoting} for the pivoting operation performed in \ref{item:detail_3} and \cref{fig:partition,fig:c_eq_d,fig:c_neq_d} for the subsequent updates to $K(x)$ executed in \ref{item:detail_6}.

\begin{figure}[tb]
    \centering
    \includegraphics[width=0.85\linewidth]{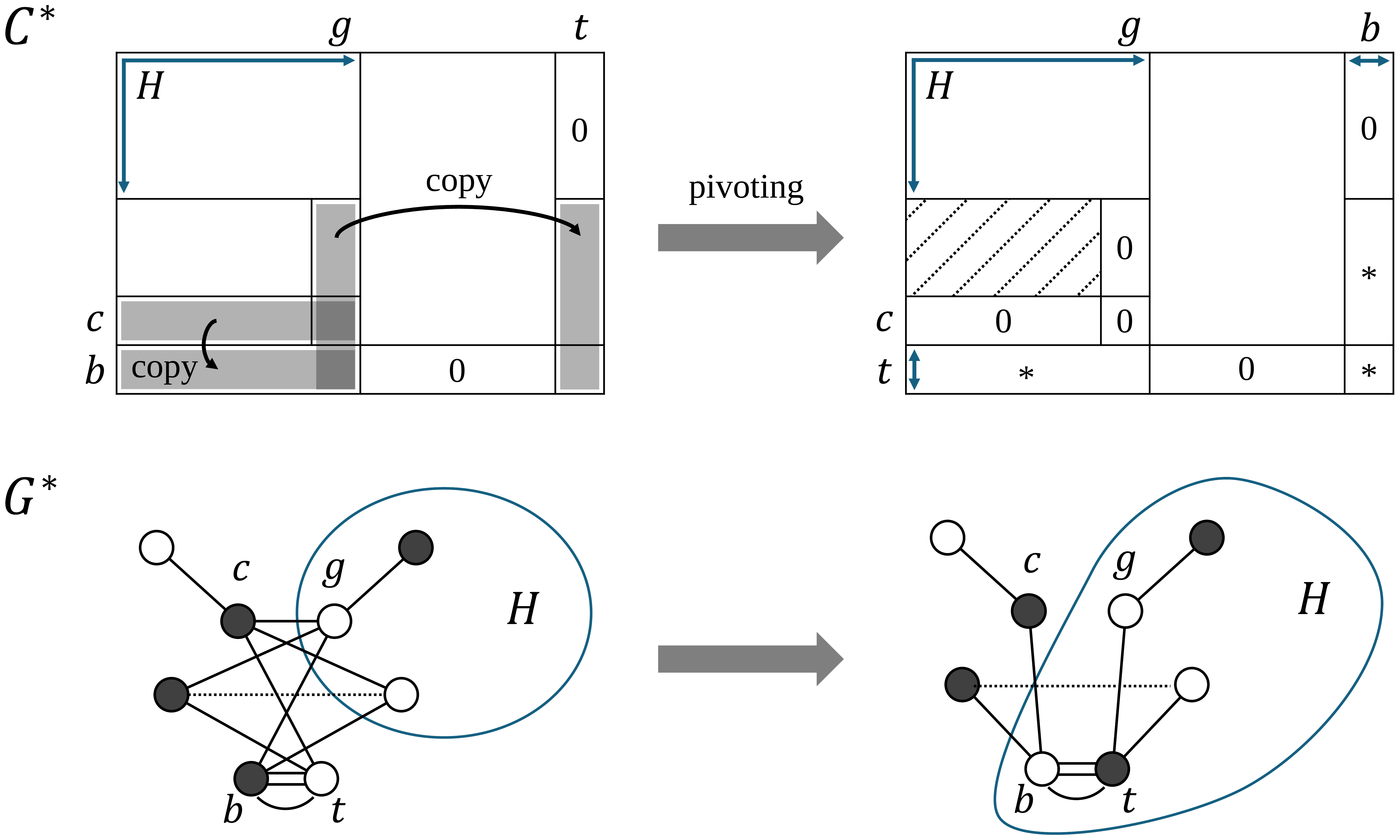}
    \caption{Illustration of \ref{item:detail_3} when $c \in B^*$ and $g \notin B^*$. Black and white vertices represent those in $B^*$ and $V^* \setminus B^*$, respectively. The pivoting operation alters the hatched region in $C^*$; specifically, \cref{prop:pivoting_new_edge} ensures that edges $(x, y) \in H \times (V^* \setminus H)$ with $(x, c), (g, y) \in E^*$ may be added or deleted, as illustrated by the dotted edge. It also scales each region marked with an asterisk in $C^*$, preserving the corresponding edges in $G^*$ (see \cref{prop:neighbor_of_bt}).}
    \label{fig:blossom_pivoting}
\end{figure}

\begin{enumerate}[label=(Step \arabic*), leftmargin=*, itemsep=4.0pt plus 2.0pt minus 1.0pt, parsep=2.0pt]
  \item \label{item:detail_2}
        Let $c$ be the last vertex in $P(u)$ such that $K(c)$ shares a vertex with $P(v)$. Let $d$ be the last vertex in $P(v)$ such that $d \in K(c)$. Note that $K(c) = K(d)$ holds. If $c = d$, set $H \gets \bigcup \qty{K(x) \mid x \in P(u|c) \cup P(v|d)}$, and proceed to \ref{item:detail_3}. If $c \neq d$, set $H \gets \bigcup \qty{K(x) \mid x \in P(u|c) \cup P(v|d) \cup \qty{c}}$, and skip to \ref{item:detail_5}.

  \item \label{item:detail_3}
        Let $g$ be the successor of $c$ on $P(u)$.\footnote{Our algorithm always calls $\blossom(u, v)$ with a vertex $u$ such that $P(u|c)$ is nonempty, ensuring that $g$ can be defined. In contrast, $P(v|d)$ may become empty; for example, it is always empty when the $\blossom$ operation is triggered within a $\grow$ operation.} Mark $g$ as retired. Extend the fundamental cocircuit matrix $C^*$ by inserting $\qty{b, t}$ as a new row and column, with entries defined as follows.
        \begin{itemize}
          \item If $c \in B^*$ and $g \notin B^*$, let $b$ be a new row and $t$ be a new column. Define $C^*_{bt} \coloneqq C^*_{cg}$. For $y \in V^* \setminus B^*$, define $C^*_{by} \coloneqq C^*_{cy}$ if $y \in H$ and $C^*_{by} \coloneqq 0$ otherwise. For $x \in B^*$, define $C^*_{xt} \coloneqq C^*_{xg}$ if $x \notin H$ and $C^*_{xt} \coloneqq 0$ otherwise. 
          \item If $c \notin B^*$ and $g \in B^*$, let $b$ be a new column and $t$ be a new row. Define $C^*_{tb} \coloneqq C^*_{gc}$. For $x \in B^*$, define $C^*_{xb} \coloneqq C^*_{xc}$ if $x \in H$ and $C^*_{xb} \coloneqq 0$ otherwise. For $y \in V^* \setminus B^*$, define $C^*_{ty} \coloneqq C^*_{gy}$ if $y \notin H$ and $C^*_{ty} \coloneqq 0$ otherwise. 
        \end{itemize}
        Expand $B^*$ and $V^*$ according to the index set of $C^*$. Finally, update $B^* \gets B^* \symdiff \qty{b, t}$ by pivoting around $\qty{b, t}$, modify $G^*$ accordingly, and set $H \gets H \cup \qty{b, t}$.

  \item \label{item:detail_4}
        Update every search path $P(x)$ intersecting $H$ by setting $P(x) \coloneqq P(c)btP(x|c)$. Label $t$ with $P(t) \coloneqq P(c)bt$ and define $L(t) \coloneqq t$.
        Note that $\lv(t) = \lv(c)$. Extend the total order $\le$ so that $t$ is immediately after $c$. Set $\rho(x) \coloneqq t$ for each vertex $x \in H$ with $\rho(x) = c$, and set $\rho(b) \coloneqq c$. 

  \item \label{item:detail_5}
        If $c = d = v$, label $\bar{u}$ with $P(\bar{u}) \coloneqq P(v) bt u \bar{u}$ and $L(\bar{u}) \coloneqq t u \bar{u}$, place $\bar{u}$ as a successor of $u$ in the total order $\le$ since $\bar{u}$ is retired, and proceed to \ref{item:detail_6}. Otherwise, label each unlabeled vertex $x \in P(u|c) \cup P(v|d)$, excluding the new bud $b$ if $c = d$, in descending order of $\rho(x)$ as follows.
        \begin{enumerate}[label=(\arabic*), nosep]
            \item Set $P(x) \coloneqq P(v)\overline{P(u|x)}x$ if $x \in P(u|c)$, and $P(x) \coloneqq P(u)\overline{P(v|x)}x$ if $x \in P(v|d)$.
            \item Unless $x$ is retired, append $x$ to the end of $\vl[\lv(x)]$. 
        \end{enumerate}
        After labeling all such $x$, define $L(x) \coloneqq P(x|y)$ for each $x$, where $y$ is the last vertex in $P(x)$ that is unlabeled or satisfies $y > x$. When $c = d$, the retired vertex $g$ and a potential tie are handled as follows (no ties occur when $c \neq d$). Since $g$ is excluded from $\vl$, its position within the total order $\le$ is determined according to this labeling order. Here, a tie occurs if and only if there exists an unlabeled vertex $h \neq g$ such that $\rho(h) = \rho(g) = t$; in this case, break the tie in the total order $\le$ by setting $g < h$, and define $L(g) \coloneqq P(g|b)$.

  \item \label{item:detail_6}
        If $c = d$, mark $b$ and $t$ as the bud and tip of $H$, respectively. Otherwise, set the bud and tip of $H$ to those of $K(c)$. Finally, register $H$ as a blossom and update $K(x) = H$ for each $x \in H$.
\end{enumerate} 

We defer algorithmic properties to \cref{sec:basic_properties}. Here, we describe the properties of the pivoting operation.
\begin{lemma}
  \label{prop:bt}
  The following properties hold after each $\blossom$ operation.
  \begin{enumerate}[label=(BT\arabic*),leftmargin=*]
    \item Let $C'$ be the matrix obtained by applying the pivoting operation around $T$ to the fundamental cocircuit matrix $C^*$. Then, $C'[V \cup S] = C$ holds.
    \item For any blossom $H$, tips and retired vertices within $H$ are not adjacent to any vertex in $V^* \setminus H$.
  \end{enumerate}
\end{lemma}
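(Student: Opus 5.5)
The argument is an induction on the number of $\blossom$ operations that insert a dummy line; only \ref{item:detail_3} changes $C^*$. Both properties hold trivially at the start, since $T=\emptyset$, $C^*=C$, and there are no blossoms. Consider one insertion of $\{b,t\}$ when $c\in B^*$ and $g\notin B^*$; the other case is the transpose. Write $C^*_{\mathrm{old}}$ for the matrix before the step.

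For (BT1), we use that pivoting operations commute in the Schur-complement sense: pivoting around $T_{\mathrm{old}}\cup\{b,t\}$ can be done by first undoing the new pivot, then pivoting around $T_{\mathrm{old}}$. After the insertion, pivoting the new matrix around $\{b,t\}$ returns the extended matrix we constructed, since pivoting is an involution on the index bookkeeping in \eqref{eq:pivoting}. In that extended matrix, the row $b$ and column $t$ are extra, and the submatrix on the old index set is exactly $C^*_{\mathrm{old}}$. Pivoting around $T_{\mathrm{old}}$ alone then returns $C$ on $V\cup S$ by the induction hypothesis. The remaining issue is to check that pivoting the extended matrix around $T_{\mathrm{old}}\cup\{b,t\}$ and restricting to $V\cup S$ ignores the appended row and column. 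For this we use the Schur-complement formula $\delta-\gamma\alpha^{-1}\beta$ on the block with pivot set $T$. In the extended matrix, $b$ and $t$ are both in $T$, and $\{b,t\}$ is a bona fide base-exchange pair because $C^*_{bt}=C^*_{cg}\neq0$, which holds since $g$ follows $c$ on a search path via an edge. We then argue that pivoting around the pair $\{b,t\}$ from the extended matrix and then around $T_{\mathrm{old}}$ gives a restriction to $V\cup S$ equal to the restriction of pivoting around $T_{\mathrm{old}}$ only. This holds because on $V\cup S$ the composite is determined by the Schur complement of the full $T$-block, and by the quotient formula this equals the Schur complement of $\{b,t\}$ in the Schur complement of $T_{\mathrm{old}}$. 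I expect this bookkeeping to be mechanical but index-heavy.

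For (BT2), we compute the pivot around $\{b,t\}$ explicitly. With $\alpha=C^*_{bt}$, the new column of $b$ (now nonbase) and row of $t$ are scalings of the old row $b$ and column $t$. The row $t$ is supported on $y\in H$ only, by the definition $C^*_{by}=0$ for $y\notin H$, and column $b$ is supported on $x\notin H$ with $C^*_{xb}=-C^*_{xg}/\alpha$. Here we need the convention that the tip is the element whose adjacency is confined to $H$; we will check which of $b,t$ plays the row role, and this is the step where the sign and side conventions must be matched with the statement. The entries $\delta_{xy}-C^*_{xt}C^*_{by}/\alpha$ change only for $x\notin H$, $y\in H$. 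In particular, row $c$ becomes $C^*_{cy}-C^*_{cg}C^*_{cy}/C^*_{cg}=0$ for $y\in H$ outside... . More to the point, column $g$ becomes $C^*_{xg}-C^*_{xg}C^*_{cg}/C^*_{cg}\cdot[\ldots]$, which vanishes for $x\notin H$ once we use $C^*_{bg}=C^*_{cg}$ when $g\in H$. Hence the retired vertex $g$ loses all edges to $V^*\setminus H$, and the tip $t$, whose row is supported on $H$, has no neighbors outside $H$.

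Finally, we must show that the property persists for older blossoms and under later pivots. The blossoms only grow, so $V^*\setminus H$ shrinks, which handles growth. For a later pivot with new blossom $H'$, old tips and retired vertices lie in some $H\subseteq H'$, or in $H$ disjoint from $H'$. The changed entries are in $(V^*\setminus H')\times H'$ via products $C^*_{xt'}C^*_{b'y}$. A vertex $w$ of $H\subseteq H'$ changes only its entries to $x\notin H'$, and those are products involving $C^*_{b'w}=C^*_{cw}$. We need this to vanish when $w$ is an old tip or retired vertex and $c\notin H$. When $H$ is disjoint from $H'$, the changes touch only row or column pairs with one side in $H'$, and we must show these entries stay zero. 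This nested-blossom case analysis is the main obstacle, and we will attack it using the laminarity of blossoms and the fact that $c$, $g$ lie on search paths entering $H'$.
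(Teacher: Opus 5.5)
Your proof of the freshly inserted dummy line is the paper's. After the pivot, the row of $t$ is a rescaling of the old row $b$, so it is supported in $H$ apart from $b$ itself. For $x\notin H$ you get $\widehat C^*_{xg}=C^*_{xg}-C^*_{xt}C^*_{bg}/C^*_{bt}=0$ from $C^*_{xt}=C^*_{xg}$ and $C^*_{bg}=C^*_{cg}=C^*_{bt}$; the latter uses $g\in P(u|c)\subseteq H$. This also settles your hedge about which of $b,t$ is the row: when $c\in B^*$ the algorithm inserts $b$ as a row, so after the pivot $t$ is a row inheriting the old $b$-row.

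For (BT1), which the paper calls straightforward, your first three sentences already suffice. Pivoting the new $C^*$ around $T_{\mathrm{old}}\cup\{b,t\}$ and pivoting the extended matrix around $T_{\mathrm{old}}$ both give the fundamental cocircuit matrix for the same base. On old indices that Schur complement reads only entries of $C^*_{\mathrm{old}}$. Drop the ``remaining issue'' sentences. Applying the $\{b,t\}$ pivot to the extended matrix and then pivoting around $T_{\mathrm{old}}$ does \emph{not} restrict to $C$ in general, because the $\{b,t\}$ pivot is exactly what alters entries in $(V^*\setminus H)\times H$ (e.g.\ it deletes the outside edges of $g$).

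The step you leave open is persistence of (BT2) for older blossoms under a later pivot. The paper's proof does not spell this out either, but it is not an obstacle: it follows directly from the induction hypothesis.

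Let a later $\blossom$ create $H'$ with vertices $c',g'$ and dummy line $\{b',t'\}$, say with $c'\in B^*$ and $g'\notin B^*$. The pivot changes an entry $(x,y)$ only if $x\notin H'$, $C^*_{xg'}\neq0$, $y\in H'$, and $C^*_{c'y}\neq0$. The new column $b'$ is supported on rows $x\notin H'$ with $C^*_{xg'}\neq0$. The new row $t'$ is supported on columns $y\in H'$ with $C^*_{c'y}\neq0$, plus $b'$.

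Two structural facts are needed. First, $H'$ is a union of maximal blossoms, so every older blossom $H$ is either contained in $H'$ or disjoint from it. Second, $g'\in H'$ while $c'\notin H'$. If some $K(x)$ in the union equalled $K(c')$, then $x$ would lie after $c'$ on the first path or after $d'$ on the second, contradicting the choice of $c'$ or $d'$ as last.

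Now let $w\in H$ be a tip or retired vertex.
\begin{itemize}
\item If $H\subseteq H'$, then $c'\notin H$, so $C^*_{c'w}=0$ by the induction hypothesis. A column $w$ therefore keeps its entries and is not adjacent to $t'$. A row $w$ lies in $H'$, so its row is untouched and it is not in the support of column $b'$.
\item If $H\cap H'=\emptyset$, then $g'\notin H$, so $C^*_{wg'}=0$ by the induction hypothesis. A row $w$ is therefore untouched and not adjacent to $b'$. A column $w\notin H'$ is untouched and not adjacent to $t'$.
\end{itemize}
In both cases the neighborhood of $w$ is unchanged, so (BT2) persists. The case $H\subseteq H'$ also covers the old tips and retired vertices inside the new blossom. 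The transposed case $c'\notin B^*$, $g'\in B^*$ is symmetric. A $\blossom$ with $c\neq d$ performs no pivot, so there only your observation that the new blossom contains the old ones is needed.
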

\begin{proof}
  Property (BT1) is straightforward by the algorithm description. It suffices to verify (BT2) when $\blossom$ creates a dummy line $\qty{b, t}$, since the statement inductively follows if the dummy line is reused.
  
  Prior to the pivoting operation, $b$ is not adjacent to any vertex in $V^* \setminus H$. Hence, the update formula in Eq.~\eqref{eq:pivoting} implies that $t$ is not adjacent to any vertex in $V^* \setminus H$.

  It remains to prove that $g$ is not adjacent to any vertex in $V^* \setminus H$. We assume $c \in B^*$ and $g \notin B^*$ because the other case follows by a similar argument. Let $C^*$ and $\widehat{C}^*$ be the fundamental cocircuit matrix before and after the pivoting operation. The update formula in Eq.~\eqref{eq:pivoting} yields the equation $\widehat{C}^*_{xy} = C^*_{xy} - C^*_{by} C^*_{xt} / C^*_{bt}$ for $x \in B^*$ and $y \in V^* \setminus B^*$. Hence, for $x \in B^* \setminus H$ and $y = g$, we have $\widehat{C}^*_{xg} = C^*_{xg} - C^*_{bg} C^*_{xt} / C^*_{bt} = 0$ by $C^*_{xg} = C^*_{xt}$ and $C^*_{bg} = C^*_{bt}$, which completes the proof.
\end{proof}
\begin{lemma}
  \label{prop:pivoting_new_edge}
  Let $E^*$ and $\widehat{E}^*$ be the edge sets before and after the pivoting operation performed by a $\blossom$ operation.
  Any edge $(x, y) \in E^* \symdiff \widehat{E}^*$ satisfies (i) $\qty{x, y} \cap \qty{b, t} \neq \emptyset$, or (ii) $x \in H$, $y \notin H$, and $(x, c), (g, y) \in E^*$, up to interchanging $x$ and $y$.
\end{lemma}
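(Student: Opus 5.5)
We work in the case $c \in B^*$ and $g \notin B^*$, so that $b$ is a new row and $t$ is a new column. The case $c \notin B^*$, $g \in B^*$ follows by transposing the roles of rows and columns in the same argument. The plan is to write the pivoting formula Eq.~\eqref{eq:pivoting} explicitly for the single $1\times 1$ pivot $\alpha = C^*_{bt}$, which is nonzero because $C^*_{bt} = C^*_{cg}$ and $(c,g)$ is an edge of the search path. Any entry not in row $b$ or column $t$ then transforms as
\[
  \widehat{C}^*_{xy} = C^*_{xy} - C^*_{by}\, C^*_{xt} / C^*_{bt},
  \qquad x \in B^* \setminus \qty{b},\ y \in (V^* \setminus B^*) \setminus \qty{t}.
\]
Entries in row $b$ or column $t$ are exactly the edges of type (i); these are the ones involving $b$ or $t$ after the exchange $B^* \gets B^* \symdiff \qty{b,t}$. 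So it suffices to analyze the displayed formula.

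An edge $(x,y)$ with $x,y \notin \qty{b,t}$ can lie in $E^* \symdiff \widehat{E}^*$ only if the correction term is nonzero, that is, only if $C^*_{by} \neq 0$ and $C^*_{xt} \neq 0$. We read both conditions off the construction in \ref{item:detail_3}.
\begin{itemize}
  \item $C^*_{by} \neq 0$ forces $y \in H$ and $C^*_{cy} = C^*_{by} \neq 0$, so $(c, y) \in E^*$.
  \item $C^*_{xt} \neq 0$ forces $x \notin H$ and $C^*_{xg} = C^*_{xt} \neq 0$, so $(x, g) \in E^*$.
\end{itemize}
After renaming so that the vertex in $H$ is called $x$ and the one outside is called $y$, this is exactly condition (ii): $x \in H$, $y \notin H$, $(x,c) \in E^*$, and $(g,y) \in E^*$. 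Here $E^*$ is read as the edge set before pivoting, which is what the statement means.

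The only point needing care is bookkeeping. First, $H$ in the statement denotes the set before $\qty{b,t}$ is added; since $b$ and $t$ are already covered by case (i), this causes no ambiguity. Second, the edge sets $E^*$ and $\widehat{E}^*$ are compared on the common vertex set, with the new vertices $b,t$ handled separately, and every edge incident to them also falls under (i). I expect no real obstacle beyond stating this convention precisely and noting that the transposed case is symmetric.
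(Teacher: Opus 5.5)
Your proposal is correct and follows the paper's proof essentially verbatim. You restrict to the case $c \in B^*$, $g \notin B^*$, use $\widehat{C}^*_{xy} = C^*_{xy} - C^*_{by} C^*_{xt}/C^*_{bt}$ to see that a changed entry requires $C^*_{by} \neq 0$ and $C^*_{xt} \neq 0$, and read condition (ii) off the definitions in \ref{item:detail_3}; your extra bookkeeping (nonzero pivot, edges at $b,t$ falling under (i), $H$ taken before adding $\qty{b,t}$) only makes explicit what the paper leaves implicit.
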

\begin{proof}
  Assuming that condition~(i) does not hold, we show that condition~(ii) follows. We assume $c \in B^*$ and $g \notin B^*$ since the other case follows similarly. The update formula of the entry $\widehat{C}^*_{xy} = C^*_{xy} - C^*_{by} C^*_{xt} / C^*_{bt}$ presented in the proof of \cref{prop:bt} implies that $C^*_{by} \neq 0$ and $C^*_{xt} \neq 0$ is necessary. Hence, the definition of these entries concludes the proof.
\end{proof}
\begin{lemma}
    \label{prop:neighbor_of_bt}
    Suppose a $\blossom$ operation inserts a dummy line $\qty{b, t}$. Then, if a vertex $v$ that existed before this operation (i.e., $v \notin \qty{b, t}$) is adjacent to $b$ (resp.~$t$), it was adjacent to $g$ (resp.~$c$) before the operation.
\end{lemma}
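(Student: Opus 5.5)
The plan is to compute the entries of the new row and column of $C^*$ explicitly after pivoting around $\{b,t\}$, and to read off which entries can be nonzero. We treat the case $c \in B^*$ and $g \notin B^*$ in detail; the other case is symmetric under transposition. In this case $b$ is a new row, $t$ is a new column, and $C^*_{bt} = C^*_{cg} \neq 0$, because $(c,g)$ is an edge of the search path $P(u)$. Pivoting around the single entry $(b,t)$ moves $t$ into the base and $b$ out of it. Write $\widehat{C}^*$ for the resulting matrix. By Eq.~\eqref{eq:pivoting} with $\alpha = C^*_{bt}$, its new row indexed by $t$ and its new column indexed by $b$ are
\begin{equation*}
  \widehat{C}^*_{ty} = \frac{C^*_{by}}{C^*_{bt}} \quad (y \in V^* \setminus B^*,\ y \neq t), \qquad
  \widehat{C}^*_{xb} = -\frac{C^*_{xt}}{C^*_{bt}} \quad (x \in B^*,\ x \neq b).
\end{equation*}
These are exactly the regions marked with an asterisk in \cref{fig:blossom_pivoting}. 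The pivot only rescales them, so their support is unchanged.

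Next we read off the neighbours. A pre-existing vertex $v$ adjacent to $t$ after pivoting lies in $V^* \setminus B^*$ with $C^*_{bv} \neq 0$. By the definition in \ref{item:detail_3}, this forces $v \in H$ and $C^*_{cv} \neq 0$. Hence $v$ was adjacent to $c$ before the operation.

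Similarly, a pre-existing vertex $v$ adjacent to $b$ lies in $B^*$ with $C^*_{vt} \neq 0$. The definition forces $v \notin H$ and $C^*_{vg} \neq 0$. Hence $v$ was adjacent to $g$.

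For the other case, $c \notin B^*$ and $g \in B^*$, the new column is $b$ and the new row is $t$, with $C^*_{tb} = C^*_{gc}$. The same formula shows that the row of $b$ after pivoting is a scalar multiple of the old row $t$, whose entries copy row $g$. Likewise, the column of $t$ after pivoting is a scalar multiple of the old column $b$, whose entries copy column $c$. So neighbours of $b$ were neighbours of $g$, and neighbours of $t$ were neighbours of $c$, as required.

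The only delicate point is bookkeeping. After pivoting, $b$ and $t$ swap roles between rows and columns, and one must match each of them to the correct copied row or column of $c$ or $g$. A mismatch would give the conclusion with $b$ and $t$ interchanged. The pivot entry itself, between $b$ and $t$, is excluded because $v \notin \{b,t\}$. No other entries of the new row or column are altered beyond rescaling.
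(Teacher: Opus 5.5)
Your proof is correct and takes the same approach as the paper. The paper's proof just cites the construction in \ref{item:detail_3} together with the pivoting formula in Eq.~\eqref{eq:pivoting}; your computation, showing that after the pivot the row and column of $t$ and $b$ are rescalings ($\alpha^{-1}\beta$ and $-\gamma\alpha^{-1}$) of entries copied from $c$ and $g$, supplies exactly the detail the paper omits, with the row/column role swap handled correctly in both cases.
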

\begin{proof}
    It follows from the description of \ref{item:detail_3} and the pivoting formula in Eq.~\eqref{eq:pivoting}.
\end{proof}

\subsection{Super Singleton Reduction}
\label{sec:super_singleton_reduction}

To simplify the subsequent proofs, we introduce a reduction from the multi-source to a single-source setting. The reduction consolidates the set of singletons $S = \qty{s_1, \dots, s_{\abs{S}}}$ into a new singleton $s_0$ by extending the graph as in \cref{fig:super_singleton_reduction}. This is achieved by extending the fundamental cocircuit matrix as follows:
\begin{equation}
  \label{eq:super_singleton_c}
  \NiceMatrixOptions{code-for-first-row = {}, code-for-first-col = {}, cell-space-limits = 2pt}
  C^* \gets \quad \begin{bNiceArray}{c|c|c}[first-col,first-row,margin]
    & V \setminus J & \bar{q} & q \\[-2pt]
    \rule{0pt}{1.3em} J & \Block{2-1}{\raisebox{-2pt}{$C^*$}} & & \\
    \rule{0pt}{1.3em} S & & & \\ \hline
    \rule{0pt}{1.3em} \bar{S} &  & \raisebox{1pt}{$\bm{1}$} & \\ \hline
    s_0 &  &  & \raisebox{-1pt}{$1$} \\
  \end{bNiceArray},
\end{equation}
where all blank blocks are zero and $\bm{1}$ denotes an all-ones vector of the corresponding size. Here, we introduce two new columns $q$ and $\bar{q}$ that form a line. We also added a set $\bar{S} \coloneqq \qty{\bar{s}_1, \dots, \bar{s}_\abs{S}}$ as new rows, where $\qty{s_i, \bar{s_i}}$ forms a line for each $i = 1, \dots, \abs{S}$. Finally, a new row $s_0$ is inserted as a unique singleton, which we call a \textit{super singleton}. The base $B^*$ is also updated as $B^* \gets B^* \cup \bar{S} \cup \qty{s_0}$.

\begin{figure}
  \centering
  \includegraphics[width=0.55\linewidth]{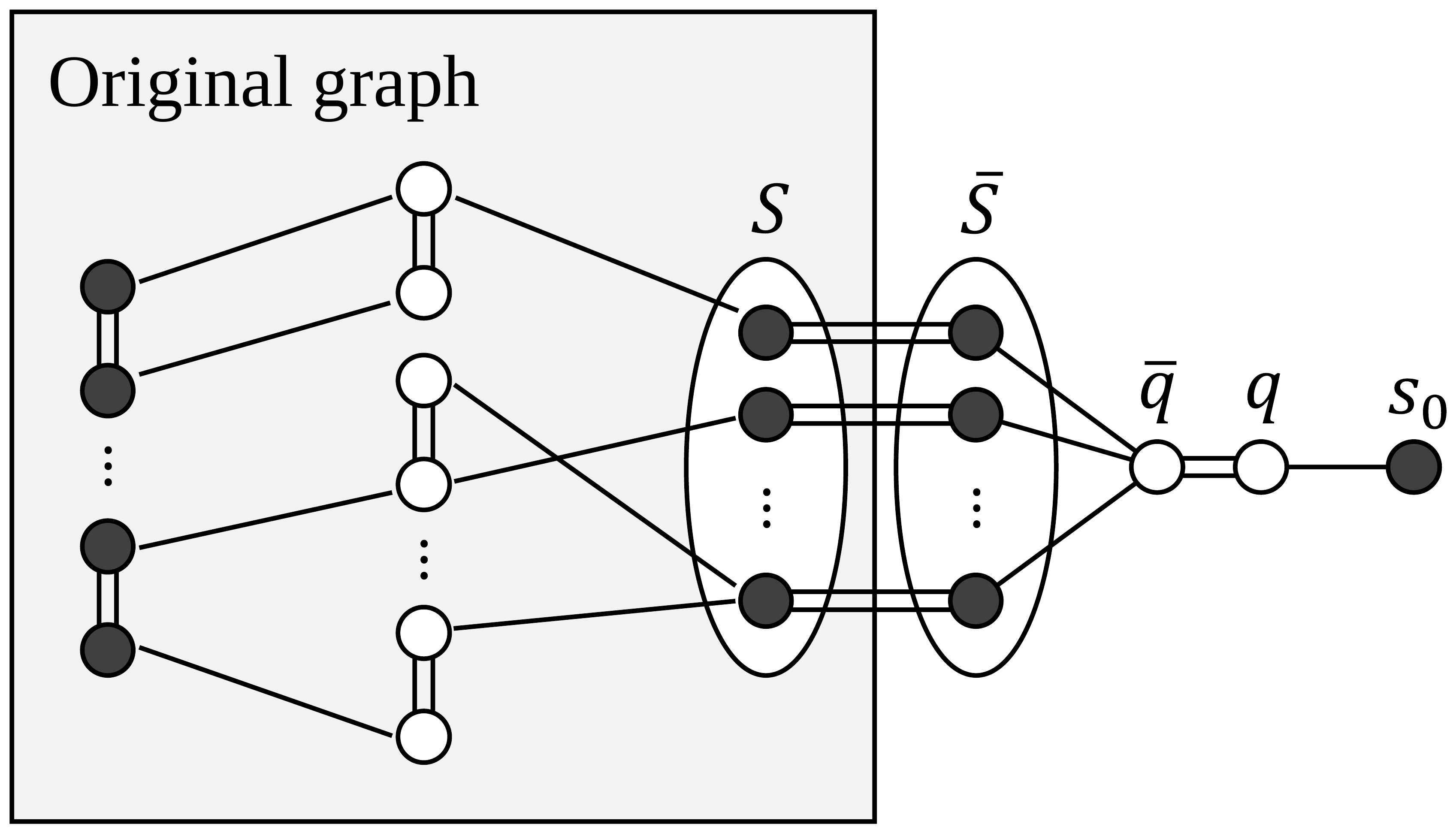}
  \caption{Graph extension via super singleton reduction. Black vertices belong to the new base.}
  \label{fig:super_singleton_reduction}
\end{figure}

While this reduction preserves the structure of search paths and blossoms, it reduces finding an augmenting path to forming a blossom that contains any two of the original singletons. This allows for a unified treatment of blossom formation and augmenting path detection in the subsequent proofs.

\subsection{Basic Properties}
\label{sec:basic_properties}

Here, we outline the fundamental properties of the algorithm to elucidate its mechanism and to facilitate the subsequent proofs. Unless otherwise specified, the invariant conditions discussed in the subsequent proofs are preserved by each $\grow$ or $\blossom$ operation, excluding nested $\blossom$ calls within $\grow$.

We first define the search tree $\st$ as follows. The vertex set $V(\st)$ consists of labeled vertices and their mates. The edge set $E(\st)$ comprises all edges of the forms $(x, \rho(x))$ and $(x, \bar{x})$ for every upper vertex $x$.

While the search tree $\st$ provides a fundamental framework, it does not fully characterize the structure of the search paths and related concepts. To bridge this gap, we incorporate \textit{blossom-forming edges}, defined as the edges to which $\blossom$ has been applied. As an exception, whenever the $\blossom$ operation is called within a $\grow$ operation, the corresponding blossom-forming edge is defined as $(u, t)$. The following lemma characterizes the structure of the search paths using these concepts.
\begin{lemma}
  \label{prop:path_edge_taxonomy}
  Any two consecutive vertices in a search path constitute either a search-tree edge or a blossom-forming edge.
\end{lemma}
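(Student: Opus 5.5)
We argue by induction on the sequence of operations, showing that the invariant "every pair of consecutive vertices in every search path is a search-tree edge or a blossom-forming edge" survives every place where a search path is created or modified. Both edge classes only grow over time, so pairs that were valid remain valid, except where pivoting changes paths. Initially, each $P(s) = s$ is a single vertex and the claim holds trivially. We then go through the four places where search paths are assigned: the $\grow$ operation, \ref{item:detail_4} of $\blossom$ (insertion of a dummy line), \ref{item:detail_5} (labeling along the two backtracked paths), and the special case $c=d=v$.

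\textbf{The $\grow$ step.} Here $P(\bar u) = P(v)u\bar u$. The new consecutive pairs are $(v,u)$ and $(u,\bar u)$. Since $\bar u$ is lower, $u$ is the upper vertex with $\rho(u)=v$. Hence both pairs are search-tree edges of the forms $(x,\rho(x))$ and $(x,\bar x)$.

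If the nested $\blossom(\bar u, v)$ at Line~\ref{line:degblm} fires, then $\rho(u)$ is overwritten by the new tip $t$. The path becomes $P(v)btu\bar u$, so the pair $(v,u)$ disappears and is replaced by $(v,b)$, $(b,t)$ and $(t,u)$:
\begin{itemize}
\item $(t,u)$ is a blossom-forming edge by the stated exception for this case (it is also $(u,\rho(u))$).
\item $(b,t)$ is a search-tree edge because $b$ is the upper vertex of its line.
\item $(v,b)$ holds because \ref{item:detail_4} sets $\rho(b)=c=v$.
\end{itemize}

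\textbf{Dummy-line insertion (\ref{item:detail_4}).} The path $P(x)$ is rewritten as $P(c)\,bt\,P(x|c)$. This removes the pair $(c,g)$ and introduces $(c,b)$, $(b,t)$ and $(t,g)$. The first two are search-tree edges since $\rho(b)=c$. For $(t,g)$, two cases arise:
\begin{itemize}
\item If $g$ is an upper vertex with $\rho(g)=c$, the reassignment $\rho(g):=t$ makes $(t,g)$ a search-tree edge.
\item If instead $(c,g)$ was a line or a blossom-forming edge, we must show that the rewritten pair is still covered.
\end{itemize}

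\textbf{Backtracking labels (\ref{item:detail_5}).} Each new path $P(v)\overline{P(u|x)}x$ (or its symmetric version) consists of three kinds of pairs:
\begin{itemize}
\item pairs inside $P(v)$, valid by induction;
\item reversed consecutive pairs of $P(u)$, valid because both edge classes are symmetric (undirected);
\item the junction pair $(v,u)$, which is exactly the blossom-forming edge.
\end{itemize}
The $c=d=v$ case yields $P(v)btu\bar u$. Here $(v,b)$ and $(b,t)$ are tree edges, $(u,\bar u)$ is a tree edge since $u$ is upper once $\bar u$ was labeled by $\grow$, and $(t,u)$ is the blossom-forming edge $(u,t)$.

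\textbf{Expected main obstacle.} The delicate point is \ref{item:detail_4} in the case where $(c,g)$ is a line or a previous blossom-forming edge rather than a $\rho$-edge. The path rewriting does not obviously correspond to any tree edge, and pivoting may delete the edge $(c,g)$ from $E^*$. We would settle this using the structural fact, anticipated by \cref{cor:justify_line_4}, that the successor $g$ of $c$ always satisfies $\rho(g)=c$. The argument is that $c$ is the last vertex of $P(u)$ whose blossom meets $P(v)$, so $g$ lies in a fresh part of the path outside $K(c)$. That part is entered only via a $\grow$-created pair, which in the blossom setting is an $(x,\rho(x))$ edge. If this fact cannot be established directly, we would interpret edges in the lemma as pairs recorded combinatorially (by $\rho$ and blossom-forming records) rather than by current membership in $E^*$. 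That interpretation makes the lemma insensitive to pivoting, as the definitions of $\st$ already suggest.
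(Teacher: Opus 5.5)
Your proof is correct in substance. It runs the same induction as the paper over the places where search paths are created or rewritten, but it puts the weight on a different step.

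The paper treats consecutive pairs as current edges of $G^*$ and calls the definitions and rewrites in $\grow$ and $\blossom$ routine. It isolates one danger: edges deleted by the pivoting in \ref{item:detail_3}. \cref{prop:pivoting_new_edge}~(ii) gives every deleted edge an endpoint in $H$, so any path containing one is rewritten in \ref{item:detail_4}.

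All your checks are instead phrased in terms of the combinatorial records ($\rho$, lines, blossom-forming edges). This is consistent with $E(\st)$ containing original lines $(x,\bar x)$, which are never edges of the bipartite graph $G^*$. Under that reading, deletions from $E^*$ never matter, and the real danger is the reset $\rho(x)\coloneqq t$ in \ref{item:detail_4}. Your explicit check of the new pairs $(c,b)$, $(b,t)$, $(t,g)$ spells out what the paper compresses into ``preserving the condition''. The paper's version, in turn, also addresses membership in the current $E^*$ (the content of \cref{prop:path_property}~(2)), which yours does not.

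Several points need repair.
\begin{itemize}
\item Your opening claim that both edge classes only grow is false. The reset removes $(c,x)$ from $E(\st)$, which is exactly why every path through such an $x$ must be rewritten, so that $c$'s successor becomes $b$.
\item The reset affects not only $g$. When $c=d$ and $P(v|d)\neq\emptyset$, it also affects the first vertex $v'$ of $P(v|d)$. Paths through $v'$, such as $P(v)$ itself, acquire the pair $(t,v')$ rather than $(t,g)$; this case is handled the same way.
\item Your fallback of reading edges combinatorially does not remove the obstacle. That is already your reading, and $(t,g)$ is a recorded pair only because $\rho(g)$ was $c$.
\item The fact $\rho(g)=c$ does follow from the simultaneous induction. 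By maximality of $c$, $g\notin K(c)$, so $P(u)$ enters $K(g)$ at $g$. By (MB2) of \cref{prop:kx_properties}, $g$ is the unlabeled vertex of $K(g)$. Then \cref{lem:search_path_prefix} gives $P(u)=P(\rho(g))\,g\bar g\cdots$, hence $\rho(g)=c$. Symmetrically, $\rho(v')=c$ via $P(v)$.
\item For the call inside $\grow$, in $\grow$'s naming the blossom-forming edge is $(\bar u,t)$. The pair $(t,u)$ is justified only as the tree edge $(u,\rho(u))$, as your parenthetical says.
\end{itemize}
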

\begin{proof}
  The statement holds inductively when the search path is defined or updated by $\grow$ or $\blossom$. Hence, the only potentially problematic case is the deletion of edges upon the introduction of dummy lines. However, since a removed edge $(x, y)$ must satisfy condition~(ii) from \cref{prop:pivoting_new_edge}, every search path $P(v)$ including such an edge intersects $H$. Consequently, \ref{item:detail_4} must have updated this path to $P(c)btP(v|c)$, preserving the condition.
\end{proof}

Based on these concepts, we characterize the structure of the maximal blossoms $K(x)$. A simple induction verifies that blossoms are a union of lines. Moreover, maximal blossoms constitute a partition of the search tree $\st$ except for the root $s_0$. The following lemma characterizes the maximal blossoms via the search tree and the search paths (see \cref{fig:partition,fig:c_eq_d,fig:c_neq_d}).

\begin{figure}[htbp]
  \centering
  \includegraphics[width=0.64\textwidth]{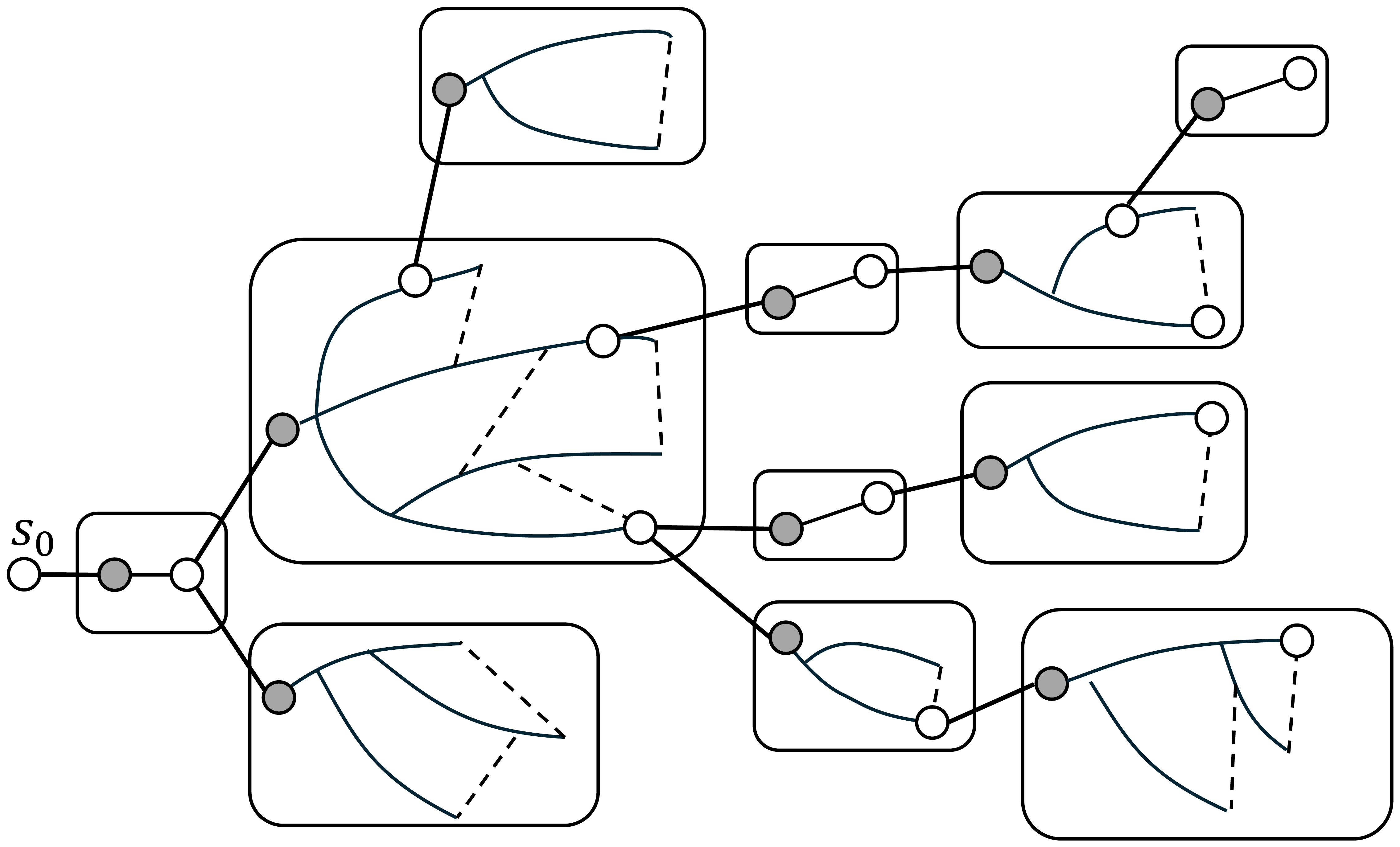}
  \caption{Partition of $V(\st) \setminus \qty{s_0}$ given by $K(x)$. Gray nodes represent the unique unlabeled vertex in each component. Solid lines and curves represent single and multiple search-tree edges, respectively; dashed lines represent single blossom-forming edges.}
  \label{fig:partition}
\end{figure}

\begin{figure}[htbp]
  \centering
  \includegraphics[width=0.64\textwidth]{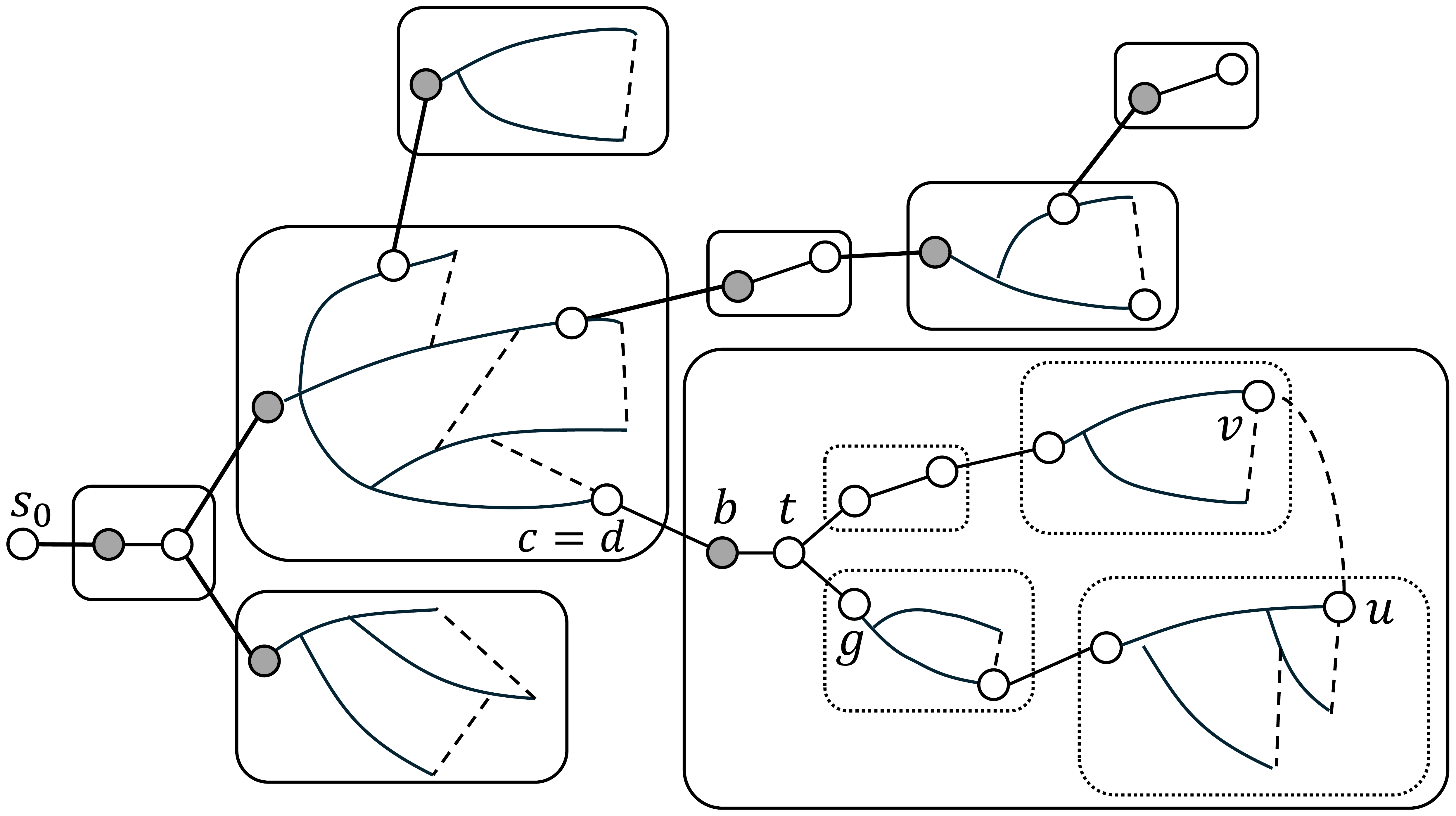}
  \caption{Update of the partition by a $\blossom$ operation in the case where $c = d$.}
  \label{fig:c_eq_d}
\end{figure}

\begin{figure}[htbp]
  \centering
  \includegraphics[width=0.64\textwidth]{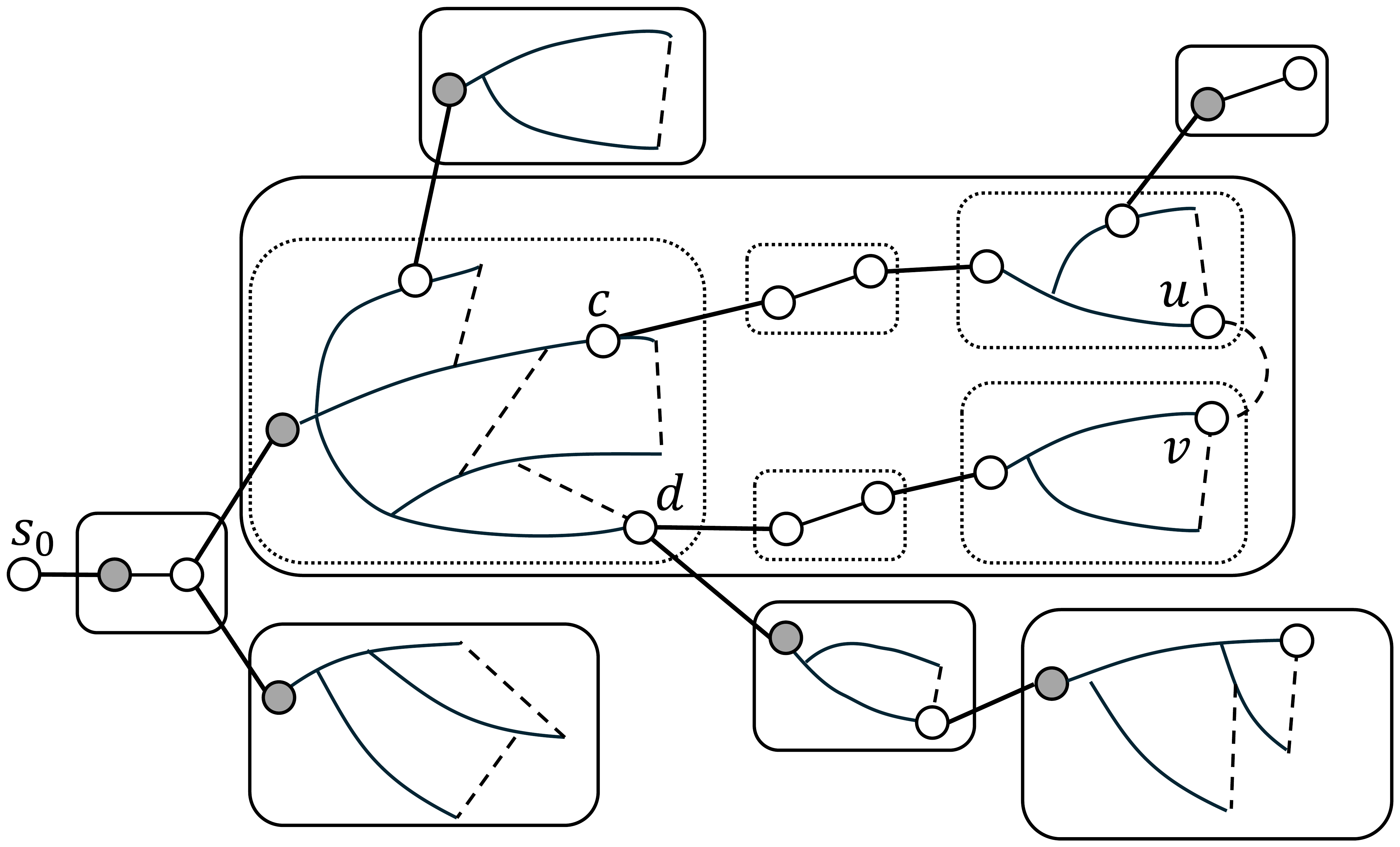}
  \caption{Update of the partition by a $\blossom$ operation in the case where $c \neq d$.}
  \label{fig:c_neq_d}
\end{figure}

\begin{lemma}
  \label{prop:kx_properties}
  For every vertex $x \in V(\st) \setminus \qty{s_0}$, the maximal blossom $K(x)$ satisfies the following properties.
  \begin{enumerate}[label=(MB\arabic*), leftmargin=*]
    \item $K(x)$ contains exactly one unlabeled vertex.
    \item Any search path $P(v)$ that enters $K(x)$ must do so via its unique unlabeled vertex.
    \item The induced subgraph $\st[K(x)]$ is a tree.
  \end{enumerate}
\end{lemma}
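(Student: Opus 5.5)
We will prove (MB1)--(MB3) simultaneously by induction on the sequence of operations that modify $V(\st)$, the labels, or the maximal blossoms. These are the $\grow$ operations, including the nested $\blossom$ in Line~\ref{line:degblm}, and the $\blossom$ operations. After the super singleton reduction of \cref{sec:super_singleton_reduction}, the initial state consists of $s_0$ together with the lines $\{s_i,\bar{s}_i\}$ and $\{q,\bar{q}\}$. Here the statements are checked directly: each $K(x)$ is a single line $\{x,\bar{x}\}$ with exactly one labeled endpoint, and the search path enters it through the unlabeled endpoint, or not at all.

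For a $\grow$ step that finds a new original line $\{u,\bar{u}\}$ via $(v,u)$, the new component is $K(u)=\{u,\bar{u}\}$. Here $\bar{u}$ is labeled and $u$ is unlabeled, so (MB1) holds. The line edge $(u,\bar{u})$ is not yet a search-tree edge, because $u$ is not yet an upper vertex; hence (MB3) holds with $\st[K(u)]$ a single vertex at this moment. Alternatively we simply check whichever convention makes $K(u)$ connected once $\bar{u}$ becomes upper. The new search path $P(\bar{u})=P(v)u\bar{u}$ enters $K(u)$ at $u$, and no other search path is changed, so (MB2) holds.

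The main case is $\blossom(u,v)$, and we treat the two subcases of \ref{item:detail_2} separately, following \cref{fig:c_eq_d,fig:c_neq_d}. The key structural fact we use is that $P(u|c)$ and $P(v|d)$ traverse components $K(x)$ that are disjoint from each other and from $K(c)$, except at $K(c)$ itself. By (MB2) for the old partition, each such component is entered at its unique unlabeled vertex. After backtracking, \ref{item:detail_5} labels every unlabeled vertex on $P(u|c)\cup P(v|d)$. Hence all old unlabeled vertices in $H$ become labeled, except one:
\begin{itemize}
\item[(a)] if $c\ne d$, the unlabeled vertex of $K(c)$ survives, and this is where $P(u)$ and $P(v)$ enter $K(c)$ by (MB2);
\item[(b)] if $c=d$, the new bud $b$, which is explicitly excluded from labeling, is the surviving unlabeled vertex. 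The vertices $c$ and $t$ are labeled, $g$ is labeled upon retirement, and $c$ lies outside $H$.
\end{itemize}
This gives (MB1).

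For (MB2), consider any search path meeting $H$. Before the operation it entered $H$ through some component on the backtracked paths. Because the maximal-blossom components form a tree-like partition along search paths, we can show it enters $H$ from outside only through the unlabeled vertex determined in (a) or (b). For this we use \cref{prop:path_edge_taxonomy} together with the update $P(x)\coloneqq P(c)btP(x|c)$ of \ref{item:detail_4} in case (b): in that case every path enters $H$ right after $c$, at $b$. Paths newly assigned in \ref{item:detail_5} have the form $P(v)\overline{P(u|x)}x$. Their prefix $P(v)$ enters $H$ at the required vertex, and afterwards they stay inside $H$.

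For (MB3), $\st[H]$ is obtained by gluing the trees $\st[K(x)]$ for the merged components. The gluing uses the new search-tree edges $(x,\rho(x))$ and $(x,\bar{x})$ of the newly upper vertices $x$, together with, in case (b), the edges $(b,c)$ and the reassigned $\rho(x)=t$ from \ref{item:detail_4}. We will count that each merged component except the root component contributes exactly one new connecting edge: the edge to $\rho$ of its formerly unlabeled vertex. Therefore the result is connected with $|H|-1$ edges, that is, a tree.

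The main obstacle is (MB2) in case (b). There the pivoting may delete edges, as described in \cref{prop:pivoting_new_edge}, and search paths are rewritten. We therefore need to verify carefully that the rewriting in \ref{item:detail_4} reroutes every path meeting $H$ through $b$, and that the degenerate $\blossom$ inside $\grow$ (the case $c=d=v$) fits the same pattern.
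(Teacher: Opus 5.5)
Your skeleton matches the paper's: induction over $\grow$/$\blossom$, the split $c=d$ versus $c\neq d$, (MB1) via (MB2) placing the old unlabeled vertices on $P(u|c)\cup P(v|d)$, and (MB3) by gluing. But with only (MB1)--(MB3) as the induction hypothesis, the induction does not close.

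The rewrite $P(x)\coloneqq P(c)btP(x|c)$ in \ref{item:detail_4} makes sense only if $c$ occurs exactly once in $P(x)$. The rewritten path is simple only if $P(c)$ and $P(x|c)$ are disjoint. Likewise, the ``key structural fact'' you invoke is \cref{rem:kc_is_lca}, which the paper derives from (MB3), \cref{prop:path_edge_taxonomy}, \emph{and} the simplicity of all search paths. This is why the paper proves the lemma jointly with \cref{prop:path_property}. Simplicity must be in the hypothesis and must be re-established after each $\blossom$. For $c=d$ this holds because $P(c)$, $P(u|c)$, $P(v|d)$ are disjoint in the contracted tree. For $c\neq d$ it holds because $P(u)$ and $P(v|d)$ are disjoint, as are $P(v)$ and $P(u|c)$.

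With that hypothesis, the step you leave open as the ``main obstacle'' is short. Each search path maps to the root-to-$K(x)$ path of the contracted tree. So for $c=d$, the set $H$ is the union of the blossoms on the one or two chains below $K(c)$, whose top unlabeled vertices $w$ satisfy $\rho(w)=c$. Any search path meeting $H$ must leave $K(c)$ at $c$ through the search-tree edge $(c,w)$. Hence it contains $c$ exactly once, and after the rewrite it enters $H\cup\qty{b,t}$ at $b$. Pivoting deletions are not the issue here: $\st$ and search paths are defined through $\rho$, lines, and vertex sequences, not through adjacency in $G^*$. Deletions matter only for \cref{prop:path_edge_taxonomy}.

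Several details of the easy cases are also wrong.
\begin{itemize}
\item Lower and upper are fixed by labeling order. As soon as $\grow(v)$ labels $\bar{u}$, the vertex $u$ is already the upper vertex with $\rho(u)=v$, and $\bar{u}$ is the lower vertex. Both $u$ and $\bar{u}$ lie in $V(\st)$, $(u,\bar{u})$ and $(u,v)$ are search-tree edges, and $\st[K(u)]$ is the two-vertex tree. Under your convention it would be two isolated vertices, and (MB3) would fail.
\item Under the reduction only $s_0$ is labeled at the start. So $V(\st)\setminus\qty{s_0}=\emptyset$ and the base case is vacuous, not as you describe it.
\item In (MB3), $\blossom$ creates no new upper vertex other than $b$, and the inter-blossom edges $(w,\rho(w))$ already existed. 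The edge $(b,c)$ does not lie in $\st[H]$, because $c\notin H$. For $c=d$, the gluing edges are $(b,t)$ and the re-pointed edges $(w,t)$.
\end{itemize}
Finally, every non-root vertex of $\st$ has a unique parent, so $\st$ is a tree. Thus (MB3) amounts to connectivity of $\st[H]$, and no edge count is needed.
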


To prove this lemma, we also need to show that every search path is simple, i.e., that it does not contain any repeated vertices. Such basic properties of search paths are summarized below.
\begin{lemma}
  \label{prop:path_property}
  Let $P(x) = x_1 x_2 \dots x_l$ be a search path, where $x_1 = s_0$ and $x_l = x$. Then, $P(x)$ is simple, i.e., $x_i \neq x_j$ if $i \neq j$. Also, the following properties hold:
  \begin{enumerate*}[label=(\arabic*)]
    \item $l$ is odd,
    \item $x_{2i - 1}$ and $x_{2i}$ form an edge in $E^*$ for each $i$,
    \item $x_{2i}$ and $x_{2i + 1}$ form a line for each $i$, and
    \item if $x_j$ is unlabeled, then $j$ is even.
  \end{enumerate*}
\end{lemma}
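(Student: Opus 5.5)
We argue by induction over the sequence of operations that assign or modify search paths, in the order the algorithm performs them. The invariant is the whole statement: every current search path $P(x)=x_1\dots x_l$ is simple and satisfies (1)--(4). We prove it jointly with \cref{prop:kx_properties}, since (MB1)--(MB3) are needed to control how paths are concatenated in $\blossom$. Initially $P(s_0)=s_0$ (after the super singleton reduction), and the invariant holds trivially.

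\textbf{Grow.} Suppose $\grow(v)$ sets $P(\bar u)=P(v)u\bar u$. Then $(v,u)\in E^*$ gives (2) for the new pair, and $\{u,\bar u\}$ is a line, which gives (3). Parity (1) holds because we append two vertices to a path of odd length. Simplicity holds because neither $u$ nor $\bar u$ was labeled, whereas every odd-position vertex of $P(v)$ is labeled. We also need that $u\notin P(v)$ at an even position; otherwise $\bar u$ would sit at the adjacent odd position and so be labeled. Property (4) holds since the only possibly unlabeled new vertex, $u$, sits at an even position.

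\textbf{Blossom, Step~3 (dummy line).} Here the path is updated to $P(c)btP(x|c)$. We must check three things:
\begin{itemize}
\item $c$ sits at an odd position. This follows from (4) and the fact that $c$ is labeled, since $c$ is chosen as a last vertex whose maximal blossom meets $P(v)$, and (MB2) places it appropriately.
\item $(c,b)$ is an edge: $C^*_{cb}$ arises from $C^*_{cg}\ne0$ via the definition of the new row and column, which is preserved by \cref{prop:neighbor_of_bt}.
\item $(t,g)$ is an edge, by the same argument.
\end{itemize}
Edges destroyed by the pivot are handled exactly as in the proof of \cref{prop:path_edge_taxonomy}: any such edge lies on a path meeting $H$, and that path is rewritten.

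\textbf{Blossom, Step~4 (new labels).} New paths have the form $P(v)\overline{P(u|x)}x$. The edge $(u,v)$ joins two odd-position labeled vertices. The reversed segment $\overline{P(u|x)}$ keeps the alternation of lines and edges when read backwards because $x$ is unlabeled, hence at an even position of $P(u)$, so the segment from $u$ back to $x$ has even length. This gives (1)--(3).

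The main obstacle is simplicity of $P(v)\overline{P(u|x)}$, that is, $P(v)\cap P(u|c)=\emptyset$. The choice of $c$ as the last vertex whose $K(c)$ meets $P(v)$ ensures that no vertex of $P(u|c)$ lies in a maximal blossom met by $P(v)$. In the case $c\ne d$ we additionally use (MB2): both paths enter $K(c)$ through its unique unlabeled vertex. Together with (MB3) (tree structure), this shows that the portions inside $K(c)$ are disjoint apart from the common prefix ending at the entry vertex.

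For (4) in the new paths, note that every vertex of $P(u|c)\cup P(v|d)$ becomes labeled except $b$ (which sits at an even position) and the entry vertices of traversed blossoms (even by induction). Finally, we re-verify (MB1)--(MB3) for the merged $H$: its unique unlabeled vertex is $b$ when $c=d$, or the unlabeled vertex of $K(c)$ otherwise. This closes the induction.
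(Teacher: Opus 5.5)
Your skeleton (joint induction with \cref{prop:kx_properties} over $\grow$ and $\blossom$) is the paper's, and your observation that $P(v)\cap P(u|c)=\emptyset$ follows directly from the choice of $c$ is correct. The gap is the other family of new paths, $P(u)\overline{P(v|x)}x$ for $x\in P(v|d)$. Their simplicity needs $P(u)\cap P(v|d)=\emptyset$, and you only treat one form. The two are not symmetric, because $d$ is defined merely as the last vertex of $P(v)$ lying in $K(c)$. A vertex of $P(v|d)$ lies outside $K(c)$, and by the choice of $c$ outside $P(u|c)$. But nothing in your argument prevents it from coinciding with a vertex of $P(u)$ that precedes $K(c)$.

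What is missing is the contracted-tree structure of \cref{rem:kc_is_lca}. By (MB3) and \cref{prop:path_edge_taxonomy}, contracting the maximal blossoms turns $\st$ into a tree in which every search path $P(y)$ traces the path from $s_0$ to $K(y)$. In this tree, $K(c)=K(d)$ is the lowest common ancestor of $K(u)$ and $K(v)$. Hence the prefix of $P(u)$ through $c$ lives in $K(c)$ and its ancestors, while $P(u|c)$ and $P(v|d)$ live in different branches strictly below $K(c)$. This gives all the required disjointness at once.

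Your substitute sentence for $c\neq d$ does not provide this. No new path concatenates the portions of $P(u)$ and $P(v)$ inside $K(c)$, so their disjointness is beside the point. It also need not hold: $P(v)$ may pass through $c$ on its way to $d$.

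The same tree structure is what justifies \ref{item:detail_4}. There you take for granted that every $P(x)$ meeting $H$ contains $c$ and that $P(c)btP(x|c)$ is simple. The paper gets both from the contracted tree: $P(c)$ maps to the path from $s_0$ to $K(c)$, and $P(x|c)$ to a path starting at a child of $K(c)$. Also, the successor $w$ of $c$ in $P(x)$ may be the first vertex of $P(v|d)$ rather than $g$, so you need $(t,w)\in E^*$ for every such $w$.

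Smaller points:
\begin{enumerate}
\item Property (4) says unlabeled $\Rightarrow$ even position, so ``$c$ is labeled, hence at an odd position'' runs the wrong way. The correct reason is that if $c$ sat at an even position, its successor $\bar c$ would lie in $K(c)$ and come after $c$, contradicting the choice of $c$ as last.
\item \cref{prop:neighbor_of_bt} gives only a necessary condition for adjacency to $b$ or $t$. The edges $(c,b)$ and $(t,w)$ come from the pivot formula: it makes the $(c,b)$ entry $\pm 1$ and the $(t,w)$ entry $\pm C^*_{cw}/C^*_{cg}\neq 0$, with indices oriented appropriately.
\item When $c=d=v$, the pivot deletes the edge $(u,v)$ itself, since condition (ii) of \cref{prop:pivoting_new_edge} holds with $v=c$ and $u\in H$. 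Your generic form $P(v)\overline{P(u|x)}x$ would then violate (2). This is why \ref{item:detail_5} uses $P(v)btu\bar u$ in that case, and that case needs its own check.
\end{enumerate}
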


We prove \cref{prop:kx_properties,prop:path_property} simultaneously by induction, since each lemma is necessary to establish the other in the inductive step. Before presenting the proof, we provide a core observation based on these lemmas that simplifies the identification of $c$ in \ref{item:detail_2}.
\begin{remark}
    \label{rem:kc_is_lca}
    By (MB3), the contraction of each maximal blossom $K(x)$ on $\st$ yields a tree. Furthermore, because search paths are simple and blossom-forming edges cannot bridge different maximal blossoms, \cref{prop:path_edge_taxonomy} ensures that each search path $P(v)$ is mapped to the unique path from $s_0$ to $K(v)$ in this contracted tree. Therefore, for a $\blossom(u, v)$ operation, the blossom $K(c) = K(d)$ defined in \ref{item:detail_2} corresponds to the lowest common ancestor of $K(u)$ and $K(v)$ in this contracted tree.
\end{remark}

\begin{proof}[Proof of \cref{prop:kx_properties,prop:path_property}.]
  We proceed by induction on the $\grow$ and $\blossom$ operations. It is straightforward to verify all properties in these lemmas when $\grow$ adds a new original line to $\st$. Hereafter, we focus on a $\blossom$ operation on $u$ and $v$. Note that \cref{rem:kc_is_lca} holds inductively prior to the operation.

  First, suppose $c = d$. We first focus on the update of search paths in \ref{item:detail_4}. In this case, \cref{rem:kc_is_lca} ensures that the set $H$ in \ref{item:detail_2} is the union of the maximal blossoms on the paths from $K(u)$ and $K(v)$ to $K(c)$, excluding $K(c)$. Consequently, every search path $P(x)$ intersecting $H$ must enter $H$ by exiting $K(c)$ at $c$ via a search-tree edge, which implies $c \in P(x)$. Hence, the updated path $P(c) bt P(x|c)$ is well-defined. Moreover, this path remains simple because $P(c)$ and $P(x|c)$ are disjoint; specifically, in the contracted tree, $P(c)$ is mapped to the path from $s_0$ to $K(c)$, whereas $P(x|c)$ is mapped to the subsequent path from a child component of $K(c)$ to $K(x)$.

  Still under the assumption $c = d$, we now show that (MB1) and (MB2) are inductively preserved and that the search paths remain simple. For the blossoms comprising $H$, their unique unlabeled vertices lie on $P(u|c)$ or $P(v|d)$ by (MB2), and thus become labeled in \ref{item:detail_5}. Consequently, the resulting blossom $H \cup \qty{b, t}$ has $b$ as its unique unlabeled vertex, satisfying (MB1). Furthermore, the updated paths in \ref{item:detail_4} enter the resulting blossom $H \cup \qty{b, t}$ through $b$, maintaining (MB2). The search paths newly defined in \ref{item:detail_5} also preserve (MB2), because their extended segments remain within the resulting blossom. Finally, these paths are simple, because the contraction structure guarantees that $P(c)$, $P(u|c)$, and $P(v|d)$ are disjoint before their updates.
  
  Next, suppose $c \neq d$. The set $H$ now includes $K(c)$, and a similar argument shows that the unique unlabeled vertex of $K(c)$ is the only unlabeled vertex within $H$, satisfying (MB1). Property (MB2) holds similarly, because every search path entering the new blossom $H$ must visit $K(c)$ before any other blossom comprising $H$. Finally, the search paths newly defined in \ref{item:detail_5} are simple, because $P(u)$ and $P(v|d)$ are disjoint, as are $P(v)$ and $P(u|c)$.
  
  We now verify the remaining conditions: properties (1)--(4) from \cref{prop:path_property} and (MB3). Properties (1)--(4) from \cref{prop:path_property} follow by an analogous inductive argument. Regarding (MB3), existing maximal blossoms preserve (MB3) because their internal edges cannot be deleted. The new blossom also satisfies (MB3) because search-tree edges link the components in $H$ (and $\qty{b, t}$ if $c = d$) into a tree. This completes the proof.
\end{proof}

The contraction structure discussed in this proof has several implications for the description of the $\blossom$ operation. First, the definitions of $c$ and $d$ in \ref{item:detail_2} are symmetric in the sense that $d$ satisfies the symmetric condition, i.e., $d$ is the last vertex in $P(v)$ such that $K(d) \cap P(u) \neq  \emptyset$. Furthermore, the algorithm only designates previously unlabeled vertices as $g$, which ensures that active vertices never become retired once they are labeled. Also, the characterization of possible ties described in \ref{item:detail_5} can be verified.

Next, we will prove \cref{cor:justify_line_4}, which ensures that the update of each search path intersecting $H$ in \ref{item:detail_4} corresponds to inserting $bt$ into $P(x)$. This implies the invariance of levels under the insertion of dummy lines. To prove this corollary, we utilize the following lemma.
\begin{lemma}
  \label{lem:search_path_prefix}
  Every unlabeled vertex $y$ in a search path $P(x)$ is an upper vertex, i.e., $\rho(y)$ is defined. Moreover, $P(x)$ decomposes as $P(\rho(y)) y \bar{y} P(x | \bar{y})$.
\end{lemma}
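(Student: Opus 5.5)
We will prove \cref{lem:search_path_prefix} by induction on the sequence of $\grow$ and $\blossom$ operations, showing that the claim holds for every search path after each operation, assuming it held for all search paths before. The first part, that every unlabeled vertex $y$ on a search path is an upper vertex, is the easier half. By \cref{prop:path_property}(4), an unlabeled $y=x_j$ has even index, so $x_{j-1}y$ is an edge and $y\bar{y}$ a line. Since $y$ is unlabeled and lies in $V(\st)$, its line is half-labeled: $\bar{y}$ is labeled and $y$ is not. Lower vertices of dummy lines (buds $b$) are labeled at insertion and are not in $V^*$ beforehand, so they are never unlabeled once they appear. Hence $y$ is the unlabeled half of an original line found by $\grow$, and $\rho(y)$ was recorded there. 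For $b$, which is unlabeled inside $H$, \ref{item:detail_4} sets $\rho(b)=c$ explicitly.

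For the decomposition we treat each way a search path is created or modified.

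\textbf{Paths created by $\grow$.} When $\grow(v)$ sets $P(\bar{u})=P(v)u\bar{u}$, the new unlabeled vertex $u$ has $\rho(u)=v$, so the claim holds for $y=u$. For any other unlabeled $y$ on $P(v)$, the decomposition is inherited from $P(v)$ by induction. If Line~\ref{line:degblm} then triggers $\blossom$, the path is rewritten by \ref{item:detail_4}/\ref{item:detail_5} and falls under the cases below. Note that $\rho(u)$ is overwritten to $t$, consistent with $P(\bar u)=P(v)btu\bar u=P(t)u\bar u$.

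\textbf{Paths updated in \ref{item:detail_4}.} A path becomes $P(c)btP(x|c)$, and $P(t)=P(c)bt$. The unlabeled $y$ on it are of three kinds:
\begin{itemize}
\item $y$ on $P(c)$: this follows from induction applied to $P(c)$.
\item $y=b$: we have $\rho(b)=c$, and the path is $P(c)\,b\,t\,P(x|c)$, where $P(x|c)=P(x|\bar b)$ in the new path.
\item $y$ on $P(x|c)$: by induction the old path decomposed as $P(\rho(y))y\bar{y}P(x|\bar y)$ with $\rho(y)$ on the old $P(x)$. If $\rho(y)$ lies after $c$, then $P(\rho(y))$ itself intersects $H$ and was updated by the same rule, so the decomposition persists. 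If $\rho(y)=c$, then $\rho$ was reset to $t$ and $P(t)=P(c)bt$ supplies the required prefix.
\end{itemize}

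\textbf{Paths created in \ref{item:detail_5}.} Here $P(x)=P(v)\overline{P(u|x)}x$ (or symmetrically). Unlabeled vertices on the $P(v)$ prefix are handled by induction. The main obstacle is the reversed segment $\overline{P(u|x)}$, where orientation is flipped. We claim it contains no unlabeled vertices once labeling is complete, except when the path is defined earlier. The reason is that all unlabeled vertices of $P(u|c)\cup P(v|d)$ are labeled in \ref{item:detail_5}, apart from $b$, which does not occur on reversed segments. Search paths are assigned in this step while other vertices on the segment may still be unlabeled, but the lemma concerns the current state of each path. Since vertices are labeled and never unlabeled, the claim only needs to be checked at the end of the operation, when every vertex on these segments is labeled. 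The same monotonicity handles previously existing paths through the newly labeled vertices: labeling $y$ only removes obligations, and $\rho$ values other than the \ref{item:detail_4} reset never change.

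Finally, we check that later pivoting deletions do not break the statement. The decomposition is purely a statement about sequences and $\rho$, so it is unaffected by edge changes. The edge-level validity is in any case guaranteed by \cref{prop:path_edge_taxonomy}.
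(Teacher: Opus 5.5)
Your proof is correct and takes essentially the same route as the paper. You induct over the $\grow$/$\blossom$ operations, inherit the decomposition for paths built by $\grow$, and use the common rewrite $P(c)btP(\cdot\,|c)$ with the reset $\rho \gets t$ for paths modified in \ref{item:detail_4}. You also note that every previously unlabeled vertex on the reversed segments is labeled by the end of \ref{item:detail_5}, so only the unreversed prefix carries obligations; the paper states the invariant as ``$P(\bar{y})$ is a prefix of every search path through the unlabeled vertex $y$,'' which is equivalent.

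One slip to fix: the lower vertex of a dummy line is the tip $t$, labeled upon insertion, while the bud $b$ is the upper, still-unlabeled vertex. That is exactly why your next sentence needs $\rho(b)=c$.
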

\begin{proof}
  Since $y$ is unlabeled, $\qty{y, \bar{y}}$ is either found by $\grow$ or created as a dummy line by $\blossom$. In either case, $y$ is an upper vertex and $\rho(y)$ is defined. Furthermore, $P(\bar{y}) = P(\rho(y)) y \bar{y}$ holds.

  We proceed by induction to show that any search path containing $y$ has $P(\bar{y})$ as its prefix as long as $y$ is unlabeled. If a new search path containing $y$ is generated by $\grow$, this condition is trivially satisfied. Suppose next that $\blossom(u, v)$ creates a search path $P(x)$ containing $y$ as an unlabeled vertex. If $y$ is the bud created by this operation, the statement follows immediately by definition. Otherwise, since $y$ existed prior to the operation, $y$ cannot lie on $P(u|c)$ or $P(v|d)$. Consequently, $y$ lies on the unreversed subpath in the definition of $P(x)$, namely $P(u)$ if $P(x) \coloneqq P(u) \overline{P(v|x)} x$, or $P(v)$ if $P(x) \coloneqq P(v) \overline{P(u|x)} x$. Therefore, $P(\bar{y})$ is a prefix of $P(x)$ by the inductive assumption.

  Finally, suppose that $P(\bar{y})$ is a prefix of $P(x)$, and that inserting a dummy line modifies at least one of these paths. The possible cases are that both paths are updated, or that only $P(x)$ is updated. In the former case, $P(\bar{y})$ and $P(x)$ are updated to $P(c) bt P(\bar{y}|c)$ and $P(c)btP(x|c)$, respectively, preserving the condition. In the latter case, $y$ must precede or equal $c$ in $P(x)$. Hence, if $y \notin P(c)$, the updated path $P(x) = P(c) bt P(x|c)$ no longer contains $y$. If $y \in P(c)$, then $P(\bar{y})$ is a prefix of $P(c)$ by the inductive assumption, and consequently of $P(x)$ after the update.
\end{proof}
\begin{corollary}
  \label{cor:justify_line_4}
  If a search path $P(x)$ intersects $H$ in \ref{item:detail_4}, then $c \in P(x)$ and $P(x) = P(c) P(x|c)$.
\end{corollary}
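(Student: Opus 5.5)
The corollary follows from the structure established in \cref{rem:kc_is_lca} and the proof of \cref{prop:kx_properties,prop:path_property}, refined by \cref{lem:search_path_prefix}. Fix the $\blossom(u, v)$ operation with $c = d$ that reaches \ref{item:detail_4}, and let $P(x)$ be a search path (before the update) that intersects $H$. First I would establish $c \in P(x)$. By \cref{rem:kc_is_lca}, $P(x)$ maps to the unique path from $s_0$ to $K(x)$ in the tree obtained by contracting maximal blossoms. Since $H$ is the union of the maximal blossoms on the contracted paths from $K(u)$ and $K(v)$ down to, but excluding, $K(c)$, any path meeting $H$ must pass from $K(c)$ into a child component lying in $H$. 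By \cref{prop:path_edge_taxonomy}, and because blossom-forming edges never join distinct maximal blossoms, this transition uses a search-tree edge $(c', y)$ with $c' \in K(c)$ and $y$ in a child component. By (MB2), $y$ is the unique unlabeled vertex of that child blossom, and $c' = \rho(y)$.

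The next step is to show $c' = c$. Here $c$ is the last vertex of $P(u)$ in $K(c)$, so its successor $g$ on $P(u)$ is the entry point of $P(u)$ into the child component of $K(c)$ toward $K(u)$. Since $c = d$, $c$ is also the exit of $P(v)$ from $K(c)$. Thus the two children of $K(c)$ containing $K(u)$ and $K(v)$ both hang from $c$ in the contracted tree. I would argue that every component of $H$ is a descendant of one of these two children, so for any $P(x)$ meeting $H$ the exit vertex $c'$ from $K(c)$ is the $\rho$-parent of one of these two children's unlabeled vertices, hence $c' = c$. This gives $c \in P(x)$.

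For the decomposition $P(x) = P(c)P(x|c)$, I would apply \cref{lem:search_path_prefix} to the unlabeled vertex $y$ immediately after $c$ on $P(x)$ (it is unlabeled by (MB2) and has even index by \cref{prop:path_property}(4)). The lemma gives $P(x) = P(\rho(y))\, y\, \bar{y}\, P(x|\bar{y})$ with $\rho(y) = c$. The prefix of $P(x)$ up to $c$ is therefore exactly $P(c)$, and the remainder is $P(x|c)$ by definition. The same argument applied to the simple path $P(x)$ shows $c$ occurs only once, so $P(x|c)$ is well defined.

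I expect the main obstacle to be the identification $\rho(y) = c$ for every such path, rather than only for $P(u)$ and $P(v)$. Ruling out that $P(x)$ leaves $K(c)$ through a different vertex of $K(c)$ into a component of $H$ needs exactly that $H$ lies in the subtrees of the two children on the $u$- and $v$-sides. If these two children are distinct, their $\rho$-parents are the last vertices of $P(u)$ and $P(v)$ in $K(c)$, both equal to $c$. If they coincide, the claim is immediate.
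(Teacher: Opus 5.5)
Your proposal is correct and takes essentially the paper's route: the first vertices of $P(u|c)$ and $P(v|d)$ are the unique unlabeled vertices of their blossoms by (MB2), their $\rho$-parent is $c$, and \cref{lem:search_path_prefix} then makes $P(c)$ a prefix of every search path meeting $H$. Your contracted-tree argument via \cref{rem:kc_is_lca}, showing that every such path passes through one of these two entry vertices, spells out a step the paper leaves implicit.
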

\begin{proof}
  Given that $P(u)$ enters $H$ through the first vertex $u'$ of $P(u|c)$, \cref{prop:kx_properties}~(MB2) ensures that $u'$ is the unlabeled vertex of $K(u')$. Similarly, the first vertex $v'$ of $P(v|d)$, if it exists, is the unlabeled vertex of $K(v')$. Hence, applying \cref{lem:search_path_prefix} to $u'$ and $v'$ shows that all search paths intersecting $H$ must contain $P(\bar{u}')$ or $P(\bar{v}')$ as their prefix. Since these paths further contain $P(c)$ as their common prefix, the statement follows.
\end{proof}

Finally, we present a structural lemma of the paths identified by the algorithm, which will be crucial for establishing the connection between the nonsingularity of $C[P \setminus T]$ and that of $C^*[P]$.
\begin{definition}
    \label{def:coherence}
    A path $P \subseteq V^*$ is called \textit{coherent} if $P \cap H \neq \emptyset$ implies $P \cap \qty{b, t} \neq \emptyset$ for every dummy line $\qty{b, t}$ introduced by the blossom $H$.
\end{definition}
\begin{lemma}
    \label{lem:path_coherence}
    Every search path $P(v)$ and the path $P$ found by the search procedure are coherent.
\end{lemma}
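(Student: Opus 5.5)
I will prove the lemma by induction over the sequence of $\grow$ and $\blossom$ operations, keeping as the invariant that every current search path $P(x)$ is coherent with respect to every dummy line introduced so far. Initially, the search paths are the single singletons and no dummy lines exist, so the invariant holds vacuously. For the returned path $P = P(u)\overline{P(v)}$, coherence then follows from the invariant. Coherence is a condition imposed separately on each dummy line, and $P(u)$ and $P(v)$ are coherent. So if $P$ meets the blossom $H$ of some dummy line $\qty{b,t}$, then either $P(u)$ or $P(v)$ meets $H$, and that path already contains $b$ or $t$. (Under the super singleton reduction the found path arises in the same way from a $\blossom$-like concatenation, so the same argument applies.)

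For the inductive step I separate the two ways a path can become incoherent. First, a path may newly intersect an existing blossom $H'$. Second, a new dummy line $\qty{b,t}$ with blossom $H$ may appear. One more subtlety must be checked: a blossom formed with $c \neq d$ reuses the dummy line of $K(c)$ but is larger than the blossom to which coherence was originally attached. Since \cref{def:coherence} quantifies over the blossom that \emph{introduced} the dummy line, I will read $H$ as that introducing blossom. I will also note that enlarging $K(c)$ preserves the needed structure, because by (MB2) every path entering the enlarged blossom passes through $K(c)$'s unlabeled vertex, and hence through the introducing blossom.

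Case $\grow(v)$: the new path $P(v)u\bar u$ adds only the new original line $\qty{u,\bar u}$. This line lies in no blossom unless Line~\ref{line:degblm} forms one, and that situation is handled as a $\blossom$. Hence the invariant is inherited from $P(v)$.

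Case $\blossom(u,v)$ with $c=d$: by \cref{cor:justify_line_4}, every old path meeting $H$ is rewritten as $P(c)btP(x|c)$ and therefore contains $b,t$. Paths not meeting $H$ are unaffected, since the pivoting changes no path that avoids $H$. The new paths $P(v)\overline{P(u|x)}x$ and $P(u)\overline{P(v|x)}x$ are formed after the rewriting step, so $P(u)$ or $P(v)$ already contains $bt$ and the new dummy line is covered. For every older dummy line with blossom $H'$, I must show that if the new path meets $H'$ then it contains $b'$ or $t'$. Any vertex of the new path lies in $P(u)$ or $P(v)$, which are coherent, and $b'$ or $t'$ lies on whichever of them meets $H'$. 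The difficulty is that the new path contains only a \emph{portion} of $P(v)$ (or of $P(u)$): a reversed segment $\overline{P(u|x)}$ together with all of $P(v)$. If the new path meets $H'$ only through the truncated segment $P(u|x)$, I need $b'$ or $t'$ to lie within that segment.

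This is the step I expect to be the main obstacle, and the plan is to use the laminar structure. Blossoms form a laminar family, and $K(\cdot)$ partitions $V(\st)$ into a tree after contraction (\cref{rem:kc_is_lca}). If $H' \subseteq H$, then $H'$ is one of the components $K(y)$ for some $y\in P(u|c)$, or a sub-blossom of one. Its dummy line lies inside it, and by (MB2) and \cref{lem:search_path_prefix} a search path traverses such a blossom as a contiguous block: it enters at the unique unlabeled vertex and leaves through a search-tree edge. So $P(u)$ meets $H'$ in a contiguous segment that contains $b'$ or $t'$. I will argue that the reversed portion either contains that entire segment or stops inside $H'$ at the point $x$. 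In the latter case $x\in H'$ and $x$ is the last vertex, and I then need the claim that the segment of $P(u)$ from $x$ to the exit of $H'$ meets $\qty{b',t'}$. To prove this claim I would show, by a nested induction, that within a blossom each labeled vertex's path to the blossom's exit passes through the bud or tip. This uses the tree structure (MB3) and the fact that $\qty{b',t'}$ separates the blossom at the point where it was inserted, namely between $P(c')$ and the rest. If instead $H'\supseteq H$ or $H'$ is disjoint from $H$, the new path meets $H'$ only via $P(c)$, which is kept intact, and coherence is inherited.

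Case $c\neq d$: the same segment argument applies. No new dummy line is created, and the enlarged blossom is handled through $K(c)$ as noted above.
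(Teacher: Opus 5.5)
\textbf{Gap at the step you flag as the main obstacle.} Your overall scheme matches the paper's proof:
\begin{itemize}
\item induction over the operations;
\item \cref{cor:justify_line_4} for the paths rewritten in \ref{item:detail_4}, whose vertex sets only gain $\{b,t\}$, and these lie in no older blossom;
\item every new path in \ref{item:detail_5} contains all of $P(u)$ or $P(v)$, hence contains $bt$;
\item coherence of $P(u)\overline{P(v)}$ follows from its two halves.
\end{itemize}
The problem is the truncated segment. You split into two cases: either the retained suffix $xP(u|x)$ contains all of $P(u)\cap H'$, or it ``stops inside $H'$ at $x$.'' For the second case you plan to prove, by nested induction, that the segment of $P(u)$ from $x$ to the exit of $H'$ meets $\{b',t'\}$.

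That claim is false in general. Right after $H'$ is formed with $c'=d'$, we have $P(u')=P(c')\,b'\,t'\,g'\cdots u'$. The vertex $g'\in H'$ is labeled, yet the segment from $g'$ onward avoids $\{b',t'\}$. The dummy line sits at the \emph{entry} of $H'$ (between $P(c')$ and the rest), not before its exit. So the separation property you intend to use points the wrong way.

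\textbf{The missing observation makes your second case vacuous.} Every truncation point $x$ in \ref{item:detail_5} is unlabeled before the operation.
\begin{itemize}
\item By (MB1)--(MB2), $x$ is the unique entry vertex of $K(x)$.
\item By \cref{rem:kc_is_lca}, $P(u)$ meets each maximal blossom in a single contiguous block.
\item Hence the suffix $xP(u|x)$ contains the entire block of $P(u)$ in $K(x)$, and in every maximal blossom visited after it.
\item Any older introducing blossom $H'$ contains its own dummy line $\{b',t'\}$ and, by laminarity, lies inside one maximal blossom.
\item So if the suffix meets $H'$, it contains all of $P(u)\cap H'$. In particular it contains the endpoint of $\{b',t'\}$ that coherence of $P(u)$ guarantees.
\end{itemize}
The same argument applies to $P(v|x)$. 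A suffix can never start strictly inside $H'$, and no nested induction is needed. This is exactly what the paper's appeal to \cref{prop:kx_properties,rem:kc_is_lca} encodes.
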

\begin{proof}
    Search paths newly defined by $\grow$ inductively satisfy this condition, and those generated by $\blossom$ also satisfy it by \cref{prop:kx_properties,rem:kc_is_lca}. When \ref{item:detail_4} introduces a dummy line and updates search paths, the condition holds for the new dummy line by the definition of the path updates and remains valid for existing dummy lines by \cref{cor:justify_line_4}. Finally, the path $P$ found by the search procedure is coherent, since its constituent search paths are coherent.
\end{proof}

\subsection{Validity}
\label{sec:lmp_validity}

\subsubsection{Invariance of Nonsingularity}

This section is devoted to proving that, for the path $P$ detected by the algorithm, $P \setminus T$ is indeed an augmenting path for $B$. To this end, we first characterize the nonsingularity of $C[P]$ via the coherence of $P$ and the nonsingularity of $C^*[P]$ as follows.
\begin{lemma}
    \label{lem:nonsingular_invariance}
    Suppose that a coherent path $P \subseteq V^*$ either includes or is disjoint from each dummy line. Then, the following are equivalent:
    \begin{enumerate*}[label=(\roman*)]
        \item $C^*[P]$ is nonsingular; 
        \item $C^*[P \cup T]$ is nonsingular; and
        \item $C[P \setminus T]$ is nonsingular.
    \end{enumerate*}
\end{lemma}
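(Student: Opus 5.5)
The plan is to compare the three submatrices through the pivoting identity (BT1) and the block structure of Eq.~\eqref{eq:pivoting}. The starting point is the standard principal-pivot fact used to derive Eq.~\eqref{eq:pivoting}. If $C'$ is obtained from $C^*$ by pivoting around a set $Q$ with $C^*[Q]$ nonsingular, then for any $X$ we have
\[
\det C'[X] = \pm \det C^*[X \symdiff Q] / \det C^*[Q].
\]
This follows from comparing the two row-reduced forms $\bar{A}[B]^{-1}\bar{A}$ and $\bar{A}[B\symdiff Q]^{-1}\bar{A}$: both are nonsingular-or-not together with the same column subset of $\bar{A}$. Equivalently, $C^*[X]$ is nonsingular if and only if $B^* \symdiff X$ is a base, and base-ness is a property of the underlying column set, independent of which base $C^*$ is written relative to. Applying this with $Q = T$ and $C' $ as in (BT1), and using $C'[V\cup S] = C$, condition (iii) becomes: $B^{\mathrm{orig}} \symdiff (P\setminus T)$ is a base of the extended matroid, where $B^{\mathrm{orig}} = B^* \symdiff T$. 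This is equivalent to nonsingularity of $C^*[(P\setminus T) \symdiff T] = C^*[P \symdiff T]$ taken relative to $B^*$.

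Condition (i) is simply nonsingularity of $C^*[P]$, and condition (ii) is nonsingularity of $C^*[P\cup T]$. Since $P$ either contains or avoids each dummy line, $T$ splits into $T_1 = T\cap P$, consisting of whole dummy lines, and $T_2 = T\setminus P$. Thus $P\cup T = P \uplus T_2$ and $P\symdiff T = (P\setminus T_1)\cup T_2$. The equivalences therefore reduce to two claims. Claim~1: adjoining the dummy lines in $T_2$, namely those disjoint from $P$, does not affect nonsingularity, giving (i)$\Leftrightarrow$(ii). Claim~2: the dummy lines in $T_1$ can be removed from $P\cup T$ without affecting nonsingularity, giving (ii)$\Leftrightarrow$(iii).

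Both claims should follow from the shape of the rows and columns of a dummy line $\{b,t\}$ with blossom $H$, as given by (BT2) and the construction in \ref{item:detail_3}. I would proceed by induction on the dummy lines in order of insertion, ordered so that later lines come first. At insertion, the pre-pivot row $b$ (say) is supported only inside $H$, and the column $t$ is supported only outside $H$.

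For Claim~1, take a dummy line with $P\cap\{b,t\}=\emptyset$. Coherence then forces $P\cap H=\emptyset$. Relative to the matrix at insertion time, the row $b$ restricted to $P$ is zero except at $t$, so Laplace expansion along $b$ gives $\det C[P\cup\{b,t\}] = \pm C_{bt}\det C[P]$ with $C_{bt}\neq 0$.

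For Claim~2, take a line $\{b,t\}\subseteq P$. Pivoting around $\{b,t\}$ is a principal pivot on a $2\times2$ block containing the single nonzero entry $C_{bt}$. By the determinant identity above, the singularity status of $X\ni b,t$ before the pivot matches that of $X\setminus\{b,t\}$ after it. Hence removing both $b$ and $t$ corresponds exactly to undoing that dummy line's pivot.

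The main obstacle is that the determinant identities hold in the matrix at the time the line was inserted. Later pivots (around other dummy lines) alter entries, including those involving earlier $b,t$, and the support conditions from \ref{item:detail_3} may fail for the current $C^*$. I would handle this by undoing the pivots in reverse insertion order. At each step I peel off the most recent dummy line, apply Claim~1 or Claim~2 in the matrix just before its insertion, and rely on the nested (laminar) structure of blossoms together with coherence to guarantee that $P$ stays coherent with respect to the remaining lines. What needs verifying is that the support pattern relative to $H$ holds at the moment of insertion. That is exactly what the construction guarantees, while \cref{prop:pivoting_new_edge} controls how edges between $H$ and its complement change.
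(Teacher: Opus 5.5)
\textbf{There is a concrete error in your reduction.} Since $P\setminus T$ and $T$ are disjoint, $(P\setminus T)\symdiff T = P\cup T$, not $P\symdiff T$; the two sets differ exactly by $T_1$.

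With the correct identity, your first paragraph already proves (ii)$\Leftrightarrow$(iii) outright. Indeed, $B^*\symdiff(P\cup T)=(B^*\symdiff T)\symdiff(P\setminus T)$, and (BT1) identifies the matrix pivoted around $T$ with $C$ on $V\cup S$. This is precisely the paper's argument, and Claim~2 is not needed.

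As you formulate it, Claim~2 says that inside the current $C^*$ one may delete the lines of $T_1$ from $P\cup T$ without changing nonsingularity. That is false. By (BT2), the retired vertex $g$ of a blossom $H$ has no neighbour outside $H$ in the current $G^*$. For instance, right after a blossom $H=\{u,\bar u,b,t\}$ is created inside $\grow(v)$ (so $g=u$), consider the coherent set $P=P(\bar u)\setminus\{\bar u\}=P(v)\cup\{b,t,u\}$. Here $C^*[P]$ is nonsingular by \cref{cor:search_path}, yet $u$ is isolated in $G^*[P\setminus\{b,t\}]$.

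Your justification of Claim~2 actually proves a different, cross-matrix statement: the status of $X\supseteq\{b,t\}$ on one side of the pivot equals that of $X\setminus\{b,t\}$ on the other side. That statement does not yield deletion within a single matrix.

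\textbf{How to repair it.} That cross-matrix identity is exactly what your reverse-insertion unwinding needs, and your last paragraph can be turned into a correct alternative proof. Let $M_j$ denote the matrix after the $j$-th dummy line is inserted. Then $M_j[X]$ is nonsingular iff $M_{j-1}[X\setminus\{b_j,t_j\}]$ is nonsingular:
\begin{itemize}
\item when $\{b_j,t_j\}\subseteq X$, this is the pivot identity;
\item when $X\cap H_j=\emptyset$, it is the pivot identity plus your Laplace expansion along $b_j$ in the pre-pivot matrix.
\end{itemize}
Coherence of the residual set with respect to earlier lines is automatic, since a later dummy line never lies in an earlier blossom. Run on $P$, this gives (i)$\Leftrightarrow$(iii) directly.

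\textbf{Comparison with the paper.} The paper sidesteps the historical matrices (your ``main obstacle'') because (BT2) is an invariant of the current $C^*$. Among the blossoms introducing the lines of $T\setminus P$ (all disjoint from $P$ by coherence), take an outermost one $H$ with line $\{b,t\}$. Then $t$ has $b$ as its only neighbour in $P\cup\{b,t\}$, so expanding along $t$ gives $\det C^*[P\cup\{b,t\}]=\pm\det C^*[\{b,t\}]\cdot\det C^*[P]$. Adding lines outermost first keeps the set coherent. This settles (i)$\Leftrightarrow$(ii) in a few lines, without any bookkeeping of earlier pivots.
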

\begin{proof}
    We first demonstrate the equivalence between (i) and (ii). Let $P$ be a coherent path such that $C^*[P]$ is nonsingular, and consider adding the remaining dummy lines $T \setminus P$. Since $P$ is coherent, the blossoms that inserted these dummy lines are disjoint from $P$. Among these blossoms, let $H$ be one of the outermost blossoms that creates a dummy line $\qty{b, t}$. Because $t$ is adjacent only to vertices within $H$, its neighbor within $G^*[P \cup \qty{b, t}]$ is restricted to $b$. Consequently, it follows that $\det C^*[P \cup \qty{b, t}] = C^*[\qty{b, t}] \cdot \det C^*[P]$, implying that $C^*[P]$ is nonsingular if and only if $C^*[P \cup \qty{b, t}]$ is nonsingular. Since $P \cup \qty{b, t}$ remains coherent by the choice of the outermost blossom, applying this argument inductively yields the equivalence between (i) and (ii).

    We now prove the equivalence between (ii) and (iii). Observe that $C^*[P \cup T]$ is nonsingular if and only if $B^* \symdiff (P \cup T)$ forms a base of the extended linear matroid $\bfm(A^*)$ with $A^*=\left[I \mid C^*\right]$. Since $B^* \symdiff (P \cup T) = (B^* \symdiff T) \symdiff (P \setminus T)$, this condition is equivalent to the nonsingularity of $C'[P \setminus T]$, where $C'$ is obtained from $C^*$ by pivoting around $T$. By (BT1), this submatrix coincides with $C[P \setminus T]$, which completes the proof.
\end{proof}

By \cref{lem:path_coherence,lem:nonsingular_invariance}, we can restrict our attention to proving that the final output path $P$ yields a nonsingular submatrix $C^*[P]$.

\subsubsection{The Search Tree}

Given \cref{lem:path_coherence,lem:nonsingular_invariance}, we only need to show that $C^*[P]$ is nonsingular for the final output path $P$. In the remainder of this section, we establish this nonsingularity by showing that the induced subgraph $G^*[P]$ has a unique perfect matching. We assume the reduction introduced in \cref{sec:super_singleton_reduction} unless otherwise specified.

To facilitate the subsequent proofs, we first introduce the notation for subtrees of $\st$. Taking $s_0$ as the root of $\st$, let $\st(v)$ denote the subtree containing $v$ and all its descendants. When the context is clear, we also use $\st(v)$ to refer to the vertex set of this subtree. 

Following \cite{IzumiKY2025ForgettingAlternation}, we define an edge $(u, v) \in E^*$ as an \textit{incoming edge} of $\st(x)$ if $(u, v) \notin E(\st)$, $u \in V(\st) \setminus \st(x)$, and $v \in \st(x)$. Crucially, the proof of \cref{prop:kx_properties} shows that the condition $u \in \st(v)$ for $u, v \in V(\st)$ is invariant under modifications of the search tree. Hence, it is unnecessary to consider the possibility of an existing edge becoming or ceasing to be an incoming edge.

Based on these definitions, we now discuss the properties of upper vertices in the search tree. Specifically, \cref{prop:incoming_edge_label} establishes a sufficient condition for an upper vertex to be labeled.
\begin{lemma}
  \label{lem:subtree_rho}
  For any upper vertices $x$ and $y \in \st(\bar{x})$, we have $\rho(x) < \rho(y)$. Consequently, we also have $\rho(x) \le \rho(z)$ for any upper vertex $z \in \st(x)$. 
\end{lemma}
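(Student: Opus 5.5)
We prove the first claim by induction over the sequence of $\grow$ and $\blossom$ operations, maintaining the invariant for all pairs of upper vertices currently in $\st$. The second claim then follows directly. Let $z \in \st(x)$ be an upper vertex. If $z = x$, the inequality is trivial. Otherwise $z$ is a proper descendant of $x$, and since the only tree edge below $x$ leads through $\bar{x}$ (the edges at an upper vertex $x$ are $(x,\rho(x))$ and $(x,\bar{x})$, with $\rho(x)$ its parent), we get $z \in \st(\bar{x})$. The first claim then gives $\rho(x) < \rho(z)$.

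For the induction, we examine how new upper vertices and new $\rho$-values arise.

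\textbf{Grow.} When $\grow(v)$ attaches a new line $\{u,\bar{u}\}$, it sets $\rho(u) = v$, and $\st(\bar{u})$ is just $\{\bar{u}\}$ for the moment. The new pairs to check are those with $u \in \st(\bar{x})$, that is, $v \in \st(\bar{x})$. The vertex $v$ is labeled and lies below $\bar{x}$, so it is either $\bar{x}$ itself or a descendant of some upper vertex $y$ with $y \in \st(\bar{x})$.
\begin{itemize}
\item If $v = \bar{x}$, then $\bar{x}$ was labeled by $\grow(\rho(x))$, which was executed before $\grow(v)$. Hence $\rho(x) < \bar{x} = v$ in the total order, because $\grow$ is applied in $\vl$ order and $\lv(\bar{x}) = \lv(\rho(x)) + 1$.
\item Otherwise, we would like $\rho(y) \le v$ together with the inductive hypothesis $\rho(x) < \rho(y)$.
\end{itemize}
A cleaner route is to prove, as an auxiliary invariant, that every labeled vertex $w \in \st(\bar{x})$ satisfies $w > \rho(x)$. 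Then $\rho(u) = v > \rho(x)$ is immediate. This auxiliary invariant is plausible because every vertex below $\bar{x}$ is labeled after $\rho(x)$ is processed. For vertices labeled by $\grow$, it follows from level monotonicity. For vertices labeled by $\blossom$, the level is at least $\lv(\rho(x))$, and the order of $\setupper$ places them after $\rho(x)$.

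\textbf{Blossom, case $c = d$.} Step~\ref{item:detail_4} rewires $\rho$: every $x \in H$ with $\rho(x) = c$ receives $\rho(x) = t$, and the new bud receives $\rho(b) = c$. The tip $t$ is placed immediately after $c$ in the total order, so all strict comparisons with other vertices are preserved. The new upper vertex $b$ has $\rho(b) = c$. Every upper vertex in $\st(\bar{b}) = \st(t)$ previously had its $\rho$ in $\st(c)$ below $c$, or equal to $c$ (now replaced by $t > c$). By the auxiliary invariant, these are $> c$. Conversely, for upper vertices $x$ with $b \in \st(\bar{x})$, we have $c \in \st(\bar{x})$, and $c > \rho(x)$ by the auxiliary invariant. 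The tie-breaking rule in Step~\ref{item:detail_5}, which places $g < h$, only concerns vertices whose $\rho$ equals $t$, so it does not affect strict inequalities between distinct $\rho$-values. Apart from this, $\blossom$ does not change the tree structure; by the proof of \cref{prop:kx_properties}, ancestry is invariant.

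\textbf{Main obstacle.} The hard part is the auxiliary invariant for vertices labeled by $\blossom$ that are retired or are tips, whose positions in $\le$ are set ad hoc. Here we must check that a vertex labeled on $P(u|c)$ or $P(v|d)$, and lying below $\bar{x}$, is placed after $\rho(x)$. We plan to argue this from the fact that such vertices are labeled in descending order of $\rho$ at a time after all relevant $\grow$ calls. Their levels are at least $\lv(\rho(x)) + 1$ via the suffix through $\bar{x}$, except for tips, which are placed directly after $c$ with $c$ itself below $\bar{x}$.
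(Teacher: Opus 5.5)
\textbf{Gap.} Your auxiliary invariant (every labeled $w\in\st(\bar x)$ satisfies $w>\rho(x)$) already implies the first claim on its own: for an upper $y\in\st(\bar x)$, the parent $\rho(y)$ is a labeled vertex of $\st(\bar x)$. So all of the lemma's content now sits in that invariant. Its nontrivial cases (vertices labeled by $\blossom$, retired vertices, tips) are exactly the ones you leave as a plan, and the sketch does not close them.
\begin{itemize}
\item ``Level at least $\lv(\rho(x))$'' is not strict, and the appeal to the order of $\setupper$ that should break the tie is not given.
\item ``Level at least $\lv(\rho(x))+1$ via the suffix through $\bar x$'' is unjustified. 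A path $P(w)=P(v)\overline{P(u|w)}w$ created by $\blossom$ need not pass through $\bar x$ when $x$ is labeled, since it can enter $\st(\bar x)$ through a blossom-forming edge.
\end{itemize}

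\textbf{Missing idea.} Do not compare each newly labeled vertex with $\rho(x)$ for every upper ancestor $x$. Instead, localize: it suffices that every upper vertex $w$ satisfies $\rho(w)<\bar w$, and also $\rho(w)<w$ once $w$ is labeled. This involves one line and its own $\rho$, so it only needs checking when $\bar w$ or $w$ is labeled or $\rho(w)$ is reassigned.
\begin{itemize}
\item $\grow(\rho(w))$ puts $\bar w$ into $\vl[\lv(\rho(w))+1]$.
\item A tip is placed immediately after $c=\rho(b)$.
\item A non-tip vertex labeled during iteration $k$ has level at least $k+1$, while $\lv(\rho(w))\le k$ by \cref{cor:level_at_least_k_1}; retired vertices are placed in $\le$ accordingly.
\item Replacing $\rho(w)=c$ by $t$ preserves both inequalities, because $t$ immediately follows $c$ and $w,\bar w\neq t$.
\end{itemize}

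No induction over operations is then needed. List the upper vertices $x=w_0,w_1,\dots,w_m=y$ on the tree path from $x$ to $y$. A lower vertex has only upper children, namely those whose $\rho$ it is. An upper vertex $w$ has as children only $\bar w$ and upper vertices with $\rho=w$. Hence $\rho(w_1)=\bar x$ and $\rho(w_{i+1})\in\{w_i,\bar w_i\}$, which gives $\rho(x)<\rho(w_1)<\dots<\rho(y)$. This is the paper's proof, and your auxiliary invariant follows from it by the same chaining.

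\textbf{A smaller slip.} Your derivation of the second claim says the only tree edge below an upper $x$ is $(x,\bar x)$. That is false when $x$ is labeled: $\grow(x)$ sets $\rho(u)=x$, which makes $u$ a child of $x$ outside $\st(\bar x)$. The claim still holds, via $\rho(x)<x=\rho(u)$ followed by the chaining above. The paper's justification $\st(x)=\st(\bar x)\cup\{x\}$ has the same literal issue. It is harmless there, because the second claim is only invoked with $x$ unlabeled.
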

\begin{proof}
  The definition of $\rho$ implies that $\rho(v) < v$ and $\rho(v) < \bar{v}$ for any upper vertex $v$. Since $x$ is an ancestor of $y$ on $\st$, repeated application of this property yields $\rho(x) < \rho(y)$. The latter statement follows from $\st(x) = \st(\bar{x}) \cup \qty{x}$.
\end{proof}

\begin{lemma}
  \label{prop:incoming_edge_label}
  An upper vertex $x$ is labeled if there exists an incoming edge $(u, v)$ of $\st(x)$ such that $K(u) = K(v)$. In particular, this condition is automatically satisfied whenever $\blossom$ has been applied to at least one incoming edge of $\st(x)$.
\end{lemma}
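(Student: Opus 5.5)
We argue via the structure of maximal blossoms on the search tree, specifically (MB1)--(MB3) of \cref{prop:kx_properties} and the contraction picture of \cref{rem:kc_is_lca}. Let $(u,v)$ be an incoming edge of $\st(x)$ with $K(u)=K(v)$. Write $K \coloneqq K(u) = K(v)$. We first show that $K$ must contain both $x$ and its parent side. Indeed, $u \notin \st(x)$ and $v \in \st(x)$. The induced subgraph $\st[K]$ is a tree by (MB3), so the tree path in $\st$ between $u$ and $v$ lies entirely in $K$. This path must enter $\st(x)$, and it can do so only through the unique tree edge joining $x$ to its parent, namely $(x,\rho(x))$. Hence $x \in K$ and $\rho(x) \in K$. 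Here we use that $x$ is an upper vertex, so its parent edge in $\st$ is $(x,\rho(x))$ and $\st(x) = \st(\bar x)\cup\{x\}$.

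Next we show that $x$ is not the unique unlabeled vertex of $K$. By (MB1), $K$ contains exactly one unlabeled vertex. By (MB2), every search path entering $K$ does so through that vertex. By \cref{lem:search_path_prefix}, that vertex is an upper vertex $y$, and $P(\bar y) = P(\rho(y))\,y\,\bar y$, where $\rho(y)\notin K$. We then argue that this unlabeled vertex is the ``entry'' of $K$ in the tree, i.e., its parent edge leaves $K$. This holds because the search path to $\rho(y)$ lies outside $K$ before entering, and under the contraction of \cref{rem:kc_is_lca} the component $K$ is attached to its parent component exactly via the edge $(y,\rho(y))$. Now $x\in K$ and $\rho(x)\in K$, so the parent edge of $x$ stays inside $K$. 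Therefore $x \ne y$, and $x$ is labeled.

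The main obstacle is justifying rigorously that the unique unlabeled vertex of $K$ is exactly the vertex whose parent edge $(y,\rho(y))$ exits $K$, that is, that $K$ forms a ``rooted'' subtree of $\st$ with root $y$. To handle this, we induct over the $\blossom$ operations following the proof of \cref{prop:kx_properties}. In the case $c=d$, the new blossom's root is the bud $b$ with $\rho(b)=c\notin H\cup\{b,t\}$. In the case $c\ne d$, the root is inherited from $K(c)$. In both cases all other vertices of the new blossom have parents inside it, because $H$ is the union of the components on the contracted-tree paths below the root component.

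For the second statement: if $\blossom$ has been applied to an incoming edge $(u,v)$, then step \ref{item:detail_6} sets $K(u)=K(v)=H$ for the resulting blossom. Maximal blossoms only grow afterwards, so the equality $K(u)=K(v)$ persists. Incoming-edge status is invariant by the remark preceding \cref{lem:subtree_rho}. For a blossom-forming edge of the special kind $(u,t)$ created within $\grow$, the same containment in $H$ applies. Hence the first part applies.
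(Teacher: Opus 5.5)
Your proof is correct and is essentially the paper's argument run directly instead of by contrapositive. The paper shows that an unlabeled upper vertex $x$ satisfies $K(x)\subseteq\st(x)$, so by (MB3) no maximal blossom can straddle the edge $(\rho(x),x)$; you show that a blossom containing both ends of an incoming edge must contain that edge, so $x$ cannot be its unique unlabeled (root) vertex. The ``main obstacle'' you isolate is exactly the paper's claim $K(x)\subseteq\st(x)$, which the paper reads off from (MB1)--(MB3). It also follows at once from (MB2) and \cref{lem:search_path_prefix}: if $\rho(y)\in K$, then $P(\rho(y))$ would have to enter $K$ through $y$, which is impossible. So your separate induction is sound but not needed.
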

\begin{proof}
  We prove the contrapositive: if an upper vertex $x$ is unlabeled, any incoming edge $(u, v)$ of $\st(x)$ satisfies $K(u) \neq K(v)$. By \cref{prop:kx_properties}~(MB1)--(MB3), we have $K(x) \subseteq \st(x)$. Thus, $\st[V(\st) \setminus K(x)]$ decomposes into a connected component $\st[V(\st) \setminus \st(x)]$ and the remaining connected components, where the union of the latter equals $\st(x) \setminus K(x)$. Hence, (MB3) guarantees that $K(y)$ for any $y \in \st(x) \setminus K(x)$ is disjoint from $V(\st) \setminus \st(x)$. As the same property trivially holds for $y \in K(x)$, it follows that any $y \in \st(x)$ satisfies $K(y) \subseteq \st(x)$, which completes the proof.
\end{proof}

\begin{corollary}
  \label{lem:subtree_valve}
  Let $x \in V(\st)$ be any unlabeled vertex on the search tree. For any search path $P(v)$ satisfying $P(v) \cap \st(x) \neq \emptyset$, the path contains $x$ and decomposes as $P(v) = P(u) x P(v|x)$, where $u = \rho(x)$ holds and $P(v|x)$ is contained within $\st(\bar{x}) = \st(x) \setminus \qty{x}$.
\end{corollary}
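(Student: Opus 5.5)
The plan is to combine the structure of maximal blossoms (\cref{prop:kx_properties}) with the prefix decomposition of \cref{lem:search_path_prefix}, and to use \cref{prop:incoming_edge_label} to confine the tail of the path. Let $x \in V(\st)$ be unlabeled and let $P(v)$ meet $\st(x)$. The vertices of $\st$ are labeled vertices together with their mates, so every vertex of $\st$ other than $s_0$ lies in some line. By (MB1) the unlabeled vertex $x$ is the unique unlabeled vertex of $K(x)$, and by \cref{lem:search_path_prefix} it is an upper vertex, so $u \coloneqq \rho(x)$ is defined and the search-tree edges at $x$ are $(x,\rho(x))$ and $(x,\bar{x})$.

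The first step is to show that $P(v)$ passes through $x$. Take the first vertex $w$ of $P(v)$ lying in $\st(x)$; it exists by assumption, and $w \neq s_0$ because $s_0 \notin \st(x)$. Its predecessor $w'$ on $P(v)$ lies outside $\st(x)$. By \cref{prop:path_edge_taxonomy}, the pair $(w', w)$ is either a search-tree edge or a blossom-forming edge.
\begin{itemize}
  \item If it is a search-tree edge, then since $\st(x)$ is a subtree entered from outside only through the tree edge $(\rho(x), x)$, we get $w = x$ and $w' = \rho(x)$.
  \item If it is a blossom-forming edge, then it is an incoming edge of $\st(x)$ to which $\blossom$ has been applied. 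By \cref{prop:incoming_edge_label} this would force $x$ to be labeled, a contradiction. I need to double-check that blossom-forming edges are indeed non-tree edges. The exceptional edge $(u,t)$ created inside $\grow$ also lies inside a blossom, so the same argument applies.
\end{itemize}
Hence $P(v)$ enters $\st(x)$ exactly at $x$ from $\rho(x)$.

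The second step is to obtain the decomposition itself. Applying \cref{lem:search_path_prefix} to the unlabeled vertex $x$ gives $P(v) = P(\rho(x))\, x\, \bar{x}\, P(v|\bar{x})$, which is the claimed form $P(u)\,x\,P(v|x)$. Simplicity of $P(v)$ (\cref{prop:path_property}) ensures that $x$ occurs only once, so $P(v|x)$ is well defined.

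The third step is containment: $P(v|x) \subseteq \st(\bar{x})$. The path $P(v|x)$ starts at $\bar{x} \in \st(\bar{x})$. Suppose it later leaves $\st(x)$. To reenter, or merely to have entered once, it would have to cross between $\st(x)$ and its complement along some edge. That edge cannot be the tree edge at $x$, since $x$ occurs only once on the path. It also cannot be a blossom-forming edge, because such an edge would be incoming and \cref{prop:incoming_edge_label} would again label $x$. Therefore $P(v|x)$ stays inside $\st(x)$, and since it avoids $x$, it lies in $\st(x) \setminus \{x\} = \st(\bar{x})$. The last equality holds because $\bar{x}$ is the unique child of $x$ in $\st$.

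The main obstacle is the case analysis of edges that cross the boundary of $\st(x)$. One must make sure that every consecutive pair on a search path, including the pairs created when \ref{item:detail_4} inserts a dummy line, is either a tree edge or a blossom-forming edge. One must also check that blossom-forming edges crossing the boundary are necessarily incoming edges that make $K(u)=K(v)$ across the boundary. This relies on the invariance of subtree membership noted after the definition of incoming edges.
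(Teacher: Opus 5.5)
Your proposal is correct and follows the paper's own argument. The contrapositive of \cref{prop:incoming_edge_label} rules out any blossom-forming edge crossing the boundary of $\st(x)$, so by \cref{prop:path_edge_taxonomy} the path must enter through $(\rho(x), x)$ and $(x, \bar{x})$ and stay inside $\st(\bar{x})$, and \cref{lem:search_path_prefix} then gives the decomposition. Your concern about tree versus blossom-forming edges is harmless: an edge that is both falls under your first case, and the argument behind \cref{prop:incoming_edge_label} uses only the blossom membership $K(\cdot)$, not whether the edge is still present in $E^*$.
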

\begin{proof}
  The contrapositive of \cref{prop:incoming_edge_label} implies that no incoming edge of $\st(x)$ is examined by $\blossom$, and thus no such edge is contained in any search path by \cref{prop:path_edge_taxonomy}. Consequently, if $P(v)$ enters the subtree $\st(x)$, it must do so via the tree edges $(\rho(x), x)$ and $(x, \bar{x})$. Furthermore, the subsequent path cannot leave $\st(\bar{x})$; namely, $P(v|x) \subseteq \st(\bar{x})$.
  The decomposition follows from \cref{lem:search_path_prefix}.
\end{proof}

\subsubsection{Search Order}

In what follows, we characterize the fundamental properties of the search order, which are crucial for proving the uniqueness of a matching. We begin by analyzing the level values assigned by the algorithm.
\begin{lemma}
    \label{prop:level_values_bound}
    The levels assigned by the algorithm satisfy the following properties.
    \begin{enumerate}[label={(\roman*)}, leftmargin=*]
        \item Every non-tip vertex labeled by $\setlower(k)$ has a level of $k + 1$.
        \item For every upper vertex $x$ labeled by $\setupper(k, \kmax)$, its level satisfies $\lv(\rho(x)) + \lv(x) = 2k + 1$.
    \end{enumerate}
\end{lemma}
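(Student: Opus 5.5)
We prove both items by tracing the labeling rules together with the invariant that vertices are scanned in $\vl$-order. The key fact is that levels are invariant under dummy-line insertion (\cref{cor:justify_line_4}): pasting $bt$ into a path adds no original line, so the level of every vertex is fixed at the moment it is labeled. We first establish (i) and then (ii), inducting on the sequence of $\grow$ and $\blossom$ operations.

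\emph{Item (i).} $\setlower(k)$ calls $\grow(v)$ only for $v\in\vl[k]$, so $\lv(v)=k$. A vertex $\bar u$ labeled directly by $\grow(v)$ receives $P(\bar u)=P(v)u\bar u$. Since $\{u,\bar u\}$ is an original line, $\lv(\bar u)=k+1$. The only other vertices labeled during $\setlower(k)$ come from the nested $\blossom(\bar u,v)$ at Line~\ref{line:degblm}, which falls into the case $c=d=v$ of \ref{item:detail_5}.
\begin{itemize}
  \item The tip $t$ gets $P(t)=P(v)bt$, so $\lv(t)=k$; this is why tips are excluded from the statement.
  \item The vertex $\bar u$ gets $P(\bar u)=P(v)btu\bar u$, whose level is again $k+1$.
  \item The retired vertex $g=u$ satisfies $P(u)=P(v)bt\,\bar u u$ after relabeling. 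This contains exactly one original line beyond $P(v)$, so its level is $k+1$.
\end{itemize}
No other vertex is labeled in this special case. This proves (i).

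\emph{Item (ii).} Consider $\blossom(u,v)$ invoked in $\setupper(k,\kmax)$ with $\lv(u)=\kmax$ and $\lv(v)=\kmin$, so $\lv(u)+\lv(v)=2k+1$. Take an upper vertex $x\in P(u|c)$ labeled there; the case $x\in P(v|d)$ is symmetric. Its path is $P(x)=P(v)\overline{P(u|x)}x$.

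By \cref{lem:search_path_prefix}, $P(u)=P(\rho(x))\,x\bar x\,P(u|\bar x)$. Hence the lines of $P(u)$ split into three groups: those in $P(\rho(x))$, the line $\{x,\bar x\}$, and those in $P(u|\bar x)$. The reversed segment $\overline{P(u|x)}x$ traverses exactly the lines of $P(u|\bar x)$ and then $\{x,\bar x\}$. Counting original lines gives
\begin{equation*}
  \lv(x)=\lv(v)+\bigl(\lv(u)-\lv(\rho(x))\bigr),
\end{equation*}
because $\lv(u)-\lv(\rho(x))$ is precisely the number of original lines in $x\bar x P(u|\bar x)$. Dummy lines $bt$ are excluded from both sides of this count. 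Rearranging yields $\lv(\rho(x))+\lv(x)=\lv(u)+\lv(v)=2k+1$.

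Two details must be checked in this count.
\begin{itemize}
  \item \textbf{Prefix identity.} We need the identity $P(u)=P(\rho(x))x\bar xP(u|\bar x)$ to hold with the current paths, that is, after the insertion in \ref{item:detail_4}. \cref{lem:search_path_prefix} is stated invariantly, so it applies. When $\rho(x)$ was reset from $c$ to $t$, we have $\lv(t)=\lv(c)$, so the equation is unaffected.
  \item \textbf{The vertex $\bar u$ in the case $c=d=v$.} This case occurs only inside $\grow$, so it is already covered by (i).
\end{itemize}

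The main obstacle is the bookkeeping for the reversed segment. One must confirm that $P(u|x)$ reversed, followed by $x$, traverses lines as pairs whose original-line count equals $\lv(u)-\lv(\rho(x))$. This relies on properties (1)--(4) of \cref{prop:path_property} for the parity structure, so that each line of $P(u|\bar x)$ appears as a consecutive pair both forward and backward. I expect a short lemma-level argument via those properties to suffice.
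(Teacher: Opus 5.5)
Your proposal is correct and follows the paper's proof. For (ii), the paper uses exactly the decomposition $P(u)=P(\rho(x))\,x\bar{x}\,P(u|\bar{x})$ from \cref{lem:search_path_prefix} and counts original lines to obtain $\lv(\rho(x))+\lv(x)=\lv(u)+\lv(v)=2k+1$. For (i), the paper only calls the claim straightforward, and your explicit check of the nested $\blossom$ call inside $\grow$ correctly fills in that step: there $c=d=v$, the tip receives level $k$, and the retired vertex $u$ receives level $k+1$.
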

\begin{proof}
    Since statement~(i) is straightforward, we verify statement~(ii). Suppose that $\blossom(u, v)$ labels the vertex $x$, where $\lv(u) = \kmax$ and $\lv(v) = 2k + 1 - \kmax$. Assume $x \in P(u)$, as the other case $x \in P(v)$ follows by symmetry. Given that $P(u)$ decomposes as $P(\rho(x)) x \bar{x} P(u|\bar{x})$ by \cref{lem:search_path_prefix}, we have $\lv(\rho(x)) + \lv(x) = \lv(u) + \lv(v) = 2k + 1$ as required.
\end{proof}
\begin{corollary}
    \label{cor:level_at_least_k_1}
    During the search level $k$, levels can be bounded as follows.
    \begin{enumerate}[label={(\roman*)}, leftmargin=*]
        \item Every existing upper vertex $x$ satisfies $\lv(\rho(x)) \le k$.
        \item Every non-tip vertex newly labeled during this process has a level of at least $k + 1$.
    \end{enumerate}  
\end{corollary}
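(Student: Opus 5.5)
The plan is to derive both parts from \cref{prop:level_values_bound}, using induction over the sequence of operations within search level $k$. Throughout, the only facts needed are the order in which \cref{alg:sap_lmp} calls its subroutines and the fact that levels of labeled vertices never change (\cref{cor:justify_line_4}: inserting a dummy line only adds $b$ and $t$, which are not counted).

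For (i), split existing upper vertices $x$ according to when $\rho(x)$ was last set.
\begin{itemize}
\item If $x$ was found by $\grow(v)$ inside $\setlower(k')$ with $k' \le k$, then $\rho(x) = v \in \vl[k']$. Hence $\lv(\rho(x)) = k' \le k$.
\item If $\rho(x)$ was overwritten with a tip $t$ in \ref{item:detail_4}, or inside $\grow$ at Line~\ref{line:degblm}, then $\lv(t) = \lv(c)$ and the old predecessor was $c$. The level therefore does not change. For the case inside $\grow$, $t$ is inserted with $\lv(t) = \lv(v) = k'$.
\item Otherwise $x$ is a bud $b$ with $\rho(b) = c$. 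Here $c$ is the predecessor-side vertex of a blossom formed during some $\setupper(k', \cdot)$ with $k' \le k$. Since $c$ lies on $P(v)$ with $\lv(v) = \kmin \le k'$, and levels are monotone along search paths, we get $\lv(c) \le \lv(v) \le k' \le k$.
\end{itemize}

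For (ii), consider a non-tip vertex labeled during level $k$.
\begin{itemize}
\item If it is labeled in $\setlower(k)$, then \cref{prop:level_values_bound}(i) gives level $k+1$.
\item If it is an upper vertex labeled in $\setupper(k, \kmax)$, then \cref{prop:level_values_bound}(ii) gives $\lv(x) = 2k+1 - \lv(\rho(x))$. Part (i), applied at that moment, gives $\lv(\rho(x)) \le k$, so $\lv(x) \ge k+1$.
\item A retired vertex $g$ labeled in \ref{item:detail_5} is treated the same way, since it is an upper vertex on $P(u|c)$ or $P(v|d)$.
\item A vertex $\bar u$ labeled in the special case $c=d=v$ has level $\lv(u) \ge k+1$. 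This holds because $u$ was just labeled by $\grow$ at level $k+1$, or, inside $\setupper$, because $u \in \vl[\kmax]$ with $\kmax \ge k+1$.
\end{itemize}

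The main obstacle is the bud case in (i). It requires showing that levels are nondecreasing along search paths. This follows from \cref{prop:path_property}, since each line adds at most one to the level count, but it also needs $c$ to lie on the shorter side. My first attempt would use the symmetric characterization of $c$ and $d$ to bound $\lv(c) \le \min(\lv(u), \lv(v)) = \kmin \le k$.
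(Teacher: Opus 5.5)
Your plan follows the paper's proof. For (i) you inspect the only places where $\rho$ is set or changed: $\grow$, and \ref{item:detail_4} of $\blossom$, where a tip replaces $c$ at equal level and a new bud gets $\rho(b)=c$. For (ii) you combine (i) with \cref{prop:level_values_bound}. Part (ii) is fine as written. The problem is the justification you give for the bud case of (i).

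You derive $\lv(c)\le\lv(v)$ from the claim that ``levels are monotone along search paths,'' arguing that each line adds at most one to the count. That argument only bounds the number of original lines in the prefix of $P(v)$ ending at $c$. But $\lv(c)$ counts the original lines of $P(c)$, and the two agree only when $P(c)$ is a prefix of $P(v)$. As a general statement, monotonicity is false. A path $P(x)=P(v')\overline{P(u'|x)}x$ created by $\blossom$ traverses $P(u'|x)$ backwards. So it visits $u'$ before $\bar{x}$, even though $P(\bar{x})$ is a proper prefix of $P(u')$; along that segment the levels run against the path order. Likewise, a vertex that was unlabeled when $P(v)$ was fixed may later receive a level unrelated to its position on $P(v)$.

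The missing ingredient is the prefix property, which the paper already supplies.
\begin{itemize}
\item A bud is created only when $c=d$, so $c\in P(v)$.
\item If $v\neq d$, then $P(v|d)\subseteq H$, so $P(v)$ meets $H$, and \cref{cor:justify_line_4} gives $P(v)=P(c)P(v|c)$. If $v=d$, this is trivial.
\item Hence $\lv(c)\le\lv(v)=\kmin\le k'$, and no argument about which side is shorter is needed.
\end{itemize}
Finally, your third bullet only covers buds created in $\setupper$. A bud created by the call at Line~\ref{line:degblm} inside $\grow(v)$ has $\rho(b)=c=v$ with $\lv(v)=k'$, which is immediate.
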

\begin{proof}
    Statement~(i) follows immediately by examining the steps during the search level $k$ that modify $\rho$, namely, $\grow$ and \ref{item:detail_4} of $\blossom$.
    Statement~(ii) is trivial if $x$ is labeled by $\setlower(k)$. If $x$ is labeled by $\setupper(k, \kmax)$, substituting $\lv(\rho(x)) \le k$ from statement~(i) for $\lv(\rho(x)) + \lv(x) = 2k + 1$ from \cref{prop:level_values_bound} (ii) completes the proof.
\end{proof}
For upper vertices labeled during $\setupper(k, \kmax)$, we will bound their levels from below by $\kmax$ in \cref{prop:level_at_least_kmax}, providing a tighter bound than the $k + 1$ value given in the above corollary. To derive this tighter bound, we establish the following lemma regarding the search order by leveraging the above corollary. 
\begin{lemma}
  \label{lem:search_order_grow}
  The following properties hold for any upper vertex $x$.
  \begin{enumerate}[label=(SO\arabic*), leftmargin=*]
    \item Any vertex $y$ with $y < \rho(x)$ is not adjacent to $x$ or $\bar{x}$.
    \item If $x$ is active, $\rho(x)$ is not adjacent to $\bar{x}$.
  \end{enumerate}
\end{lemma}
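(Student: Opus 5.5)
We argue by induction over the sequence of $\grow$ and $\blossom$ operations, checking that (SO1) and (SO2) hold for every upper vertex after each operation. Three kinds of event need to be controlled: (a) a new upper vertex $x$ is created, either by $\grow(v)$ finding $\{x,\bar{x}\}$ with $\rho(x)=v$ or by \ref{item:detail_3} creating a bud $b$ with $\rho(b)=c$; (b) $\rho(x)$ is reassigned from $c$ to $t$ in \ref{item:detail_4}; and (c) the pivoting in \ref{item:detail_3} adds edges to $G^*$. Since the total order only grows by insertion and the relative order of existing labeled vertices never changes, an existing pair $y<\rho(x)$ keeps its order. Adjacency can therefore change only through (c).

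For (a) with $\grow(v)$, let $y<v$ be a non-tip, non-retired labeled vertex. Then $\grow(y)$ was executed before $\grow(v)$; for $y$ at the same level this uses the $\vl$ order, and for smaller levels it uses the level synchronization (\cref{cor:level_at_least_k_1}). At that moment neither $x$ nor $\bar{x}$ was labeled, since otherwise $\grow(v)$ would not create this line. So if $y$ had been adjacent to $x$ or $\bar{x}$ at that moment, $\grow(y)$ would already have claimed the line. We still need two further facts. First, no edge between $y$ and $x$ or $\bar{x}$ is created later by pivoting; by \cref{prop:pivoting_new_edge} such an edge would need one endpoint in a blossom $H$ that at that time contains only labeled vertices and its bud, so we must argue that $x,\bar{x}\notin H$ while they are unlabeled, or else handle it as in (c). Second, tips and retired vertices $y$ must be treated via (BT2), since their neighbors lie inside their blossom. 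For (SO2) with $\rho(x)=v$: if $v$ is adjacent to $\bar{x}$, then Line~\ref{line:degblm} calls $\blossom$, which makes $\bar{x}$ retired, so active vertices are unaffected. For the bud $b$ with $\rho(b)=c$, we use \cref{prop:neighbor_of_bt}: any neighbor of $b$ was adjacent to $g$, and any neighbor of $t$ was adjacent to $c$. Since $\rho(g)$ was $c$ or a vertex preceding it in the appropriate sense, induction applied to $g$ gives (SO1) for $b$ and $t$. (SO2) for $b$ holds because $t$ is positioned immediately after $c$ and $c$ and $t$ are not adjacent after pivoting; this follows from the pivot formula, since $C^*_{ct}$ is eliminated in the same way as $C^*_{xg}$ in the proof of (BT2).

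For (b), the reassignment $\rho(x): c \to t$ strengthens the hypothesis, because $t$ is placed immediately after $c$, so the set $\{y<t\}$ equals $\{y\le c\}$. The only new requirement is that $c$ itself is not adjacent to $x$ or $\bar{x}$ after the pivot. Here we apply the pivot formula $\widehat{C}^*_{xy}=C^*_{xy}-C^*_{by}C^*_{xt}/C^*_{bt}$, as in (BT2), now to the row or column of $c$ against vertices of $H$. We expect that the copies $C^*_{b\cdot}=C^*_{c\cdot}$ on $H$ force $\widehat{C}^*_{c y}=0$ for $y\in H$.

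The main obstacle is (c): showing that pivoting never creates an edge $(y,x)$ or $(y,\bar{x})$ with $y<\rho(x)$. By \cref{prop:pivoting_new_edge}, a new edge has the form $(x',y')$ with $x'\in H$, $y'\notin H$, $(x',c)\in E^*$ and $(g,y')\in E^*$. We plan a case analysis on which endpoint is the upper line $\{x,\bar{x}\}$ and which is $y$. If $y'\in\{x,\bar{x}\}$, then $g$ was adjacent to $x$ or $\bar{x}$, so induction on $g$'s position relative to $\rho(x)$ gives $\rho(x)\le$ a vertex comparable with $y'$s attachment. We combine this with \cref{lem:subtree_rho} and the fact that every vertex of $H$ is at least $\rho(g)$. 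If $x'\in\{x,\bar{x}\}\subseteq H$, then $x$ is adjacent to $c$, and induction gives $c\ge\rho(x)$; we then need $y'$, which is adjacent to $g$, to be at least $\rho(g)\ge\rho(x)$. It is unclear whether this comparison closes cleanly; it may require the level bound of \cref{prop:level_values_bound}. We expect this step to take most of the work.
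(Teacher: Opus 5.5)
Your skeleton is the paper's: induction over operations, tracking newly created upper vertices, the reassignment of $\rho(x)$ from $c$ to $t$, and pivot-induced edges. Some pieces are fine. Your expectation that the pivot removes every edge between $c$ and $H$ is correct, and disposing of tips and retired $y$ in the $\grow$ case directly by (BT2) is a valid shortcut. But the step you leave open is where the proof actually lives, and without it your $\grow$ case is also unfinished.

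\textbf{The missing idea} is \cref{cl:new_edge_rho_c}. Suppose a $\blossom$ operation not called inside $\grow$ adds or deletes an edge $(w,z)$ between pre-existing vertices, with $w\in H$ and $z\notin H$. Then $w$ is an upper vertex with $\rho(w)=c$, unlabeled before the operation and labeled by it. Its proof goes as follows. Let $w^+$ be the upper vertex of $w$'s line. Since $(w,c)\in E^*$, (SO1) gives $c\ge\rho(w^+)$. As $c$ is a proper ancestor of $w^+\in H$ (\cref{rem:kc_is_lca}), \cref{lem:subtree_rho} gives $\rho(w^+)\ge c$, so $w^+$ is a child of $c$. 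Because $K(w)\neq K(c)$, (BT2) makes $w$ and $c$ active, so (SO2) excludes $w=\overline{w^+}$. Finally (MB1)--(MB3) force $w$ to be unlabeled.

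Your proposed tools cannot substitute for this. Take the case where the $H$-endpoint $x'$ plays the role of $y$ and the other endpoint lies on $\{x,\bar x\}$ for an arbitrary upper vertex $x\notin H$. Then $\rho(x)$ may be any existing labeled vertex, possibly later than $c$, so knowing $x'\ge c$ gives nothing. Testing (SO1) for $x$ against $g$ is vacuous, because $g$ is unlabeled. What you need is that $x'$ is labeled in this very operation, whence $x'>\rho(x)$ by \cref{cor:level_at_least_k_1}.

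With the claim, the case analysis closes at once.
\begin{itemize}
\item If the $H$-endpoint $w$ is the candidate $y$, it is newly labeled.
\item If $w\in\{x,\bar x\}$, then $x=w$ and $\rho(x)=c$ becomes $t$. The other endpoint was adjacent to $g$, so it is at least $\rho(g)=c$ by (SO1). It also differs from $c$, because $(c,x)$ already existed. Hence it is at least $t$.
\item (SO2) cannot be broken, because $w$ is upper and unlabeled.
\item Your $\grow$ worry is settled too. An edge $(y,u)$ or $(y,\bar u)$ added after $y$ was labeled would have $y$ as its $H$-endpoint (since $u,\bar u\notin V(\st)$), forcing $y$ to be unlabeled.
\end{itemize}

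The claim's proof uses (SO2), which is temporarily false for $u$ during the $\blossom$ call inside $\grow(v)$, since both $(v,u)$ and $(v,\bar u)$ are edges. That pivot therefore needs separate treatment. By \cref{prop:pivoting_new_edge}, one endpoint of a new edge lies in $H=\{u,\bar u\}$, and the other endpoint $z$ was adjacent to $u$, so $z\ge v$. Moreover $z\neq v$, since after the pivot $v=c$ has no neighbor in $H$. Hence $z\ge t=\rho(u)$.

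\textbf{Smaller slips.}
\begin{itemize}
\item In that internal call, the vertex that becomes retired is the upper vertex $x$ (serving as $g$), not $\bar x$.
\item After the reassignment $\rho(x)=t$ you must recheck (SO2). It holds because a neighbor $\bar x$ of $t$ would have been a neighbor of $c$ by \cref{prop:neighbor_of_bt}, contradicting the inductive (SO2).
\item Your event list omits newly labeled vertices landing below an existing $\rho(x)$ while adjacent to $x$ or $\bar x$. Non-tips cannot land there, by \cref{cor:level_at_least_k_1}. The tip $t$, placed right after $c$, is handled by \cref{prop:neighbor_of_bt} together with (SO1) applied to $y=c$.
\item For $x=b$, the neighbors of $t$ lie in $H\subseteq\st(b)$ and are controlled by \cref{lem:subtree_rho}, not by induction at $g$.
\end{itemize}
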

\begin{proof}
  In what follows, we inductively verify that these properties are maintained by each $\grow$ and $\blossom$ operation. Notably, any non-tip vertex $y$ labeled after the assignment of $\rho(x)$ cannot violate (SO1), because $y > \rho(x)$ always follows from \cref{cor:level_at_least_k_1}. Moreover, although the value of $\rho$ for existing vertices may be modified by replacing $\rho(x) = c$ with $\rho(x) = t$ in \ref{item:detail_4}, this update does not violate (SO1) for any two existing vertices $x$ and $y$. This is because the relation $y < \rho(x)$ is preserved, except that $y = c$ newly satisfies this relation, in which case $y$ cannot be adjacent to $\qty{x, \bar{x}}$. 
    
  Thus, it suffices to track two events to verify the statements: the assignment or update of $\rho$, and edge addition via pivoting operations. The following claim confines the endpoints of new edges, which simplifies the subsequent discussion. Note that the claim is inapplicable to an internal $\blossom$ call within a $\grow$ operation, because (SO2) is temporarily violated at this point.
  \begin{claim}
      \label{cl:new_edge_rho_c}
      Assume that (SO1) and (SO2) inductively hold before a $\blossom$ operation. Suppose this operation adds or deletes an edge $(w, z)$ between existing vertices; that is, this edge does not satisfy condition~(i) from \cref{prop:pivoting_new_edge} and thus satisfies only condition~(ii). Without loss of generality, assume that $w \in H$ and $z \notin H$. Then, $w$ is an upper vertex and $\rho(w) = c$ before the operation. Furthermore, $w$ is labeled during this $\blossom$ operation.
  \end{claim}
  \begin{proof}
      Let $w^+$ be the upper vertex of $\qty{w, \bar{w}}$. As $w$ was adjacent to $c$ before the pivoting operation by condition~(ii), $c \ge \rho(w^+)$ follows from (SO1). On the other hand, since $w^+ \in H$, \cref{rem:kc_is_lca} ensures that $c$ is an ancestor of $w^+$ on $\st$ before the pivoting operation. Hence, there exists a child $v$ of $c$ on $\st$ with $w^+ \in \st(v)$, yielding $\rho(w^+) \ge \rho(v) = c$ from \cref{lem:subtree_rho}. Hence, these inequalities imply $\rho(w^+) = c$. It also follows that $w^+$ is a child of $c$ on $\st$.
      
      Now, we show $w = w^+$ as follows. Given $w \in H$ and $c \notin H$, we have $K(w) \neq K(c)$ by the definition of $H$. Hence, the existence of an edge $(w, c) \in E^*$ ensures that these vertices are active. This also shows by (SO2) that $w$ is an upper vertex, i.e., $w = w^+$. 

      Finally, since $w$ is a child of $c$ on $\st$ and $K(w) \neq K(c)$, it follows from (MB1)--(MB3) and \cref{rem:kc_is_lca} that $w$ was unlabeled before the operation, which then implies that $w$ must be labeled during this operation.
  \end{proof}
  
  We now divide the argument into the following cases. Here, \cref{cl:new_edge_rho_c} is used in the second case to identify the affected vertices; namely, newly inserted buds and tips, and vertices at which $\rho$ is updated. 
  \begin{enumerate}[label=\textbf{Case \arabic*.}, leftmargin=*]
      \item A new original line $\qty{u, \bar{u}}$ identified by $\grow$ does not violate (SO1) or (SO2). Furthermore, if a dummy line is introduced by this process, the dummy line and newly added edges do not violate (SO1) or (SO2), either.
      \item Suppose a $\blossom$ operation inserts a dummy line $\qty{b, t}$ and modifies the edges via the pivoting operation. Then, (SO1) holds for $x = b$, $y = t$, or any existing vertex $x$ with $\rho(x) = c$. Similarly, (SO2) holds for $x = b$ or any existing vertex $x$ with $\rho(x) = c$. Note that the incident edges of these vertices may be added or removed.
  \end{enumerate}
  In the following proof, we extend the order $\le$ so that $y \ge \rho(x)$ holds for any upper vertex $x$ and any unlabeled vertex $y$. \cref{cor:level_at_least_k_1} guarantees that this extended relation is preserved when $y$ is labeled, allowing us to write the negation of $y < \rho(x)$ as $y \ge \rho(x)$ even for unlabeled $y$. Notably, although this extension makes the relation a partial order, the subsequent arguments rely only on its transitivity.
  
  \paragraph{Case 1.} First, suppose $\qty{u, \bar{u}}$ is an original line identified by $\grow(v)$, where $u$ is an upper vertex. It is straightforward to verify (SO2) for $x = u$. To prove (SO1) for $x = u$, it suffices to show that no vertex $y$ with $y < v$ is adjacent to $u$ or $\bar{u}$ before the $\grow$ operation, as the following argument shows. Since this statement is exactly (SO1) for $x = u$ if a dummy line is not inserted, we assume that a dummy line is inserted. Then, any neighbor of $u$ or $\bar{u}$ after the operation, other than the new tip $t$, must have been adjacent to $u$ or $\bar{u}$ before the operation. Since $\rho(u)$ is updated from $v$ to $t$, (SO1) for $x = u$ similarly follows by splitting $y < t$ into $y < v$ and $y = v$.

  Now, we prove this sufficient condition. Assume for contradiction that a vertex $y$ with $y < v$ is adjacent to $u$ or $\bar{u}$. If $y$ is not a tip, then $\grow(y)$ must have identified $\qty{u, \bar{u}}$ earlier than $\grow(v)$, which is a contradiction. Here, note that the edges $(y, u)$ and $(y, \bar{u})$ cannot be added after $y$ is labeled but before $\qty{u, \bar{u}}$ is identified; otherwise, $y \in H$ would hold at the time of addition, implying that $y$ must be unlabeled by \cref{cl:new_edge_rho_c}, a contradiction. If $y$ is a tip, we reduce the situation to the non-tip case. For the $\blossom$ operation that inserts $y$, \cref{prop:neighbor_of_bt} shows that the vertex $c$ from \ref{item:detail_2} must have been adjacent to $u$ or $\bar{u}$ prior to the operation. Given $c < y < v$, this reduces to the case where some vertex $c$ with $c < v$ is adjacent to $u$ or $\bar{u}$. Iterative application of this reduction eventually finds a non-tip vertex, resulting in a contradiction.

  We next show that a dummy line $\qty{b, t}$ inserted during a $\grow$ operation 
  does not violate (SO1) or (SO2). To verify (SO1) for $x = b$, we can assume $y \neq t$ because $t \ge \rho(b)$. By \cref{prop:neighbor_of_bt}, any neighbor $y$ of $b$ except $t$ was adjacent to $u$ prior to this insertion. As we have shown that $y \ge v$ follows in this case, $v = \rho(b)$ verifies (SO1) for $x = b$. The remaining cases are (SO1) for $y = t$ and (SO2) for $x = b$, which follow from the fact that $t$ is adjacent only to $b$ and $\qty{u, \bar{u}}$ by (BT2). 

  Finally, we consider a newly introduced edge $(w, z)$ between existing vertices. Although we cannot apply \cref{cl:new_edge_rho_c} to this case, we can still confine one of its endpoints to $w \in \qty{u, \bar{u}}$ by \cref{prop:pivoting_new_edge}. As $(z, w)$ exists after the operation, it follows that $(z, u)$ or $(z, \bar{u})$ must exist before the operation. Consequently, as we have shown that no vertex $y$ with $y < v$ is adjacent to $u$ or $\bar{u}$, the other endpoint $z$ satisfies $z \ge v$. Since an edge $(v, w)$ does not exist after the operation, we have $z \neq v$, from which we obtain $z \ge t = \rho(u)$. This completes the proof for case 1.

  \paragraph{Case 2.} We then consider a $\blossom(u, v)$ operation that inserts a dummy line $\qty{b, t}$. We distinguish the map $\rho$ before and after the operation by $\rho$ and $\hat{\rho}$.
  
  We first verify the contrapositive of (SO1) for $x = b$ or $y = t$. If both hold, then $t > c = \hat{\rho}(b)$ and thus (SO1) follows. Otherwise, for $x = b$ and $y \neq t$, assume that $y$ is adjacent to $b$ or $t$. If $y$ is adjacent to $b$, then $y$ must have been a neighbor of $g$ before the pivoting operation by \cref{prop:neighbor_of_bt}. Thus, $y \ge \rho(g)$ inductively follows from (SO1), which then yields (SO1) for $x = b$ by $\rho(g) = c = \hat{\rho}(b)$. If $y$ is adjacent to $t$, then $y$ must belong to the blossom $H$ by (BT2). Thus, \cref{lem:subtree_rho} and $H \subseteq \st(b)$ together imply the desired inequality $y \ge \hat{\rho}(b)$. Finally, for $x \neq b$ and $y = t$, assume that $y = t$ is adjacent to $x$ or $\bar{x}$. As the neighbor of $t$ must have been a neighbor of $c$ before the pivoting operation by \cref{prop:neighbor_of_bt}, this inductively implies from (SO1) that $c \ge \rho(x)$. Then, since replacing $\rho(x) = c$ with $\hat{\rho}(x) = t$ is the only possible update at the existing vertices, we have $t \ge \hat{\rho}(x)$, which completes the proof for this part.

  We then consider (SO1) for any existing vertex $x$ with $\rho(x) = c$. This holds trivially unless an edge $(y, x)$ is introduced. We can also assume $y \notin \qty{b, t}$, as $y = t$ is already handled and $y = b$ does not satisfy $y < \rho(x) = c$. Suppose such an edge $(x, y)$ is added. If $x \notin H$ (i.e., $w = y$ in \cref{cl:new_edge_rho_c}), then $y$ is labeled during this $\blossom$ operation, leading to $y \ge \hat{\rho}(x)$ by \cref{cor:level_at_least_k_1}. If $x \in H$ (i.e., $w = x$ in \cref{cl:new_edge_rho_c}), then \cref{prop:pivoting_new_edge} ensures $(g, y) \in E^*$ before the operation. Consequently, this implies $y \ge \rho(g) = c$ by (SO1). Furthermore, since the edge $(c, x)$ existed before the operation, a new edge $(y, x)$ cannot be added if $y = c$, implying $y \neq c$. Thus, we obtain $y \ge t$, which is exactly the desired inequality $y \ge \hat{\rho}(x)$.

  We next show that (SO2) holds for $x = b$ and any existing vertex $x$ such that $\rho(x)$ is modified. For the former, $\hat{\rho}(b) = c \notin H$ is not connected to $\bar{b} = t$ by (BT2). For the latter, $\rho(x) = c$ is replaced with $\hat{\rho}(x) = t$. Assume for contradiction that $\bar{x}$ is adjacent to $\hat{\rho}(x) = t$. Then, $\bar{x}$ must have been adjacent to $c$ before the pivoting operation. However, (SO2) inductively shows that $\bar{x}$ cannot be adjacent to $\rho(x) = c$, which is a contradiction.
\end{proof}

Using the above lemma, we establish the fundamental properties of the levels.
\begin{lemma}
  \label{prop:level_at_least_kmax}
  Every upper vertex $x$ labeled during $\setupper(k, \kmax)$ has a level of at least $\kmax$.
\end{lemma}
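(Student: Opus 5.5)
By \cref{prop:level_values_bound}~(ii), an upper vertex $x$ labeled during $\setupper(k, \kmax)$ satisfies $\lv(\rho(x)) + \lv(x) = 2k+1$. It therefore suffices to show $\lv(\rho(x)) \le \kmin \coloneqq 2k+1-\kmax$, or equivalently $\lv(\rho(x)) \le \lv(x)$. Then $2\lv(x) \ge 2k+1$, and combining with the sum relation gives $\lv(x) \ge \kmax$ once we use that $\rho(x)$ is no deeper than the lower-level endpoint of the triggering edge.

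Let the labeling be performed by $\blossom(u, v)$ with $\lv(u) = \kmax$ and $\lv(v) = \kmin$, and by symmetry let $x \in P(u|c)$; the case $x \in P(v|d)$ is analogous. \cref{lem:search_path_prefix} decomposes $P(u)$ as $P(\rho(x))\, x \bar{x}\, P(u|\bar{x})$, and $P(x) = P(v)\overline{P(u|x)}x$. The key claim is that $\rho(x) \le v$ in the total order, hence $\lv(\rho(x)) \le \lv(v) = \kmin$. Granting the claim, $\lv(x) = 2k+1-\lv(\rho(x)) \ge 2k+1-\kmin = \kmax$.

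To prove $\rho(x) \le v$, I would use (SO1) of \cref{lem:search_order_grow}. Consider the edge $(u, v)$ with $u \in P(u|c)$ lying in $\st(x)$, and $v$ either outside $\st(x)$ or, if $x \in P(v|d)$, in the other branch. The aim is to show that some vertex of $\st(\bar{x})$ is adjacent to $v$, and then to transfer (SO1) along the subtree using \cref{lem:subtree_rho}: along the chain of upper vertices on $P(u|x)$, each $\rho$ is at least $\rho(x)$. Concretely, let $x'$ be the upper vertex on $P(u|\bar{x})$ closest to $u$; it is unlabeled before the operation, and its line contains or precedes $u$. If $u$ itself is an endpoint of the line of $x'$, then (SO1) gives $v \ge \rho(x') \ge \rho(x)$ directly. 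Otherwise $u$ lies deeper, inside an already-formed blossom. In that case I would argue by induction over the blossom structure: since $u$ is labeled and its blossom $K(u)$ is entered through its unique unlabeled vertex (MB2), the relevant vertex is the one whose $\rho$ bounds $u$'s ancestors, and the incoming edge $(u, v)$ of $\st(x)$ means $v$ is not in that subtree. The main obstacle is this step: converting adjacency of $v$ to a deep vertex $u$ into $v \ge \rho(x)$. I expect to need a strengthening: any vertex $y$ adjacent to some labeled vertex of $\st(x)$ via a non-tree edge, with $y \notin \st(x)$, satisfies $y \ge \rho(x)$. I would prove this with the same inductive bookkeeping as in \cref{lem:search_order_grow}, using \cref{cl:new_edge_rho_c} and \cref{prop:neighbor_of_bt} to control edges created by pivoting.

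If this strengthening proves too strong, the fallback is more direct. Since the blossom has not yet been formed, $x$ was unlabeled until this operation, so by \cref{prop:incoming_edge_label} no incoming edge of $\st(x)$ had been processed. The algorithm scans edges in order of $\kmax$ and then in the order of $\vl$. An edge from $\st(x)$ to a vertex $v' < \rho(x)$ would therefore have been processed at an earlier stage, giving a contradiction.
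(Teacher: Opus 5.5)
Your reduction is the right one: via \cref{prop:level_values_bound}~(ii) it suffices to show $\rho(x) \le v$, and (SO1) of \cref{lem:search_order_grow} together with \cref{lem:subtree_rho} are the right tools. The gap is that the proposal never proves $\rho(x)\le v$ in the case you flag as the main obstacle, namely when $u$ is not on the line of the unlabeled upper vertex $x'$ nearest to $u$. There you only announce a strengthening to be proved by new inductive bookkeeping.

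The fallback is not valid either. $\setupper$ orders edges first by the level sum and then by position in $\vl$. A hypothetical inequality $v<\rho(x)$ therefore says nothing about $(u,v)$, or any other edge leaving $\st(x)$, having been scanned earlier, and no contradiction with \cref{prop:incoming_edge_label} follows. The inequality you need comes from the order of $\grow$, which is what (SO1) encodes, not from the order of $\setupper$.

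The missing observation is that (SO1) and \cref{lem:subtree_rho} hold for every upper vertex, labeled or not. There is no reason to look for an unlabeled one. Take $u^+$, the upper vertex of $u$'s own line $\{u,\bar u\}$.
\begin{itemize}
\item Since $x$ was unlabeled, \cref{lem:subtree_valve} gives $u\in P(u|x)\subseteq\st(\bar x)$. Hence $u^+\in\st(x)$, and \cref{lem:subtree_rho} gives $\rho(x)\le\rho(u^+)$.
\item Since $v$ is adjacent to $u\in\{u^+,\overline{u^+}\}$, (SO1) applied to $u^+$ gives $\rho(u^+)\le v$.
\end{itemize}
This is the paper's argument. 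It is exactly your proposed strengthening, obtained in two lines without any induction over the blossom structure.

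Two minor points:
\begin{itemize}
\item $\lv(\rho(x))\le\kmin$ is not equivalent to $\lv(\rho(x))\le\lv(x)$; the latter only yields $\lv(x)\ge k+1$.
\item The case $x\in P(v|d)$ is not symmetric to the other, but it is trivial. By \cref{lem:search_path_prefix}, $P(v)=P(\rho(x))\,x\bar x\,P(v|\bar x)$, which gives $\lv(\rho(x))\le\lv(v)$ directly.
\end{itemize}
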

\begin{proof}
  Suppose that $\blossom(u, v)$ labels the vertex $x$, where $\lv(u) = \kmax$ and $\lv(v) = 2k + 1 - \kmax$. Since the case where $x \in P(v)$ is straightforward, we assume $x \in P(u)$. Let $u^+$ be the upper vertex in $\qty{u, \bar{u}}$. As $x$ is unlabeled prior to the operation, \cref{lem:subtree_rho,lem:subtree_valve} together imply $\rho(x) \le \rho(u^+)$. Furthermore, \cref{lem:search_order_grow} ensures that $\rho(u^+) \le v$. Hence, we have $\lv(\rho(x)) \le \lv(v)$. Since $\lv(\rho(x)) + \lv(x) = 2k + 1$ by \cref{prop:level_values_bound}~(ii), we obtain $\lv(x) = 2k + 1 - \lv(\rho(x)) \ge 2k + 1 - \lv(v) = \kmax$, as desired.
\end{proof}
\begin{corollary}
  \label{cor:blm_monotone}
  Suppose $\blossom(u, v)$ labels an upper vertex $w$ during $\setupper$. Then, for any edge $(w, z)$ where $z$ is also labeled, $(u, v)$ precedes $(w, z)$ in the order of $\setupper$.
\end{corollary}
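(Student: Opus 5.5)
Our plan is to compare the positions of $(u,v)$ and $(w,z)$ in the processing order of $\setupper$. The sweep runs over $k$, then over increasing $\kmax$, then over the outer loop on $\vl[\kmax]$, then over the inner loop on $\vl[\kmin]$. An edge with endpoint levels $(a,b)$, $a \ge b$, is therefore handled in search level $(a+b-1)/2$ with $\kmax = a$. We need to show that the pair $\{w, z\}$, oriented with its larger endpoint first, is strictly later in this order than $(u,v)$.

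First we record the level of $w$. Let $(u,v)$ be processed in $\setupper(k, \kmax)$, so $\lv(u) = \kmax$ and $\lv(v) = 2k+1-\kmax$. By \cref{prop:level_at_least_kmax}, $\lv(w) \ge \kmax$, and by \cref{cor:level_at_least_k_1}(ii), $\lv(w) \ge k+1$. Since $w$ is labeled by this $\blossom$, it is placed at the end of $\vl[\lv(w)]$, after every vertex already in that list; in particular it comes after $u$ if $\lv(w) = \kmax$. We also need $\lv(z) \ge 0$ and that $z$ is labeled, which is given.

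Next we split into cases by the level sum $\lv(w) + \lv(z) = 2k'+1$.

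\textbf{Case $k' > k$.} The edge $(w,z)$ is examined in a later search level, so it is strictly later.

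\textbf{Case $k' = k$.} Let $\kmax' = \max(\lv(w), \lv(z))$; then $\kmax' \ge \lv(w) \ge \kmax$. If $\kmax' > \kmax$, we are done. If $\kmax' = \kmax$, then $\lv(w) = \kmax$, and either $w$ is the outer vertex or $\lv(z) = \lv(w)$. If $w$ is the outer vertex, $w$ appears after $u$ in $\vl[\kmax]$, so its outer iteration comes later. If $\lv(z) = \lv(w)$, the sum would be even, which is impossible. Hence $z$ is inner with $\lv(z) = k+1-\dots$; more precisely $\lv(z) = 2k+1-\kmax$, so $z$ sits in the inner list. Since $w$ is outer and strictly after $u$, this case is also settled.

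\textbf{Case $k' < k$.} This is the main obstacle. It requires $\lv(z) \le 2k'+1-\lv(w) < 2k+1-\kmax = \lv(v)$, so $z$ is labeled with a small level, necessarily before $w$. The plan is to derive a contradiction using \cref{lem:search_order_grow}~(SO1).
\begin{itemize}
\item If $w$ is retired, no relevant edge exists: by (BT2), a retired $w$ has no neighbours outside its blossom, and edges inside one $K(\cdot)$ are never processed. So assume $w$ is active.
\item Let $w^+ = w$ be the upper vertex. We have $\lv(\rho(w)) = 2k+1-\lv(w) \ge 2k+1-\dots$, and indeed $\lv(\rho(w)) + \lv(w) = 2k+1$ by \cref{prop:level_values_bound}(ii).
\item Since $z$ is adjacent to $w$, (SO1) gives $z \ge \rho(w)$, so $\lv(z) \ge \lv(\rho(w)) = 2k+1-\lv(w)$.
\item Therefore $\lv(w) + \lv(z) \ge 2k+1$, contradicting $k' < k$.
\end{itemize}
The same inequality also handles the boundary of the $k' = k$ case cleanly. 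We must check that (SO1) applies to the current $\rho(w)$ after any later tip replacement in Step~4. A replacement only moves $\rho$ from $c$ to $t$ with the same level, so the level bound is unaffected. We also need that the edge $(w,z)$ exists at the time it is considered; edges added later by pivoting are still covered, because (SO1) is an invariant.
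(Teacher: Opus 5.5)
Your proposal is correct and is essentially the paper's argument. \cref{prop:level_at_least_kmax} gives $\lv(w)\ge\lv(u)$, (SO1) gives $\rho(w)\le z$, and \cref{prop:level_values_bound}~(ii) turns these into $\lv(w)+\lv(z)\ge\lv(u)+\lv(v)$; in the equality case, $w$ is appended after $u$ in $\vl[\kmax]$, and your case split on $k'$ versus $k$ simply spells out the paper's ``strict inequality, or else $v<u<w$.''
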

\begin{proof}
  The precondition of $\blossom$ ensures that $u > v$. Since $\lv(u) \le \lv(w)$ follows from \cref{prop:level_at_least_kmax} and $w$ is labeled after $u$, it follows that $u < w$. Given these relations, it suffices to prove that $\lv(u) + \lv(v) \le \lv(w) + \lv(z)$; the precedence is immediate if the inequality is strict, and otherwise follows from $v < u < w$. To derive the desired inequality, we can utilize \cref{lem:search_order_grow} to show $\rho(w) \le z$, and thus $\lv(\rho(w)) \le \lv(z)$. Since \cref{prop:level_values_bound}~(ii) ensures $\lv(\rho(w)) + \lv(w) = \lv(u) + \lv(v)$, combining these inequalities completes the proof.
\end{proof}
\cref{prop:level_values_bound}~(i) and \cref{cor:blm_monotone} together guarantee the monotonicity of the assigned levels with respect to the execution order; that is, they prevent assigning a level so small that it would require a vertex to be examined by $\setlower$ or $\setupper$ prior to its labeling. Although tips appear to violate such monotonicity, this is inconsequential because every tip is connected to the vertices within the blossom by (BT2), implying that $\grow$ and $\blossom$ operations involving tips would have no influence. Based on these observations, we prove the following lemma, which is fundamental to establishing optimality in the absence of an augmenting path (see \cref{sec:lmp_optimality}).
\begin{lemma}
  \label{lem:effectively_all_executed}
  If $\grow$ has been applied to a vertex $u$, every line $\qty{w, \bar{w}}$ with $(v, w) \in E^*$ for some vertex $v \le u$ has its lower level assigned. Similarly, if $\blossom(u, v)$ has been performed, we have $K(w) = K(z)$ for every edge $(w, z)$ that precedes $(u, v)$ in the order of $\setupper$.
\end{lemma}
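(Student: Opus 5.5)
The plan is to prove both statements together by induction over the sequence of $\grow$ and $\blossom$ operations. The key tools are the monotonicity facts just established: \cref{prop:level_values_bound}~(i), \cref{cor:level_at_least_k_1} and \cref{cor:blm_monotone}. Together with (BT2), these say that no vertex can be labeled, nor any edge created, in a position of the processing order that has already been passed.

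\textbf{First statement.} Fix $u$ with $\grow(u)$ executed, a vertex $v \le u$, and an edge $(v,w) \in E^*$ in the current graph. If $v$ is a non-tip active vertex, then $\grow(v)$ was executed no later than $\grow(u)$, since $\vl$ is scanned in order and $v \le u$. I will argue that the edge $(v,w)$ already existed when $\grow(v)$ ran. Otherwise it was added later by a pivoting step, and then \cref{cl:new_edge_rho_c} applies: one endpoint lies in $H$ and is labeled during that $\blossom$. If that endpoint is $v$, this contradicts $v$ having been labeled earlier. If it is $w$, then $w$ is an upper vertex whose line already has its lower level assigned, which is the desired conclusion anyway. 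For edges created inside a nested $\blossom$ within $\grow$, I will use the confinement of \cref{prop:pivoting_new_edge} to endpoints in $\{u',\bar u'\}$, exactly as in Case~1 of \cref{lem:search_order_grow}. Given that $(v,w)$ was present when $\grow(v)$ ran, either $\grow(v)$ found the line $\{w,\bar w\}$ or one of $w,\bar w$ was already labeled. In both cases the lower level is assigned.

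If $v$ is a tip or a retired vertex, I will reduce to the previous case. For a tip, \cref{prop:neighbor_of_bt} transfers the adjacency to $c < v$, and iterating the reduction terminates at a non-tip, as in the proof of \cref{lem:search_order_grow}. Alternatively, (BT2) confines the neighbors of $v$ to its blossom, all of whose lines are already assigned. Retired vertices are handled by (BT2) in the same way.

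\textbf{Second statement.} Take an edge $(w,z)$ preceding $(u,v)$ in the $\setupper$ order. If both endpoints were labeled, inserted into $\vl$, and adjacent at the time $(w,z)$ was scanned, then either $K(w)=K(z)$ already held, or $\blossom(w,z)$ was executed and merged them. Maximal blossoms only grow, so equality persists. The remaining situations are the following.
\begin{enumerate*}[label=(\alph*)]
\item An endpoint was labeled after the scanning position. This is excluded by \cref{cor:blm_monotone} and \cref{prop:level_at_least_kmax}, which force newly labeled upper vertices' edges to come later.
\item An endpoint is a tip or retired. Then (BT2) puts the other endpoint in the same maximal blossom.
\item The edge was added by pivoting after its position was scanned. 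By \cref{cl:new_edge_rho_c}, one endpoint was labeled in that very $\blossom$, and \cref{cor:blm_monotone} places the edge after the triggering edge, hence after the current scanning position, contradicting the assumption that it precedes $(u,v)$.
\end{enumerate*}

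\textbf{Main obstacle.} I expect the delicate step to be the bookkeeping of edges added by pivoting in case (c), especially inside nested $\blossom$ calls within $\grow$, where \cref{cl:new_edge_rho_c} does not apply directly. There I would repeat the Case~1 argument of \cref{lem:search_order_grow}: the new edge has an endpoint in $\{u',\bar u'\}$ and the other endpoint is $\ge \rho(u')$, so the edge lies ahead in the processing order.
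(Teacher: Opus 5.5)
Your proposal is correct and follows essentially the paper's route. Labeling cannot break either invariant because of the monotonicity results: \cref{cor:blm_monotone} and \cref{prop:level_at_least_kmax} cover upper vertices, and for lower vertices found by \grow{} (which your case (a) omits) the needed fact is (SO1) of \cref{lem:search_order_grow}, or equivalently your inductive first statement. Edges created by pivoting are disposed of via \cref{prop:pivoting_new_edge}, \cref{cl:new_edge_rho_c} and \cref{cor:blm_monotone}, and your handling of tips and retired vertices through (BT2) and of the nested \blossom{} call within \grow{} via the Case~1 argument fills in details that the paper's short proof leaves implicit.
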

\begin{proof}
  Since labeling vertices preserves these conditions, we only need to verify that the addition of new edges does not violate them. By \cref{prop:pivoting_new_edge}, the new edges $(x, y)$ are classified into two cases: (i) $\qty{x, y} \cap \qty{b, t} \neq \emptyset$, and (ii) $x \in H$, $y \notin H$, and $(x, c), (g, y) \in E^*$ prior to the operation. It is straightforward to verify that edges satisfying condition~(i) do not violate the desired conditions. Otherwise, for edges satisfying only condition~(ii), \cref{cl:new_edge_rho_c} shows that $x$ becomes labeled during the current $\blossom$ operation, which consequently preserves the conditions by \cref{cor:blm_monotone}.
\end{proof}

\subsubsection{Path Decomposition}

This section discusses the properties of $L(x)$ and $P(x)$. By leveraging $L(x)$ as a basic segment and decomposing the search path $P(x)$ into these segments, we inductively verify that $G^*[L(x) \setminus \qty{x}]$ and $G^*[P(x) \setminus \qty{x}]$ have a unique perfect matching. 

Before discussing the properties of $L(x)$, we first focus on each vertex $y \in P(x)$ that is unlabeled or satisfies $y > x$, because such vertices may subsequently determine the segment $L(x)$. The following lemma generalizes \cref{prop:path_property,lem:search_path_prefix}.
\begin{lemma}
    \label{lem:modified_ell_prefix}
    For a search path $P(x)$, let $y \in P(x)$ be a vertex that is unlabeled or satisfies $y > x$. Then, $y$ is an upper vertex and lies at an even position in $P(x)$. Furthermore, $P(x)$ decomposes as $P(\rho(y)) y \bar{y} P(x|\bar{y})$. Moreover, if $x$ is active, every upper vertex $w \in P(x|\bar{y})$ satisfies $\rho(y) < \rho(w)$.
\end{lemma}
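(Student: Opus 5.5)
The plan is an induction over the $\grow$ and $\blossom$ operations (including the path updates of \ref{item:detail_4}), showing that the four assertions are preserved for every search path $P(x)$ and every vertex $y \in P(x)$ that is unlabeled or satisfies $y > x$. Unlabeled $y$ is already handled by \cref{lem:search_path_prefix}: $y$ is an upper vertex and $P(x) = P(\rho(y)) y \bar{y} P(x|\bar{y})$. Its even position then follows from \cref{prop:path_property}(4). So the new content is the case of a labeled $y$ with $y > x$. I would first show such a $y$ must be an upper vertex. If $y$ were a lower vertex, it would have been labeled by $\grow$ (or as a tip on insertion) strictly before every vertex of the subtree below it. Tracing \cref{prop:path_edge_taxonomy} and (MB2), the vertices after $y$ in $P(x)$ are then in $\st(y)$, or are reached after re-entering through labeled blossom vertices. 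Together with \cref{cor:level_at_least_k_1} and the ordering of labels in \ref{item:detail_5}, this forces $x > y$.

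For the decomposition when $y$ is a labeled upper vertex with $y > x$, I would look at the moment $P(x)$ was created. If $x$ was labeled by $\blossom(u,v)$ with, say, $P(x) = P(v)\overline{P(u|x)}x$, then $y$ either lies in the unreversed prefix $P(v)$ or in the reversed part. If $y \in P(v)$, then $y > x$ gives $y > v$, since $x$ is labeled after $v$ and levels are monotone by \cref{prop:level_at_least_kmax} and \cref{cor:blm_monotone}. The induction hypothesis for $P(v)$ then gives $P(v) = P(\rho(y)) y\bar{y}P(v|\bar{y})$, and the decomposition extends to $P(x)$. If $y$ is in the reversed part $\overline{P(u|x)}$, then $y$ was unlabeled at that time, or was labeled in the same operation before $x$. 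In the first case $y$ was unlabeled in $P(u)$, and \cref{lem:search_path_prefix} puts $\rho(y)$ before $y$ in $P(u)$. The reversed traversal meets $\bar{y}$ before $y$, though, so this case has to be excluded. I would do that by showing that vertices in the reversed segment labeled before $x$ are exactly those with larger $\rho$ ("descending order of $\rho(x)$" in \ref{item:detail_5}), hence smaller than $x$. Then every $y > x$ in the reversed part is the upper vertex whose $\rho(y)$ is in the prefix, and the decomposition reads off directly. The path updates in \ref{item:detail_4} only insert $bt$ right after $c$ (\cref{cor:justify_line_4}), so they preserve the decomposition, and $\rho$ is renamed from $c$ to $t$ consistently.

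For the final claim, that every upper $w \in P(x|\bar{y})$ satisfies $\rho(y) < \rho(w)$ when $x$ is active, I would use \cref{lem:subtree_rho}. When $P(x|\bar{y}) \subseteq \st(\bar{y})$ the claim is immediate. In general the suffix can leave $\st(\bar{y})$ via blossom-forming edges. I would bound those exits with \cref{lem:search_order_grow}(SO1): a vertex adjacent to $y$ or $\bar y$ is $\ge \rho(y)$. Equality has to be ruled out using (SO2) and the activity of $x$, which is where the retired-vertex exception comes from. The main obstacle is the case analysis showing that labeled $y > x$ cannot sit in the "wrong" orientation inside a reversed segment. This needs careful use of the labeling order in \ref{item:detail_5}, including the tie-breaking rule for $g$.
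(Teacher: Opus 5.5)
Your treatment of unlabeled $y$ and of the \ref{item:detail_4} updates is fine, but the core of the blossom case is missing. Let $x \in P(u)$ be labeled by $\blossom(u,v)$, so $P(x) = P(v)\overline{P(u|x)}x$ (the case $x \in P(v)$ is symmetric).

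The fact that drives the argument is this: every $y \in P(x)$ that is unlabeled or satisfies $y > x$ has $\rho(y) \le \rho(x)$, with strict inequality when $x$ is active. It needs three separate justifications.
\begin{itemize}
\item If $y$ was labeled before the operation, then $y > x$ forces $\lv(y) > \lv(x) \ge k+1$ (\cref{cor:level_at_least_k_1}). So $y$ is an upper vertex from an earlier $\setupper$, and \cref{prop:level_values_bound}~(ii) gives $\lv(\rho(y)) + \lv(y) \le 2k+1 = \lv(\rho(x)) + \lv(x)$, hence $\rho(y) < \rho(x)$.
\item If $y$ is labeled in the same operation, the descending-$\rho$ order of \ref{item:detail_5} gives $\rho(y) \le \rho(x)$. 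Equality occurs only for the tie $g < h$, i.e., only when $x = g$ is retired.
\item If $y$ stays unlabeled, \cref{lem:subtree_valve} puts $x$ in $\st(\bar y)$, and \cref{lem:subtree_rho} applies.
\end{itemize}
With this in hand, \cref{lem:subtree_rho} excludes $y$ from $\st(\bar x) \supseteq P(u|x)$. So every qualifying $y$ lies in $P(v)$, and the parity and decomposition follow from the induction hypothesis for $P(v)$. The upper-vertex claim also falls out directly, without your subtree tracing for lower $y$.

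Your reversed-segment analysis does not reach this conclusion. It ignores vertices of $P(u|x)$ labeled before the operation (e.g.\ $u$ itself). Its last sentence also asserts that some $y > x$ survives in the reversed part with a decomposition that ``reads off directly.'' That contradicts your own remark that such a $y$ is met after $\bar y$. The correct conclusion is that no qualifying vertex lies in the reversed part.

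The ``Moreover'' clause is the more serious gap. (SO1) only says that neighbours of $y$ and $\bar y$ are $\ge \rho(y)$. It gives no control over $\rho(w)$ for the vertices $w$ of $\overline{P(u|x)}x$ that come after the exit edge $(v,u)$ from $\st(\bar y)$, since these need not be adjacent to $y$ or $\bar y$. What is needed is the chain $\rho(y) < \rho(x) < \rho(w)$:
\begin{itemize}
\item The first inequality is the strict form of the fact above.
\item The second is the induction hypothesis for $P(u)$ applied to $x$, which was unlabeled in $P(u)$ before the operation. Here the upper vertices of $P(u|x)$ lie in $P(u|\bar x)$, and $u$ is active.
\end{itemize}
Vertices $w \in P(v|\bar y)$ are handled by the induction hypothesis for $P(v)$, and $w = x$ by the first inequality. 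Finally, the activity hypothesis enters through the tie-breaking rule $g < h$ of \ref{item:detail_5} (where $\rho(g) = \rho(h) = t$), not through (SO2).
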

\begin{proof}
    We proceed by induction on the $\grow$ and $\blossom$ operations. We have to consider the following three cases: (i) $P(x)$ is defined by $\grow$, (ii) $P(x)$ is modified by $\blossom$, and (iii) $P(x)$ is defined by $\blossom$. Since it is straightforward to verify case (i), we focus on cases (ii) and (iii).

  \textbf{Case (ii).}
  Suppose $P(x) = P(c) P(x|c)$ is replaced with $P(c) bt P(x|c)$. Since the case when $y = b$ follows directly, we assume that $y$ existed prior to the operation. Then, the decomposition is preserved as follows. If $\rho(y) = c$, then $P(x)$ is updated to $P(c) bt y \bar{y} P(x|\bar{y})$ and thus decomposes as $P(t) y \bar{y} P(x|\bar{y})$. Since $\rho(y)$ is also updated to $t$, this preserves the decomposition. If $c$ precedes $\rho(y)$ in $P(x)$, then $P(\rho(y))$ is also updated to $P(c) bt P(\rho(y)|c)$, preserving the decomposition. If $\rho(y)$ precedes $c$ in $P(x)$, then $P(\rho(y)) y \bar{y}$ is a prefix of $P(c)$ and thus of the updated path $P(x) = P(c) bt P(x|c)$, as required. The remaining conditions can be easily verified.

  \textbf{Case (iii).}
  Since the conditions follow directly when $x$ is a new tip, we assume that $x$ existed prior to the operation. Suppose $x \in P(u)$, because the other case $x \in P(v)$ follows by symmetry. 
  
  We first show that $y$ is an upper vertex and satisfies $\rho(y) \le \rho(x)$, which is strict if $x$ is active. If $y$ is already labeled prior to the operation, we have $\lv(y) > \lv(x)$, which implies that $y$ is an upper vertex by \cref{prop:level_values_bound,cor:level_at_least_k_1}. Furthermore, since $y$ is labeled prior to $x$, \cref{prop:level_values_bound}~(ii) ensures $\lv(\rho(x)) + \lv(x) \ge \lv(\rho(y)) + \lv(y)$, which yields $\lv(\rho(y)) < \lv(\rho(x))$ and thus $\rho(y) < \rho(x)$. If $y$ also becomes labeled during the blossom operation, the definition of the labeling order and the tie-breaking rule in \ref{item:detail_5} ensure $\rho(y) \le \rho(x)$, which are equal only when $x$ is retired. If $y$ remains unlabeled after the operation, then we have $x \in \st(\bar{y})$ by \cref{lem:subtree_valve}, which implies $\rho(y) < \rho(x)$ by \cref{lem:subtree_rho}.

  We then verify that $y \in P(v)$ holds. Since $\rho(y) \le \rho(x)$, the contrapositive of \cref{lem:subtree_rho} implies $y \notin \st(\bar{x})$. Since $P(u|x) \subseteq \st(\bar{x})$ by \cref{lem:subtree_valve}, $y \notin P(u|x)$ and thus $y \in P(v)$. Hence, the inductive assumption for $P(v)$ ensures that $y$ lies at an even position in $P(x)$, and $P(x)$ decomposes as $P(\rho(y)) y \bar{y} P(x|\bar{y})$.

  Lastly, assuming $x$ is active, we show that every upper vertex $w \in P(x|\bar{y})$ satisfies $\rho(y) < \rho(w)$. The case $w = x$ has already been verified. If $w \in P(v|\bar{y})$, the inequality $\rho(y) < \rho(w)$ follows from the inductive assumption for $P(v)$. When $w \in P(u|x)$, the inductive assumption for $P(u)$ yields $\rho(x) < \rho(w)$, which can be combined with $\rho(y) < \rho(x)$ to derive the desired inequality. This exhausts all cases.
\end{proof}

We now establish fundamental properties of $L(x)$ as corollaries. \cref{lem:ell_disjoint_path} is crucial for decomposing a search path.
\begin{corollary}
  \label{lem:lx_property}
  The cardinality $\abs{L(x)}$ is odd. Furthermore, $L(x) \subseteq K(x)$ holds.
\end{corollary}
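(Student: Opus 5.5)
We treat the two claims separately, and rely mainly on \cref{lem:modified_ell_prefix} together with the definition of $L(x)$. We first list the cases in which $L(x)$ is defined:
\begin{itemize}
  \item $L(s)=s$ for singletons, and $L(t)=t$ for new tips;
  \item in $\grow$, $L(\bar u)=\bar u$ or $L(\bar u)=\rho(u)u\bar u$ with $\rho(u)=t$;
  \item in \ref{item:detail_5} with $c=d=v$, $L(\bar u)=tu\bar u$;
  \item in the general case of \ref{item:detail_5}, $L(x)=P(x|y)$, where $y$ is the last vertex of $P(x)$ that is unlabeled or satisfies $y>x$;
  \item for a tie, $L(g)=P(g|b)$.
\end{itemize}
Since $L(x)$ is fixed once defined, it suffices to check both claims at the moment of definition. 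For oddness, the explicit cases have sizes $1$ or $3$. In the general case, \cref{lem:modified_ell_prefix} says $y$ sits at an even position of $P(x)$. By \cref{prop:path_property}, $P(x)$ has odd length $l$ and ends at $x$, so the suffix after position $j$ has $l-j$ vertices, which is odd. For $g$, the bud $b$ is unlabeled in $P(g)$, so by \cref{prop:path_property}~(4) it lies at an even position and the same count applies.

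For containment, we must show $L(x)\subseteq K(x)$ at the time $L(x)$ is defined. Since $K(x)$ only grows under subsequent merges, the inclusion then persists; dummy-line insertions do not alter $L(x)$ either. The explicit cases are immediate. $L(\bar u)=\bar u$ and $L(t)=t$ are trivial. For $\rho(u)u\bar u$ and $tu\bar u$, the tip $t$ and the line $\{u,\bar u\}$ lie in the newly formed blossom $H$, which becomes $K(\bar u)$. In the general case, $x$ is labeled by $\blossom$, so after \ref{item:detail_6} we have $K(x)=H$. The plan is to show that every vertex after $y$ in $P(x)$ lies in $H$. Write $P(x)=P(v)\overline{P(u|x)}x$ (the other case is symmetric). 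The reversed segment $\overline{P(u|x)}x$ is contained in $P(u|c)\subseteq H$. When $c=d$, $b$ and $t$ are also added to $H$.

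The main obstacle is the prefix part. We must show that $y$ lies in $P(v|d)$ (or equals $b$, or is $c$ in the case $c\ne d$), not earlier in $P(v)$ before the blossom $K(c)$. We would argue as in Case~(iii) of \cref{lem:modified_ell_prefix}: every vertex of $P(v|d)$ that was unlabeled becomes labeled and lies in $H$. So it suffices to show that the last vertex $y$ that is unlabeled or satisfies $y>x$ occurs at or after the entry into $H$. For this, we will exhibit such a vertex inside $H$ directly:
\begin{itemize}
  \item if $c=d$, the new bud $b$ remains unlabeled and lies on $P(x)$ right after $c$ by \ref{item:detail_4};
  \item if $c\neq d$, the unique unlabeled vertex of $K(c)$ (by (MB1), (MB2)) is where $P(x)$ enters $K(c)\subseteq H$.
\end{itemize}
Since $y$ is the last such vertex, it comes no earlier. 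Then $P(x|y)$ stays inside the union of $K(c)$ (or $\{b,t\}$) and the blossoms along $P(u|c)\cup P(v|d)$, which is exactly $H$. The tie case $L(g)=P(g|b)$ follows by the same reasoning with $b$ as the entry point.
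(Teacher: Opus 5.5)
Your proposal is correct and takes essentially the same approach as the paper. Oddness follows from \cref{lem:modified_ell_prefix}, which places $y$ at an even position of the odd-length path $P(x)$. Containment follows because an unlabeled entry vertex of the new maximal blossom (the new bud $b$ when $c=d$, or the unlabeled vertex of $K(c)$ when $c\neq d$, which is precisely the bud of $K(x)$ in the paper's one-line argument) forces $y$ to lie at or after it, giving $L(x)\subseteq P(x|b)\subseteq K(x)$.
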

\begin{proof}
  The statements are nontrivial only when $L(x)$ is defined in \ref{item:detail_5}. The former statement is a direct consequence of \cref{lem:modified_ell_prefix}.
  For the latter statement, since the bud $b$ of $K(x)$ is unlabeled, $y$ in \ref{item:detail_5} never precedes $b$ in $P(x)$, which yields $L(x) \subseteq P(x | b) \subseteq K(x)$.
\end{proof}

\begin{corollary}
  \label{lem:ell_disjoint_path}
  For any labeled vertex $x$ other than $s_0$, the search path $P(x)$ decomposes as $P(x) = P(z) y L(x)$, where $z = \rho(y) < x$ holds. Moreover, $z < \rho(w)$ for any upper vertex $w \in L(x)$.
\end{corollary}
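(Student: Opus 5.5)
We split according to how $L(x)$ is defined. If $L(x)$ was set to $x$ itself (the singleton case, a non-degenerate $\grow$, or a new tip $t$), we read off the decomposition directly. For $x=\bar u$ labeled by $\grow(v)$ we have $P(\bar u)=P(v)u\bar u$, so $y=u$, $z=\rho(u)=v<\bar u$. For a tip, $P(t)=P(c)bt$, so $y=b$, $z=\rho(b)=c<t$. In both cases $L(x)=\{x\}$ contains no upper vertex other than possibly $x$, and the required inequality $z<\rho(x)$ is vacuous or follows because $x$ is a lower vertex. In the degenerate $\grow$ case ($L(\bar u)=\rho(u)u\bar u$ with $\rho(u)=t$) and in \ref{item:detail_5} with $c=d=v$ ($L(\bar u)=tu\bar u$), the prefix is $P(c)b$ with $y=b$, $z=\rho(b)=c$, and $c<\bar u$. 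The only upper vertex in $L(x)$ is $u$, and $\rho(u)=t>c$ because $t$ is placed immediately after $c$ in the order.

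The main case is when $L(x)=P(x|y)$ is defined in \ref{item:detail_5}, where $y$ is the last vertex of $P(x)$ that is unlabeled or satisfies $y>x$ at that moment. We apply \cref{lem:modified_ell_prefix} to this $y$. It gives that $y$ is an upper vertex at an even position, and that $P(x)=P(\rho(y))\,y\,\bar y\,P(x|\bar y)=P(z)\,y\,L(x)$ with $z=\rho(y)$. To show $z<x$, we use $\rho(y)<y$ if $y$ is labeled, which holds by the definition of $\rho$; with the ordering convention, $y>x$ or the unlabeled case still gives $\rho(y)\le$ something labeled before $x$. More directly, we use \cref{lem:modified_ell_prefix}, which gives $\rho(y)<\rho(x')$ for upper $x'$ on the path, together with $\rho(x)<x$. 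For an active $x$, the last clause of \cref{lem:modified_ell_prefix} gives $z=\rho(y)<\rho(w)$ for every upper $w\in P(x|\bar y)$, and $L(x)=\bar y P(x|\bar y)$ with $\bar y$ a lower vertex. For a retired $x$ (the vertex $g$), we use the tie-breaking rule $L(g)=P(g|b)$, $y=b$, $z=c$. Every upper vertex in $L(g)\subseteq H$ lies in $\st(b)$, so \cref{lem:subtree_rho} gives $c<\rho(w)$, and $c<g$ by \cref{cor:level_at_least_k_1}.

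Finally, we check that the decomposition survives later modifications. $L(x)$ is fixed once defined, but $P(x)$ may be rewritten in \ref{item:detail_4} to $P(c')b't'P(x|c')$, and $\rho$ may be changed from $c'$ to $t'$. Since $L(x)\subseteq K(x)$ by \cref{lem:lx_property}, the insertion point $c'$ lies in the prefix $P(z)$ or equals $z$. If $c'=z$, then $\rho(y)$ becomes $t'$ and $P(x)=P(t')yL(x)$. Otherwise $P(z)$ is updated consistently. Strict inequalities are preserved because $t'$ is inserted right after $c'$. The main obstacle is this invariance argument, together with the retired/tie case: we must make sure that $y$ chosen at labeling time still satisfies the prefix identity after later dummy-line insertions. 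We would handle it exactly as in Case (ii) of \cref{lem:modified_ell_prefix}.
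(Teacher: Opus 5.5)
Your plan follows the paper's route: the active case goes through \cref{lem:modified_ell_prefix} applied to the $y$ that defines $L(x)$, and the retired case uses $y=b$, $z=c$ and \cref{lem:subtree_rho}. However, the retired case rests on a misreading of \ref{item:detail_5}. The assignment $L(g)\coloneqq P(g|b)$ is made \emph{only} when a tie occurs, i.e., when another vertex $h$ labeled in the same step has $\rho(h)=\rho(g)=t$. Otherwise $L(g)$ is given by the general rule: $L(g)=P(g|y)$, where $y$ is the last vertex of $P(g)$ that is unlabeled or satisfies $y>g$.

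For that no-tie case you have not identified $y$, so you have not proved $z<\rho(w)$. You also cannot get it from the last clause of \cref{lem:modified_ell_prefix}, because that clause requires $x$ to be active. Showing that the general rule selects $b$ anyway is exactly the nontrivial content of the paper's retired case: after \ref{item:detail_5}, every vertex of $P(g|b)$ is labeled and precedes $g$. The argument runs as follows.
\begin{itemize}
\item The tip $t$ sits immediately after $c$, and $c<g$.
\item Every upper vertex $w\in P(g|b)\subseteq\st(t)$ satisfies $\rho(w)>\rho(b)=c$ by \cref{lem:subtree_rho}, hence $\rho(w)\ge t=\rho(g)$.
\item Such $w$ labeled in this step are processed in descending order of $\rho$, so they precede $g$ unless $\rho(w)=t$, which is precisely the tie.
\item Vertices of $P(g|b)$ labeled earlier have level at most $2k+1-\lv(t)=\lv(g)$ by \cref{prop:level_values_bound}, and they were labeled before $g$.
\end{itemize}
Once you have $L(g)=P(g|b)$ regardless of ties, your finish goes through: $P(g)=P(c)\,b\,L(g)$, $c<\rho(w)$, and $c<g$.

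A smaller point concerns $z<x$ in the main case. Your first attempt, combining $\rho(y)<y$ with $y>x$, proves nothing and should be dropped. The second one, $\rho(y)<\rho(x)<x$, is the right argument. Note that it relies on the fact that every vertex labeled in \ref{item:detail_5}, outside the special case $c=d=v$, is an upper vertex lying in $P(x|\bar y)$.
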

\begin{proof}
    The case where $x$ is active can be verified by applying \cref{lem:modified_ell_prefix} to the vertex $y$ in the definition of $L(x)$. If $x$ is retired, then \cref{lem:subtree_rho,prop:level_values_bound} ensure that no vertex $y \in P(x|b)$ satisfies $y > x$ unless a tie occurs in the labeling order. Consequently, we obtain $L(x) = P(x|b)$ regardless of the existence of a tie, and the statements follow from \cref{lem:search_path_prefix,lem:subtree_rho}.
\end{proof}

Every vertex $u \in L(x)$ satisfies $u \le x$, with the possible exception of a single vertex $y$; in this case, either $x$ or $y$ is retired. The following lemma extends \cref{lem:search_order_grow} to $L(x)$.
\begin{lemma}
  \label{lem:grow_ell_disjoint}
  Let $x$ be an upper vertex and $y$ be an active labeled vertex with $y \le \rho(x)$. No vertex in $L(y)$ is adjacent to $x$ or $\bar{x}$, with the following exceptions: (i) $y$ is adjacent to $x$ when $y = \rho(x)$, and (ii) $y$ is adjacent to $\bar{x}$ when $y = \rho(x)$ and $x$ is marked as retired during a $\grow$ call.
\end{lemma}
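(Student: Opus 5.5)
The natural approach is to reduce the statement to (SO1) and (SO2) of \cref{lem:search_order_grow}, using the order structure of $L(y)$ from \cref{lem:ell_disjoint_path}. Let $w \in L(y)$ be adjacent to $x$ or $\bar{x}$. By (SO1), any vertex adjacent to $x$ or $\bar{x}$ satisfies $w \ge \rho(x)$. So it suffices to show that every vertex of $L(y)$ other than $y$ itself is strictly below $\rho(x)$, or is otherwise excluded; the case $w = y$ is then handled separately.

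\textbf{Step 1: the vertices of $L(y)$ other than $y$.} By the definition of $L(y)$ in \ref{item:detail_5} and the remark following \cref{lem:ell_disjoint_path}, every labeled $u \in L(y)$ satisfies $u \le y$, with at most one exception, which occurs only when $y$ or that vertex is retired. Since $y$ is active, the exceptional vertex would have to be retired. I expect to rule it out by the choice of the cut point $y'$ defining $L(y) = P(y|y')$: the last vertex that is unlabeled or greater than $y$ is excluded from $L(y)$.

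It remains to treat unlabeled vertices of $L(y)$, and labeled $u<y$ adjacent to $x$. For a labeled $u \in L(y)\setminus\{y\}$ we have $u < y \le \rho(x)$, so (SO1) forbids adjacency. Unlabeled vertices of $L(y)$ are upper vertices in $K(y)$ by \cref{lem:lx_property}. Here I would invoke \cref{lem:ell_disjoint_path}: writing $P(y)=P(z)y'L(y)$, every upper vertex $w\in L(y)$ satisfies $z<\rho(w)$. I would then try to show that an unlabeled $w$ adjacent to $x$ or $\bar x$ yields a contradiction. One route is (SO1) applied with the roles swapped (taking the upper vertex of $\{x,\bar x\}$ versus $w$), combined with $\rho(w)\le y\le\rho(x)$ to compare. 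The other is \cref{prop:incoming_edge_label}: an edge from the subtree $\st(w)$ to outside would be an incoming edge, and $w$ unlabeled forces $K$-values to differ, contradicting that the edge was found.

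\textbf{Step 2: the vertex $w=y$.} If $y<\rho(x)$, (SO1) finishes the argument. If $y=\rho(x)$, adjacency to $x$ is exception (i). Adjacency of $\rho(x)$ to $\bar x$ is forbidden by (SO2) when $x$ is active. So suppose $x$ is retired and $\rho(x)=y$ is adjacent to $\bar x$. If $x$ was retired in \ref{item:detail_3} of an ordinary $\blossom$, then $x=g$ and $\rho(g)$ was reset, in \ref{item:detail_4}, to the tip $t$ or to $c$. Since $y$ is active and not a tip, we get $y=c$, and (BT2) together with the pivoting formula shows that $\bar g$ (which is in $H$) is not adjacent to $c\notin H$ after pivoting. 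Thus only the $\grow$-internal case remains, which is exception (ii).

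\textbf{Main obstacle.} I expect the hardest part to be Step 1 for unlabeled (upper) vertices inside $L(y)$ and the exceptional vertex exceeding $y$. Ordering alone does not exclude these, so the argument must combine the bound $z<\rho(w)$ of \cref{lem:ell_disjoint_path} with (SO1) applied to $w$. I would also need to check that edges added by later pivots do not break the claim; for this I would use \cref{cl:new_edge_rho_c}, which says such new edges have an endpoint $w$ with $\rho(w)=c$ that is labeled in that very operation, and then handle that endpoint inductively over operations.
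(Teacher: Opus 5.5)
Your skeleton matches the paper's proof. Vertices of $L(y)$ below $y$ are excluded by (SO1), and the vertex $y$ itself is handled by (SO1), (SO2) and an analysis of how $x$ was retired. What remains is a retired vertex of $L(y)$ lying above $y$. Two of your steps fail as written, though.

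\textbf{Step 1: the retired vertex above $y$.} It cannot be ruled out by the cut-point rule, because not every $L(y)$ is defined by that rule. When $\grow(v)$ finds $\{u,\bar u\}$ and triggers the degenerate blossom in Line~\ref{line:degblm}, it sets $L(\bar u)\gets\rho(u)u\bar u=tu\bar u$ directly. Moreover, \ref{item:detail_5} places the retired $u$ immediately after $\bar u$. So $y=\bar u$ is active, $u\in L(y)$ and $u>y$, and (SO1) gives nothing here. The missing argument, which is the paper's second case, is that such a $u$ is adjacent only to the tip $t$. Indeed, (BT2) rules out neighbors outside $\{u,\bar u,b,t\}$, and $\bar u$ and $b$ lie on the same side of the bipartition as $u$. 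Since $t$ is a lower vertex, adjacency to $x$ or $\bar x$ forces $x=b$. But $\lv(\rho(b))=\lv(v)<\lv(\bar u)$, so $\rho(b)<y$, contradicting $y\le\rho(x)$. For $L(y)$ defined by the cut point in \ref{item:detail_5}, your argument is fine.

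Conversely, the ``main obstacle'' you single out is vacuous. $L(y)$ never contains unlabeled vertices: the cut point lies at or after the last unlabeled vertex, and the $\grow$-defined segments consist of labeled vertices. Later pivots also need no separate treatment, since (SO1), (SO2) and (BT2) are already invariants.

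\textbf{Step 2: the case $y=\rho(x)$ with $x$ retired by an ordinary $\blossom$.} Here you analyze the wrong vertex. Such an $x$ is $g\in H$ with $\rho(g)=c$, so \ref{item:detail_4} always resets $\rho(g)$ to $t$. Also, ``active'' only means ``not retired'', so tips are active; in exception~(ii) itself, $y=\rho(x)$ is a tip. Hence the case to exclude is $y=t$, not $y=c$, and what you need is $(t,\bar g)\notin E^*$. This follows from \cref{prop:neighbor_of_bt}: every neighbor of $t$ other than $b$ was a neighbor of $c$ before the pivot. Combine this with (SO2) applied to $g$ while it was still active, which says $\bar g$ was not adjacent to $\rho(g)=c$. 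Your appeal to (BT2) in this step does not apply anyway, since $\bar g$ is neither a tip nor retired.
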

\begin{proof}
  First, consider $z \in L(y)$ with $z \le y$. Given $z \le y \le \rho(x)$, if at least one of these inequalities is strict, \cref{lem:search_order_grow} (SO1) ensures that $z$ is not adjacent to $x$ or $\bar{x}$. If $z = y = \rho(x)$, since the edge $(y, x)$ exists by the definition of $\rho$, we only need to examine the existence of the edge $(y, \bar{x})$. If $x$ is active, this edge does not exist by \cref{lem:search_order_grow} (SO2). The remaining cases are straightforward; the edge $(y, \bar{x})$ exists (resp.~does not exist) if $x$ is marked retired by a $\grow$ (resp.~$\blossom$) call.
  
  Next, consider $z \in L(y)$ with $z > y$. In this case, $z$ must be retired, as $y$ is active. Consider the $\blossom$ operation that marks $z$ as retired. If this operation is called within a $\grow$ operation, then $z$ is adjacent only to the tip, from which the statement follows. Otherwise, this case never occurs, because such a situation would correspond to a tie mentioned in \ref{item:detail_5} with $g = z$ and $h = y$, contradicting the algorithm specification $g < h$ and the assumption $z > y$.
\end{proof}

\begin{lemma}
  \label{lem:ell_disjoint_decomp}
  Every search path $P(x)$ can be decomposed as follows:
  \begin{equation}
    P(x) = L(x_1) y_1 \dots L(x_k) y_k L(x_{k+1}),
  \end{equation}
  where $x_{k+1} = x$ and $x_1 = \rho(y_1) < \dots < x_k = \rho(y_k) < x_{k+1}$.
  Also, the vertex $x_i$ is active for each $i = 1, \dots, k$.
  Moreover, for any $i$ and $j$ satisfying $1 \le i < j \le k + 1$, there are no edges between $L(x_i)$ and $L(x_j)$.
\end{lemma}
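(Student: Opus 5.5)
We construct the decomposition by repeatedly peeling off the suffix $L(\cdot)$ from the end of the path, using \cref{lem:ell_disjoint_path}. Apply it to $x = x_{k+1}$ to write $P(x) = P(z) y L(x)$ with $z = \rho(y) < x$, and set $y_k \coloneqq y$, $x_k \coloneqq z$. Then apply the same corollary to $P(x_k)$, and continue until the remaining prefix is $P(s_0) = s_0 = L(s_0)$. Since the $x_i$ strictly decrease in the total order, the process terminates. This yields $P(x) = L(x_1) y_1 \cdots L(x_k) y_k L(x_{k+1})$ with $x_i = \rho(y_i)$ and $x_1 < \cdots < x_{k+1}$.

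For activeness, each $x_i$ with $i \le k$ equals $\rho(y_i)$. The predecessor $\rho$ is always assigned to a vertex on which $\grow$ was run (hence lying in $\vl$), or to a tip via \ref{item:detail_4}. I expect to argue that retired vertices never serve as $\rho$: a retired $g$ has all its edges to outside $H$ removed by (BT2), and inside its blossom every vertex is already labeled. Tips are active by definition. So each $x_i$ is active.

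The main work is the absence of edges between $L(x_i)$ and $L(x_j)$ for $i<j$. Take $w \in L(x_j)$ and $u \in L(x_i)$ with $(u,w)\in E^*$. Let $w^+$ be the upper vertex of the line of $w$; this line lies in $L(x_j)$ or, when $w$ is the first vertex of $L(x_j)$, is the line $\{y_{j-1},w\}$. I would first show $\rho(w^+) \ge x_{j-1} \ge x_i$. If $w^+ = y_{j-1}$, this holds by definition. Otherwise $w^+ \in L(x_j)$, and the ``moreover'' part of \cref{lem:ell_disjoint_path} gives $x_{j-1} < \rho(w^+)$ directly. Since $x_i$ is active and $x_i \le \rho(w^+)$, \cref{lem:grow_ell_disjoint} says no vertex of $L(x_i)$ is adjacent to $w^+$ or $\overline{w^+}$, a contradiction. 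The exceptions in that lemma occur only when $x_i = \rho(w^+)$. That forces $j = i+1$ and $w^+ = y_i$, and the exceptional adjacency is then the path edge $(x_i, y_i)$ itself.

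The exceptional case needs more care, and I expect it to be the main obstacle. Here $y_i$ is an upper vertex with $\rho(y_i) = x_i$, and exception (ii) allows $x_i$ to be adjacent to $\overline{y_i}$ when $y_i$ was retired inside a $\grow$ call. In that case the degenerate blossom inserts a tip and resets $\rho(y_i)$ to that tip. So if $\rho(y_i) = x_i$ is still an original vertex, this exception cannot arise, or else $x_i$ is the tip, which by (BT2) is adjacent only within its blossom. Checking the edge between $y_i$ and $x_i$ itself is not required, since $y_i \notin L(x_{i+1})$. I would verify these two subcases explicitly using \cref{prop:neighbor_of_bt} and the definition of $L$ in \cref{alg:grow}. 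I would also confirm that the line containing $w$ (upper vertex $w^+$) is covered in both positions of $w$.
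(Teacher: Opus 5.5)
Your overall plan is the paper's: peel off segments with \cref{lem:ell_disjoint_path}, then apply \cref{lem:grow_ell_disjoint}. For lines inside $L(x_j)$ you use the strict bound $x_i \le x_{j-1} < \rho(w^+)$, and for the first vertex $\bar y_{j-1}$ the bound $x_i \le x_{j-1} = \rho(y_{j-1})$. Your reduction of the exceptional situation to $j=i+1$, $w^+=y_i$ is also correct. The gap is in how you dispose of exception (ii). Suppose $y_i$ were a vertex $u$ retired inside a $\grow$ call. Then $x_i=\rho(u)$ is the tip $t$ of the degenerate blossom $\{b,t,u,\bar u\}$, and $t$ really is adjacent to $\bar u=\bar y_i$. (BT2) only confines the neighbours of $t$ to that blossom, which contains $\bar u$, and the edge $(t,\bar u)$ is exactly the adjacency that exception (ii) records. \cref{prop:neighbor_of_bt} does not help either, since the old predecessor $v=c$ was adjacent to both $u$ and $\bar u$. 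So no adjacency argument can close this subcase. If the configuration occurred, $L(x_i)=L(t)=t$ and $L(x_{i+1})\ni\bar y_i$ would be joined by an edge, and the lemma would be false.

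What is needed is the structural fact the paper invokes: a vertex retired during $\grow$ never occurs as a separator $y_i$. Walk through how separators arise.
\begin{enumerate}
  \item For its mate, \cref{alg:grow} sets $L(\bar u)=t\,u\,\bar u$, so $u$ is absorbed into the segment and the separator of $P(\bar u)$ is the bud $b$. Tips and the degenerate/tie cases likewise use buds as separators.
  \item A $\grow$-defined segment $L(\bar u')=\bar u'$ has the then-unlabeled $u'$ as separator. A $\grow$-retired vertex is labeled from the moment its line is found, so it never qualifies.
  \item In \ref{item:detail_5} the separator is the last vertex that is unlabeled or exceeds $x$. Any later-labeled non-tip $x$ whose path runs through $t\,u\,\bar u$ has $\lv(x)\ge\lv(u)$ and comes after $u$ in the order, so $u<x$ and $u$ is never chosen.
\end{enumerate}
Hence only exception (i) can occur, and it concerns $y_i\notin L(x_{i+1})$.

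A minor point on activeness: your list of assignments to $\rho$ omits $\rho(b)\coloneqq c$ in \ref{item:detail_4}. Since $c$ is adjacent to $g\notin K(c)$, (BT2) shows $c$ is neither a tip nor retired.
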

\begin{proof}
  The decomposition can be obtained by recursively applying \cref{lem:ell_disjoint_path}.
  The fact that $x_i$ is active follows since search paths to retired vertices will not compose a prefix of any other search paths.
  We now prove that $L(x_i)$ is not adjacent to $L(x_j)$ for $j > i$. Since any upper vertex $v \in L(x_j)$ satisfies $\rho(y_{j - 1}) < \rho(v)$ by \cref{lem:ell_disjoint_path}, no vertex of $L(x_i)$ is adjacent to $\qty{v, \bar{v}}$ by \cref{lem:grow_ell_disjoint}. Also, since $x_i \le x_{j - 1} = \rho(y_{j - 1})$ and $y_{j - 1}$ cannot be marked as retired during a $\grow$ operation, no vertex of $L(x_i)$ is adjacent to $\bar{y}_{j - 1}$, either. This exhausts all vertices of $L(x_j)$.
\end{proof}

\begin{lemma}
  \label{lem:ell_disjoint_blossom}
  Suppose $\blossom(u, v)$ labels an existing vertex $x$. Define $L(x_1), \dots, L(x_k) \subseteq P(u)$ and $L(x_{k+1}), \dots, L(x_{l}) \subseteq P(v)$ as follows: if $x \in P(u)$ (resp.~$x \in P(v)$), then decompose $P(u|x)$ (resp.~$P(u)$) as $L(x_1) y_1 \dots L(x_k) y_k L(u)$, and $P(v)$ (resp.~$P(v|x)$) as $L(x_{k + 1}) y_{k + 1} \dots L(x_l) y_l L(v)$. Then, $u' \in L(x_1) \cup \dots \cup L(x_k) \cup L(u)$ and $v' \in L(x_{k+1}) \cup \dots \cup L(x_l) \cup L(v)$ are adjacent if and only if $(u', v') = (u, v)$.
\end{lemma}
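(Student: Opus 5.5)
The plan is to show that the only cross edge between the two families of segments is $(u,v)$, using the ordering inequalities among the $x_i$ and the non-adjacency results \cref{lem:search_order_grow} and \cref{lem:grow_ell_disjoint}. The edge $(u,v)$ itself exists because $\blossom(u,v)$ is called only when $(u,v)\in E^*$. So only the ``only if'' direction needs work. I treat the case $x \in P(u)$; the case $x\in P(v)$ is symmetric, except that the precondition $u>v$ is used differently.

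Suppose $u' \in L(x_i)$ with $i \le k$, or $u'\in L(u)$, and $v' \in L(x_j)$ with $j > k$, or $v'\in L(v)$, are adjacent with $(u',v')\neq(u,v)$. As in the proof of \cref{lem:ell_disjoint_decomp}, every vertex of a segment $L(x_j)$ (for $x_j \ne s_0$) is either an upper vertex $w$ with $\rho(y_{j-1})<\rho(w)$, or the mate $\bar y_{j-1}$ of the preceding connector. Here I use \cref{lem:ell_disjoint_path}, which gives $\rho(y_{j-1}) < \rho(w)$ for upper $w\in L(x_j)$, and the fact that every vertex lies on the line of some upper vertex.

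Each of $u'$ and $v'$ therefore lies on a line $\{w,\bar w\}$ whose $\rho(w)$ is bounded below by the labeled vertex $x_{i}$ or $x_j$ heading its segment. I then compare the heads. If the head $x_j$ on the $v$-side satisfies $x_j \le \rho(w)$ for the line containing $u'$, then \cref{lem:grow_ell_disjoint}, applied with $y = x_j$ (active by \cref{lem:ell_disjoint_decomp}), forbids the edge. Symmetrically, the same holds with the roles of the two sides swapped.

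So the core claim is the following: for any segment head $h$ on one side and any upper vertex $w$ in a segment on the other side, with $w$ unlabeled before the operation, we have $h\le \rho(w)$, or else the reverse inequality holds with the roles exchanged.

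To obtain this, I will use the level bookkeeping. By \cref{prop:level_values_bound}~(ii) and \cref{cor:blm_monotone}, the edge $(u,v)$ is the first edge in the $\setupper$ order whose endpoints lie in different maximal blossoms. By \cref{lem:effectively_all_executed}, every earlier edge $(w,z)$ has $K(w)=K(z)$. Since both sides lie in distinct branches of the contracted tree below $K(c)$ (\cref{rem:kc_is_lca}), a cross edge $(u',v')$ other than $(u,v)$ would join different maximal blossoms. By \cref{lem:effectively_all_executed} it therefore cannot precede $(u,v)$. Hence $\lv(u')+\lv(v') \ge \lv(u)+\lv(v)$, with ties broken by the order.

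I then combine this with $u' \le x_i \le u$ and $v'\le x_j\le v$. The segment-ordering remark after \cref{lem:ell_disjoint_path} gives $u'\le u$ and $v'\le v$, up to the single retired exception. Together these force $\lv(u')=\lv(u)$ and $\lv(v')=\lv(v)$, and the in-list order then forces $u'=u$ and $v'=v$.

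The main obstacle is the vertices of a segment that are not yet labeled, or that are retired or tips. For these the order comparison $u'\le u$ is unavailable. For unlabeled $u'$, I will pass to its upper vertex and $\rho$, and use \cref{lem:search_order_grow}~(SO1) together with \cref{lem:subtree_rho} to reduce to the labeled head of the segment. Tips are disposed of by (BT2): they are adjacent only inside their blossom, which lies on one side. For the single retired exception in $L(x)$, I will use the tie-breaking rule $g<h$ of \ref{item:detail_5}, as in \cref{lem:grow_ell_disjoint}.
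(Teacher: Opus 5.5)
Your decisive step is exactly the paper's proof: every cross pair lies in two different maximal blossoms, so by \cref{lem:effectively_all_executed} no cross edge can precede $(u,v)$ in the order of $\setupper$, whereas $u'\le u$ and $v'\le v$ would make any other cross edge precede it (the paper gets $K(u')\neq K(v')$ more cleanly, by noting that each cross edge is an incoming edge of $\st(x)$ with $x$ still unlabeled and applying the contrapositive of \cref{prop:incoming_edge_label}, whereas your ``distinct branches below $K(c)$'' is inaccurate because one side is an entire search path running through $K(c)$, though the conclusion survives). Drop the opening route via \cref{lem:grow_ell_disjoint} and the ``core claim'', which the level argument does not establish and the proof does not need; moreover, no segment $L(x_i)$ contains unlabeled vertices, and tips and \emph{all} retired vertices (not only the exceptional one, so no tie-breaking rule is needed) are excluded at once by (BT2), since each lies in a blossom inside a maximal blossom on its own side.
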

\begin{proof}
  By symmetry, it suffices to consider the case $x \in P(u)$. Since $P(u|x) \subseteq \st(x)$ and $P(v) \cap \st(x) = \emptyset$ hold, \cref{prop:incoming_edge_label,lem:effectively_all_executed} verify the case where $u' \le u$ and $v' \le v$. If $u'$ or $v'$ fails to satisfy its corresponding inequalities, such a vertex must have been retired before the current $\blossom$ operation. However, since $x$ is unlabeled before the operation, this retired vertex cannot form an endpoint of any incoming edge of $\st(x)$.
\end{proof}

\begin{lemma}
  \label{prop:ell_unique_matching}
  For every labeled vertex $x$, the induced subgraph $G^*[L(x) \setminus \qty{x}]$ has a unique perfect matching.
\end{lemma}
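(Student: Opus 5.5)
We prove the statement by induction on the order in which vertices are labeled, simultaneously over all labeled vertices. The base cases are immediate. For a singleton $s$ (or the super singleton $s_0$), $L(s) = s$, so $L(s)\setminus\{s\}$ is empty. For a tip $t$ we have $L(t)=t$. For a lower vertex $\bar{u}$ labeled by $\grow$ without a degenerate blossom, $L(\bar u)=\bar u$. When $\grow$ triggers the degenerate blossom, $L(\bar u)=\rho(u)u\bar u$ with $\rho(u)=t$, so $L(\bar u)\setminus\{\bar u\}=\{t,u\}$. Here the edge $(t,u)$ exists by construction, so this is trivially the unique perfect matching. The same holds for the retired $\bar u$ labeled in \ref{item:detail_5} with $c=d=v$, where $L(\bar u)=tu\bar u$.

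The substantive case is $L(x)=P(x|y)$ for a vertex $x$ labeled by $\blossom(u,v)$ in \ref{item:detail_5}. Assume $x\in P(u|c)$; the other case is symmetric. Then $P(x)=P(v)\overline{P(u|x)}x$, and $L(x)$ is a suffix of it. By \cref{lem:modified_ell_prefix}, this suffix starts right after an upper vertex $y$, at an odd position of $P(x)$. So $L(x)\setminus\{x\}$ is a concatenation of a suffix of $P(v)$ and the reversed path $\overline{P(u|x)}$. Decompose $P(v)$ and $P(u|x)$ via \cref{lem:ell_disjoint_decomp} into segments $L(x_i)y_i$, exactly as in \cref{lem:ell_disjoint_blossom}. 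The portion of $P(v)$ inside $L(x)$ is then a union of whole segments $L(x_j)y_j\cdots L(v)$, possibly after truncation at $y$. This relies on the choice of $y$ as the last vertex that is unlabeled or greater than $x$, together with \cref{lem:ell_disjoint_path}. Similarly, $P(u|x)$ reads $\bar{x}\,\cdots L(u)$, whose reversal ends with $\bar x$ followed by $x$.

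Next we exhibit a perfect matching $M$ and prove it is unique. Take $M$ to consist of the edge $(u,v)$, the perfect matchings of $L(x_i)\setminus\{x_i\}$ (which exist and are unique by induction, since each $x_i$ was labeled earlier), and the edges $(x_i,y_i)=(\rho(y_i),y_i)$ joining consecutive segments. On the reversed side $P(u|x)$ the pairing shifts: vertex $y_i$ is matched to the following tree edge, and $x_i$ is the exposed endpoint of its segment. It is simplest to view both sides as alternating paths in which the matching on $P(\cdot)\setminus\{\cdot\}$ alternates with lines, and then to glue them at $(u,v)$.

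For uniqueness, the key facts are the adjacency restrictions. By \cref{lem:ell_disjoint_decomp}, no edges run between distinct segments $L(x_i)$ and $L(x_j)$ on the same side, other than those accounted for through $y$-vertices. By \cref{lem:ell_disjoint_blossom}, the only edge between the $u$-side segments and the $v$-side segments is $(u,v)$. \cref{lem:grow_ell_disjoint} controls adjacency of $y_i,\bar y_i$ to earlier segments. We then argue by peeling from the first segment. Consider the first segment $L(x_1)$ within the $v$-part. Its vertices other than $x_1$ can only be matched internally, or to $y_1$ via $x_1$. A parity and counting argument, using that $|L(x_1)|$ is odd (\cref{lem:lx_property}), forces $x_1$ to be matched to $y_1$ and $L(x_1)\setminus\{x_1\}$ to be matched internally, hence uniquely by induction. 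Iterating along the $v$-side and then along the reversed $u$-side forces every edge, with $(u,v)$ forced as the sole crossing edge.

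The main obstacle is the boundary bookkeeping. Truncating $L(x)$ at $y$ can cut into a segment. On the reversed $u$-side, each $y_i$ is adjacent to $\rho(y_i)=x_i$ but may also be adjacent to vertices in later segments. Retired vertices, including the tie case $g<h$ and $x$ retired, break the clean inequalities $u'\le u$. I would first handle the case where all involved vertices are active, using (SO1)/(SO2) to show that $y_i$ and $\bar y_i$ see no earlier segment. Then I would treat retired vertices separately using (BT2), since they are adjacent only within their blossom or to the tip.
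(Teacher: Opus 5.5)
Your ingredients are the paper's: odd segments $L(x_i)$ joined by separators $y_i$, uniqueness on each segment by induction, and forcing $(x_i,y_i)$ by parity. The gap is in the peeling order and in the claim that $(u,v)$ is the only edge between the two sides.

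\cref{lem:ell_disjoint_blossom} excludes edges between \emph{segments} of the two sides only. It says nothing about a separator on one side and a segment on the other, and such edges can occur. A concrete configuration:
\begin{itemize}
\item Let $P(u|x)=x_1\,y_1\,u$ and $P(v|w)=x_2\,y_2\,v$, with $x_i=\rho(y_i)$ and $x_2<x_1<v$.
\item Suppose additionally that $v$ is adjacent to $y_1$. This is consistent with the algorithm:
\begin{itemize}
\item (SO1) forbids only neighbours of $y_1$ preceding $x_1$.
\item $\grow(v)$ ignores $y_1$ because $\grow(x_1)$ already labeled $\bar y_1=u$.
\item The edge is never examined by $\setupper$ while $y_1$ is unlabeled.
\item After the blossom it lies inside $H$, so pivoting leaves it untouched.
\end{itemize}
\item Then $G^*[L(x)\setminus\{x\}]$ has edges $x_2y_2$, $vu$, $x_1y_1$ and $vy_1$.
\end{itemize}
Your route peels $\{x_2,y_2\}$ and then reaches $L(v)=\{v\}$. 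That odd set has two exits, $u$ and $y_1$, so nothing is forced there.

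The same obstruction hits the first $v$-side segment $L(x_{k+1})$ whenever some $u$-side separator has $\rho(y_j)<x_{k+1}$. It also blocks starting the reversed $u$-side at $L(u)$: that would need $L(u)$ to avoid the earlier separators of $P(u)$, and no lemma provides this.

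The missing idea is to interleave the two sides according to the total order. Always remove, together with $y_i$, the segment $L(x_i)$ whose $x_i$ is smallest over both sides. Equivalently, as the paper does, start from $X=L(u)\cup L(v)$ and add the pieces $L(x_i)y_i$ in decreasing order of $x_i$. In $X$, $(u,v)$ is forced as the unique edge between two odd sets by \cref{lem:ell_disjoint_blossom}. At each step, every separator $y_j$ still present has $\rho(y_j)=x_j>x_i$, so the segment $L(x_i)$ has no stray exits:
\begin{itemize}
\item \cref{lem:grow_ell_disjoint} forbids edges from $L(x_i)$ to every such $y_j$.
\item Other segments are excluded by \cref{lem:ell_disjoint_decomp} on the same side and by \cref{lem:ell_disjoint_blossom} on the opposite side.
\end{itemize}
Hence $(x_i,y_i)$ is the unique exit of the odd set $L(x_i)$. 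Edges like $(v,y_1)$, running from a later segment to an earlier separator, never matter, because only the exits of the segment being peeled are examined. In the example this peels $\{x_2,y_2\}$, then $\{x_1,y_1\}$, and only then $\{u,v\}$.

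With this order you also do not need a separate active/retired case analysis. Retired vertices are already handled inside \cref{lem:grow_ell_disjoint,lem:ell_disjoint_blossom}, and the $x_i$ are active by \cref{lem:ell_disjoint_decomp}.
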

\begin{proof}
  Since the statement can be easily verified when $\grow$ defines $L(x)$, we focus on the case where $\blossom$ defines $L(x)$.
  Assume $x \in P(u)$ by symmetry, and suppose that $L(x) \setminus \qty{x}$ can be represented as $P(v|w) \overline{P(u|x)}$. 

  We show that $G^*[L(x) \setminus \qty{x}]$ has a unique perfect matching by the following induction. Since $v < u < x$ holds, and $w$ is unlabeled or satisfies $w > x$, we can decompose $P(u|x)$ and $P(v|w)$ as $L(x_1) y_1 \dots L(x_k) y_k L(u)$ and $L(x_{k + 1}) y_{k + 1} \dots L(x_l) y_l L(v)$, respectively. The base case is $X = L(u) \cup L(v)$, and each inductive step adds $L(x_i) y_i$ to $X$, where $x_i$ is chosen in decreasing order of the total order $\le$.

  We begin with the base case. Combining the unique perfect matchings of $G^*[L(u) \setminus \qty{u}]$ and $G^*[L(v) \setminus \qty{v}]$ with the edge $(u, v)$ yields a perfect matching of $G^*[L(u) \cup L(v)]$. We then prove its uniqueness. Any perfect matching on $G^*[L(u) \cup L(v)]$ should contain $(u, v)$, since \cref{lem:ell_disjoint_blossom} ensures that it is the only edge bridging these odd-sized components $L(u)$ and $L(v)$. Therefore, the remaining edges constitute the unique perfect matchings of $G^*[L(u) \setminus \qty{u}]$ and $G^*[L(v) \setminus \qty{v}]$, which finishes the proof for the base case.

  Next, we focus on the inductive step. Combining the unique perfect matchings of $G^*[L(x_i) \setminus \qty{x_i}]$ and $G^*[X]$ with the edge $(x_i, y_i)$ yields a perfect matching on $G^*[X']$ for $X' \coloneqq L(x_i) \cup \qty{y_i} \cup X$. We then prove its uniqueness. We first show that $L(x_i)$ is not adjacent to $X$. For vertices in $X$ that belong to the decomposed segments, it follows from Lemmas~\ref{lem:ell_disjoint_decomp} and \ref{lem:ell_disjoint_blossom}. For each separating vertex $y_j \in X$, the order of choosing $x_i$ implies $x_i < x_j = \rho(y_j)$, which ensures that $L(x_i)$ is not adjacent to $y_j$ by \cref{lem:grow_ell_disjoint}. Therefore, the edge $(x_i, y_i)$ is the only edge bridging $L(x_i)$ and $X' \setminus L(x_i)$ by \cref{lem:grow_ell_disjoint}, which must be contained by any perfect matching on $X'$ because $\abs{L(x_i)}$ is odd. As the remaining matching edges split into the unique perfect matchings on $G^*[L(x_i) \setminus \qty{x_i}]$ and $G^*[X]$, the uniqueness follows.
\end{proof}

A similar inductive argument on the path decomposition yields the following corollary.
\begin{corollary}
  \label{cor:search_path}
  For every search path $P(x)$, the induced subgraph $G^*[P(x) \setminus \qty{x}]$ has a unique perfect matching.
\end{corollary}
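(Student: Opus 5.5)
We argue via the decomposition of \cref{lem:ell_disjoint_decomp}, in the same way as the inductive step in the proof of \cref{prop:ell_unique_matching}. Write
\[
P(x) = L(x_1) y_1 \dots L(x_k) y_k L(x_{k+1}),
\]
with $x_{k+1} = x$ and $x_1 = \rho(y_1) < \dots < x_k = \rho(y_k) < x$. Each $x_i$ with $i \le k$ is active. Removing $x$ leaves the segments $L(x_i)$ for $i \le k$, the separators $y_i$, and $L(x) \setminus \qty{x}$. By \cref{prop:ell_unique_matching}, each $G^*[L(x_i) \setminus \qty{x_i}]$ has a unique perfect matching. Together with the edges $(x_i, y_i)$, where $(\rho(y_i), y_i) \in E^*$ by definition of $\rho$, this yields a perfect matching of $G^*[P(x) \setminus \qty{x}]$.

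For uniqueness we peel segments from the back. Let $X_{j} \coloneqq L(x_1) y_1 \dots L(x_j) y_j$, and prove by induction on $j$ that $G^*[X_j]$ has a unique perfect matching. At the same time we handle the last segment $L(x) \setminus \qty{x}$: by \cref{lem:ell_disjoint_decomp} it has no edges to any $L(x_i)$ with $i \le k$. We must also show it has no edges to the separators $y_i$, and that $y_i$ has no edges to the later segments $L(x_j)$ with $j > i$.

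For the separators, apply \cref{lem:grow_ell_disjoint} with upper vertex $y_i$ and active labeled vertex $x_j$. When $j < i$, we have $x_j < x_i = \rho(y_i)$, so no vertex of $L(x_j)$ is adjacent to $y_i$ or $\bar y_i$. The only exception is $j = i$: then $x_i$ is adjacent to $y_i$, and possibly to $\bar y_i$ if $y_i$ was retired in a $\grow$ call. The latter cannot happen, since the $y_i$ are unlabeled or later than $x_{i+1}$, and \cref{lem:ell_disjoint_decomp} already notes this.

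Edges from $y_i$ to segments $L(x_j)$ with $j > i$ are harmless. Order the argument from the last segment backwards: $L(x)\setminus\qty{x}$ has even size. It is separated from everything earlier except possibly the $y_i$. So consider the last odd block $L(x_k)$ together with $y_k$. Alternatively, and more cleanly, proceed from the front. $L(x_1)$ has odd size, and its only edge leaving it into the rest of $P(x) \setminus \qty{x}$ is $(x_1, y_1)$. It has no edges to the later $L(x_j)$ by \cref{lem:ell_disjoint_decomp}, and none to the $y_j$ with $j > 1$ by \cref{lem:grow_ell_disjoint}, since $x_1 < x_j = \rho(y_j)$. Hence every perfect matching uses $(x_1, y_1)$ and restricts to the unique perfect matching on $L(x_1) \setminus \qty{x_1}$. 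Deleting $L(x_1) \cup \qty{y_1}$ leaves $P(x) \setminus \qty{x}$ with one fewer segment, which is the same situation, so induction on $k$ finishes. The base case $k = 0$ is \cref{prop:ell_unique_matching} itself.

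The main point to check is that the ``bridge'' edge $(x_i, y_i)$ is the only edge leaving $L(x_i)$ toward the \emph{remaining} vertices, i.e.\ toward later separators $y_j$ and later segments, including $L(x) \setminus \qty{x}$. That last segment may involve a retired $x$, but it is covered by \cref{lem:ell_disjoint_decomp}, because that lemma's nonadjacency holds for all $j > i$ including $j = k+1$. The parity argument that an odd block with a unique outgoing edge forces that edge is routine.
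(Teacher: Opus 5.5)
Your front-peeling argument is correct and is essentially the paper's proof. The paper likewise takes the decomposition of \cref{lem:ell_disjoint_decomp} and repeats the inductive step of \cref{prop:ell_unique_matching}: it adds $L(x_i)y_i$ in decreasing order of $x_i$ on top of $L(x)\setminus\{x\}$, which is your peeling from the front run in reverse, and it uses \cref{lem:ell_disjoint_decomp,lem:grow_ell_disjoint} to show that $(x_i,y_i)$ is the only edge leaving the odd block $L(x_i)$. Drop the abandoned detour requiring that $y_i$ or the last segment have no edges to later parts, since it is unnecessary and such edges can in fact exist.
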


Finally, we show that the concatenated path found by the algorithm is an augmenting path. Note that the following corollary does not assume the reduction in \cref{sec:super_singleton_reduction}.

\begin{corollary}
  \label{cor:augmenting_path_nonsingular}
  Let $P$ be the augmenting path found by the algorithm. The induced subgraph $G^*[P]$ has a unique perfect matching.
\end{corollary}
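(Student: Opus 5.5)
The plan is to reduce the claim to the machinery behind \cref{prop:ell_unique_matching}, applied to the final edge $(u, v)$ found in $\setupper$. Here $P = P(u)\overline{P(v)}$, and $P(u)$ and $P(v)$ stem from distinct original singletons. I would work under the super singleton reduction of \cref{sec:super_singleton_reduction}. In the reduced instance, $P(u)$ and $P(v)$ both start at $s_0$ and pass through distinct lines $\{\bar{s}_i, s_i\}$ and $\{\bar{s}_j, s_j\}$, entering via $q,\bar{q}$. So in the reduced instance $(u, v)$ is exactly an edge on which $\blossom(u,v)$ would be performed with $c = d = \bar{q}$. Removing the common prefix $s_0 q \bar{q}$, together with the entries $\bar{s}_i,\bar{s}_j$, leaves precisely the original $P$.

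First, I would decompose $P(u)$ and $P(v)$ via \cref{lem:ell_disjoint_decomp} as $L(x_1)y_1\cdots L(x_k)y_kL(u)$ and $L(x_{k+1})y_{k+1}\cdots L(x_l)y_lL(v)$. In the reduced instance I would cut each path just after the line through $\bar{s}_i$ (resp.\ $\bar{s}_j$), so that the remaining segments lie in $P$. The key cross-adjacency fact is the analogue of \cref{lem:ell_disjoint_blossom}: a vertex of the $P(u)$ side is adjacent to a vertex of the $P(v)$ side only for the pair $(u,v)$. I would prove it the same way, using \cref{lem:effectively_all_executed}: every edge preceding $(u,v)$ in the $\setupper$ order has both ends in the same maximal blossom. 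However, $K(\cdot)$ on the $P(u)$ side and on the $P(v)$ side are distinct, since these sides lie in different subtrees of $\st$ hanging below $\bar{q}$; the role of $x$ in \cref{lem:ell_disjoint_blossom} is played by the singleton lines. Retired endpoints are excluded as in that lemma, because a retired vertex is adjacent only within its own blossom by (BT2), and these blossoms do not cross the two sides.

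Next, I would run the induction of \cref{prop:ell_unique_matching} verbatim. The base set $X = L(u)\cup L(v)$ has the unique perfect matching consisting of $(u,v)$ together with the unique perfect matchings of $G^*[L(u)\setminus\{u\}]$ and $G^*[L(v)\setminus\{v\}]$, by oddness of $|L(\cdot)|$ (\cref{lem:lx_property}). I would then add the pieces $L(x_i)y_i$ in decreasing order of $x_i$, using \cref{lem:ell_disjoint_decomp} and \cref{lem:grow_ell_disjoint} to show that $(x_i,y_i)$ is the only edge leaving the odd set $L(x_i)$. The last pieces added are the segments starting at the original singletons $s_i$ and $s_j$. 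These are plain vertices in the original graph, and they are matched forcedly by the same cut argument.

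The main obstacle is the cross-adjacency step. Unlike inside a blossom operation, $u$ and $v$ are in different trees of the original forest, so I must check that neither the tips and retired vertices nor the edges added by pivoting create a second bridge between the two sides. I expect to handle this via \cref{prop:pivoting_new_edge} and \cref{cl:new_edge_rho_c}: any new edge has an endpoint labeled inside a blossom $H$, and such a blossom never spans both trees.
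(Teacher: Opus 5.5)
Your proof is correct, but you re-run an induction that the paper gets by citing \cref{prop:ell_unique_matching} once. The paper also passes to the super singleton reduction. There, however, it lets the detecting edge actually trigger $\blossom(u,v)$, with $c=d=\bar q$ and $g=\bar s_i$ retired. By \ref{item:detail_5} this gives $L(\bar s_i)=P(\bar s_i|b)=t\,\bar s_j s_j\cdots v\,u\cdots s_i\,\bar s_i$. \cref{prop:ell_unique_matching} then gives a unique perfect matching of $G^*[L(\bar s_i)\setminus\{\bar s_i\}]$. The new tip $t$ has $\bar s_j$ as its only neighbor in this set, so $(t,\bar s_j)$ is forced, and deleting it leaves $G^*[P]$ with a unique perfect matching. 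That route tacitly uses the fact that the final pivot (at $c=\bar q$) only alters edges incident to $\bar S\cup\{b,t\}$, so $G^*[P]$ is unchanged.

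You instead stay at the moment of detection, never perform that pivot, and redo the induction of \cref{prop:ell_unique_matching} on $P(u|\bar s_i)\overline{P(v|\bar s_j)}$. This is more self-contained, but it duplicates an argument the paper already has. Three remarks:
\begin{enumerate*}[label=(\arabic*)]
  \item Your cross-adjacency fact is literally \cref{lem:ell_disjoint_blossom} in the reduced instance with $x=\bar s_j$ (the existing vertex that $\blossom(u,v)$ would label), restricted to the segments after $\bar s_i$. You can cite it rather than reprove it.
  \item The ``main obstacle'' you flag is already handled. The proof of \cref{lem:effectively_all_executed} treats pivot-created edges via \cref{cl:new_edge_rho_c} and \cref{cor:blm_monotone}, and by (BT2) tips and retired vertices never bridge the two sides.
  \item Cut just after $\bar s_i$, not after the whole line $\{\bar s_i,s_i\}$, because $s_i\in P$ and $L(s_i)=\{s_i\}$ is the last piece you add.
\end{enumerate*}
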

\begin{proof}
  We apply the super singleton reduction to reduce this corollary to \cref{prop:ell_unique_matching}. Suppose the algorithm reports an augmenting path $P$ in the original instance, and let $s_i$ and $s_j$ be the singletons in $P$. In the reduced instance, the algorithm labels $\bar{s}_i$ and $\bar{s}_j$, forming a blossom containing these vertices. Since $G^*[L(\bar{s}_i) \setminus \qty{\bar{s}_i}]$ possesses a unique perfect matching, removing the edge between $\bar{s}_j$ and its adjacent tip yields a unique perfect matching on $G^*[P]$, which completes the proof.
\end{proof}

\begin{lemma}
  For the augmenting path $P$ found by the algorithm, $P \setminus T$ is an augmenting path for the initial base $B$.
\end{lemma}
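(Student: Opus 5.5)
We verify the three conditions of \cref{def:lmp_aug} for $P \setminus T$ with respect to the original base $B = J \cup S$. The combinatorial conditions come from the structure of search paths. The path returned in \cref{alg:set_upper} is $P = P(u)\overline{P(v)}$, where $P(u)$ and $P(v)$ stem from distinct singletons. By \cref{prop:path_property}, both are simple odd-length sequences that alternate between edges of $G^*$ and lines. The concatenation therefore starts and ends at the two distinct singletons, and its interior vertices are covered by lines.

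Next we check disjointness of the two halves. Since $P(u)$ and $P(v)$ lie in different trees of the search forest, they share no vertex; under the super singleton reduction they lie in the distinct subtrees $\st(\bar{s}_i)$ and $\st(\bar{s}_j)$. Hence $P$ is simple and contains exactly two singletons, giving condition (i).

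For condition (ii), every non-singleton vertex of $P$ is paired with its mate inside $P$, so $P \setminus S$ is a union of lines, both original and dummy. By \cref{lem:path_coherence}, $P$ is coherent. A dummy line either lies entirely in $P$ (as consecutive vertices $bt$) or is disjoint from it. Removing $T$ therefore deletes whole dummy lines, and $P \setminus (S \cup T)$ is a union of original lines.

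For condition (iii), we want nonsingularity of $C[P \setminus T]$. \cref{lem:nonsingular_invariance} applies, since $P$ is coherent and either includes or avoids each dummy line. It reduces this to nonsingularity of $C^*[P]$. By \cref{cor:augmenting_path_nonsingular}, the bipartite graph $G^*[P]$ has a unique perfect matching. In the bipartite setting, a perfect matching of $G^*[P]$ corresponds to a nonzero term in the Leibniz expansion of $\det C^*[P]$. Here rows are $P \cap B^*$, columns are $P \setminus B^*$, and the two sets have equal size because the matching is perfect. Since the matching is unique, exactly one term survives, so $\det C^*[P] = \pm\prod C^*_{xy} \neq 0$ with no cancellation possible. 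This is the step where uniqueness is essential.

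The main obstacle is justifying that the hypotheses of \cref{lem:nonsingular_invariance} hold at the moment $P$ is returned. Paths updated in \ref{item:detail_4} contain $bt$ consecutively, so they include whole dummy lines; however, we must rule out $P$ meeting a dummy line in only one vertex. I would argue this from \cref{prop:path_property}: consecutive pairs at positions $(2i,2i+1)$ are lines. If a bud or tip appears in $P$, its mate is therefore adjacent to it in the sequence. The only exception would be an endpoint, and endpoints of $P$ are singletons, so this case cannot occur. Combining this with \cref{lem:nonsingular_invariance} gives condition (iii) and completes the verification.
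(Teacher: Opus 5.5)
Your proposal is correct and follows the paper's own argument. You obtain conditions (i) and (ii) from the path structure, then use \cref{cor:augmenting_path_nonsingular} (unique perfect matching, hence $C^*[P]$ nonsingular) together with \cref{lem:path_coherence,lem:nonsingular_invariance} to conclude that $C[P \setminus T]$ is nonsingular; your explicit check that $P$ contains each dummy line wholly or not at all (via the line pairs at positions $(2i,2i+1)$) spells out a hypothesis of \cref{lem:nonsingular_invariance} that the paper leaves implicit.
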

\begin{proof}
    As the set $P \setminus T$ obviously satisfies conditions (i) and (ii) from \cref{def:lmp_aug}, we only need to show that $C[P \setminus T]$ is nonsingular. Given \cref{cor:augmenting_path_nonsingular}, $C^*[P]$ is nonsingular, which in turn shows that $C[P \setminus T]$ is nonsingular by \cref{lem:path_coherence,lem:nonsingular_invariance}. 
\end{proof}

\subsection{Minimality}
\label{sec:lmp_minimality}

This section is dedicated to proving that the algorithm finds a shortest augmenting path.
Unless otherwise stated, the super singleton reduction is also assumed throughout this section.

\subsubsection{Formalization}

Since an augmenting path is constructed by concatenating search paths, the pursuit of the shortest augmenting path necessitates a formal identification of ``shortest search paths.''
This is straightforward in graph matching, because search paths are readily characterized as alternating paths, whose concatenation consistently yields a valid augmenting path. In contrast, the situation in linear matroid parity is significantly more complicated due to its dependence on the processing order. Specifically, if the algorithm skips a concatenation step that would yield an augmenting path, subsequent concatenation steps may result in an invalid augmenting path due to accidental numerical cancellation. Such dependence makes the characterization of search paths inseparable from the search order, thereby hindering an a priori definition of shortest search paths.

To address this issue, we instead give a lower bound on the search path length and prove that our algorithm attains this bound. To this end, we leverage \cref{lem:path_coherence,cor:search_path}, which characterize $P(v) \setminus \qty{v}$ as an ``augmenting path'' connecting $s_0$ and $\bar{v}$, because we can similarly define the shortest one for this structure. This idea is formalized via a \textit{coherent nonsingular path}, where coherence is given in \cref{def:coherence} and nonsingularity is defined below.
\begin{definition}
  A vertex set $P \subseteq V^*$ is called a \textit{nonsingular path} between $v$ and $v'$ if $P$ consists of $v$, $v'$, and lines, and the submatrix $C^*[P]$ is nonsingular.
\end{definition}
To simplify the notation, when a nonsingular path $P$ connects $s_0$ and some vertex $v$, we omit the mention of $s_0$ and refer to $P$ as a nonsingular path to $v$. Note that a nonsingular path $P$ has a perfect matching of $G^*[P]$ and thereby decomposes into an alternating path and possibly alternating cycles on the graph $G^*$. Hence, similar to augmenting paths, nonsingular paths may not necessarily form a path on $G^*$.

Next, we define the length of a coherent nonsingular path and thereby define a minimum distance to a node.
\begin{definition}
  The length of a coherent nonsingular path $P$ is defined as the number of original lines intersecting the path $P$. We also define $\dist(v)$ as the minimum length of a coherent nonsingular path to $v$. We define $\dist(v) \coloneqq \infty$ if there are no such paths. The paths that attain the minimum are called the shortest coherent nonsingular paths to $v$.
\end{definition}

The following two lemmas are crucial for bounding the length of the search paths.
\begin{lemma}
  The value $\dist(v)$ is invariant under the insertion of dummy lines.
\end{lemma}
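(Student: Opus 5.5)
We will compare coherent nonsingular paths before and after a single $\blossom$ operation that inserts a new dummy line $\qty{b, t}$ for a blossom $H$. The plan is to exhibit a length-preserving bijection between them; the statement then follows by induction on the number of insertions. Let $C^*$ and $\widehat{C}^*$ be the matrices before and after the insertion, where $\widehat{C}^*$ is obtained by extending $C^*$ with the row and column $\qty{b, t}$ and pivoting around $\qty{b, t}$. The correspondence is as follows: a path $P \subseteq V^*$ that is coherent before the insertion corresponds to $P$ itself if $P \cap H = \emptyset$, and to $P \cup \qty{b, t}$ if $P \cap H \neq \emptyset$. This matches the coherence requirement for the new dummy line, which demands $\qty{b, t} \cap P' \neq \emptyset$ exactly when $P'$ meets $H$. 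Since $\qty{b, t}$ is a dummy line, it does not contribute to the length, so the correspondence preserves length.

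The first step is to restrict the possible shapes of a nonsingular path $P'$ after the insertion. We need $P'$ either to contain both of $b, t$ or to avoid both. If $P'$ meets exactly one of them, then, since $P'$ consists of $s_0$, an endpoint $v$ and lines, the vertex in question must be the endpoint $v$ itself; otherwise $P'$ would contain the whole line $\qty{b, t}$. Paths to $v \in \qty{b, t}$ are handled separately, using $\dist(v)$ only for $v$ that existed before the insertion. We also check that coherence for the older dummy lines is unaffected, because the blossom memberships of existing vertices do not change except through enlargement by $H$.

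The second step is the core linear-algebraic check, carried out with the same pivoting identity as in the proof of \cref{lem:nonsingular_invariance}. Let $P$ be a path over the old vertex set that contains both endpoints and lines. The identity states that $\widehat{C}^*[X]$ is nonsingular if and only if $\bar{C}[X \symdiff \qty{b, t}]$ is nonsingular, where $\bar{C}$ is the extended matrix before pivoting and $X$ ranges over subsets containing both or neither of $b, t$. Taking $X = P \cup \qty{b, t}$ gives $\widehat{C}^*[P \cup \qty{b, t}]$ nonsingular if and only if $C^*[P]$ nonsingular. This is exactly the case $P \cap H \neq \emptyset$. For $P \cap H = \emptyset$, we must show that $\widehat{C}^*[P] = C^*[P]$. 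By \cref{prop:pivoting_new_edge}, entries change only on pairs $(x, y)$ with $x \in H$, $y \notin H$, or on pairs involving $\qty{b, t}$. Both kinds are absent when $P$ avoids $H \cup \qty{b, t}$, so the submatrix is literally unchanged. More directly, the pivot update $\widehat{C}^*_{xy} = C^*_{xy} - C^*_{by}C^*_{xt}/C^*_{bt}$ leaves the entry unchanged because $C^*_{by} = 0$ for $y \notin H$ by the definition in \ref{item:detail_3}.

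The main obstacle is making the case split exhaustive and consistent with coherence. A new coherent path $P'$ that contains $\qty{b, t}$ must meet $H$, and removing $\qty{b, t}$ should give an old path $P$ that is still coherent and still meets $H$. The dummy lines of sub-blossoms of $H$ and of blossoms containing $H$ need care here. If $P \setminus \qty{b, t}$ happened to avoid $H$ even though $P'$ contains $\qty{b, t}$, we would instead argue by expansion along the row of $t$. By (BT2), $t$ is adjacent only to vertices of $H$, and if the only such vertex in $P'$ is $b$, then $\det \widehat{C}^*[P'] = \pm \widehat{C}^*_{tb}\det \widehat{C}^*[P]$. This reduces to the $P \cap H = \emptyset$ case with the same length, so it still gives an old coherent nonsingular path of equal length. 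The two directions together show that the minimal lengths agree, and hence $\dist(v)$ is invariant.
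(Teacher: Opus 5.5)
Your proposal is correct and is essentially the paper's argument: both go between $C^*[P]$ and $\widehat{C}^*[P \cup \{b,t\}]$ via the pivoting identity and use coherence to force a path avoiding $\{b,t\}$ to avoid $H$. The only difference is that you handle such paths by noting that the pivot leaves every entry with both indices outside $H$ unchanged (since $C^*_{by}=0$ for $y \notin H$), whereas the paper appends $\{b,t\}$ using that $t$'s only neighbor in the path is $b$. Your correspondence is not literally a bijection, and the closing tip-expansion detour is unnecessary, because the pivoting identity already makes $C^*[P' \setminus \{b,t\}]$ nonsingular whenever $P' \supseteq \{b,t\}$, whether or not the remainder meets $H$; neither point affects correctness.
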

\begin{proof}
    It suffices to show that this value remains unchanged upon the introduction of a dummy line $\qty{b, t}$. Let $P$ be a shortest coherent nonsingular path to $v$ before the insertion. In this case, $P \cup \qty{b, t}$ is a coherent nonsingular path after the insertion, which implies that $\dist(v)$ cannot increase.    
    
    Conversely, consider a shortest coherent nonsingular path $P'$ to $v$ after the operation. Suppose $P'$ does not contain the dummy line $\qty{b, t}$. Then, since $P'$ is disjoint from the corresponding blossom by the coherence, the neighbor of the tip $t$ within $G^*[P' \cup \qty{b, t}]$ is restricted to $b$. This yields $\det C^*[P' \cup \qty{b, t}] = C^*[\qty{b, t}] \cdot \det C^*[P'] \neq 0$, showing that $P' \cup \qty{b, t}$ is another shortest coherent nonsingular path to $v$. Now, assume $P' \supseteq \qty{b, t}$ without loss of generality. We then have $P' \setminus \qty{b, t}$ as a coherent nonsingular path before the operation, implying that $\dist(v)$ cannot decrease.
\end{proof}

\begin{lemma}
  \label{prop:level_lower_bound}
  For any labeled vertex $x \in V^* \setminus \qty{s_0}$, we have $\lv(x) \ge \dist(\bar{x})$.
\end{lemma}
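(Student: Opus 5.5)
The plan is to exhibit, for each labeled vertex $x \neq s_0$, an explicit coherent nonsingular path to $\bar{x}$ whose length is at most $\lv(x)$. The natural candidate is $P(x) \setminus \qty{x}$. By \cref{prop:path_property}, $P(x) = x_1 x_2 \dots x_l$ starts at $x_1 = s_0$ and has odd length $l$. Consecutive pairs $\qty{x_{2i}, x_{2i+1}}$ are lines, and $x_l = x$. So $P(x) \setminus \qty{x}$ consists of $s_0$, the lines $\qty{x_{2i}, x_{2i+1}}$ for $i < (l-1)/2$, and the single vertex $x_{l-1} = \bar{x}$ (the mate of $x$, since $\qty{x_{l-1}, x_l}$ is a line). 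Hence it has the shape required of a path between $s_0$ and $\bar{x}$ that otherwise consists of lines.

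The proof then has three steps.

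\emph{Nonsingularity.} By \cref{cor:search_path}, the induced subgraph $G^*[P(x) \setminus \qty{x}]$ has a unique perfect matching. A unique perfect matching in the bipartite support graph of $C^*[P(x)\setminus\qty{x}]$ means the determinant expansion has exactly one nonzero term, so the matrix is nonsingular. For this we need $C^*[P(x)\setminus\qty{x}]$ to be square, i.e., the rows in $B^*$ and the columns outside $B^*$ to be balanced. This follows from the perfect matching in the bipartite graph $G^*$, since each matching edge joins a vertex of $B^*$ to a vertex of $V^* \setminus B^*$.

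\emph{Coherence.} By \cref{lem:path_coherence}, $P(x)$ is coherent. Removing $x$ could break coherence only if $x$ were the unique vertex of some dummy line $\qty{b,t}$ lying in the path, with the path meeting the corresponding blossom $H$. So I would check the case $x \in \qty{b,t}$ separately.
\begin{itemize}
\item If $x = t$, then $P(t) = P(c)bt$. Here $b = \bar{t}$ remains in $P(t)\setminus\qty{t}$, and $c \notin H$, so the truncated path still meets $\qty{b,t}$ wherever needed.
\item If $x = b$, then $x$ is labeled only through a later blossom, whose search path passes through $t$ first, in keeping with (MB2) and the structure of \ref{item:detail_5}. So $t = \bar{b}$ still lies in $P(b)\setminus\qty{b}$.
\end{itemize}
In general $\bar{x} = x_{l-1}$ stays in the path. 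Thus every dummy line that meets $P(x)$ still has a representative in $P(x)\setminus\qty{x}$, unless $x$ itself is the only representative; since $\bar{x}$ remains, that never happens. This is the step I expect to need the most care: making precise that coherence of $P(x)$ transfers to $P(x)\setminus\qty{x}$, including the boundary case where the blossom containing $x$ was formed with $x$'s own line as its dummy line.

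\emph{Length comparison.} The length of $P(x)\setminus\qty{x}$ is the number of original lines it intersects. These are exactly the original lines among the pairs $\qty{x_{2i},x_{2i+1}}$, together with the line $\qty{\bar{x},x}$ if that line is original. This count equals the number of original lines in $P(x)$, which is $\lv(x)$ by definition. Therefore $\dist(\bar{x}) \le \lv(x)$.

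Since search paths may be modified by \ref{item:detail_4}, the inequality should be read at the current moment. \cref{cor:justify_line_4} shows that such modifications only insert dummy lines and so leave $\lv(x)$ unchanged, and the preceding lemma shows that $\dist$ is invariant under insertion of dummy lines. Hence the inequality, once established, persists throughout the run.
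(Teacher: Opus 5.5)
Your proposal is correct and matches the paper's proof, which likewise observes that \cref{lem:path_coherence} and \cref{cor:search_path} make $P(x)\setminus\{x\}$ a coherent nonsingular path to $\bar{x}$ of length $\lv(x)$. Your additional checks simply make explicit what the paper leaves implicit: a unique perfect matching forces a single nonzero term in the determinant, coherence survives removing $x$ because $\bar{x}$ stays in the set, and the half-used line $\{\bar{x}, x\}$ is counted in the length. Your final persistence paragraph is harmless but unnecessary, since both cited results hold at every moment of the run.
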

\begin{proof}
  \cref{lem:path_coherence,cor:search_path} guarantee that $P(x) \setminus \qty{x}$ is a coherent nonsingular path of length $\lv(x)$ to $\bar{x}$.
\end{proof}

The remainder of this section is dedicated to proving that the above lower bound is attained by our algorithm. Using \cref{prop:level_values_bound,prop:level_at_least_kmax}, we can consider the time interval in which a certain level is assigned.
\begin{enumerate}[label=(Pre-LV\arabic*), leftmargin=*]
  \item If a non-tip lower vertex $v$ becomes labeled, this must occur during the execution of $\setlower(k)$, where $k = \lv(v) - 1$.
  \item If an upper vertex $v$ becomes labeled, this must occur during the $k$-th iteration and before the completion of $\setupper(k, \kmax)$, where $k$ and $\kmax$ satisfy $2k + 1 = \lv(\rho(v)) + \lv(v)$ and $\kmax = \lv(v)$.
\end{enumerate}
Note that the assignment for the lower level of a dummy line is immediate upon its insertion and thus has no counterpart.

Assuming that the minimum levels are attained, these properties naturally define the ``deadlines'' for level assignments. The following lemma verifies that our algorithm indeed labels within these time intervals.
\begin{lemma}
  \label{prop:level_minimum}
  The levels assigned by the algorithm satisfy the following properties.
  \begin{enumerate}[label=(LV\arabic*), leftmargin=*]
    \item The lower level of an original line $\qty{v, \bar{v}}$ is assigned by $\setlower(k)$, where $k = \min\{\dist(v), \dist(\bar{v})\} - 1$.
    \item For an upper vertex $v$, its level is assigned during the $k$-th iteration and before the completion of $\setupper(k, \kmax)$, where $k$ and $\kmax$ satisfy $2k + 1 = \lv(\rho(v)) + \dist(\bar{v})$ and $\kmax = \dist(\bar{v})$.
    \item For any labeled vertex $x \in V^* \setminus \qty{s_0}$, we have $\lv(x) = \dist(\bar{x})$.
  \end{enumerate}
  If $k = \infty$ in (LV1) or (LV2), the condition imposes no constraint on the level assignment.
\end{lemma}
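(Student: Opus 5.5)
We prove (LV1)--(LV3) simultaneously by induction on the search level $k$. The inductive hypothesis at level $k$ is that every vertex $x$ with $\dist(\bar{x}) \le k$ has been labeled by the end of the $(k-1)$-st iteration (or, for lower vertices of level $k$, by the end of $\setlower(k-1)$) and satisfies $\lv(x) = \dist(\bar{x})$. By \cref{prop:level_lower_bound} we already have $\lv(x) \ge \dist(\bar{x})$. Combined with (Pre-LV1) and (Pre-LV2), this reduces (LV3) to showing that each vertex is labeled no later than the ``deadline'' in (LV1) or (LV2). Once labeled before its deadline, the timing constraints of (Pre-LV1)--(Pre-LV2) force $\lv(x) \le \dist(\bar{x})$. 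Dummy-line vertices pose no difficulty: a tip gets $\lv(t) = \lv(c)$ upon insertion, and $\dist$ is invariant under dummy-line insertion, so these cases follow from the statement for $c$.

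For (LV1), let $\dist(\bar{v}) = k+1$ be the smaller value and let $P$ be a shortest coherent nonsingular path to $\bar{v}$. We expand $\det C^*[P]$ along the row or column of $\bar{v}$. Some term $C^*_{u\bar{v}} \det C^*[P \setminus \{u,\bar{v}\}]$ is nonzero, and $u$ lies on a line $\{u,\bar{u}\}$ inside $P$. Then $P \setminus \{u,\bar{v}\}$ contains $\bar{u}$ as an endpoint in place of $\bar{v}$, so it is a nonsingular path to $\bar{u}$. Coherence is preserved: the removed vertices are an original line's half and an original vertex, so we must check that no blossom loses its dummy line, and this follows since $\bar{v}$ is unlabeled and outside any blossom. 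This path has length $k$, so $\dist(\bar{u}) \le k$. By induction $u$ is labeled with $\lv(u) \le k$, and in fact $\lv(u) = k$ by \cref{prop:level_lower_bound}, since otherwise $\grow(u)$ at an earlier level would have labeled $v$ or $\bar{v}$ earlier, contradicting \cref{prop:level_lower_bound}. Hence $\grow(u)$, executed in $\setlower(k)$, reaches $\{v,\bar{v}\}$ unless the line was already discovered. Note that $u$ is not retired and not a tip, because retired vertices and tips have no outside edges by (BT2).

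For (LV2), let $v$ be an upper vertex with $\dist(\bar{v}) = \kmax$, and take a shortest coherent nonsingular path $Q$ to $\bar{v}$. The plan is a decomposition lemma (the forthcoming \cref{lem:yz_decomp}) obtained by adapting the minimal-partition Pfaffian argument of \cref{sec:complete_approach} to the skew matrix built from $C^*[Q]$ with indeterminates on lines. Among the matchings of $G^*[Q]$ we choose the finest component structure with nonzero total weight, which yields nonsingular blocks. We then follow the alternating structure from $\bar{v}$ back toward $s_0$ and stop at the first edge $(y,z)$ that is an incoming edge of $\st(v)$, with $z \in \st(v)$. Splitting $Q$ at this edge gives a coherent nonsingular path $Y$ to $y$'s mate side and a nonsingular path $Z$ inside $\st(v)$ from $z$ to $\bar{v}$. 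By counting lines, $\dist(\bar{y}) + \dist(\bar{z}) \le 2k+1$ with $\dist(\bar{z}) \ge \dist(\bar{v})$-type bounds, so by induction both $y$ and $z$ are labeled with levels summing to at most $2k+1 = \lv(\rho(v)) + \dist(\bar{v})$ and maximum at most $\kmax$. Then $\setupper$ processes $(y,z)$ no later than the deadline, unless $K(y) = K(z)$ already holds by \cref{lem:effectively_all_executed}. In either case \cref{prop:incoming_edge_label} shows that $v$ is labeled by then.

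The main obstacle is the decomposition step for (LV2). We must guarantee three things: that an incoming edge exists on a shortest path, that both pieces remain coherent and nonsingular (this needs the minimal-partition trick, since naive splitting can cancel), and that the level bounds for $y$ and $z$ are strong enough. In particular, $z$ may be a vertex whose own labeling is simultaneous with the current iteration, so the induction must be ordered within iteration $k$ by $\kmax$ and by the $\setupper$ order, using \cref{cor:blm_monotone}. We would attempt this by strong induction on the pair $(k,\kmax)$ together with the position in the $\setupper$ order.
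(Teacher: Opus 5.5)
Your overall architecture matches the paper's: deadlines from (Pre-LV1)--(Pre-LV2), a Laplace-expansion trigger for $\grow$, and an incoming-edge trigger for $\blossom$ obtained from a minimal-partition decomposition. Your (LV1) argument is essentially \cref{lem:laplace_for_v} combined with \cref{prop:at_most_k}. However, dismissing dummy lines is a genuine gap. The invariance of $\dist$ under insertions concerns vertices that already existed; the bud $b$ and tip $t$ are new, and nothing ties $\dist(b)$ to $\dist(\bar c)$.

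For (LV3) at $t$ you must prove $\dist(b)\ge\lv(c)$. Since $b$ is a new upper vertex with $\rho(b)=c$, you must also rule out that it is created after its own (LV2) deadline, i.e.\ prove $\lv(c)+\dist(t)\ge 2k+1$. The paper needs real arguments for both.
\begin{itemize}
\item For dummy lines created inside $\grow$: Laplace expansions (at $b$, respectively at $u$ and then $\bar u$) combined with (SO1).
\item For the tip of a dummy line created inside $\setupper$: \cref{lem:yz_decomp} together with \cref{lem:subtree_rho} and (SO1).
\item For the bud of a dummy line created inside $\setupper$: a contradiction showing that an incoming edge of $\st(t)$ with level sum at most $2k-1$ would already have been processed and would have labeled $b$. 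This uses \cref{prop:mie_existence} without assuming (LV2) at the vertex itself (\cref{rem:weakened_assumption}), which avoids circularity.
\end{itemize}

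In the (LV2) step, the needed inequality does not come from counting lines of $Q$. Your piece $Z$ joins $\bar v$ to $\bar z$, not $s_0$ to $\bar z$, so by itself it says nothing about $\dist(\bar z)$. The missing idea is \cref{lem:search_path_extension}: $P(\rho(v))\cup\{v\}\cup Z$ is again a coherent nonsingular path, proved via the $L(x)$-decomposition and the non-adjacency of \cref{lem:grow_ell_disjoint}. This is exactly what yields $\dist(\bar y)+\dist(\bar z)\le\lv(\rho(v))+\dist(\bar v)$.

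You also need the following, and your ``$\dist(\bar z)\ge\dist(\bar v)$-type bound'' points the wrong way.
\begin{itemize}
\item $\dist(\bar y)<\dist(\bar v)$. Strictness needs \cref{lem:dummy_line_alternating_path} when $\{v,\bar v\}$ is a dummy line.
\item $\dist(\bar z)\le\dist(\bar v)$, obtained by showing $\dist(\bar y)\ge\lv(\rho(v))$ from the search order.
\item $y\in V(\st)$, so that $(y,z)$ really is an incoming edge.
\item The edge $(y,z)$ survives later pivots (\cref{cl:new_edge_rho_c}).
\end{itemize}

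Coherence of $Y$ and $Z$ does not follow from minimality of the partition alone. The paper first pivots around the dummy lines inside the subtree, then merges $s_0$ with all outside dummy lines (the join with $\calr$) before choosing the minimal partition.

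Finally, your proposed strong induction on $(k,\kmax)$ and $\setupper$ position cannot handle the case where $z$ is an upper vertex with $\dist(\bar z)=\dist(\bar v)$ and $\lv(\rho(z))=\lv(\rho(v))$. Then $z$ and $v$ have the same deadline, and neither can be assumed labeled first. The paper closes this case differently. It shows that these equalities force the part of $P(\rho(z))$ after $v$ to contain no original line. Hence $K(z)=K(v)$ by \cref{lem:dummy_line_alternating_path}, and $z$ is labeled by (MB1).
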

\begin{remark}
  \label{rem:theorem_odd}
  A simple parity argument shows that a labeled vertex $x$ belongs to the base if and only if $\lv(x)$ is even; the same applies to $\dist(\bar{x})$. Consequently, since exactly one of $\rho(v)$ and $v$ belongs to the base, the sum $\lv(\rho(v)) + \dist(\bar{v})$ in (LV2) is necessarily odd whenever it is finite.
\end{remark}

We show by induction that (LV1)--(LV3) are invariant under each execution of the subroutines $\setlower(k)$ and $\setupper(k, \kmax)$. Under the reduction in \cref{sec:super_singleton_reduction}, we can easily verify that the iterations for $k = 0$ and $k = 1$ satisfy these conditions. Hence, we focus on an iteration for $k \ge 2$ and prove that the procedure maintains these properties.

Before presenting the formal proof, we first outline the main idea of the inductive step. Intuitively, for every target vertex that (LV1) or (LV2) requires to be labeled, we show that there exists an associated ``triggering'' vertex or edge upon which the labeling operation is performed; this trigger is an adjacent labeled vertex for a $\grow$ operation, and an incoming edge for a $\blossom$ operation. We then demonstrate that this trigger is prepared well in advance of the operation.

In particular, for every line $\qty{v, \bar{v}}$ for which the lower level $k + 1$ must be assigned, we prove that there exists an associated labeled vertex $u$ such that $\grow(u)$ finds $\qty{v, \bar{v}}$ and $\lv(u) = k$ is assigned prior to the execution of $\setlower(k)$. Similarly, we show that every upper vertex $v$ in (LV2) has an associated incoming edge $(z, y)$ such that $\blossom(z, y)$ labels $v$, and both $z$ and $y$ are labeled before the completion of $\setupper(k, \kmax)$.

The remainder of this section is devoted to the formal proof that $\setlower(k)$ and $\setupper(k, \kmax)$ for each $\kmax$ preserve (LV1)--(LV3). To this end, we classify vertices into three types and formulate the requirements for each type.

\begin{claim}[\textbf{Tips}]
  \label{cl:tips}
  For each dummy line introduced during the $k$-th iteration, the tip $t$ satisfies (LV3), and the bud $b$ does not violate (LV2) at the moment it is created.
\end{claim}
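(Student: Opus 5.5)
The claim has two parts. For the tip, I will show $\lv(t) = \dist(\bar t) = \dist(b)$. For the bud, I will show that $b$ is labeled no later than the deadline that (LV2) sets using $\hat\rho(b)=c$. Both parts should reduce to a single fact about $c$: $\lv(c)=\dist(\bar c)$, which is (LV3) for $c$ by induction, since $c$ was labeled before the current $\blossom$ operation. For $c=s_0$ this is the trivial base case.

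\textbf{The tip.} By \ref{item:detail_4}, $P(t)=P(c)bt$, so $\lv(t)=\lv(c)$. \cref{prop:level_lower_bound} already gives $\lv(t)\ge\dist(b)$, so only the reverse inequality $\dist(b)\le\lv(c)$ is needed.
\begin{itemize}
\item Take a shortest coherent nonsingular path $Q$ to $\bar c$, measured in the current $G^*$ after the insertion. This is legitimate because $\dist$ is invariant under inserting dummy lines.
\item Consider $Q' = (Q\setminus\{\bar c\})\cup\{c,b\}$ as a candidate nonsingular path to $b$ of the same length.
\item The bud $b$ is the copy of $c$ restricted to $H$, and after pivoting $t$ is adjacent only inside $H$ (BT2). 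So I would expand $\det C^*[Q\cup\{b,t\}]$ along $t$, in the spirit of \cref{lem:nonsingular_invariance}.
\item This should show that $C^*[Q']$ is a nonzero multiple of $C^*[Q]$ with $\bar c$ replaced by $c$.
\end{itemize}
The main obstacle is that $Q$ may intersect $H$, so the factorization along $t$ is not immediate.

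\textbf{Handling $Q\cap H\neq\emptyset$.} Here I would use the search path itself. $P(c)\setminus\{c\}$ has a unique perfect matching (\cref{cor:search_path}), so $P(c)b$, viewed as a set containing $c$ and $b$, is nonsingular via the edge $(c,b)$. The fact that $C^*_{cb}\neq 0$ comes from \ref{item:detail_3}: it is the entry $C^*_{bt}=C^*_{cg}$ after pivoting. Hence $P(c)\cup\{b\}$ is a coherent nonsingular path to $b$ of length $\lv(c)$. Coherence holds because $P(c)$ avoids $H$ and all earlier blossoms it meets already contain their dummy lines (\cref{lem:path_coherence}). This gives $\dist(b)\le \lv(c)=\lv(t)$ directly, without the exchange argument, so I would adopt this cleaner route.

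\textbf{The bud.} After \ref{item:detail_4} we have $\hat\rho(b)=c$, and $b$ is labeled by some later $\blossom$. (LV2) requires labeling during iteration $k'$ before $\setupper(k',\dist(t))$ completes, where $2k'+1=\lv(c)+\dist(t)$. At the moment of creation $b$ is unlabeled. The blossom being formed occurs during iteration $k$ with $\lv(u)+\lv(v)=2k+1$, so it suffices to show $\lv(c)+\dist(t)\ge 2k+1$; then the deadline has not yet passed.
\begin{itemize}
\item Any coherent nonsingular path $R$ to $t$ must contain $b$, by coherence, and must contain vertices of $H$, since $t$'s neighbors lie in $H$.
\item I would decompose $R$ at the point where it enters $H$, using a Laplace expansion as in the $\grow$ trigger sketch. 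This yields nonsingular paths to vertices $u',v'\in H$ together with a path through $c$.
\item Applying (LV3) inductively to vertices of $H$ already labeled with levels at least $\lv(c)$ should then give $\dist(t)\ge 2k+1-\lv(c)$.
\end{itemize}
I expect this lower bound on $\dist(t)$ to be the hardest step. If the decomposition fails, the fallback is to show it is enough that $b$ is not labeled \emph{too late}: $b$ only needs a deadline at or after the present time, and monotonicity (\cref{cor:blm_monotone}) may supply that directly.
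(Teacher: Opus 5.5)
\textbf{Tip.} Your tip argument proves the wrong inequality. Since $\lv(t)=\lv(c)$, the bound $\dist(b)\le\lv(c)$ you set out to establish is the same inequality as $\lv(t)\ge\dist(b)$, which you say \cref{prop:level_lower_bound} already supplies. Your ``cleaner route'' set $P(c)\cup\{b\}$ is just $P(t)\setminus\{t\}$, the very path used to prove \cref{prop:level_lower_bound}.

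What (LV3) for $t$ actually requires is the converse, $\dist(b)\ge\lv(c)$: \emph{no} coherent nonsingular path to the bud is shorter than $P(c)$. Exhibiting a path can never give this; you must rule out every short one. The paper argues by contradiction, assuming $\dist(b)<\lv(c)$. It splits a shortest path to $b$ into an edge leaving $b$ (or $\st(b)$) and a shorter coherent nonsingular path to the mate of that edge's outer endpoint. When the dummy line is created inside $\grow$, a plain Laplace expansion at $b$ does this, and the retired $u$ being adjacent only to $t$ keeps the remainder coherent. In the $\setupper$ case this split comes from \cref{lem:yz_decomp}. \cref{prop:at_most_k} then forces the outer endpoint to be labeled at a level below $\lv(c)$. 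On the other hand, (SO1) of \cref{lem:search_order_grow}, combined with \cref{lem:subtree_rho} in the $\setupper$ case, forces it to be at least $\rho(b)=c$ in the search order. These search-order invariants are the ingredient your proposal never uses. The fact $\lv(c)=\dist(\bar c)$ on which you build plays no role.

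\textbf{Bud.} Your reduction to $\lv(c)+\dist(t)\ge 2k+1$ matches the paper, but the inequality itself is left unproved. Your first step is false: a nonsingular path to $t$ cannot contain $b$, because $b$'s only line partner is the endpoint $t$. Coherence for this dummy line holds trivially since $t$ lies on the path. More importantly, a single Laplace expansion where the path enters $H$ does not split it into two independent nonsingular pieces with controlled lengths. That is exactly what the refinement-minimal partition argument of \cref{lem:yz_decomp} accomplishes.

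The missing idea is a contradiction through the processing order. Suppose $\lv(c)+\dist(t)\le 2k-1$ (the sum is odd by \cref{rem:theorem_odd}). Then \cref{prop:mie_existence} with \cref{rem:weakened_assumption} yields an incoming edge $(y,z)$ of $\st(t)$ with $\dist(\bar y)+\dist(\bar z)\le 2k-1$. Inductively both endpoints are labeled with level sum at most $2k-1$, so $(y,z)$ precedes $(u,v)$ in $\setupper$. Then \cref{lem:effectively_all_executed,prop:incoming_edge_label} would make $b$ already labeled, which is absurd for a freshly created bud. Your fallback via \cref{cor:blm_monotone} cannot replace this: that corollary constrains only the algorithm's own ordering and says nothing about $\dist(t)$.

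Finally, dummy lines created inside $\grow$ during $\setlower(k)$ do not fit your ``$\lv(u)+\lv(v)=2k+1$'' framing. There one needs $\dist(t)\ge k+1$. The paper obtains this by expanding a hypothetical short path to $t$ at $u$ and $\bar u$ and contradicting (SO1) via $w\ge\rho(u)=t$.
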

\begin{claim}[\textbf{Non-tip lower vertices}]
  \label{cl:non_tip_lower}
  The lower vertex of an original line $\qty{v, \bar{v}}$ with $k = \min\{\dist(v), \dist(\bar{v})\} - 1$ is labeled before the completion of $\setlower(k)$, ensuring (LV1). Moreover, its mate does not violate (LV2) at the time it is designated as an upper vertex.
\end{claim}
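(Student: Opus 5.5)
We work inside the inductive framework set up after \cref{rem:theorem_odd}. We assume (LV1)--(LV3) hold for everything labeled before $\setlower(k)$ starts, i.e., through all $\setupper(k-1,\cdot)$ calls. Fix an original line $\qty{v,\bar v}$ with $\min\{\dist(v),\dist(\bar v)\} = k+1$, and say $\dist(\bar v) = k+1$. Let $P$ be a shortest coherent nonsingular path to $\bar v$; it contains $v$ and has length $k+1$.

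The first step is to find the trigger by a Laplace expansion of $\det C^*[P]$ along the row or column indexed by $\bar v$. Since $P \setminus \qty{\bar v}$ contains $v$, some term gives a neighbor $u \in P$ of $\bar v$, with $u \neq v$, such that $C^*[P \setminus \qty{u,\bar v}]$ is nonsingular. Here $u$ itself is not a line partner: the set $P \setminus \qty{u,\bar v}$ consists of $s_0$, lines, and $\bar u$, so it is a nonsingular path to $\bar u$ of length $k$. It has length exactly $k$ because $\bar v$'s line is removed, and removing $u$ does not remove an original line count unless $u$'s line is original, in which case $\bar u$ remains. I would choose the expansion term carefully so that coherence is preserved. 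If $u$ is a bud or tip, or lies in a blossom, I would replace $P$ by one of the alternatives from the dummy-line invariance lemma, adding or removing $\qty{b,t}$ via the block factorization $\det C^*[P\cup\qty{b,t}] = C^*[\qty{b,t}]\det C^*[P]$. Then $\dist(\bar u) \le k$. If $\dist(\bar u) < k$, the neighbor $\bar v$ would give a shorter path to $v$ or $\bar v$, by the same expansion in reverse, which contradicts minimality. Hence $\dist(\bar u)=k$.

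The second step applies the induction. We need $u$ to be labeled with $\lv(u)=k$ before $\setlower(k)$ processes $\vl[k]$. For a lower vertex $u$, (LV1) applied to the level-$k$ line gives this. For an upper vertex, (LV2) gives it, since the $\setupper$ deadline for $\kmax = k$ falls in an iteration before $k$. Then (LV3) gives $\lv(u)=\dist(\bar u)=k$. The case where $u$ is a tip is handled by \cref{cl:tips}, and tips are skipped in $\vl$, so in that case I would reroute to the vertex $c$ via \cref{prop:neighbor_of_bt}, using $\lv(t)=\lv(c)$.

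The third step runs $\grow(u)$. When $\grow(u)$ executes, either the line $\qty{v,\bar v}$ is already labeled (earlier in $\setlower(k)$ or before), or $\grow(u)$ labels it with level $k+1$. The edge $(u,\bar v)$ must still exist at that time. If it was deleted by pivoting, \cref{cl:new_edge_rho_c} implies the relevant endpoint was labeled, so the line was already handled. The line cannot have been labeled earlier with a smaller level, by \cref{prop:level_lower_bound}. So the lower vertex is labeled during $\setlower(k)$ at level $k+1 = \dist$, which gives (LV1) and (LV3) for it.

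Finally, for (LV2) for the mate, which is designated upper with $\rho = u$, we need $\lv(\rho)+\dist = 2k + 1 \ldots$ consistent with the deadline. This follows because its $\dist$ is at least $k+1$ and, by \cref{rem:theorem_odd}, has the correct parity, so its deadline lies at iteration $\ge k$ and is not yet passed. I expect the main obstacle to be the first step: choosing the Laplace-expansion term so that the remaining path is coherent and its endpoint $u$ is a vertex that actually sits in $\vl$.
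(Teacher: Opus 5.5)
Your skeleton is the same as the paper's. A Laplace expansion along $\bar v$ yields a neighbor $u$ with $\dist(\bar u)\le k$ (\cref{lem:laplace_for_v}, \cref{cor:dist_pred_v}). The induction hypothesis labels $u$ with $\lv(u)=\dist(\bar u)$; your second step re-derives \cref{prop:at_most_k}. Then $\grow(u)$ finds the line, and (LV2) for the mate follows from $\lv(\rho)+\dist\ge 2k+1$.

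\textbf{The gap.} The step that fails is your upgrade to $\dist(\bar u)=k$. There is no ``expansion in reverse'': adjoining $\qty{u,\bar v}$ to a short nonsingular path to $\bar u$ need not give a nonsingular set. This is exactly the cancellation phenomenon the paper is built around. For example, take a singleton $s$ and suppose the submatrix of $C^*$ on rows $s,\bar v$ and columns $u,\bar u$ has all four entries equal to $1$. Then $\dist(u)=\dist(\bar u)=1$, yet $\qty{s,u,\bar u,\bar v}$ is singular, and $\dist(\bar v)$ may well be $4$. So both neighbors of $\bar v$ violate your equality. This matters because your third step uses $\lv(u)=k$ to place $\grow(u)$ inside $\setlower(k)$, after the moment you took the expansion, so that only edge deletions need to be excluded. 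If $\lv(u)<k$, then $\grow(u)$ ran earlier, and your argument does not exclude that $(u,\bar v)$ was created by a later pivot.

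\textbf{The fix.} The missing ingredient is \cref{lem:effectively_all_executed}. Once $\grow$ has been applied to a vertex, every line adjacent to any vertex not later than it in the order has its lower level assigned. This invariant survives the edge additions caused by pivoting, via \cref{cl:new_edge_rho_c} and \cref{cor:blm_monotone}. With it you can keep only $\dist(\bar u)\le k$, as the paper does, and conclude that the line is found no later than $\grow(u)$, hence before $\setlower(k)$ completes. If you want $\lv(u)=k$, it follows from this lemma together with \cref{prop:level_lower_bound}, not from the minimality of $P$. Otherwise the line would already have a lower vertex of level at most $k$, contradicting $\lv\ge\dist\ge k+1$.

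\textbf{Smaller points.}
First, $P$ does not contain $v$: a path to $\bar v$ consists of $s_0$, $\bar v$, and whole lines.
Second, coherence of $P\setminus\qty{u,\bar v}$, which you flag as the main obstacle, is immediate. Indeed $\bar v\notin T$, and if $u\in T$ its mate stays in the set.
Third, $u$ is never a tip or a retired vertex. By (BT2) in \cref{prop:bt}, such vertices are adjacent only to vertices of their own blossom, while $\bar v$ lies in no blossom. So no rerouting through $c$ is needed, and $u$ does appear in $\vl$.
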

\begin{claim}[\textbf{Upper vertices}]
  \label{cl:upper}
  If an upper vertex (including a bud) satisfies $2k + 1 = \lv(\rho(v)) + \dist(\bar{v})$ and $\kmax = \dist(\bar{v})$, it is labeled before $\setupper(k, \kmax)$ completes, ensuring (LV2).
\end{claim}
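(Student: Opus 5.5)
We prove \cref{cl:upper} by exhibiting, for each such upper vertex $v$, a ``trigger'': an incoming edge $(z,y)$ of $\st(v)$ whose $\blossom$ call labels $v$ in time. The plan assumes (LV1)--(LV3) for everything labeled earlier and uses the induction on the level, which is part of the inductive hypothesis.

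\textbf{Step 1: set-up.} Let $v$ be unlabeled with $\kmax = \dist(\bar v)$ and $\lv(\rho(v)) + \dist(\bar v) = 2k+1$. Take a shortest coherent nonsingular path $Q$ to $\bar v$. Since $v$ is unlabeled, $\st(v)$ is entered only through the tree edges $(\rho(v),v)$ and $(v,\bar v)$, by \cref{lem:subtree_valve}.

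\textbf{Step 2: decompose $Q$.} Apply the partition-refinement argument of \cref{lem:matroid_parity_number_of_augmenting_paths_parity_base} to the alternating matrix associated with $C^*[Q]$, adjoining $v$ and $\rho(v)$. Choose a minimal partition whose grouped Pfaffian sum is nonzero; each block is then nonsingular. Let $(z,y)$ be the matching edge at which the alternating path first leaves $\st(v)\setminus\{v\}$ when traversed backward from $\bar v$. We aim to show the following.
\begin{enumerate*}[label=(\alph*)]
\item $(z,y)$ is not a tree edge, with $y\in\st(v)$ and $z\notin\st(v)$, so it is an incoming edge.
\item The path splits as $Q = Y\cup\{y,z\}\cup Z$, where $Y$ is a coherent nonsingular path to $\bar y$ and $Z$ is one to $\bar z$, possibly after adding $v,\rho(v)$ via the Laplace expansion. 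This is the statement called \cref{lem:yz_decomp} in the outline.
\item Counting original lines gives $\dist(\bar y) + \dist(\bar z) \le \lv(\rho(v)) + \dist(\bar v) = 2k+1$, together with $\max\{\dist(\bar y),\dist(\bar z)\}\le \kmax$.
\end{enumerate*}
Coherence of $Y$ and $Z$ follows because a blossom met by $Q$ contributes its dummy line to whichever side contains it. Blossoms straddling $\st(v)$ are excluded by (MB2) together with \cref{prop:incoming_edge_label}.

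\textbf{Step 3: the trigger fires in time.} By induction and (LV3), $y$ and $z$ are labeled with $\lv = \dist(\bar\cdot)$, before or during the current $\setupper$ rounds. The bound $\lv(y)+\lv(z)\le 2k+1$ places $(y,z)$ at or before $\setupper(k,\kmax)$ in the scan order. \cref{lem:effectively_all_executed} then gives $K(y)=K(z)$ once that point has passed. Because $(y,z)$ is an incoming edge of $\st(v)$, \cref{prop:incoming_edge_label} shows that $v$ is labeled by then. The subtle part is the case where $y$ itself is labeled during this same $\setupper(k,\kmax)$. For that we need the strict ordering of \cref{cor:blm_monotone} and an inner induction over the processing order within the round, which will handle it.

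\textbf{Expected obstacle.} The main obstacle is Step 2. The decomposition must yield \emph{nonsingular} subpaths, not merely matchings, and the edge $(z,y)$ must be an incoming edge rather than a tree or blossom-internal edge. This requires an edge-deletion and refinement argument, chosen so that cancellation cannot occur. I would first try to show that the minimal-partition block containing $\bar v$ has a Pfaffian factoring as $\pm\,C^*_{zy}$ times the product of the Pfaffians of the blocks for $Y$ and $Z$. This would hold if $(z,y)$ were the unique crossing edge between $Y\cup\{y\}$ and $Z\cup\{z\}$ in the support graph. If uniqueness fails, I would instead pick $(z,y)$ minimizing $\max(\lv)$ and use the shortest-path minimality of $Q$ to exclude the other crossing edges.
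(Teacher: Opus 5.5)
Your skeleton matches the paper's: exhibit an incoming edge of $\st(\bar v)$ whose endpoints are labeled early enough, then conclude with \cref{prop:incoming_edge_label,lem:effectively_all_executed}. The gap is Step~2, where the real work lies; it is left open, and the mechanism you propose for it fails. Factoring the determinant as $\pm C^*_{zy}$ times the determinants of the two sides requires $(z,y)$ to be the only edge of $G^*$ between them. Nothing guarantees this: other permutations in the same group cross the cut along other edges, and their terms can cancel. Minimality of $Q$ says nothing about which support edges exist between two vertex sets.

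The missing idea is to cut at a \emph{line}, not at a matching edge. The paper takes $y$ to be the first vertex outside $\st(v)$ met when walking the alternating path from the endpoint. It then defines a $\sigma$-dependent partition $\tilde\Gamma(\sigma)$ that isolates the segment from the endpoint through $y$. This block is joined to the rest of $G_\sigma$ only by the line $\{y,\bar y\}$, which is not an entry of any term $f(\sigma)$. Hence the minimal nonzero group gives a genuinely block-diagonal determinant (\cref{cl:gamma_tilde_x}) and two nonsingular blocks. The incoming edge is extracted \emph{afterwards}, by a Laplace expansion along $y$ inside its block, so it need not be the edge matched to $y$ in $\sigma$.

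Coherence is not automatic either. A dummy line can land in a separate cycle block, and merging nonsingular blocks afterwards does not preserve nonsingularity. The paper therefore first adds all dummy lines, pivots around those inside $\st(v)$, and joins with $\calr$. This forces $s_0$ and all outside dummy lines into one block before the minimal partition is chosen (\cref{lem:yz_decomp}).

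Three further ingredients are missing.
\begin{itemize}
\item \textbf{Concatenation with the search path.} Your inside piece is a nonsingular path from $\bar v$, not from $s_0$, and adjoining only $v,\rho(v)$ does not repair this. You need $P(\rho(v))\cup\{v\}\cup Z$ to be nonsingular. This is \cref{lem:search_path_extension}, proved via the $L(x)$-decomposition and \cref{lem:grow_ell_disjoint}, and it is what produces the $\lv(\rho(v))$ term in your (c).
\item \textbf{The bound $\le\kmax$ and membership in $V(\st)$.} The bound for the inside endpoint is not mere counting. It needs $\dist(\bar z)\ge\lv(\rho(v))$ for the outside endpoint $z$, which follows from (SO1) and \cref{lem:subtree_rho}. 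You must also prove $z\in V(\st)$ for the edge to be incoming.
\item \textbf{The ``inner induction'' of Step~3.} It is unspecified: no well-founded order is given, and the edge is protected from deletion by pivoting only once both endpoints are labeled (\cref{cl:new_edge_rho_c}).
\end{itemize}

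The paper avoids any same-round labeling, as follows.
\begin{itemize}
\item It proves the strict inequality $\dist(\bar z)<\dist(\bar v)$; this needs \cref{lem:dummy_line_alternating_path} when $v$ is a bud. Combined with $\lv(\rho(z))\le\lv(y)$ from \cref{lem:search_order_grow}, it places an upper outside endpoint in an earlier $\setupper(k',\kmax')$. The case of an inside endpoint with smaller $\dist$ is symmetric.
\item If the inside endpoint $y$ is upper with $\dist(\bar y)=\dist(\bar v)$, a chain of level inequalities forces $P(\rho(y)|v)\setminus\{\rho(y)\}$ to consist of dummy lines. Then $K(y)=K(v)$, and $y$ is already labeled by (MB1).
\end{itemize}
So both endpoints are labeled at the start of $\setupper(k,\kmax)$, or right after the bud is inserted (\cref{prop:setupper_original_line}).
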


Regarding the second and third claims, (LV3) follows from \cref{prop:level_values_bound,prop:level_at_least_kmax}. Hence, these claims verify the inductive step of (LV1)--(LV3). Assuming that these claims are established and thus \cref{prop:level_minimum} is proved, we can also show that the augmenting path found by the algorithm is the shortest by leveraging the reduction as in the proof of \cref{cor:augmenting_path_nonsingular}.
\begin{lemma}
  For the augmenting path $P$ found by the algorithm, $P \setminus T$ is a shortest augmenting path for the initial base $B$.
\end{lemma}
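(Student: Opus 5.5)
Validity is already settled by the preceding lemma: $P \setminus T$ is an augmenting path for $B$. It remains to show that no augmenting path for $B$ is shorter. I would use the super singleton reduction from \cref{sec:super_singleton_reduction}, following the proof of \cref{cor:augmenting_path_nonsingular}, to convert augmenting paths for $B$ into coherent nonsingular paths in the reduced instance. The lengths are then compared through \cref{prop:level_minimum}~(LV3), which we assume has been established via \cref{cl:tips,cl:non_tip_lower,cl:upper}.

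\textbf{Step 1: an augmenting path becomes a coherent nonsingular path.} Let $Q$ be any augmenting path for $B$ with singletons $s_i, s_j$ and length $\ell$. In the reduced instance, consider $Q' \coloneqq Q \cup \qty{s_0, q, \bar{q}, \bar{s}_i} \cup T$, with $s_j$ playing the role of the endpoint $v$ and $\bar{s}_j$ omitted. The reduction matrix \eqref{eq:super_singleton_c} has $C^*_{s_0 q} = 1$ as the only nonzero entry in column $q$ among the relevant rows, and $\bar{s}_i$ is adjacent only to $\bar{q}$ within $Q'$. Expanding along these, I would show that $C^*[Q' \setminus T]$ taken in the reduced fundamental matrix with $T$ pivoted out is nonsingular exactly when $C[Q]$ is. I would then apply \cref{lem:nonsingular_invariance} (the step (ii)$\Leftrightarrow$(iii)) to pass to $C^*[Q']$. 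Since $Q'$ contains all of $T$, it is automatically coherent. Thus $Q'$ is a coherent nonsingular path to $s_j$ (equivalently, to $\bar{x}$ with $x = \bar{s}_j$). Its length is $\ell + 2$, since the original lines $\qty{q,\bar q}$ and $\qty{s_i, \bar s_i}$ are added. Taking the minimum over $Q$ gives $\dist(s_j) \le \ell_{\min}(s_i,s_j) + 2$.

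\textbf{Step 2: the found path attains this bound.} In the reduced instance, the path found by the algorithm corresponds to the blossom operation that labels $\bar{s}_j$, as in the proof of \cref{cor:augmenting_path_nonsingular}. Its length in the original instance is $\lv(\bar{s}_j) - 2$, because $P(\bar{s}_j)$ consists of $P \setminus T$ together with the two extra original lines. By (LV3), $\lv(\bar{s}_j) = \dist(s_j)$. Combining with Step 1, the found path has length at most $\ell_{\min}(s_i,s_j)$ for this particular pair.

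\textbf{Step 3: comparison across pairs of singletons (the main obstacle).} The bound above only compares against augmenting paths that end at $s_j$. To handle an arbitrary pair $s_{i'}, s_{j'}$, I would argue via the synchronization order. If a shorter augmenting path of length $\ell' < \ell$ existed, Step 1 would give $\dist(s_{j'}) \le \ell' + 2$. Then (LV2) forces the upper vertex $\bar{s}_{j'}$ (whose $\rho$ is $\bar{q}$ at level $1$) to be labeled before $\setupper(k',\kmax')$ completes, where $2k'+1 = 1 + \dist(s_{j'})$. This labeling would occur strictly before the iteration in which the algorithm returned $P$, because the returned path corresponds to $2k+1 = 1 + \ell + 2$ and (LV2) is monotone in $\dist$. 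Labeling $\bar{s}_{j'}$ in the reduced instance is exactly the event where the original algorithm returns an augmenting path, so the algorithm would have stopped earlier, a contradiction. The delicate points are making the correspondence between the reduced and original runs precise and handling the parity and off-by-one in $\lv(\rho(\bar{s}_{j'}))$. I expect this step to need the most care.
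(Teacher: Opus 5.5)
Your proposal is correct and follows the paper's own route, which is exactly your Step~3: embed a shortest augmenting path $P'$ with singletons $s_i,s_j$ as the coherent nonsingular path $\{s_0,q,\bar q,\bar s_i\}\cup P'$ to $s_j$ in the reduced instance, then invoke \cref{prop:level_minimum} to conclude that $\bar s_j$, and hence some augmenting path in the original run, is found no later than the corresponding $\setupper$ call; this makes Step~2 redundant, and your explicit handling of $T$ via \cref{lem:nonsingular_invariance} is a harmless refinement of what the paper leaves implicit. One bookkeeping fix: the original line $\{s_j,\bar s_j\}$ also meets the path, so its length is $\ell+3$ (the paper's $\abs{P'}/2+2$) and the found path has length $\lv(\bar s_j)-3$, not $\ell+2$ and $\lv(\bar s_j)-2$; the two slips cancel, and the corrected count removes the parity clash you sensed, since $1+\ell+3$ is odd for odd $\ell$.
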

\begin{proof}
  We reduce this lemma to \cref{prop:level_minimum} using the super singleton reduction. Let $P'$ be a shortest augmenting path satisfying the conditions in \cref{def:lmp_aug}. Let $s_i$ and $s_j$ be the singletons in $P'$. Using the super singleton reduction, $\qty{s_0,q,\bar{q}, \bar{s}_i} \cup P'$ in the reduced instance is a shortest coherent nonsingular path to $s_j$, which yields $\dist(s_j) = \abs{P'} / 2 + 2$. Since \cref{prop:level_minimum} ensures that the algorithm assigns $\lv(\bar{s}_j) = \abs{P'} / 2 + 2$ in the reduced instance, the original instance detects an augmenting path no later than the corresponding moment. Therefore, the number of original lines in $P$ is at most that of $P'$, which guarantees that $P \setminus T$ is also the shortest.
\end{proof}

The following subsections prove these three claims by induction on $k$ and $\kmax$. We first verify the inductive step for $\setlower(k)$ in \cref{sec:setlower_minimality}, followed by the verification for $\setupper(k, \kmax)$ in \cref{sec:setupper_minimality}.

\subsubsection{Lower Vertices}
\label{sec:setlower_minimality}

This section verifies \cref{cl:tips,cl:non_tip_lower} for $\setlower(k)$ and establishes the inductive step for $\setlower(k)$. Throughout this section, we inductively assume that all the subroutines prior to $\setlower(k)$ preserve (LV1)--(LV3).

The following lemma splits a coherent nonsingular path into a smaller coherent nonsingular path and an edge. In terms of the intuition described earlier, this lemma guarantees the existence of a triggering vertex for the $\grow$ operation.
\begin{lemma}
  \label{lem:laplace_for_v}
  A coherent nonsingular path $P$ to $v \in V$ can be decomposed into an edge $(u, v) \in E^*$ and a set $P \setminus \qty{u, v}$, which is either a smaller coherent nonsingular path to $\bar{u}$ or the empty set.
\end{lemma}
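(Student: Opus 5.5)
We prove the statement by a Laplace expansion of $\det C^*[P]$ along the row or column indexed by $v$, followed by a check that the resulting minor is coherent and nonsingular.

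\textbf{Setting up the expansion.} Since $P$ is a nonsingular path to $v$, it consists of $s_0$, $v$, and lines, and $C^*[P]$ is a square nonsingular matrix. The vertex $v$ indexes either a row (if $v \in B^*$) or a column (if $v \notin B^*$). Expanding the determinant along that row or column gives
\[
0 \neq \det C^*[P] = \sum_{u} \pm\, C^*_{uv}\,\det C^*[P \setminus \qty{u, v}],
\]
where the sum ranges over $u \in P$ on the opposite side of the bipartition. Some term must be nonzero. Fix such a $u$. Then $C^*_{uv} \neq 0$, so $(u, v) \in E^*$, and $C^*[P \setminus \qty{u, v}]$ is nonsingular.

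\textbf{Identifying the remainder.} Set $P' \coloneqq P \setminus \qty{u, v}$. There are two cases.
\begin{itemize}
  \item If $u = s_0$, then $P'$ is a union of lines, since $v \in V$ is not paired inside $P$ in the relevant sense. A nonsingular matrix indexed by lines alone would not be a path to anything. The intended reading is presumably that $P = \qty{s_0, v}$, so that $P'$ is empty, and we must rule out other possibilities.
  \item If $u \neq s_0$, then $u$ lies on a line $\qty{u, \bar{u}} \subseteq P$, because $v$ is an endpoint and not part of a line of $P$. Removing $u$ and $v$ leaves $s_0$, $\bar{u}$, and lines. Together with the nonsingularity of $C^*[P']$, this makes $P'$ a nonsingular path to $\bar{u}$, and it is strictly smaller.
\end{itemize}

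\textbf{Ruling out the degenerate case $u = s_0$ with $P' \neq \emptyset$.} Under the super singleton reduction, $s_0$ is adjacent only to $q$, so $u = s_0$ forces $v = q$. Then $P'$ is a union of lines $Q$ with $C^*[Q]$ nonsingular. I would argue that the statement is still fine: either one shows that $P = \qty{s_0, q}$ can be taken, or one treats ``the empty set'' case as $P' = Q$ being droppable. I would check whether \cref{lem:laplace_for_v} intends $P \setminus \qty{u, v}$ to be literally empty here. If it need not be, I would first try to show that $Q$ can be discarded by choosing $P$ minimal. For the lemma as stated, though, I expect the author simply treats $v = q$ as the base case. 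This bookkeeping is the subtle point.

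\textbf{Coherence of $P'$ (the main obstacle).} Coherence requires that if $P'$ meets a blossom $H$ whose dummy line is $\qty{b, t}$, then $P'$ meets $\qty{b, t}$. Since $P$ is coherent, $P$ meets $\qty{b, t}$. The removal can break this only if $u$ or $v$ lies in $\qty{b, t}$ while the rest of the line is absent from $P'$.

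Because $v \in V$, we have $v \notin T$. The risk is therefore $u \in \qty{b, t}$. But $P$ contains whole lines, so $u \in \qty{b, t}$ forces $\bar{u} \in \qty{b, t}$, and $\bar{u}$ remains in $P'$. Hence $P'$ still meets $\qty{b, t}$.

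It remains to check that $P \cap H \neq \emptyset$ versus $P' \cap H \neq \emptyset$ causes no new obligations. This is immediate: $P' \subseteq P$, so any blossom met by $P'$ is met by $P$ and carries the same requirement. Coherence is therefore inherited, and the decomposition holds.
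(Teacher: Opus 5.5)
Your proposal is correct and follows the paper's own proof: a Laplace expansion of $\det C^*[P]$ along the row or column of $v$ yields the edge $(u,v)$ and a nonsingular minor, and coherence is inherited because $v \notin T$, while a dummy line met at $u$ is still met by $P \setminus \{u,v\}$ at $\bar{u}$. One point on your degenerate case: $u = s_0$, which forces $v = q$, cannot actually be ruled out, since $P$ may contain further alternating cycles beyond $\{s_0, q\}$. This is an imprecision in the statement that the paper's two-line proof also passes over, and it is harmless because the lemma is only invoked for $v \neq q$ (the levels $k \le 1$ are verified directly).
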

\begin{proof}
  Such a vertex $u$ is obtained via the Laplace expansion of the row or column of $v$ in $C^*[P]$. The coherence of $P \setminus \qty{u, v}$ is immediate from the coherence of $P$ and the fact that $v \notin T$.
\end{proof}
Applying this lemma to a shortest coherent nonsingular path yields the following corollary.
\begin{corollary}
  \label{cor:dist_pred_v}
  For any vertex $v \in V$ with $\dist(v) < \infty$, there exists a vertex $u$ with $(u, v) \in E^*$ such that $\dist(\bar{u}) < \dist(v)$.
\end{corollary}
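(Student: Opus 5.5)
The plan is to apply \cref{lem:laplace_for_v} to a shortest coherent nonsingular path. Let $v \in V$ with $\dist(v) < \infty$, and fix a shortest coherent nonsingular path $P$ to $v$, so that $P$ contains exactly $\dist(v)$ original lines. \cref{lem:laplace_for_v} yields an edge $(u, v) \in E^*$ such that $P \setminus \qty{u, v}$ is either empty or a coherent nonsingular path to $\bar{u}$. The vertex $u$ is a candidate for the corollary, and the remaining task is to compare lengths.

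First, I would pin down the line structure near $v$. A nonsingular path to $v$ consists of $s_0$, $v$, and lines. Since $v \in V$ is not a singleton and is not $s_0$, its mate $\bar{v}$ is not in $P$: otherwise $v$ would be part of a line and not the endpoint. Next, $u \neq s_0$, because $s_0$ must remain in $P \setminus \qty{u, v}$ for it to be a path from $s_0$. The exception is the degenerate case where the remainder is empty, which would force $P = \qty{s_0, v}$. Hence $u$ lies on a line $\qty{u, \bar{u}} \subseteq P$, and the remainder $P \setminus \qty{u, v}$ has endpoints $s_0$ and $\bar{u}$.

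For the length comparison, the remainder is missing the line $\qty{u, \bar{u}}$. If this is an original line, the length drops by one. If it is a dummy line, the length is unchanged, and this is the main obstacle. My first attempt would be to rule this out using $v \in V$ and coherence. If $u \in T$, then $v$ is adjacent to a bud or a tip. By (BT2), a tip is adjacent only inside its blossom $H$. If $P \ni u$ and $u \in \qty{b, t}$ lies in $H$, then $v \in H$ may also be forced. I would then iterate: replace $v$ by the new endpoint $\bar{u}$ and apply \cref{lem:laplace_for_v} repeatedly. Each application removes a line, so the process terminates, and it must eventually remove an original line. The difficulty is that the intermediate endpoints may lie in $T$, while \cref{lem:laplace_for_v} is stated only for endpoints in $V$. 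I would therefore check that the Laplace step itself works for any endpoint, and that only the coherence argument needs care.

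A cleaner route is to define the length as the count of original lines and observe the following. If $P \setminus \qty{u, v}$ has endpoint $\bar{u}$ with $\qty{u, \bar{u}}$ original, then $\dist(\bar{u}) \le \dist(v) - 1 < \dist(v)$, which is exactly the claim. In the dummy case, I would argue directly that $\qty{u, \bar{u}} = \qty{b, t}$ cannot occur. If $u = t$, then by (BT2) $v \in H$, and $P$ then uses $v$ as a vertex of $H$ together with the dummy line, so the remainder would need $s_0$ to reach into $H$ through $b$. If $u = b$, then by \cref{prop:neighbor_of_bt} the edge came from $g$. In the second case I would instead replace $u$ by an original-line neighbor obtained from the expansion over the full nonsingular minor. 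I expect this dummy-line case to be where most of the work lies. The degenerate empty case gives $\dist(v) = 0$, which cannot satisfy the strict requirement; it is handled by noting that $s_0$'s only neighbor is $q$, so $v = q$ and $u = s_0$ contradicts $u$ lying on a line. This case would need to be excluded or treated separately under the super singleton reduction.
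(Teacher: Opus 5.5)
Your first step matches the paper's argument: fix a shortest coherent nonsingular path $P$ to $v$ and apply \cref{lem:laplace_for_v}. The paper's proof consists of nothing more. The gap is in your length comparison. You misread the definition of length, conclude that the dummy case is an obstacle, and then try to remove it with a claim that is false.

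The length counts the original lines \emph{intersecting} the path, so the line through the non-$s_0$ endpoint is counted. This is why $P(x)\setminus\{x\}$ has length exactly $\lv(x)$ in \cref{prop:level_lower_bound}, and why $\dist(s_j)=\abs{P'}/2+2$ in the final minimality lemma. Thus $P\setminus\{u,v\}$, which still contains $\bar u$, still meets $\{u,\bar u\}$, and whether that line is original or dummy is irrelevant. The line that disappears is $\{v,\bar v\}$:
\begin{itemize}
  \item it meets $P$ at $v$;
  \item it misses $P\setminus\{u,v\}$ because $\bar v\notin P$;
  \item it is original precisely because $v\in V$, which is what that hypothesis is for.
\end{itemize}
Every other original line meeting $P\setminus\{u,v\}$ also meets $P$. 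So the length drops by exactly one, and $\dist(\bar u)\le\dist(v)-1$ follows immediately.

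The route you sketch for the dummy case cannot be completed, because the claim that $\{u,\bar u\}=\{b,t\}$ cannot occur is false. When a blossom is formed inside $\grow$, the retired vertex $g$ is an original vertex. By (BT2) and the pivoting formula, $g$ is adjacent only to the new tip $t$ right after the pivot. Hence expanding $C^*[P(\bar g)\setminus\{\bar g\}]$ along $g$ forces $u=t$, and the corollary holds there only because $\{g,\bar g\}$ is dropped.

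Your other steps have the same problem:
\begin{itemize}
  \item Neither of your sub-cases ($u=t$, $u=b$) produces a contradiction.
  \item ``Replacing $u$ by an original-line neighbour obtained from the full minor'' is not an argument.
  \item The iteration is unnecessary, and it would run into the coherence problems you anticipate at endpoints in $T$.
\end{itemize}
The same misreading appears in the degenerate case: $\{s_0,q\}$ has length $1$, not $0$. That case lies outside how the corollary is used (only with $\dist(v)=k+1\ge 3$), so it needs no separate treatment there.
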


Assuming that (LV1)--(LV3) have been maintained throughout all the preceding iterations, we show that all vertices requiring level $k$ are already labeled and thus are ready to be processed by $\setlower(k)$ at the start of the $k$-th iteration.
\begin{lemma}
  \label{prop:at_most_k}
  During the search level $k$, every vertex $v \in V^* \setminus \qty{s_0}$ with $\dist(\bar{v}) \le k$ is labeled and satisfies $\lv(v) = \dist(\bar{v})$, as long as (LV1)--(LV3) are preserved for $\qty{v, \bar{v}}$. In particular, the statement inductively holds at the beginning of the search level $k$.
\end{lemma}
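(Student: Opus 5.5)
The plan is to use the inductive hypotheses (LV1)--(LV3) for all earlier subroutines and split by the value $d \coloneqq \dist(\bar{v}) \le k$ and by the role of $v$ in its line. Fix a vertex $v \ne s_0$ with $d \le k$. First I would dispose of $d = 0$: the only coherent nonsingular path of length zero is essentially $\{s_0\}$ together with dummy lines, so $\bar v = s_0$ and $v$ is an original singleton, which is labeled at initialization with level $0$ (after the super singleton reduction, the base cases $k=0,1$ already cover these vertices). For $d \ge 1$, I would treat the line $\{v, \bar v\}$ as either an original line or a dummy line.

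\emph{Original lines.} Let $m = \min\{\dist(v), \dist(\bar v)\} \le d \le k$. By (LV1) applied inductively, the lower level of $\{v, \bar v\}$ was assigned by $\setlower(m-1)$, which has completed before level $k$ begins, since $m-1 < k$. If $v$ is the lower vertex, then (LV3) gives $\lv(v) = \dist(\bar v)$ directly. If $v$ is the upper vertex, then $\rho(v)$ is defined and labeled, with $\lv(\rho(v)) \le m-1 < d$ by \cref{cor:level_at_least_k_1}. By (LV2), $v$ is labeled during iteration $k' = (\lv(\rho(v)) + d - 1)/2$, before $\setupper(k', d)$ completes. Since $\lv(\rho(v)) < d \le k$, we get $k' < k$, so this has already happened. \cref{rem:theorem_odd} guarantees that $k'$ is an integer, and (LV3) then yields $\lv(v) = \dist(\bar v)$.

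\emph{Dummy lines.} Tips are labeled on insertion, so \cref{cl:tips} (for earlier iterations) gives $\lv(t) = \dist(\bar t)$. For a bud $b$ we have $\rho(b) = c$ and $\lv(c) = \lv(t)$, so the same (LV2) argument as for an upper original vertex applies verbatim.

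The main obstacle is checking that (LV2) really yields an iteration index strictly below $k$, i.e.\ that $\lv(\rho(v)) < \dist(\bar v)$. The hypothesis ``as long as (LV1)--(LV3) are preserved for $\{v,\bar v\}$'' lets us use them, but I must confirm that $\rho(v)$ was assigned in a strictly earlier level, which \cref{cor:level_at_least_k_1}(i) gives. I also need $\lv(\rho(v)) \neq d$. This follows from parity, since exactly one of $\rho(v)$ and $v$ lies in the base. It also follows from $\lv(\rho(v)) = m - 1$ in the original-line case, because (LV1) assigns the lower level $m$ and $\lv(\bar v) = m$ forces $\lv(\rho(v)) = m-1$ via $P(\bar v) = P(\rho(v)) v \bar v$. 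The final sentence of the lemma, about the beginning of level $k$, then follows because no subroutine of level $k$ has yet run, so the inductive preservation covers every line.
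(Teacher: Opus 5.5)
Your structure is the paper's. The lower vertex is already labeled by (LV1) or at insertion, an unlabeled upper vertex is forced by (LV2) into an iteration $k'<k$, and (LV3) gives the level. The problem is how you get $k'<k$.

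You reduce it to $\lv(\rho(v)) < \dist(\bar v)$. Given $\dist(\bar v)\le k$ this is sufficient, but it is not equivalent to $k'<k$. For an original line you do prove it, through $\lv(\rho(v)) = m-1$.

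For a bud, the argument does not carry over ``verbatim''. Dummy lines do not count toward levels, so there is no shift by one: you get $\lv(\rho(b)) = \lv(c) = \lv(t) = \dist(b)$. You would then need $\dist(b) < \dist(t)$, and nothing you cite supplies it. The fallback in your last paragraph does not close this either:
\begin{itemize}
\item \cref{cor:level_at_least_k_1}(i) gives only $\lv(\rho(v)) \le k$, not that $\rho(v)$ was assigned at a strictly earlier level;
\item parity gives only $\lv(\rho(v)) \ne \dist(\bar v)$.
\end{itemize}
These two facts together do not imply $\lv(\rho(v)) < \dist(\bar v)$. The inequality does hold for buds, but proving it means invoking \cref{cl:tips} at the iteration in which $b$ was created.

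You do not need that inequality at all. As in the paper, $\lv(\rho(v)) \le k$ by \cref{cor:level_at_least_k_1}(i), $\dist(\bar v)\le k$ by hypothesis, and their sum is odd by \cref{rem:theorem_odd}. Hence $\lv(\rho(v)) + \dist(\bar v) \le 2k-1$, i.e., $k' \le k-1$. This one line treats original upper vertices and buds uniformly. It uses exactly the ingredients you already list; you only mis-stated the target.

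Separately, your $d=0$ case is misdescribed. After the reduction $s_0$ has no mate, and the original singletons $s_i$ are labeled at level $2$, not $0$. The case is in fact empty: $s_0$ is adjacent only to $q$, so every nonsingular path from $s_0$ meets the line $\{q,\bar q\}$ and has length at least $1$. So nothing in your proof depends on that paragraph.
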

\begin{proof}
  It suffices to show that the vertex $v$ is labeled, since the level value follows from (LV3). If $\qty{v, \bar{v}}$ is an original line, its lower level must have been assigned before the completion of $\setlower(k - 1)$ by (LV1). If it is a dummy line, its lower level must have been assigned at the time of its insertion. In either case, the lower vertex is already labeled. As this shows the assertion for the case where $v$ is the lower vertex, it remains to consider the case where $v$ is an upper vertex. While we have $\lv(\rho(v)) \le k$ by \cref{cor:level_at_least_k_1} and $\dist(\bar{v}) \le k$, their sum is odd by \cref{rem:theorem_odd}, and thus cannot exceed $2k - 1$. Hence, (LV2) ensures that $v$ is labeled by an earlier execution of $\setupper$ for a smaller value of $k$, which completes the proof.
\end{proof}

Finally, we verify the correctness of $\setlower(k)$.
\begin{lemma}
  The subroutine $\setlower(k)$ preserves (LV1)--(LV3).
\end{lemma}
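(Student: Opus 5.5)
By the reduction to \cref{cl:tips,cl:non_tip_lower}, it suffices to establish these two claims for the operations executed during $\setlower(k)$. The statement concerning upper vertices is deferred to $\setupper$. The plan is to argue by induction over the sequence of $\grow$ calls inside $\setlower(k)$. At the start of the $k$-th iteration, \cref{prop:at_most_k} tells us that every vertex $u$ with $\dist(\bar{u}) \le k$ is already labeled and satisfies $\lv(u) = \dist(\bar{u})$.

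For \cref{cl:non_tip_lower}, fix an original line $\qty{v, \bar{v}}$ with $\min\qty{\dist(v), \dist(\bar{v})} = k + 1$, and say $\dist(v) = k+1$. By \cref{cor:dist_pred_v} there is a vertex $u$ with $(u, v) \in E^*$ and $\dist(\bar{u}) \le k$. We would like $\dist(\bar{u}) = k$, so that $\bar{u}$ is labeled with $\lv(\bar{u}) = k$ and hence lies in $\vl[k]$, unless it is a tip or retired. The lower-bound direction needs care. If $\dist(\bar{u}) < k$, then $\bar{u}$ has level less than $k$, so $\grow(\bar{u})$ was executed earlier. By \cref{lem:effectively_all_executed}, the lower level of $\qty{v, \bar{v}}$ would then already have been assigned, at a level at most $k$. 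Combined with \cref{prop:level_lower_bound}, this contradicts $\dist \ge k+1$ for both $v$ and $\bar{v}$. So the line either is labeled already by some $\grow$ in $\setlower(k)$, or $\bar{u} \in \vl[k]$ is eventually processed and $\grow(\bar{u})$ finds the line. In either case \cref{prop:level_values_bound}(i) gives level $k+1$.

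The cases where $\bar{u}$ is a tip or retired have to be handled separately. A tip is adjacent only to vertices inside its blossom by (BT2), and retired vertices are likewise confined. I would reduce these cases with \cref{prop:neighbor_of_bt}, replacing the tip by $c$, which was adjacent to the same vertices before pivoting. Alternatively, one can observe that the line is then already inside a blossom and hence labeled. Possible edge changes caused by blossoms created inside $\grow$ are controlled by \cref{cl:new_edge_rho_c} and Case~1 of \cref{lem:search_order_grow}.

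For the second part of \cref{cl:non_tip_lower}, the mate $\bar{v}$ becomes the upper vertex with $\rho(\bar{v}) = w$, the vertex that grew it, and $\lv(w) = k$. (LV2) asks that $\bar{v}$ be labeled by the end of $\setupper(k', \dist(v))$ with $2k'+1 = k + \dist(v)$. Since $\dist(v) \ge k+1$ and the sum is odd by \cref{rem:theorem_odd}, we get $k' \ge k$. Therefore the deadline lies in the present iteration or a later one, and nothing is violated at the moment of designation.

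For \cref{cl:tips}, consider dummy lines inserted within $\grow$ (Line~\ref{line:degblm}). The tip satisfies $\lv(t) = \lv(c) = \dist(\bar{c})$ by induction. I would then show $\dist(b) = \dist(\bar{c})$: the upper bound comes from taking $P(c)$ minus $c$ together with $\qty{c, b}$ adjusted, and the lower bound comes from the Laplace expansion \cref{lem:laplace_for_v} at $b$, whose neighbors were neighbors of $g$. For the bud, the (LV2) deadline is computed as above from $\rho(b) = c$. The main obstacle I expect is the tip/retired-trigger case together with the invariance of edges under the pivoting done inside $\grow$.
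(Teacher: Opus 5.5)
\textbf{There is a genuine gap in your treatment of \cref{cl:tips} for the bud.} You say the (LV2) deadline for $b$ is ``computed as above from $\rho(b)=c$''. But the computation above used the hypothesis of \cref{cl:non_tip_lower} that both endpoints of the line have $\dist\ge k+1$, and no such bound is available for a dummy line.

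Use the notation in which $\grow(v)$ finds $\qty{u,\bar u}$ with $(\bar u,v)\in E^*$, so that $c=v$, $g=u$ and $\lv(\rho(b))=\lv(v)=k$. Then $b$ does not violate (LV2) at creation exactly when $\dist(t)\ge k+1$, and that inequality is the real content of the claim. The paper proves it by contradiction.
\begin{itemize}
  \item Suppose a shortest coherent nonsingular path $P$ to $t$ has length at most $k$.
  \item By (BT2), $t$ has neighbours only in $H=\qty{u,\bar u,b,t}$, so $P$ contains $\qty{u,\bar u}$.
  \item The retired vertex $u$ is adjacent only to $t$. Expanding $\det C^*[P]$ first at $u$ and then at $\bar u$ yields an edge $(\bar u,w)$ and a coherent nonsingular path $P\setminus\qty{w,u,\bar u,t}$ to $\bar w$ of length less than $k$.
  \item \cref{prop:at_most_k} then gives $\lv(w)<k$.
  \item However, (SO1) for $x=u$, now with $\rho(u)=t$, forces $w\ge t$ and hence $\lv(w)\ge\lv(t)=k$, a contradiction.
\end{itemize}
Your proposal contains no substitute for this step.

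\textbf{Your tip argument is also incomplete in a way that matters.} \cref{lem:laplace_for_v} is stated for endpoints in $V$, and its coherence conclusion uses precisely that the endpoint is not in $T$. When you expand at $b\in T$, the residual $P'=P\setminus\qty{w,b}$ could meet $H$ through $\qty{u,\bar u}$ while containing neither $b$ nor $t$. Then $P'$ need not be coherent and need not witness $\dist(\bar w)<k$. The paper excludes this by noting that $u$ is adjacent only to $t\notin P'$, so $C^*[P']$ would have a zero row or column. Only then does $\dist(\bar w)<k$ follow. After that, \cref{prop:at_most_k} gives $\lv(w)<k$ (using that $w$ is not a tip created in this $\setlower(k)$), which contradicts $w\ge\rho(b)=v$ from (SO1). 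Your remark that neighbours of $b$ were neighbours of $g$ can supply $w\ge v$, but you still need the coherence step and the closing contradiction.

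\textbf{In your argument for \cref{cl:non_tip_lower} you have swapped $u$ and $\bar u$.} Applied to $\dist(\bar u)\le k$, \cref{prop:at_most_k} says that $u$ is labeled with $\lv(u)=\dist(\bar u)\le k$. It is $\grow(u)$, which scans the edge $(u,v)$, that finds the line; $\grow(\bar u)$ need not see $v$ at all. With that fix, you need neither $\dist(\bar u)=k$ nor a separate case for levels below $k$. The tip or retired trigger case is also vacuous: by (BT2) such a trigger would place $\qty{v,\bar v}$ inside a blossom that existed before $\setlower(k)$. That contradicts $\min\qty{\dist(v),\dist(\bar v)}=k+1$ via \cref{prop:level_lower_bound}.
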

\begin{proof}
  We first verify \cref{cl:non_tip_lower} for the non-tip lower vertices that must be labeled during this process. Let $\qty{u, \bar{u}} \subseteq V$ be any original line satisfying $\dist(u) = k + 1$ and $\dist(\bar{u}) \ge k + 1$. Applying \cref{cor:dist_pred_v} at the beginning of $\setlower(k)$ yields a vertex $v$ such that $(u, v) \in E^*$ and $\dist(\bar{v}) \le k$. Thus, \cref{prop:at_most_k} ensures that $v$ is labeled and its level is at most $k$. Since the removal of the edge $(u, v)$ occurs only after $\qty{u, \bar{u}}$ has been found, it follows that the line $\qty{u, \bar{u}}$ is found no later than $\grow(v)$ and thus before the completion of $\setlower(k)$.
  Also, the upper vertex does not violate (LV2) at the time it is examined by $\grow$, since both $\dist(u)$ and $\dist(\bar{u})$ are at least $k + 1$.

  Next, we verify \cref{cl:tips} for the dummy lines created during this process. Suppose $\grow(v)$ during $\setlower(k)$ finds an original line $\qty{u, \bar{u}}$ and creates a blossom during this operation. We first prove (LV2) for the bud $b$. If $\dist(t) \ge k + 1$, we have $\lv(\rho(b)) + \dist(t) \ge 2k + 1$ and thus $b$ does not violate (LV2). Assume $\dist(t) \le k$ for contradiction. Let $P$ be a shortest coherent nonsingular path to $t$. Following the proof of \cref{lem:laplace_for_v}, we apply the Laplace expansion to $C^*[P]$ twice, first at $u$ and then at $\bar{u}$. Since $u$ is only adjacent to $t$, this yields the edges $(t, u), (\bar{u}, w)$ for some vertex $w$ and a nonsingular path $P \setminus \qty{w, u, \bar{u}, t}$ to $\bar{w}$, which is coherent and has a length of less than $k$. Since $w$ cannot be a tip created during $\setlower(k)$, \cref{prop:at_most_k} shows that $\lv(w) < k$. However, the existence of an edge $(\bar{u}, w)$ implies $w \ge \rho(u) = t$ by \cref{lem:search_order_grow}, which contradicts $\lv(w) < k$ and $\lv(t) = k$.
  
  Now, it remains to show that the tip $t$ satisfies (LV3). Given $\lv(t) = \lv(v) = k$, we need to prove $\dist(b) \ge k$. Assume $\dist(b) < k$ for contradiction. Let $P$ be a shortest coherent nonsingular path to $b$. The Laplace expansion decomposes $P$ into an edge $(w, b) \in E^*$ and a nonsingular path $P' \coloneqq P \setminus \qty{w, b}$ to $\bar{w}$. Furthermore, $P'$ cannot contain $\qty{u, \bar{u}}$; otherwise, since $u$ is only adjacent to $t \notin P'$, the corresponding row/column within $C^*[P']$ would be zero, contradicting its nonsingularity. Hence, $P'$ is coherent and thus $\dist(\bar{w}) < k$. Since $w$ cannot be a tip created during $\setlower(k)$, \cref{prop:at_most_k} ensures $\lv(w) = \dist(\bar{w})$. However, the existence of an edge $(w, b)$ implies $w \ge \rho(b) = v$ by \cref{lem:search_order_grow}, which contradicts $\lv(w) < k$ and $\lv(v) = k$.
\end{proof}

\subsubsection{Upper Vertices}
\label{sec:setupper_minimality}

This section proves that $\setupper(k, \kmax)$ satisfies \cref{cl:tips,cl:upper} under the inductive assumption that all preceding subroutines preserve (LV1)--(LV3). Roughly speaking, the following lemma utilizes a variant of the technique in \cref{sec:complete_approach} to decompose a coherent nonsingular path to $v$ into an edge $(y, z)$ entering $\st(v)$ and two smaller coherent nonsingular paths $Y$ and $Z$, where $Y \setminus T$ and $Z \setminus T$ are disjoint subsets of $P \setminus T$ (see \cref{fig:yz_incoming}). This lemma is especially crucial when $v$ is a lower vertex, constituting the key step to guarantee the existence of an incoming edge on which the $\blossom$ operation labels $\bar{v}$ (see \cref{rem:y_found_incoming}). The other case where $v$ is an upper vertex is used for verifying \cref{cl:tips}.

\begin{lemma}
  \label{lem:yz_decomp}
  Let $P$ be a coherent nonsingular path from $s_0$ to $v \in V^* \setminus \qty{s_0, q, \bar{q}}$. Suppose the upper vertex of $\qty{v, \bar{v}}$, denoted by $v^+$, is unlabeled. Then, there exist an edge $(y, z)$ satisfying $y \in V^* \setminus \st(v^+)$ and $z \in \st(v^+)$, and vertex sets $Y$ and $Z$ such that
  \begin{enumerate*}[label=(\roman*), nosep]
    \item $Z \subseteq \st(v^+)$;
    \item $Y \setminus T$ and $Z \setminus T$ are disjoint and are both contained in $P \setminus T$;
    \item $Y$ is a coherent nonsingular path from $s_0$ to $\bar{y}$; and
    \item $Z$ is a coherent nonsingular path from $v$ to $\bar{z}$, or $Z = \emptyset$ and $v = z$.
  \end{enumerate*}
\end{lemma}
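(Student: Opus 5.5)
The plan is to adapt the Pfaffian grouping argument of \cref{lem:matroid_parity_number_of_augmenting_paths_parity_base} to the nonsingular submatrix $C^*[P]$. Write $W \coloneqq \st(v^+)$ and start from the bipartite expansion
\[
\det C^*[P] = \sum_{M} \pm \prod_{e \in M} C^*_e ,
\]
where $M$ ranges over perfect matchings of $G^*[P]$. For each such $M$, overlay $M$ with the lines contained in $P$. Since $P$ consists of $s_0$, $v$, and lines, the resulting multigraph is a single alternating path from $s_0$ to $v$ together with alternating cycles. Group the terms by the induced partition of $P$ into components, and pick a minimal partition $\calp$ under refinement with nonzero group sum. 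As in the proof of \cref{lem:matroid_parity_number_of_augmenting_paths_parity_base}, the sum over all refinements of $\calp$ equals the determinant of the block-diagonal restriction, so every block $C^*[X]$ with $X \in \calp$ is nonsingular. In particular the path component $Q \in \calp$ from $s_0$ to $v$ satisfies that $C^*[Q]$ is nonsingular. Fix a matching $M$ in $\Gamma^{-1}(\calp)$ with nonzero term.

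Next, walk along $Q$ backwards from $v$. Note that $v \in W$ and $s_0 \notin W$. Let $(y,z) \in M$ be the last matching edge on this walk (equivalently, the first from $v$'s side) that leaves $W$, with $z \in W$ and $y \notin W$. Because lines of $V(\st)$ lie entirely inside or entirely outside $W$ (each subtree $\st(x)$ for an upper vertex $x$ contains $\bar x$ together with $x$), such a crossing must be a matching edge rather than a line; one needs to check lines not on $\st$ as well, which is why I would define membership in $W$ via $\st$ and note that unlabeled-tree-free lines lie outside. Cutting $Q$ at $(y,z)$ gives
\begin{itemize}
\item $Y$, the segment from $s_0$ to $\bar y$;
\item $Z$, the segment from $v$ to $\bar z$, or $Z=\emptyset$ when $z=v$.
\end{itemize}
By the choice of the last crossing, $Z \subseteq W$. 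The matching $M$ restricted to $Y$ and to $Z$ is perfect, so the terms are nonzero. To upgrade to nonsingularity of $C^*[Y]$ and $C^*[Z]$, I would refine the grouping: regroup terms of $\det C^*[Q]$ according to the finer partition $\{Y, Z\}$ plus the rest, and use minimality once more. The key fact needed is that any matching of $G^*[Q]$ that respects the partition $\{Y,\{y,z\},Z\}$ contributes $C^*_{yz}\det C^*[Y]\det C^*[Z]$. If $\det C^*[Q]$ fails to factor this way, then I would instead choose, from the start, a minimal partition of $P$ refined by additionally cutting at an edge crossing $\partial W$. Concretely, I would group by (partition, crossing edge) and take a lexicographically minimal nonzero class, so that blocks $Y$, $Z$ and all cycles are individually nonsingular. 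The disjointness and containment in (ii) then follow from $Y,Z \subseteq Q \subseteq P$, modulo dummy vertices.

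The main obstacle is coherence, and hence the phrase ``modulo $T$'' in (ii). Cutting may separate a dummy line $\{b,t\}$ from its blossom $H$: for example $Y$ may meet $H$ but lose $\{b,t\}$, or $b$ and $t$ may fall on different sides of the cut. To handle this I would use (BT2) and \cref{lem:nonsingular_invariance}. If $Y$ meets $H$ but misses $\{b,t\}$, add $\{b,t\}$ to $Y$. Where $Y$ is disjoint from the outermost such blossoms, adjoining dummy lines keeps nonsingularity by the determinant factorization $\det C^*[Y\cup\{b,t\}] = C^*_{bt}\det C^*[Y]$. Otherwise I would pass to $C$ via pivoting around $T$, as in (ii)$\Leftrightarrow$(iii) of \cref{lem:nonsingular_invariance}, and work with $Y\setminus T$, $Z\setminus T$ there, reinserting all relevant dummy lines afterward. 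This is exactly why only $Y\setminus T$ and $Z\setminus T$ are claimed to be disjoint.

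I expect the most delicate part to be verifying that $Z$ still lies inside $\st(v^+)$ after adding dummy lines. For this I would use that blossoms meeting $\st(v^+)$ while $v^+$ is unlabeled are contained in it, by the proof of \cref{prop:incoming_edge_label}.
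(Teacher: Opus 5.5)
Your block decomposition is sound and close to the paper's. Grouping terms by a refined partition that isolates the part of the alternating path from $v$ up to its first exit from $\st(v^+)$ is the paper's map $\tilde\Gamma$. The paper cuts at the line $\{y,\bar y\}$ and then finds $z$ by a Laplace expansion at $y$. Cutting directly at the matching edge $(y,z)$ also works, with two provisos. You must take $\calp$ minimal in the refinement order, not ``lexicographically''. You must also check that every matching respecting the blocks of $\calp$ has refined partition below $\calp$; this holds because the path from $v$ stays in $Z\subseteq\st(v^+)$ until it reaches $z$.

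The genuine gap is coherence, and neither of your repairs works. The minimal partition can easily separate a dummy line from its blossom: for example, it can isolate $\{b,t\}$ as a length-two cycle block via the matching edge $(b,t)$, leaving $Y$ or $Z$ meeting $H$ without $\{b,t\}$.
\begin{itemize}
\item Adjoining $\{b,t\}$ afterwards gives no control. The identity $\det C^*[Y\cup\{b,t\}]=C^*_{bt}\det C^*[Y]$ requires $t$ to have no neighbour in $Y$. (BT2) guarantees this only when $Y\cap H=\emptyset$, which is precisely the case where coherence was never violated.
\item Passing to $C$ is circular. Coherence is exactly the hypothesis used for (i)$\Leftrightarrow$(ii) in \cref{lem:nonsingular_invariance}, so for an incoherent $Y$ the nonsingularity of $C^*[Y]$ says nothing about $C^*[Y\cup T]$ or $C[Y\setminus T]$.
\end{itemize}

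The missing idea is to make coherence automatic before choosing $\calp$.
\begin{itemize}
\item Assume without loss of generality that $P\supseteq T\setminus\{v,\bar v\}$. Since $v^+$ is unlabeled, every blossom lies entirely inside or entirely outside $\st(v^+)$. Set $T_{\mathrm{in}}=T\cap\st(v^+)$ and $T_{\mathrm{out}}=T\setminus T_{\mathrm{in}}$.
\item By (BT2) and laminarity, $G^*[T_{\mathrm{in}}]$ has a unique perfect matching, so $C^*[T_{\mathrm{in}}]$ is nonsingular and you can pivot around $T_{\mathrm{in}}$. The pivoted matrix is nonsingular on $\widehat P=P\symdiff T_{\mathrm{in}}$, which contains no dummy vertex inside $\st(v^+)$ except possibly from $\{v,\bar v\}$.
\item Before isolating the $v$-side segment, join all components of $G_\sigma$ that meet $\{s_0\}\cup T_{\mathrm{out}}$ into a single block. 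Then every outer dummy line ends up on the $s_0$-side.
\item Pivoting back around $T_{\mathrm{in}}$ adds all of $T_{\mathrm{in}}$ to \emph{both} pieces. This gives $Y\supseteq T_{\mathrm{in}}$, $Y\cup\{y\}\supseteq T_{\mathrm{out}}$, and $Z\cup\{\bar v,z\}\supseteq T_{\mathrm{in}}$. Also $Z\subseteq\st(v^+)$ is disjoint from all outer blossoms.
\end{itemize}
Hence $Y$ and $Z$ are coherent by construction. This shared $T_{\mathrm{in}}$ is the actual reason (ii) is stated only modulo $T$.
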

\begin{proof}
    \newcommand{\tin}{T_{\mathrm{in}}}
    \newcommand{\tout}{T_{\mathrm{out}}}
    Since $v^+$ is unlabeled, every blossom is either disjoint from or contained within $\st(v^+)$. Let $\tin \coloneqq T \cap \st(v^+)$ be the union of dummy lines within the subtree. We now prove that $G^*[\tin]$ has a unique perfect matching as follows. Since the dummy lines in $\tin$ individually serve as edges of a perfect matching, we only need to verify its uniqueness. Choose a minimal blossom $H$ among the blossoms that introduce the dummy lines in $\tin$, and let $\qty{b, t}$ be the dummy line introduced by $H$. Since $t$ is only adjacent to vertices within $H$, it must be matched with $b$ in any perfect matching. Hence, an inductive argument from smaller to larger blossoms along the laminar structure establishes the uniqueness of a perfect matching. Since we have shown that $G^*[\tin]$ admits a unique perfect matching, it immediately follows that $C^*[\tin]$ is nonsingular.

    As in the former part of the proof of \cref{lem:nonsingular_invariance}, an induction from larger to smaller blossoms proves that $P \cup (T \setminus \qty{v, \bar{v}})$ is a coherent nonsingular path. Hence, we assume $P \supseteq T \setminus \qty{v, \bar{v}}$ without loss of generality. We now consider pivoting around $\tin$ on $C^*$ to obtain $\widehat{C}^*$. We similarly represent the resulting objects by adding a hat, such as $\widehat{G}^*$ and $\widehat{B}^*$. Then, the nonsingularity of $C^*[P]$ leads to that of $\widehat{C}^*[\widehat{P}]$ for $\widehat{P} \coloneqq P \symdiff \tin$. Note that $\widehat{P} \cap (\tin \setminus \qty{v, \bar{v}}) = \emptyset$ and $\widehat{P} \supseteq T \setminus \tin \eqqcolon \tout$. Furthermore, the endpoint of $\widehat{P}$ other than $s_0$, denoted by $\hat{v}$, remains $v$ if $v \notin T$ and changes to $\bar{v}$ if $v \in T$.
    
    In what follows, we apply a variant of the decomposition technique in \cref{sec:complete_approach} to the nonsingular matrix $\widehat{C}^*[\widehat{P}]$. As its determinant is nonzero, we have
    \begin{equation}
        \label{eq:nonsingular_path_det_nonzero_x}
        \sum_{\sigma \in \symgrp_{m}} \qty(\sgn \sigma \prod_{i=1}^{m} \qty(\widehat{C}^*[\widehat{P}])_{i, \sigma(i)}) \neq 0,
    \end{equation}
    where $m$ denotes the order of $\widehat{C}^*[\widehat{P}]$. Let $f(\sigma)$ denote the corresponding term in Eq.~\eqref{eq:nonsingular_path_det_nonzero_x}. For each $\sigma$ with $f(\sigma) \neq 0$, the corresponding edge set $M_\sigma$ constitutes a perfect matching on $\widehat{G}^*[\widehat{P}]$. Likewise, the set $\widehat{L}$ of lines contained within $\widehat{P}$ forms a perfect matching on $\widehat{P} \setminus \qty{s_0, \hat{v}}$. Consequently, the graph $G_{\sigma} \coloneqq (\widehat{P}, M_\sigma \uplus \widehat{L})$, which may contain parallel edges, consists of an alternating path between $s_0$ and $\hat{v}$ and a collection of alternating cycles (possibly of length two).

  Recall that the collection of all partitions of $\widehat{P}$, denoted by $\Pi(\widehat{P})$, forms a lattice $(\Pi(\widehat{P}), \le)$ under the refinement order $\le$. Let $\Gamma \colon \symgrp_{m} \to \Pi(\widehat{P})$ be a map such that $\Gamma(\sigma)$ is the collection of all the connected components of $G_\sigma$. We then merge the components that contain any element of $\qty{s_0} \cup \tout$ into a single component. In the lattice formulation, this is equivalent to taking the join $\Gamma(\sigma) \vee \calr$ with the partition $\calr \in \Pi(\widehat{P})$ defined as 
  \begin{equation}
      \calr \coloneqq \qty{\qty{s_0} \cup \tout} \cup \qty{ \qty{x} \relmiddle| x \in \widehat{P} \setminus (\qty{s_0} \cup \tout)},
  \end{equation}
  i.e., the partition containing $\qty{s_0} \cup \tout$ as a component and the remaining elements as singletons.
  
  Then, as a refinement of $\Gamma(\sigma) \vee \calr$, we define $\tilde{\Gamma}(\sigma)$ by isolating the subpath from $\hat{v}$ to $y$ from the rest of its component, where $y$ is the first vertex outside $\st(v^+)$ encountered when traversing the path from $\hat{v}$ on $G_\sigma$ (see \cref{fig:gamma_tilde}). Note that the other subpath between $s_0$ and $\bar{y}$ is nonempty because $v \notin \qty{s_0, q, \bar{q}}$.

  Grouping the terms by $\tilde{\Gamma}$, we obtain
  \begin{equation}
    \det \widehat{C}^*[\widehat{P}] = \sum_{\sigma \in \symgrp_m} f(\sigma) = \sum_{\calp \in \Pi(\widehat{P})} \qty(\sum_{\sigma \in \tilde{\Gamma}^{-1}(\calp)} f(\sigma)).
  \end{equation}
  Since this is nonzero, there exists a partition $\calp \in \Pi(\widehat{P})$ such that $\sum_{\sigma \in \tilde{\Gamma}^{-1}(\calp)} f(\sigma) \neq 0$. Here, we choose $\calp$ to be minimal with respect to the refinement order $\le$; that is, $\sum_{\sigma \in \tilde{\Gamma}^{-1}(\calq)} f(\sigma) = 0$ for every $\calq \lneq \calp$. Summing up these terms for all the refinements of $\calp$ yields
  \begin{equation}
    \label{eq:tilde_refinement_sum_x}
    \sum_{\calq \in \Pi(\widehat{P}) \colon \calq \le \calp} \qty(\sum_{\sigma \in \tilde{\Gamma}^{-1}(\calq)} f(\sigma)) \neq 0.
  \end{equation}

  We now prove \cref{cl:gamma_tilde_x}, which ensures that Eq.~\eqref{eq:tilde_refinement_sum_x} equals the determinant of a block-diagonal matrix $\widehat{C}_\calp^*[\widehat{P}]$ with each block being the square submatrix $\widehat{C}^*[X]$ for $X \in \calp$.
  This confirms the nonsingularity of each block, thereby decomposing $\widehat{P}$ into smaller nonsingular components. Specifically, since there exists $\sigma \in \tilde{\Gamma}^{-1}(\calp)$ with $f(\sigma) \neq 0$, the partition $\calp$ consists of an alternating path $P_1$ from $\hat{v}$ to $y$, the union $P_2$ of an alternating path from $s_0$ to $\bar{y}$ and the alternating cycles containing dummy lines, and the remaining alternating cycles without dummy lines. Here, $P_1$ satisfies $P_1 \setminus \qty{y} \subseteq \st(v^+)$ by the choice of $y$, which further implies that $P_2 \cup \qty{y} \supseteq \tout$. These nonsingular blocks $\widehat{C}^*[P_1]$ and $\widehat{C}^*[P_2]$ are subsequently used to construct the desired nonsingular paths $Y$ and $Z$.
  \begin{claim}
    \label{cl:gamma_tilde_x}
    A permutation $\sigma \in \symgrp_m$ is included in the above summation, i.e, $\tilde{\Gamma}(\sigma) \le \calp$, if and only if no edge of $M_\sigma$ bridges two different components of $\calp$.
  \end{claim}
  \begin{proof}
    First, suppose that an edge of $M_\sigma$ bridges two different components of $\calp$. Then, some component of $\tilde{\Gamma}(\sigma)$ contains this edge by definition, and hence it cannot be a refinement of $\calp$. Next, suppose that no edge of $M_\sigma$ bridges two different components of $\calp$. Using $\calp' \coloneqq \calp \setminus \qty{P_1, P_2} \cup \qty{P_1 \cup P_2}$, we can easily verify that $\Gamma(\sigma) \le \calp'$. Given $\calr \le \calp'$, we have $\Gamma(\sigma) \vee \calr \le \calp'$. Taking the meet with $\calp_1 \coloneqq \qty{P_1, \widehat{P} \setminus P_1}$ on both sides yields $(\Gamma(\sigma) \vee \calr) \wedge \calp_1 \le \calp' \wedge \calp_1$, where the right side is obviously $\calp$. Hence, it suffices to show that $(\Gamma(\sigma) \vee \calr) \wedge \calp_1 = \tilde{\Gamma}(\sigma)$. Since no edge of $M_\sigma$ bridges $P_1$ and another component, $\qty{y, \bar{y}}$ is the only edge in $G_\sigma$ that bridges $P_1$ and another component. Hence, when traversing the unique path of $\Gamma(\sigma)$ from $\hat{v}$, the first vertex outside $\st(v^+)$ is $y$ and all preceding vertices are contained in $P_1$. Since no other edges of $G_\sigma$ cross $P_1$, we obtain $\tilde{\Gamma}(\sigma) = (\Gamma(\sigma) \vee \calr) \wedge \calp_1$, which finishes the proof of the claim.
  \end{proof}

  Having shown the nonsingularity of $\widehat{C}^*[P_1]$ and $\widehat{C}^*[P_2]$, we now translate these properties to those within $C^*$ by pivoting around $\tin$ again. Consequently, $P_1 \symdiff \tin$ and $P_2 \symdiff \tin$ are nonsingular paths with respect to $C^*$, where the endpoints of the former are now $v$ and $y$ while those of the latter remain $s_0$ and $\bar{y}$.
  
  We now define $Y \coloneqq P_2 \symdiff \tin$ and show that this is coherent. Observe that $P_1$ and $P_2$ are included in $\hat{P}$ and thus are disjoint from $\tin \setminus \qty{v, \bar{v}}$. Since $P_2$ does not contain either of $\qty{v, \bar{v}}$, we have $Y = P_2 \symdiff \tin = P_2 \cup \tin \supseteq \tin$. Given $P_2 \cup \qty{y} \supseteq \tout$, it also follows that $Y \cup \qty{y} \supseteq \tout$. Consequently, $Y$ is the desired coherent nonsingular path between $s_0$ and $\bar{y}$. 

  To obtain $Z$, we apply the Laplace expansion to the row/column of $y$ within $C^*[P_1 \symdiff \tin]$. As the case $v = z$ can be handled by setting $Z = \emptyset$, assume $v \neq z$. Then, the expansion yields an edge $(y, z)$ satisfying $y \in V^* \setminus \st(v^+)$ and $z \in \st(v^+)$, and a nonsingular path $Z$ between $v$ and $\bar{z}$. Since $P_1 \setminus \qty{y} \subseteq \st(v^+)$, we have $Z \subseteq \st(v^+)$. Moreover, since we can prove $Z \cup \qty{\bar{v}, z} \supseteq \tin$, the path is coherent. This completes the proof of the lemma.
\end{proof}

\begin{figure}
  \centering
  \begin{minipage}[b]{0.47\columnwidth}
    \includegraphics[width=\columnwidth]{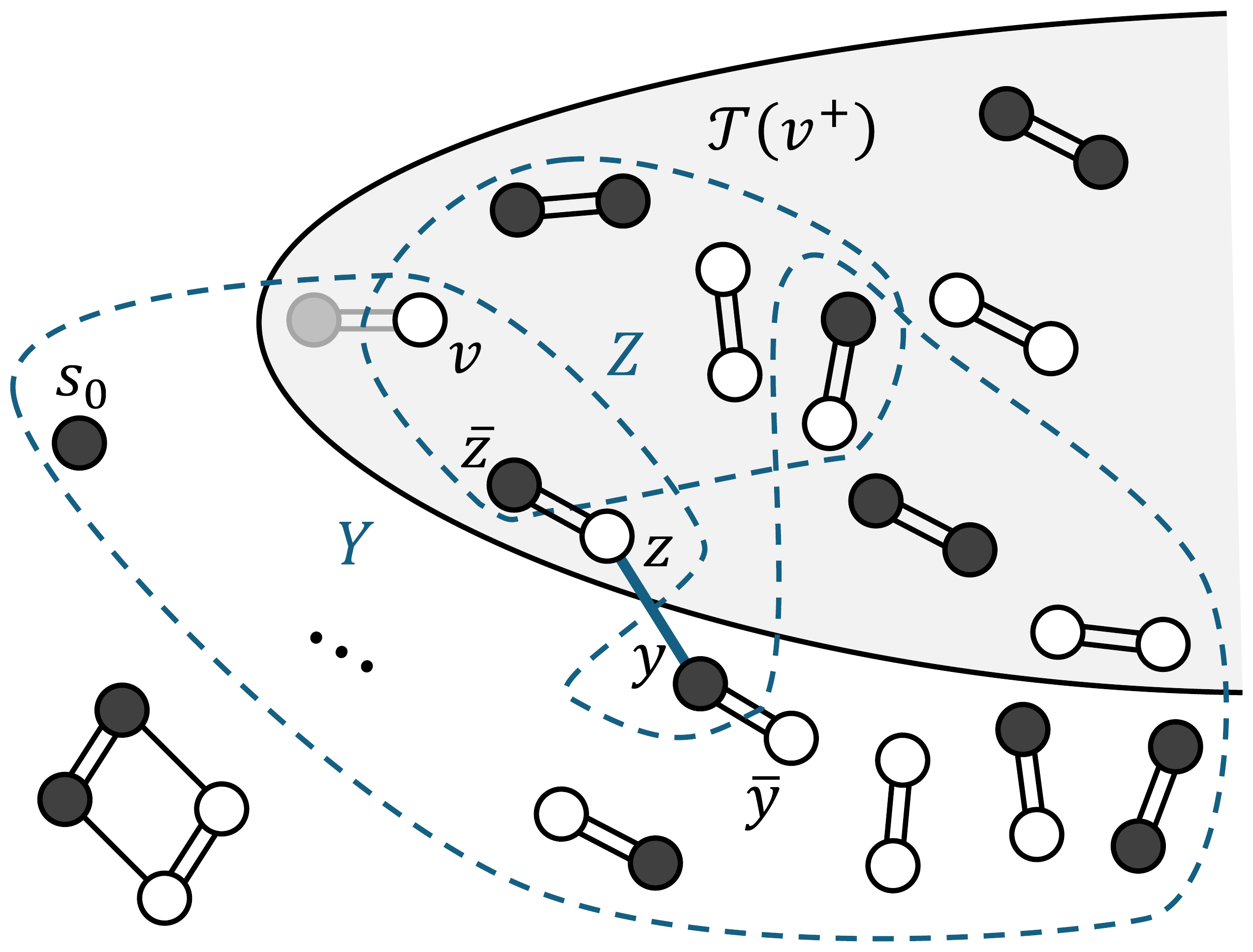}
    \subcaption{Example of $Y$, $Z$, and $(y, z)$.}
    \label{fig:yz_incoming}
  \end{minipage}
  \hfill
  \begin{minipage}[b]{0.47\columnwidth}
    \includegraphics[width=\columnwidth]{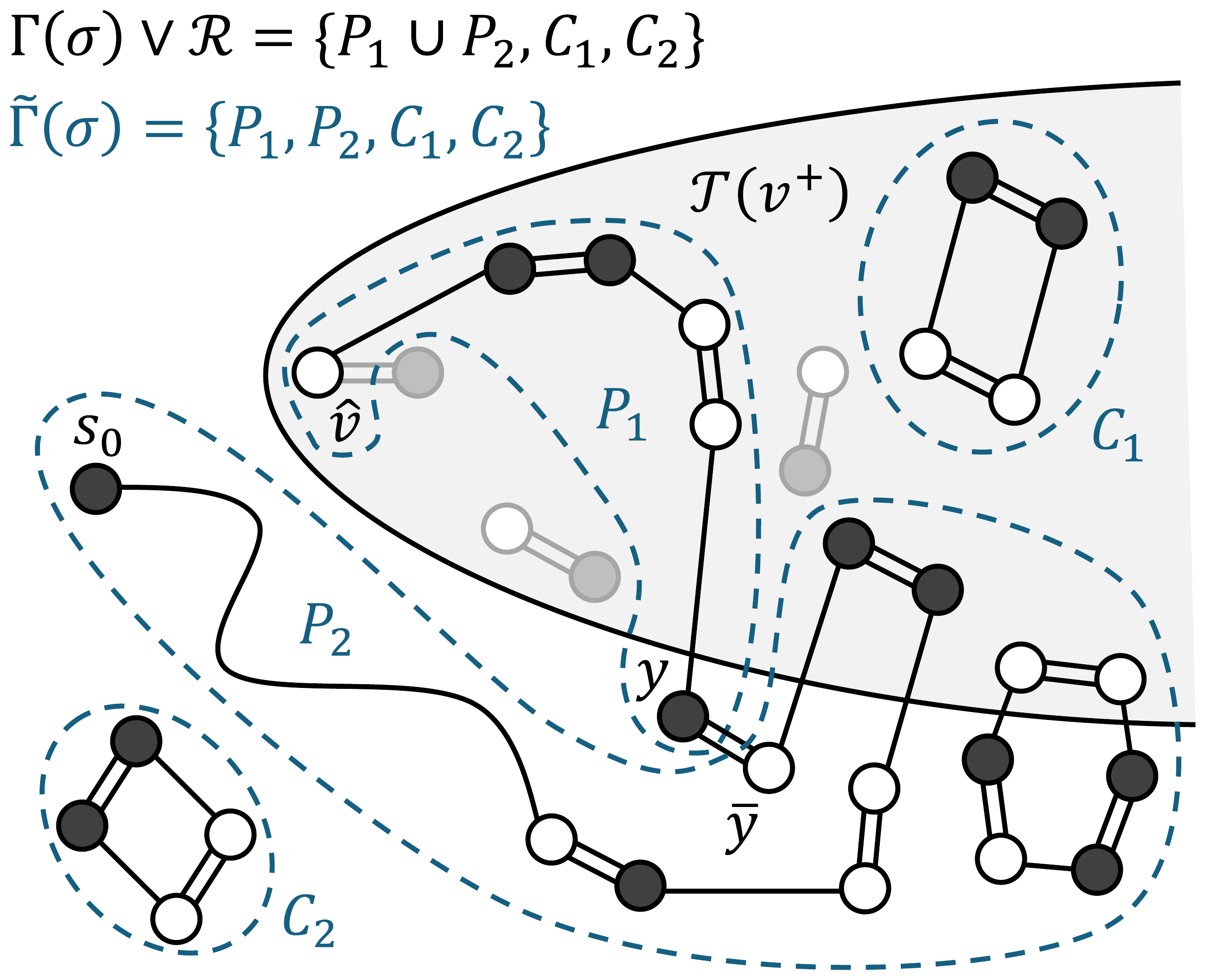}
    \subcaption{Illustration of $\Gamma(\sigma) \vee \calr$ and $\tilde{\Gamma}(\sigma)$.}
    \label{fig:gamma_tilde}
  \end{minipage}
  \caption{Illustration of \cref{lem:yz_decomp} and its proof. Double bonds represent lines, while single bonds in \subref{fig:gamma_tilde} represent $M_\sigma$. \subref{fig:yz_incoming} Black and white vertices represent those in $B^*$ and $V^* \setminus B^*$, with vertices outside $P$ grayed out. \subref{fig:gamma_tilde} Black and white vertices represent those in $\widehat{B}^*$ and $V^* \setminus \widehat{B}^*$, with vertices outside $\widehat{P}$ grayed out.}
\end{figure}

\begin{remark}
    \label{rem:y_found_incoming}
    Suppose \cref{lem:yz_decomp} is applied to a lower vertex $v$. If $y \in V(\st)$, the edge $(y, z)$ is an incoming edge of $\st(v)$. This is because $Z$ cannot contain $\bar{v}$ and thus $z \neq \bar{v}$, which also rules out the possibility that $(y, z) \in E(\st)$.
\end{remark}

Since \cref{lem:yz_decomp} for a lower vertex $v$ provides an edge $(y, z)$, which is an incoming edge of $\st(v)$ under the assumption that $y \in V(\st)$, it remains to evaluate the levels of these endpoints via $\dist(\bar{y})$ and $\dist(\bar{z})$. These level values are also crucial to ensure $y \in V(\st)$.

However, the above lemma only provides a coherent nonsingular path between $v$ and $\bar{z}$, whereas we need one between $s_0$ and $\bar{z}$ to evaluate $\dist(\bar{z})$. The following lemma addresses this gap by concatenating the coherent nonsingular path with a search path.
\begin{lemma}
  \label{lem:search_path_extension}
  For any unlabeled upper vertex $v$ and a coherent nonsingular path $Z$ from $\bar{v}$ to $\bar{z} \in \st(\bar{v})$ with $Z \subseteq \st(\bar{v})$, the set $Z' \coloneqq P(\rho(v)) \cup \qty{v} \cup Z$ is a coherent nonsingular path from $s_0$ to $\bar{z}$.
\end{lemma}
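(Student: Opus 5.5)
The set $Z'$ is the disjoint union of $P(\rho(v))\setminus\{\rho(v)\}$, the pair $\{\rho(v),v\}$ and $Z$. I will show $C^*[Z']$ is nonsingular by exhibiting a block-triangular structure, and then check coherence separately.

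\emph{Composition.} By \cref{lem:subtree_valve} applied to the unlabeled vertex $v$, the prefix $P(\rho(v))$ lies outside $\st(v)$, while $Z\subseteq\st(\bar v)$. So the three pieces are disjoint. The set $Z'$ consists of $s_0$, $\bar z$ and lines: $P(\rho(v))\cup\{v\}$ contributes $s_0$ and the lines of $P(\rho(v))$ together with $\{v,\bar v\}$, where $\bar v\in Z$, and $Z$ contributes its lines and $\bar z$.

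\emph{Nonsingularity.} Since $v$ is unlabeled, \cref{prop:incoming_edge_label} shows that no incoming edge $(a,w)$ of $\st(v)$ has $K(a)=K(w)$. I want the stronger fact that $G^*$ has no edge between $V(\st)\setminus\st(v)$ and $\st(v)$ other than the tree edge $(\rho(v),v)$, at least not one touching $P(\rho(v))\setminus\{\rho(v)\}$. The search paths in $P(\rho(v))$ are formed from vertices $x\le\rho(v)$, except possibly retired vertices, which are shielded by (BT2). For $x<\rho(v)$, \cref{lem:search_order_grow} (SO1) shows that $x$ is not adjacent to $v$ or $\bar v$. The same ordering argument should extend to all of $\st(\bar v)$: every upper vertex $w\in\st(\bar v)$ has $\rho(w)>\rho(v)$ by \cref{lem:subtree_rho}, so $x<\rho(v)<\rho(w)$ and (SO1) gives non-adjacency. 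Vertices of $L(\rho(v))$ that are not $\le\rho(v)$ are handled by \cref{lem:grow_ell_disjoint} together with \cref{lem:ell_disjoint_decomp}.

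Granting this, every edge of $G^*[Z']$ joining $P(\rho(v))$ to $\{v\}\cup Z$ has $\rho(v)$ as an endpoint. I then expand the determinant with respect to the bipartition $R=P(\rho(v))\setminus\{\rho(v)\}$ and $R'=\{\rho(v)\}\cup\{v\}\cup Z$. No edges cross this partition, so $C^*[Z']$ is block-diagonal, possibly after reordering, with blocks $C^*[R]$ and $C^*[R']$. The first block is nonsingular by \cref{cor:search_path}. For the second, $\rho(v)$ has only $v$ as a neighbor in $R'$, because (SO1)/(SO2) rule out $\bar v$ and vertices deeper in $\st(\bar v)$ when $\rho(v)$ is strictly below their predecessors. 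Laplace expansion along $\rho(v)$ then gives $\det C^*[R']=\pm C^*_{\rho(v)v}\det C^*[Z]\neq0$.

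The main obstacle is exactly this separation claim: ruling out edges from $P(\rho(v))$ into $\st(\bar v)$, and in particular into $Z$. Retired vertices and tips inside $P(\rho(v))$ also have to be accounted for, via (BT2) and the tie-breaking of Step 5. If the direct order argument falls short, I would fall back on the unique-perfect-matching decomposition of \cref{cor:search_path} combined with Laplace expansion along $\rho(v)$.

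\emph{Coherence.} Any blossom $H$ meeting $Z'$ either meets $P(\rho(v))$, in which case it contains its dummy line by \cref{lem:path_coherence}, or lies inside $\st(\bar v)$ and meets $Z$, in which case $Z$ is coherent. No blossom straddles the two parts, because $v$ is unlabeled: as in the proof of \cref{prop:incoming_edge_label}, every $K(y)$ with $y\in\st(v)$ satisfies $K(y)\subseteq\st(v)$. Hence $Z'$ is coherent.
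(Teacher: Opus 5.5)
There is a genuine gap, and it is the separation claim you flag as the main obstacle: that claim is false. Its premise, that $P(\rho(v))$ consists of vertices $\le\rho(v)$ plus retired ones, is wrong.

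Write $P(\rho(v))=L(x_1)y_1\cdots L(x_{l-1})y_{l-1}L(x_l)$ as in \cref{lem:ell_disjoint_decomp}. The separators $y_j$ are upper vertices that are either unlabeled or labeled after $\rho(v)$; on a path built by $\grow$ alone, every even-position vertex is one. An unlabeled separator $y_j$ is a proper ancestor of $v$ in $\st$. So an edge from $y_j$ to $v$, to $\bar v$, or to a deeper vertex of $Z$ is just an incoming edge of $\st(v)$. While $v$ is unlabeled, \cref{prop:incoming_edge_label} only forces the endpoints of such an edge into different maximal blossoms, and (SO1) says nothing because unlabeled vertices count as large in the order.

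Concretely, suppose $\grow$ alone has built $\dots s_1\,a\,\bar a\,p\,\bar p\,v\,\bar v\,w\,\bar w\,z'\,\bar z'$ with $a,p,v,w,z'$ unlabeled. Nothing in the algorithm or its invariants (SO1), (SO2), \cref{lem:grow_ell_disjoint} excludes the edges $(p,\bar v)$ and $(\bar p,a)$, and $Z=\{\bar v,w,\bar w,z'\}$ is a legitimate input. Take your split $R=P(\bar p)\setminus\{\bar p\}$ and $R'=\{\bar p,v\}\cup Z$. The entry at $(p,\bar v)$ lies in the block with rows in $R$ and columns in $R'$. The entry at $(\bar p,a)$ lies in the block with rows in $R'$ and columns in $R$. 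So both off-diagonal blocks of $C^*[Z']$ are nonzero, and the matrix is neither block-diagonal nor block-triangular. Note also that $\rho(v)=\bar p$ itself has a neighbor in $R$ here, so ``no edges cross this partition'' fails even apart from the separators. A single Laplace expansion along $\rho(v)$ together with \cref{cor:search_path} cannot absorb these cross terms.

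The missing idea is to peel off the segments one at a time, as the paper does. Put $y_l=v$, so that $x_i=\rho(y_i)$ for all $i$. Set $Z_{l+1}=Z$ and $Z_i=L(x_i)\cup\{y_i\}\cup Z_{i+1}$. Then show by downward induction only that $L(x_i)$, not $y_i$, has no neighbor in $Z_{i+1}$:
for $y_j$ with $j>i$, use $x_i<x_j=\rho(y_j)$ and \cref{lem:grow_ell_disjoint};
for $L(x_j)$ with $j>i$, use the last part of \cref{lem:ell_disjoint_decomp};
for $\bar v$, use $x_i\le\rho(v)$ and \cref{lem:grow_ell_disjoint};
for $Z\setminus\{\bar v\}$, use $\rho(v)<\rho(u)$ for every upper $u\in\st(\bar v)\setminus\{\bar v\}$ (\cref{lem:subtree_rho}) together with \cref{lem:grow_ell_disjoint}.

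Then every edge between $L(x_i)\cup\{y_i\}$ and $Z_{i+1}$ is incident to the single vertex $y_i$. Hence $C^*[Z_i]$ is block-triangular, and $\det C^*[Z_i]=\pm\det C^*[L(x_i)\cup\{y_i\}]\cdot\det C^*[Z_{i+1}]$. The first factor is nonzero because $x_i$ is the only neighbor of $y_i$ in $L(x_i)$ (\cref{lem:grow_ell_disjoint}) and $G^*[L(x_i)\setminus\{x_i\}]$ has a unique perfect matching (\cref{prop:ell_unique_matching}).

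The edges at $y_i$ are harmless precisely because they point only toward the $Z$ end: no $L(x_j)$ with $j<i$ is adjacent to $y_i$. A single two-block split throws away this one-directional structure.

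The rest of your proposal is fine: disjointness, the second block via (SO1)/(SO2), nonsingularity of $C^*[R]$, and coherence. Your target identity $\det C^*[Z']=\pm\det C^*[R]\det C^*[R']$ is in fact true, but only as a consequence of this peeling.
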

\begin{proof}
  Since the coherence is straightforward, we focus on the nonsingularity of $C^*[Z']$.
  \cref{lem:ell_disjoint_decomp} decomposes $P(\rho(v))$ as $L(x_1) y_1 \dots L(x_l)$, where $x_1, \dots, x_{l-1}$ and $x_l = \rho(v)$ are all active. Setting $y_l = v$, we also have $x_i = \rho(y_i)$ for each $i = 1, \dots, l$. Let $Z_{l+1} = Z$ and define $Z_i = L(x_i) \cup \qty{y_i} \cup Z_{i+1}$ for $i = l, \dots, 1$. We proceed by induction on $i$ in decreasing order to show that $Z_i$ is a nonsingular path. It suffices to demonstrate that any perfect matching of $G^*[Z_i]$ necessarily includes the unique perfect matching of $G^*[L(x_i) \setminus \qty{x_i}]$ and the edge $(x_i, y_i)$. Such a structure implies $\det C^*[Z_i] = \det C^*[L(x_i) \cup \qty{y_i}] \cdot \det C^*[Z_{i + 1}]$ and thereby establishes the nonsingularity of $Z' = Z_1$ by induction.

  For $j > i$, the relation $x_i < x_j = \rho(y_j)$ and \cref{lem:grow_ell_disjoint} imply that $L(x_i)$ is not adjacent to $y_j$. Similarly, $L(x_i)$ is not adjacent to $L(x_j)$ by the property of decomposition. The condition $x_i \le \rho(v)$ and \cref{lem:grow_ell_disjoint} also ensure that no vertex of $L(x_i)$ is adjacent to $\bar{v}$. Finally, no vertex of $L(x_i)$ is adjacent to $\st(\bar{v}) \setminus \qty{\bar{v}} \supseteq Z \setminus \qty{\bar{v}}$ because every upper vertex $u$ of $\st(\bar{v}) \setminus \qty{\bar{v}}$ satisfies $\rho(v) < \rho(u)$ by \cref{lem:subtree_rho}, and \cref{lem:grow_ell_disjoint} prevents adjacency between $L(x_i)$ and $\qty{u, \bar{u}}$. Consequently, no vertex of $L(x_i)$ is adjacent to $Z_{i + 1}$. Hence, any perfect matching of $G^*[Z_i]$ must contain an edge between $L(x_i)$ and $y_i$, which is necessarily $(x_i, y_i)$ by \cref{lem:grow_ell_disjoint}. Since no edge bridges $L(x_i) \setminus \qty{x_i}$ and $Z_{i + 1}$, the remaining edges constitute the unique perfect matching of $G^*[L(x_i) \setminus \qty{x_i}]$ and a perfect matching of $G^*[Z_{i + 1}]$, as required.
\end{proof}

We also provide a technical lemma that is useful for characterizing nonsingular paths consisting solely of dummy lines.
\begin{lemma}
  \label{lem:dummy_line_alternating_path}
  Suppose a vertex set $P \subseteq T$ and its two elements $x, y \in P$ satisfy the following conditions:
  \begin{enumerate*}[label=(\arabic*)]
    \item $x$ is a tip,
    \item $P \setminus \qty{x, y}$ is a union of dummy lines, and
    \item $G^*[P]$ has a perfect matching.
  \end{enumerate*}
  Then, $y$ is a bud and $K(x) = K(y)$.
\end{lemma}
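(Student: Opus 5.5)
We argue by induction on the number of dummy lines in $P$, peeling off an innermost blossom. The tool is (BT2): a tip $t'$ of a blossom $H'$ is adjacent only to vertices of $H'$. Its neighbours inside $T$ are therefore confined to dummy lines of blossoms nested in $H'$, together with $H'$'s own bud. Fix a perfect matching $M$ of $G^*[P]$.

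First we use the laminar structure. Consider the blossoms that introduced the dummy lines meeting $P$, including the lines containing $x$ and $y$. Among them, choose a minimal blossom $H'$ whose dummy line $\{b', t'\}$ lies entirely in $P\setminus\{x,y\}$.

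By minimality, $H'$ contains no other dummy line of $P\setminus\{x,y\}$. So inside $H'$, the only other possible vertices of $P$ are $x$ or $y$, and only if their own blossoms are nested in $H'$.

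Case 1: $x,y\notin H'$. Then $t'$ can only be matched to $b'$, by (BT2). Removing $\{b',t'\}$ gives a smaller set $P$ that still satisfies conditions (1)--(3), and we conclude by induction.

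Case 2: $t'$ is matched in $M$ to $x$ or $y$. This needs a finer look. If $t'$ is matched to $x$, we get an edge between two tips. By (BT2) applied to both blossoms, the edge must lie within the smaller of the two. But a tip is only adjacent to vertices of its blossom that were adjacent to $c$ before pivoting (\cref{prop:neighbor_of_bt}), so no such edge can exist; this is the step I would verify by tracking how the pivot in \ref{item:detail_3} creates entries. If instead $t'$ is matched to $y$, then $b'$ must be matched to some other vertex of $P$. We then repeat the argument with $b'$ playing the role of the free endpoint, shifting attention to $b'$'s partner.

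To organise Case 2 cleanly, view $M\uplus L$, where $L$ is the set of dummy lines in $P$. This multigraph is a single alternating path from $x$ to $y$, plus cycles; we can cancel the cycles by the Case 1 argument. Along the path, each matching edge joins vertices of nested blossoms, by (BT2) and \cref{prop:neighbor_of_bt}. The path starts at the tip $x$ of $K(x)$ and stays inside $K(x)$. It can only exit through a bud-type vertex (the unique unlabeled vertex, by (MB1) and (MB2)), which forces $y$ to be the bud $b$ of $K(x)$. Hence $K(x)=K(y)$.

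The main obstacle is making rigorous the claim that matching edges among vertices of $T$ only go between a tip and a bud of the same or a nested blossom. This needs a careful induction over the order of dummy-line insertions, combining (BT2), \cref{prop:neighbor_of_bt}, and \cref{prop:pivoting_new_edge}.
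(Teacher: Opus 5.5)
There is a genuine gap. The picture of an alternating $x$--$y$ path in $M\uplus L$ governed by (BT2) is the right one, but the step your whole argument rests on is left unproved. Case~2, the reduction when $t'$ is matched to $y$, and the base case $P=\{x,y\}$ (which your induction never treats) all need the fact that no edge of $G^*$ joins two tips.

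You defer this (``the step I would verify'') to tracking pivots via \cref{prop:neighbor_of_bt}. That route could only be completed with work you have not done. You would need a second use of (BT2) to see that the inner tip is not adjacent to $c\notin H$. You would also need an argument that no later pivot creates the edge, since \cref{prop:neighbor_of_bt} only describes neighbours at the moment of insertion.

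None of that is needed, because the fact is immediate from (BT2) and the timing of blossom formation. Suppose tips $t_1,t_2$ introduced by blossoms $H_1,H_2$ are adjacent. Then (BT2) gives $t_2\in H_1$ and $t_1\in H_2$. If $H_1$ was formed first, $t_2$ did not exist when the vertex set of $H_1$ was fixed, so $t_2\notin H_1$, a contradiction.

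Your concluding paragraph is also wrong as stated. Neither is $x$ necessarily the tip of $K(x)$, nor is $y$ necessarily the bud of $K(x)$. For example, let $\{b_1,t_1\}$ be the dummy line of a non-maximal blossom whose enclosing maximal blossom introduced its own dummy line. Then $P=\{t_1,b_1\}$ satisfies the hypotheses with $x=t_1$ and $y=b_1$, and $b_1$ is not the bud of $K(x)$.

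Moreover, (MB1) and (MB2) concern search paths and the unique unlabeled vertex of $K(x)$. They say nothing about an arbitrary alternating path in $G^*[P]$, which by (BT2) never leaves $K(x)$ anyway, so ``exiting through a bud-type vertex'' has no basis.

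Once you know no two tips are adjacent, the finish needs no peeling induction. Write the path component of $M\uplus L$ as $v_1v_2\dots v_{2k}$ with $v_1=x$ and $v_{2k}=y$. If $v_{2i-1}$ is a tip, its matching partner $v_{2i}\in T$ is not a tip, hence a bud, and $v_{2i+1}=\bar v_{2i}$ is a tip. By induction along the path, $y$ is a bud.

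For the blossom equality, (BT2) places $v_{2i}$ inside the blossom that introduced $v_{2i-1}$. Hence $K(v_{2i-1})=K(v_{2i})=K(v_{2i+1})$, which gives $K(x)=K(y)$. The cycles of $M\uplus L$ play no role and can simply be ignored; this is exactly the paper's argument.
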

\begin{proof}
  Since the statements are trivial if $\qty{x, y}$ forms a line, assume otherwise in the following. Then, combining a perfect matching of $G^*[P]$ with dummy lines within $P$ yields an alternating path $P'$ between $x$ and $y$. 
  
  Letting $v_1 v_2 \dots v_{2k}$ be the sequence of vertices in $P'$, where $v_1 = x$ and $v_{2k} = y$, we now inductively prove that $v_{2i - 1}$ is a tip and $v_{2i}$ is a bud. Suppose $v_{2i - 1}$ is a tip. Since each tip is adjacent only to vertices within its blossom by (BT2), the laminar structure of blossoms shows that no edge connects two tips. Hence, as $v_{2i}$ is adjacent to $v_{2i - 1}$, it follows that $v_{2i}$ is a bud. As $\qty{v_{2i}, v_{2i + 1}}$ forms a line, this in turn shows that $v_{2i + 1}$ is a tip. This inductive argument shows that $y$ is a bud. Since each tip is adjacent only to vertices within the blossom, a similar induction verifies $K(v_{2i - 1}) = K(v_{2i})$ and $K(v_{2i}) = K(v_{2i + 1})$, implying $K(x) = K(y)$.
\end{proof}

The following lemma is crucial for establishing the existence of an incoming edge whose endpoints are labeled.
\begin{lemma}
  \label{prop:mie_existence}
  Let $v$ be any unlabeled upper vertex such that $\dist(\bar{v})$ is finite. Suppose $\setlower(k)$ is complete for $k$ with $2k + 1 = \lv(\rho(v)) + \dist(\bar{v})$, and (LV1)--(LV3) are currently preserved. Then, there exists an incoming edge $(y, z)$ of $\st(\bar{v})$ with $z \in \st(\bar{v})$ such that $\dist(\bar{y})$ and $\dist(\bar{z})$ satisfy the following inequalities:
  \begin{enumerate*}[label=(\roman*)]
    \item $\dist(\bar{y}) < \dist(\bar{v})$,
    \item $\dist(\bar{z}) \le \dist(\bar{v})$, and
    \item $\dist(\bar{y}) + \dist(\bar{z}) \le \lv(\rho(v)) + \dist(\bar{v})$.
  \end{enumerate*}
\end{lemma}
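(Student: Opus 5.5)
The plan is to apply \cref{lem:yz_decomp} to a shortest coherent nonsingular path $P$ to $\bar{v}$. Here $\bar{v}$ plays the role of the lemma's $v$, and its upper vertex $v^+ = v$ is unlabeled by hypothesis. Since $\dist(\bar{v})$ is finite, $P$ exists and has length $\dist(\bar{v})$. The lemma returns an edge $(y,z)$ with $y \notin \st(v)$ and $z \in \st(v)$, together with paths $Y$ and $Z$. The first task is to check that $z \in \st(\bar{v})$ and $z \neq \bar{v}$ unless $Z=\emptyset$. This is the content of \cref{rem:y_found_incoming} with the roles adjusted: $Z$ cannot contain $v$ because $Z \cup\{\bar v\}$ is a path starting at $\bar v$ and the line $\{v,\bar v\}$ is used at the endpoint. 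Hence $z\neq v$, so $z\in\st(\bar v)$, and $(y,z)$ is not the tree edge $(\rho(v),v)$. If $z=\bar v$ and $y=\rho(v)$, then $(\rho(v),\bar v)$ would be an edge, which (SO2) excludes for an active $v$. A retired unlabeled $v$ cannot occur, since retired vertices are labeled. So once $y\in V(\st)$ is established, $(y,z)$ is an incoming edge.

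Next I bound the distances. Since $Y\setminus T$ and $Z\setminus T$ are disjoint subsets of $P\setminus T$, and $Y\cup Z$ omits the line $\{y,\bar y\}$ while $Z$ omits $\{z,\bar z\}$ (as endpoints), I count original lines. The original lines meeting $Y$, plus those meeting $Z$, plus the line of $y$ (and the line of $z$ when $Z\ne\emptyset$), are at most $\dist(\bar v)$. Some care is needed if $y$ or $z$ lies in $T$; that case contributes zero original lines, but the inequalities below only become easier. From this, $\dist(\bar y)\le \mathrm{len}(Y) < \dist(\bar v)$ whenever $\{y,\bar y\}$ is original. If $y\in T$, I use \cref{lem:dummy_line_alternating_path} to argue that $y$ would lie in a blossom inside $\st(v)$, contradicting $y\notin\st(v)$. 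For $z$, \cref{lem:search_path_extension} converts $Z$ (a path from $\bar v$ to $\bar z$ inside $\st(\bar v)$) into $Z'=P(\rho(v))\cup\{v\}\cup Z$, a coherent nonsingular path from $s_0$ to $\bar z$. Its length is $\lv(\rho(v))+1+\mathrm{len}(Z)$, where the extra $1$ is the line $\{v,\bar v\}$. Combining this with $\mathrm{len}(Y)+\mathrm{len}(Z)+1\le\dist(\bar v)$ gives (iii): $\dist(\bar y)+\dist(\bar z)\le \lv(\rho(v))+\dist(\bar v)$. For (ii), I use the path $Z\cup(\text{path to }\bar v\text{ portion})$ directly: $\mathrm{len}(Z)+1\le \dist(\bar v)$. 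I would rather argue via $\dist(\bar z)\le \dist(\bar y)+\ldots$ is unnecessary, and instead use $\lv(\rho(v))\le \dist(\bar y)$ or a symmetric bound. In the degenerate case $Z=\emptyset$ we have $z=\bar v$, and (ii) and (iii) are immediate from (i) and the bound $\dist(\bar y)\le\lv(\rho(v))$, which follows from \cref{lem:search_order_grow} (SO1).

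Finally I need $y\in V(\st)$. Since $\dist(\bar y)<\dist(\bar v)\le k+1$ (from $2k+1=\lv(\rho(v))+\dist(\bar v)$ and $\lv(\rho(v))\le k$), \cref{prop:at_most_k} shows $y$ is labeled, hence lies in $V(\st)$, when $\dist(\bar y)\le k$. The boundary case $\dist(\bar y)=k+1$ is handled because $\setlower(k)$ is complete: the lower vertices of level $k+1$ are labeled by (LV1), and an unlabeled upper $y$ of such a line still lies in $V(\st)$ as a mate.

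The main obstacle is bound (ii) in the nondegenerate case. The direct count gives $\dist(\bar z)\le \lv(\rho(v))+1+\mathrm{len}(Z)$, which can exceed $\dist(\bar v)$ when $\lv(\rho(v))$ is large. I would try to show $\lv(\rho(v))\le\dist(\bar y)$. Here $y\notin\st(v)$ is adjacent to $z\in\st(\bar v)$, whose upper vertex $u$ has $\rho(u)\ge\rho(v)$ by \cref{lem:subtree_rho}, so (SO1) gives $y\ge\rho(u)\ge\rho(v)$, and thus $\lv(y)\ge\lv(\rho(v))$. Combined with $\lv(y)=\dist(\bar y)$ from (LV3), this yields $\dist(\bar z)\le \dist(\bar y)+1+\mathrm{len}(Z)\le\dist(\bar v)$.
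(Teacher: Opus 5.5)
Your overall route matches the paper's: decompose a shortest coherent nonsingular path to $\bar v$ via \cref{lem:yz_decomp}, lift $Z$ to $Z'=P(\rho(v))\cup\{v\}\cup Z$ with \cref{lem:search_path_extension}, and get $\lv(\rho(v))\le\dist(\bar y)$ from \cref{lem:subtree_rho} and (SO1). However, two steps fail as written.

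\textbf{Showing $y\in V(\st)$.} From $2k+1=\lv(\rho(v))+\dist(\bar v)$ and $\lv(\rho(v))\le k$ you get $\dist(\bar v)\ge k+1$, not $\le k+1$. In fact $\dist(\bar v)$ reaches $2k+1$ when $\lv(\rho(v))=0$. So $\dist(\bar y)<\dist(\bar v)$ does not let you invoke \cref{prop:at_most_k} or (LV1) for $y$, and your boundary case covers only $\dist(\bar y)=k+1$. The missing idea is to use (iii): $\dist(\bar y)+\dist(\bar z)\le 2k+1$, so one of the two is at most $k$.
\begin{itemize}
\item If $\dist(\bar y)\le k$, then \cref{prop:at_most_k} labels $y$.
\item Otherwise $\dist(\bar z)\le k$, so $z$ is labeled with $\lv(z)\le k$. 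Since $z$ has the neighbor $y\notin\st(v)$, (BT2) shows $z$ is neither a tip nor retired. Hence $\grow(z)$ has run by the end of $\setlower(k)$, and \cref{lem:effectively_all_executed} gives that $\{y,\bar y\}$ has been reached, i.e.\ $y\in V(\st)$.
\end{itemize}
Relatedly, in (ii) you apply (LV3) to $y$ before knowing $y$ is labeled. Argue by contradiction instead: if $\dist(\bar y)<\lv(\rho(v))\le k$, then \cref{prop:at_most_k} labels $y$ with $\lv(y)=\dist(\bar y)$, contradicting $y\ge\rho(v)$. Your degenerate-case bound $\dist(\bar y)\le\lv(\rho(v))$ points the wrong way; (SO1) gives the reverse inequality, and the general argument already covers $Z=\emptyset$.

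\textbf{Line counting and strictness in (i).} Length counts original lines \emph{intersecting} the set. So $\wt(Y)$ already includes $\{y,\bar y\}$ (as $\bar y\in Y$), and $\wt(Z)$ already includes $\{v,\bar v\}$ (as $\bar v\in Z$). For $Z\neq\emptyset$ the correct bookkeeping is $\wt(Y)+\wt(Z)\le\dist(\bar v)$ and $\wt(Z')\le\lv(\rho(v))+\wt(Z)$. Your two extra $+1$'s cancel in (iii). But the one in $\mathrm{len}(Y)+\mathrm{len}(Z)+1\le\dist(\bar v)$ is exactly what you use to make (i) strict, and it is invalid. Strictness must come from $Z$ meeting an original line, or from $\{v,\bar v\}$ being original when $Z=\emptyset$. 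This is automatic for original $v$. The statement, however, also covers buds $v$: it is applied to the bud $b$ in \cref{prop:setupper_dummy_line}, and there $\{v,\bar v\}$ contributes nothing. The needed argument is:
\begin{itemize}
\item $\bar v$ is a tip, so by (BT2) it is not adjacent to $y\notin\st(v)$; hence $Z\neq\emptyset$.
\item If $Z$ consisted only of dummy lines, \cref{lem:dummy_line_alternating_path} applied to $Z$ would make $\bar z$ a bud. Then $z$ is a tip adjacent to $y\notin\st(v)$, again contradicting (BT2).
\end{itemize}
Your use of \cref{lem:dummy_line_alternating_path} to exclude $y\in T$ targets the wrong line, and its conclusion is unsupported: a bud outside $\st(v)$ can have neighbors outside its blossom. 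It is also unnecessary once the counting is done correctly.
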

\begin{proof}
  Define $Y$, $Z$, and $(y, z)$ by applying \cref{lem:yz_decomp} to a shortest coherent nonsingular path $P$ to $\bar{v}$. We also define $Z' \coloneqq P(\rho(v)) \cup \qty{v} \cup Z$, which is a coherent nonsingular path to $\bar{z}$ by \cref{lem:search_path_extension}; even if $Z = \emptyset$ and $\bar{v} = z$ hold, this is a coherent nonsingular path to $\bar{z}$ by \cref{lem:path_coherence,cor:search_path}. As $Y$ and $Z'$ are coherent nonsingular paths to $\bar{y}$ and $\bar{z}$, we only need to evaluate the lengths of these paths. In this proof, let $\wt(P)$ denote the length of a coherent nonsingular path $P$.

  We first verify inequality (i). As $Y \setminus T \subseteq P \setminus T$ holds, we have $\wt(Y) \le \wt(P)$. Given that $\dist(\bar{y}) \le \wt(Y)$ and $\dist(\bar{v}) = \wt(P)$ hold, it suffices to show that the inequality $\wt(Y) \le \wt(P)$ is strict. If $\qty{v, \bar{v}}$ is an original line, it intersects $P$ but not $Y$, making the inequality strict. Now, suppose that it is a dummy line. Since $\bar{v}$ must be a tip, $\bar{v}$ cannot be adjacent to $y \notin \st(v)$, leading to $\bar{v} \neq z$ and $Z \neq \emptyset$. Assume for contradiction that $Z$ consists exclusively of dummy lines. Applying \cref{lem:dummy_line_alternating_path} to $Z$ shows that the other endpoint $\bar{z}$ is a bud. Hence, $z$ is a tip, which contradicts the fact that tips within $\st(v)$ cannot be adjacent to $y \notin \st(v)$. Therefore, $Z$ intersects at least one original line, which shows $\wt(Y) < \wt(P)$.

  Inequality (iii) can be verified as follows:
  \begin{align*}
    \dist(\bar{y}) + \dist(\bar{z})
     & \le \wt(Y) + \wt(Z')                \\
     & \le \wt(P) + \wt(P(\rho(v)))        \\
     & = \lv(\rho(v)) + \dist(\bar{v}).
  \end{align*}

  Hence, it remains to verify inequality (ii). Given that inequality (iii) holds, it suffices to verify that $\dist(\bar{y}) \ge \lv(\rho(v))$ holds. Assume $\dist(\bar{y}) < \lv(\rho(v))$ for contradiction. Since $\lv(\rho(v))$ is at most the current search level, the assumptions allow for applying \cref{prop:at_most_k} to $y$, showing that $\lv(y)$ is assigned and satisfies $\lv(y) < \lv(\rho(v))$. On the other hand, the upper vertex $z^+$ of $\qty{z, \bar{z}}$ satisfies $\rho(v) \le \rho(z^+)$ by \cref{lem:subtree_rho}. Moreover, the existence of an edge $(y, z)$ and the contrapositive of \cref{lem:search_order_grow} together imply $\rho(z^+) \le y$. Consequently, we obtain $\rho(v) \le y$, which contradicts $\lv(y) < \lv(\rho(v))$.

  Finally, we prove $y \in V(\st)$, which implies that $(y, z)$ is an incoming edge of $\st(\bar{v})$ by \cref{rem:y_found_incoming}. By inequality (iii), $\dist(\bar{y})$ or $\dist(\bar{z})$ is at most $k$. If $\dist(\bar{y}) \le k$, then \cref{prop:at_most_k} shows that $\lv(y) = \dist(\bar{y})$ is assigned, implying $y \in V(\st)$. If $\dist(\bar{z}) \le k$, then \cref{prop:at_most_k} shows that $\lv(z) = \dist(\bar{z})$ is assigned. As this level is at most $k$, $\grow(z)$ during $\setlower(k)$ or earlier must have detected $\qty{y, \bar{y}}$, which implies $y \in V(\st)$.
\end{proof}

\begin{remark}
    \label{rem:weakened_assumption}
    The assumption of (LV2) at $v$ can be dropped in \cref{prop:mie_existence}, because (LV1)--(LV3) are required solely to apply \cref{prop:at_most_k} to $y$ and $z$, both of which are distinct from $v$. Removing this assumption avoids a circular argument when establishing (LV2) in the proof of \cref{prop:setupper_dummy_line}.
\end{remark}

Finally, we establish the correctness of the subroutine $\setupper(k, \kmax)$ by the following two lemmas. We first verify \cref{cl:tips} for this subroutine in \cref{prop:setupper_dummy_line}, ensuring that (LV1)--(LV3) are maintained as long as the subroutine remains incomplete. Next, we demonstrate \cref{cl:upper} for the subroutine in \cref{prop:setupper_original_line}, guaranteeing that (LV1)--(LV3) are also preserved when the subroutine terminates.
\begin{lemma}
  \label{prop:setupper_dummy_line}
  For each dummy line introduced during $\setupper(k, \kmax)$, the tip $t$ satisfies (LV3) and the bud $b$ does not violate (LV2) at the moment it is created.
\end{lemma}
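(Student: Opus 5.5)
We handle the two assertions separately, in the way the analogous statements for $\setlower(k)$ were proved. Fix a $\blossom(u, v)$ call in $\setupper(k, \kmax)$ that creates $\qty{b, t}$ with vertices $c$ and $g$ as in \ref{item:detail_3}. By \ref{item:detail_4}, $\rho(b) = c$ and $\lv(t) = \lv(c)$. Because $c \in P(u) \cap P(v)$ is labeled before this call, $\lv(c) \le k$, and (LV3) gives $\lv(c) = \dist(\bar{c})$.

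\textbf{Tip.} We must show $\dist(b) = \lv(c)$. The inequality $\dist(b) \le \lv(t)$ is \cref{prop:level_lower_bound}, so assume $\dist(b) < \lv(c)$ for a contradiction.
\begin{itemize}
\item Take a shortest coherent nonsingular path $P$ to $b$.
\item Apply \cref{lem:laplace_for_v}, or if needed its Laplace-expansion argument at $b$, to obtain an edge $(w, b) \in E^*$ and a nonsingular path $P' = P \setminus \qty{w, b}$ to $\bar{w}$.
\item \cref{prop:neighbor_of_bt} shows that $w$ was adjacent to $g$ before the pivot. Since $\rho(g) = c$, \cref{lem:search_order_grow} (SO1) gives $w \ge c$.
\item To use the length bound, $P'$ must avoid $H$, or at least be coherent. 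If $P'$ met $H$ without containing $\qty{b, t}$, then the row/column of some vertex of $H$ would vanish, as in the $\setlower$ proof. The intended route is: $t$ is adjacent only inside $H$ by (BT2), so the Laplace expansion at $t$'s absence forces $P' \cap H$ to be matched internally, and one then argues via coherence of $P$ that $P' \cap H = \emptyset$.
\item Then $\dist(\bar{w}) \le \wt(P') < \lv(c) \le k$. \cref{prop:at_most_k} gives $\lv(w) = \dist(\bar{w}) < \lv(c)$, contradicting $w \ge c$.
\end{itemize}

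\textbf{Bud.} We must show $\lv(c) + \dist(t) \ge 2k + 1$, or else that $b$ cannot be required by (LV2) to be labeled by an earlier point in the order. Note that $b$ stays unlabeled, so (LV2) constrains only the deadline $\kmax' = \dist(t)$ and $2k' + 1 = \lv(c) + \dist(t)$; we must show this deadline is not already past. Suppose instead that $\lv(c) + \dist(t) < 2k + 1$, or that equality holds with $\dist(t) < \kmax$.
\begin{itemize}
\item Apply \cref{prop:mie_existence} to the unlabeled upper vertex $b$, using \cref{rem:weakened_assumption} so that (LV2) at $b$ itself is not assumed.
\item This produces an incoming edge $(y, z)$ of $\st(t)$ with $\dist(\bar{y}) + \dist(\bar{z}) \le \lv(c) + \dist(t)$.
\item By \cref{prop:at_most_k} and the inductive hypothesis on earlier $\setupper$ calls, both endpoints are labeled with these levels. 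By \cref{lem:effectively_all_executed}, $\blossom$ has already been applied to an edge at least as early as $(y, z)$ in the $\setupper$ order, or $K(y) = K(z)$.
\item Either way, \cref{prop:incoming_edge_label} says some upper vertex of the subtree was already absorbed before the current call. Equivalently, the incoming edge of $\st(t)$ already existed as an incoming edge of the subtree at the child of $c$ on $P(u)$, namely $g$'s mate. That vertex would then have been labeled earlier, contradicting that $g$ is unlabeled at this call.
\end{itemize}

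\textbf{Main obstacle.} The bud case is the hardest step: we must relate $\st(t)$ after insertion to the subtree rooted at $g$'s line before insertion, and check that the edges created by the pivot (\cref{prop:pivoting_new_edge}, \cref{cl:new_edge_rho_c}) cannot supply a spurious early incoming edge. I would first try to transfer the edge $(y, z)$ back through the pivot using \cref{prop:neighbor_of_bt}.
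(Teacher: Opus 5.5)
Your bud argument follows the paper's route, but the tip argument has a genuine gap at exactly the step you flag. After the Laplace expansion at $b$, the inequality $w \ge c$ is fine. However, the residual set $P' = P \setminus \qty{w, b}$ need not be coherent, so its length does not bound $\dist(\bar{w})$.

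The vanishing-row argument from $\setlower$ worked only because that blossom is $\qty{u, \bar{u}, b, t}$ with $u$ adjacent solely to $t$. A blossom $H$ formed in $\setupper$ has many internal edges, so no row or column vanishes. Coherence of $P$ does not help either: for $H$ it is satisfied trivially by $b \in P$ and places no restriction on $P \cap H$. At creation, the neighbours of $b$ other than $t$ lie outside $H$, so nothing prevents the alternating path of $P$ from entering and leaving $\st(b)$ before reaching $w$ and $b$. Likewise, cycle components inside $H$ cannot simply be discarded, because of possible cancellation.

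The missing idea is to apply \cref{lem:yz_decomp} to $P$ with $v = b$, so that $v^+ = b$ is unlabeled. This yields a coherent nonsingular path $Y$ to $\bar{y}$ with $\wt(Y) \le \dist(b) < \lv(c) \le k$, and an edge $(z, y)$ with $z \in \st(b)$ and $y \notin \st(b)$. Then \cref{prop:at_most_k} gives $\lv(y) < \lv(c)$. On the other hand, \cref{lem:subtree_rho} gives $c = \rho(b) \le \rho(z^+)$, and (SO1) of \cref{lem:search_order_grow} gives $\rho(z^+) \le y$, a contradiction. This replaces your use of $w$, $g$, and \cref{prop:neighbor_of_bt}.

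For the bud, the obstacle you describe is already absorbed by \cref{lem:effectively_all_executed}. That lemma holds for the graph after pivoting, and its proof rules out pivot-created edges via \cref{cl:new_edge_rho_c} and \cref{cor:blm_monotone}. Your edge $(y, z)$ has level sum at most $2k - 1$, so it precedes $(u, v)$, and once $\blossom(u, v)$ finishes it satisfies $K(y) = K(z)$. Since $y \notin \st(b)$, it is an incoming edge of $\st(b)$. \cref{prop:incoming_edge_label} would then force $b$ to be labeled, contradicting that $b$ remains the unlabeled vertex of $H$.

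There is no need to return to the pre-pivot subtree, and your version of that step is inaccurate. The child of $c$ on $P(u)$ is $g$ itself, not $\bar{g}$. Moreover, $z$ may lie in the subtree of the other child of $c$, namely the first vertex of $P(v|d)$.

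Finally, your second case (equality with $\dist(t) < \kmax$) is vacuous, because $\lv(c) \le \lv(v) = 2k + 1 - \kmax$.
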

\begin{proof}
  We proceed by induction to show that each insertion of dummy lines preserves (LV1)--(LV3), as long as the subroutine remains incomplete. Let $u$ and $v$ be the vertices such that $\blossom(u, v)$ introduces $b$ and $t$. We denote the resulting blossom by $H$, and define $c$ and $g$ as in the description of the algorithm. Since $\lv(v) = 2k + 1 - \kmax \le k$, we have $\lv(c) = \lv(t) \le \lv(v) \le k$.

  We first focus on (LV3). Assume $\dist(b) < \lv(c)$ for contradiction. Let $P$ be a shortest coherent nonsingular path to $b$. Applying \cref{lem:yz_decomp} to $P$ provides a coherent nonsingular path $Y$ to $\bar{y}$ and an edge $(z, y)$ satisfying $z \in \st(b)$ and $y \notin \st(b)$. Since the length of $Y$ is at most $\dist(b) < \lv(c) \le k$, we have $\lv(y) \le \dist(b) < \lv(c)$ by \cref{prop:at_most_k}. On the other hand, since $z \in \st(b)$, we obtain $c = \rho(b) \le \rho(z^+)$ by \cref{lem:subtree_rho}, where $z^+$ denotes the upper vertex of $\qty{z, \bar{z}}$. Furthermore, the existence of an edge $(z, y)$ implies $\rho(z^+) \le y$ by \cref{lem:search_order_grow}, which can be combined with the previous inequality to obtain $c \le y$, a contradiction.
  
  To prove (LV2) for the bud $b$, it suffices to prove $\lv(c) + \dist(t) \ge 2k + 1$; the statement is straightforward if this inequality is strict, and otherwise it follows from $\lv(c) \le \lv(v)$. Thus, assume for contradiction that $\lv(c) + \dist(t) \le 2k - 1$, as the left side is always odd from \cref{rem:theorem_odd}. Then, \cref{prop:mie_existence,rem:weakened_assumption} ensure the existence of an incoming edge $(y, z)$ of $\st(t)$ satisfying $\dist(\bar{y})$ + $\dist(\bar{z}) \le 2k - 1$. Consequently, (LV1)--(LV3) inductively prove that $y$ and $z$ are both labeled and satisfy $\lv(y) + \lv(z) \le 2k - 1$. However, since $(y, z)$ precedes $(u, v)$ in the order of $\setupper$, \cref{prop:incoming_edge_label,lem:effectively_all_executed} ensure that $b$ must be already labeled, which is a contradiction.
\end{proof}

\begin{lemma}
  \label{prop:setupper_original_line}
  The subroutine $\setupper(k, \kmax)$ preserves (LV1)--(LV3).
\end{lemma}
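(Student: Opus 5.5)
The plan is to verify \cref{cl:upper} for $\setupper(k, \kmax)$, since \cref{prop:setupper_dummy_line} already handles \cref{cl:tips} during the run of the subroutine. Fix an upper vertex $v$ with $2k + 1 = \lv(\rho(v)) + \dist(\bar{v})$ and $\kmax = \dist(\bar{v})$. Suppose for contradiction that $v$ is still unlabeled when $\setupper(k, \kmax)$ completes. (LV1) and (LV3) for the other vertices are inherited inductively: \cref{prop:level_values_bound,prop:level_at_least_kmax} give $\lv(x) = \dist(\bar{x})$ once (LV2) holds at $x$.

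\textbf{Step 1: find a trigger edge.} Since $\setlower(k)$ has completed, \cref{prop:mie_existence} applies, with the (LV2)-assumption at $v$ dropped by \cref{rem:weakened_assumption}. It yields an incoming edge $(y, z)$ of $\st(\bar{v})$ with $z \in \st(\bar{v})$ and
\[
\dist(\bar{y}) < \kmax, \qquad \dist(\bar{z}) \le \kmax, \qquad \dist(\bar{y}) + \dist(\bar{z}) \le 2k + 1 .
\]

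\textbf{Step 2: show both endpoints are labeled in time.} Inductively, using (LV1)--(LV3) for earlier subroutines and, within the current run, \cref{prop:setupper_dummy_line} together with an induction on the scan position, I argue that $y$ and $z$ are labeled with $\lv(y) = \dist(\bar{y})$ and $\lv(z) = \dist(\bar{z})$. The delicate point is that $z$ may itself be an upper vertex inside $\st(\bar{v})$ that gets labeled during this same $\setupper(k, \kmax)$ or during an earlier $\kmax$. To handle this, I apply the claim to $z^+$ by induction on the pair $(k, \kmax)$ and on the order within $\setupper$. Its deadline is no later than that of $v$:
\begin{itemize}
\item $\rho(v) \le \rho(z^+)$ by \cref{lem:subtree_rho};
\item $\dist(\bar{z}) \le \kmax$;
\item the sum condition from Step 1.
\end{itemize}
So $z$ gets labeled before the pair $(\lv(y), \lv(z))$ would be scanned. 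This is consistent with the monotonicity in \cref{cor:blm_monotone}.

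\textbf{Step 3: the edge is scanned by the deadline.} The edge $(y, z)$ has level sum at most $2k + 1$ and maximum level at most $\kmax$. Hence it precedes, or lies in, the current stage of the $\setupper$ order. By \cref{lem:effectively_all_executed}, after $\setupper(k, \kmax)$ completes we have $K(y) = K(z)$. If the sum is strictly smaller, the edge was handled at an earlier iteration, and the same conclusion holds by \cref{lem:effectively_all_executed} for that earlier $\setupper$. Then \cref{prop:incoming_edge_label} forces $v$ to be labeled, which is the desired contradiction.

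\textbf{Remaining case.} Under the super singleton reduction, if the augmenting-path branch returns early, the remaining vertices impose no obligation. Finally, the mates of newly designated upper vertices do not violate (LV2) when designated, as in \cref{cl:non_tip_lower}.

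The main obstacle is Step 2: avoiding circularity when $z$ lies in $\st(\bar{v})$ and its own labeling depends on the same $\setupper$ pass. I would first try a well-founded induction on the key $(\lv(\rho(\cdot)) + \dist(\bar{\cdot}), \dist(\bar{\cdot}))$ together with the position in $\vl$. The aim is to show that $z^+$'s key is strictly smaller than $v$'s key, or equal with an earlier trigger. If that fails, I would fall back on the strictness $\dist(\bar{y}) < \kmax$ to break ties.
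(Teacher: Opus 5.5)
Your skeleton matches the paper's. You apply \cref{prop:mie_existence} with \cref{rem:weakened_assumption}, show that both endpoints of the incoming edge $(y,z)$ are labeled in time, and conclude with \cref{lem:effectively_all_executed,prop:incoming_edge_label}. However, Step~2 carries the whole substance of the lemma, and it is not established.

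\textbf{The deadline bound for $z$.} Your reason for $z$'s deadline being no later than $v$'s points the wrong way. The inequality $\rho(v)\le\rho(z^+)$ is a \emph{lower} bound on $\lv(\rho(z^+))$, whereas bounding $2k_z+1=\lv(\rho(z))+\dist(\bar z)$ requires an upper bound. The paper gets that bound from \cref{lem:search_order_grow}: $(y,z)\in E^*$ gives $\rho(z)\le y$, hence $\lv(\rho(z))+\dist(\bar z)\le\lv(y)+\dist(\bar z)$. This presupposes that $y$ is already labeled with $\lv(y)=\dist(\bar y)$. Symmetrically, bounding $y$'s deadline presupposes that $z$ is labeled.

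The paper breaks this mutual dependence by splitting on $\dist(\bar z)\le k$ versus $\dist(\bar y)\le k$, which is exhaustive by (iii). One endpoint is then labeled by \cref{prop:at_most_k}. The other, if it is an upper vertex, has a deadline $(k',\kmax')$ with $k'\le k$ and $\kmax'<\kmax$. This strictness comes from (i) in the first case, and holds in the second case whenever $\dist(\bar z)<\dist(\bar v)$. So the inductive (LV2) labels that endpoint before $\setupper(k,\kmax)$ even starts.

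\textbf{The tie case.} The only remaining case is $\dist(\bar y)\le k$ with $z$ an upper vertex and $\dist(\bar z)=\dist(\bar v)$, which is exactly your ``main obstacle''. Here $z$ has the same deadline as $v$, so your key-based induction is circular. The strictness $\dist(\bar y)<\kmax$ concerns $y$, not $z$, so it cannot break the tie.

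The missing idea is that this case never requires a same-pass labeling. Combining \cref{lem:subtree_rho}, \cref{lem:search_order_grow}, and (iii) gives the chain
\[
\lv(\rho(v))\le\lv(\rho(z))\le\lv(y)=\dist(\bar y)\le\lv(\rho(v))+\dist(\bar v)-\dist(\bar z)=\lv(\rho(v)),
\]
which forces equality throughout. By \cref{lem:subtree_valve}, the part of $P(\rho(z))=P(\rho(v))vP(\rho(z)|v)$ after $v$ therefore meets no original line. Hence $\bar v$ is a tip, and \cref{lem:dummy_line_alternating_path} makes $\rho(z)$ a tip with $K(\rho(z))=K(v)$. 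Then (BT2) gives $K(z)=K(v)$. Since $z\ne v$, (MB1) shows that $z$ is already labeled.

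Consequently, both endpoints are labeled at the moment \cref{prop:mie_existence} is invoked. That moment is the start of $\setupper(k,\kmax)$, or right after $v$ is created as a bud, a case your write-up skips. \cref{cl:new_edge_rho_c} then guarantees that $(y,z)$ is not deleted before it is scanned. Your same-pass scenario lacks this guarantee: \cref{cl:new_edge_rho_c} does not exclude that the very $\blossom$ which labels $z$ also deletes $(y,z)$.
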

\begin{proof}
  Let $v$ be any upper vertex satisfying the conditions in (LV2). If $v$ is labeled prior to the start of $\setupper(k, \kmax)$, (LV2) follows immediately. Thus, we assume that $v$ is either unlabeled at the beginning of $\setupper(k, \kmax)$ or introduced as a bud during its execution. Then, we obtain an incoming edge $(y, z)$ of $\st(\bar{v})$ satisfying inequalities (i)--(iii) by applying \cref{prop:mie_existence} at the corresponding moment---either at the start of $\setupper(k, \kmax)$ if $v$ already exists, or immediately after $v$ is inserted during the subroutine. If $y$ and $z$ are labeled at this point, it follows from these inequalities and (LV3) that $\blossom(y, z)$ must be invoked before the completion of $\setupper(k, \kmax)$, which then ensures that $v$ is labeled by \cref{prop:incoming_edge_label,lem:effectively_all_executed}. Note that, once $y$ and $z$ are labeled, \cref{cl:new_edge_rho_c} rules out the possibility that $(y, z)$ is deleted by subsequent pivoting operations. To prove that $y$ and $z$ are labeled, we divide the argument into two cases: (a)~$\dist(\bar{z}) \le k$ and (b)~$\dist(\bar{y}) \le k$; these are exhaustive since the values are integers and their sum is at most $2k + 1$.

  \textbf{Case (a).}
  Given that $\dist(\bar{z}) \le k$, \cref{prop:at_most_k} guarantees that $z$ is labeled. The case where $y$ is the lower vertex is trivial, so we focus on the case where $y$ is the upper vertex. Setting $\kmax' = \dist(\bar{y})$, we have $\kmax' = \dist(\bar{y}) < \dist(\bar{v}) = \kmax$ by inequality (i). Similarly, setting $2k' + 1 = \lv(\rho(y)) + \dist(\bar{y})$, we obtain the following:
  \begin{align*}
    2k' + 1
     & = \lv(\rho(y)) + \dist(\bar{y})                                                               \\
     & \le \lv(z) + \dist(\bar{y})       & \text{($(y,z) \in E^*$ and \cref{lem:search_order_grow})} \\
     & = \dist(\bar{y}) + \dist(\bar{z})                                                             \\
     & \le 2k + 1.
  \end{align*}
  Therefore, (LV2) inductively implies that $y$ is labeled before $\setupper(k', \kmax')$ completes. As this precedes $\setupper(k, \kmax)$, the statement follows.

  \textbf{Case (b).}
  Given that $\dist(\bar{y}) \le k$, \cref{prop:at_most_k} guarantees that $y$ is labeled. The case where $z$ is a lower vertex is trivial, and the case where $\dist(\bar{z}) < \dist(\bar{v})$ follows by a symmetric argument. Hence, we now assume that $z$ is an upper vertex and $\dist(\bar{z}) = \dist(\bar{v})$. In this case, we have the following:
  \begin{align*}
    \lv(\rho(v))
     & \le \lv(\rho(z))                                   & \text{(\cref{lem:subtree_rho})}          \\
     & \le \lv(y)                                         & \text{(\cref{lem:search_order_grow})}    \\
     & = \dist(\bar{y})                                                                              \\
     & \le \lv(\rho(v)) + \dist(\bar{v}) - \dist(\bar{z}) & \text{(\cref{prop:mie_existence}~(iii))} \\
     & = \lv(\rho(v)),
  \end{align*}
  which yields $\lv(\rho(v)) = \lv(\rho(z))$. Since \cref{lem:subtree_valve} ensures that $P(\rho(z))$ decomposes as $P(\rho(v))vP(\rho(z)|v)$, the previous equation implies that the set $X \coloneqq P(\rho(z)|v) \setminus \qty{\rho(z)}$, which may be empty, consists solely of dummy lines. If $X$ is nonempty, $G^*[X]$ has a perfect matching by \cref{prop:path_property}. As $\bar{v}$ is a tip, \cref{lem:dummy_line_alternating_path} ensures that $K(v)$ = $K({\rho(z)})$ and that $\rho(z)$ is a tip, which further imply $K(v) = K(z)$. This property also holds in the case where $X = \emptyset$, from which $\bar{v} = \rho(z)$ follows immediately. Because $v$ is unlabeled and $K(z) = K(v)$, it follows from (MB1) that $z$ is labeled unless $z = v$; however, this is impossible since $z \in \st(\bar{v})$ and $v \notin \st(\bar{v})$. 
\end{proof}

\subsection{Optimality}
\label{sec:lmp_optimality}

This section is devoted to proving that the current independent parity set $J$ is maximum if the algorithm does not detect any augmenting paths. We relate the maximum cardinality $\nu$ and the current cardinality $\abs{J}$ via the \skewsymmetric{} matrix $\Phi^*$ of size $(n + r + \abs{T})$ defined as follows:
\begin{equation}
  \Phi^* \coloneqq
  \begin{bNiceArray}{cccccc}[first-row,first-col,margin,hvlines,corners]
                    & T \cap B^*                     & S & J  & J'                  & V \setminus J    & T \setminus B^* \\
    T \cap B^*      & \Block{3-3}{O}                 &   &    & \Block{2-1}{O}      & \Block{3-2}{C^*} &                 \\
    S               &                                &   &    &                     &                  &                 \\
    J               &                                &   &    & I                   &                  &                 \\
    J'              & \Block{1-2}{O}                 &   & -I & \Block{2-2}{\Delta} &                  & \Block{2-1}{O}  \\
    V \setminus J   & \Block{2-3}{-\qty(C^*)^{\top}} &   &    &                     &                  &                 \\
    T \setminus B^* &                                &   &    & \Block{1-2}{O}      &                  & O               \\
  \end{bNiceArray}.
\end{equation}

We first express the rank of $\Phi^*$ using $\nu$. Property (BT1) in \cref{prop:bt} ensures that the following transformation via row operations is possible:
\begin{equation}
  \begin{bNiceArray}{ccc}[first-col,first-row,margin,hvlines,corners]
    & J' & V \setminus J & T \setminus B^* \\
    T \cap B^* & \Block{2-1}{O} & \Block{3-2}{C^*} \\
    S &   & & \\
    J & I & & \\
  \end{bNiceArray}
  \quad \xrightarrow{\qqtext{Row operations}} \quad
  \begin{bNiceArray}{ccc}[first-col,first-row,margin,hvlines,corners]
    & J' & V \setminus J & T \setminus B^* \\
    T \cap B^* & \Block{2-1}{O} & X & I \\
    S &   & \Block{2-1}{C} & \Block{2-1}{O} \\
    J & I & & \\
  \end{bNiceArray}.
\end{equation}
Here, $X$ denotes the resulting nonzero part. Therefore, we can transform $\Phi^*$ into the following matrix $\Phi'$ via row and column operations:
\begin{equation}
  \Phi' =
  \begin{bNiceArray}{ccc|ccc}[first-row,first-col,margin,hvlines,corners]
                  & T \cap B^*                   & S                                         & J  & J'                  & V \setminus J  & T \setminus B^* \\
  T \cap B^*      & \Block{3-3}{O}               &                                           &    & \Block{2-1}{O}      & X              & I               \\
  S               &                              &                                           &    &                     & \Block{2-1}{C} & \Block{2-1}{O}  \\
  J               &                              &                                           &    & I                   &                &                 \\
  J'              & \Block{1-2}{O}               &                                           & -I & \Block{2-2}{\Delta} &                & \Block{2-1}{O} \\
  V \setminus J   & \raisebox{-2pt}{$-X^{\top}$} & \Block{1-2}{\raisebox{-2pt}{$-C^{\top}$}} &    &                     &                &                 \\
  T \setminus B^* & -I\phantom{{}^\top}          & \Block{1-2}{O}                            &    & \Block{1-2}{O}      &                & O               \\
  \end{bNiceArray}.
\end{equation}
Note that $\Phi$ appears at the center as a submatrix. Since $\rank \Phi = \nu + n$ holds by \cref{prop:rank_lmp}, we have
\begin{equation}
  \label{eq:phi_prime_nu}
  \rank \Phi^* = \rank \Phi' = \rank \Phi + \abs{T} = \nu + n + \abs{T}.
\end{equation}

We then evaluate $\rank \Phi^*$ using the current size $\abs{J}$. To this end, we first analyze the structure of the graph $G^*$ when the algorithm terminates without finding any augmenting paths. Let $U$ be the set of unlabeled vertices that have unlabeled mates. Let $W \coloneqq V^* \setminus \qty(S \cup U)$ be the union of lines whose lower level is assigned. We define a partition $\calw$ of $W$ based on $K(x)$, i.e., $\calw \coloneqq \qty{K(x) \mid x \in W}$.
Let $u(W)$ denote the unique unlabeled vertex in $W \in \calw$.
By excluding $u(\calw) \coloneqq \qty{u(W) \mid W \in \calw}$ from the graph $G^*$, the blossoms and unreached part become disjoint. More formally, the following lemma holds.
\begin{lemma}
  \label{lem:partition_graph}
  Define a partition of $V^*$ by $\call \coloneq \qty{\qty{s} \mid s \in S} \cup \qty{U} \cup \calw$. By eliminating $u(\calw)$ and their incident edges from the graph $G^*$, no remaining edge connects two different components of $\call$.
\end{lemma}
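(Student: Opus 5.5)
We argue by contradiction. Suppose that after deleting $u(\calw)$ some edge $(x,y)\in E^*$ survives with $x,y$ in different components of $\call$. The plan is to rule out each pair of component types using the termination state of the algorithm. By \cref{lem:effectively_all_executed}, every $\grow$ and every $\blossom$ that the ordering prescribes has effectively been executed. We also use three structural facts: (MB1)--(MB3) of \cref{prop:kx_properties}, the fact (BT2) that tips and retired vertices do not leave their blossom, and the assumption that no augmenting path was reported. Under the super singleton reduction, the last assumption means that no $\blossom$ merged two different $\bar{s}_i$; in the original instance, it means that $\setupper$ never met an edge between search paths stemming from distinct singletons.

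\textbf{Step 1: edges into $U$.} Let $x\in U$. Then $y$ lies in some $\{s\}$ or in some $W\in\calw$, and $y\neq u(W)$. In both cases $y$ is labeled: singletons are labeled initially, and $u(W)$ is the only unlabeled vertex of $W$ by (MB1). If $y$ is a tip or a retired vertex, (BT2) confines its neighbours to its blossom, which does not contain $x$. Otherwise $y$ is in some list $\vl$, so $\grow(y)$ was executed. By \cref{lem:effectively_all_executed}, every line adjacent to $y$ then has its lower level assigned, which contradicts $x\in U$. The edge deletions caused by pivoting are already covered by that lemma, so nothing more is needed here.

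\textbf{Step 2: edges between labeled vertices.} Now $x$ and $y$ both lie in $S\cup W$, in different components, and neither is in $u(\calw)$. Hence both are labeled, and $K(x)\neq K(y)$ whenever both are in $W$. Again tips and retired vertices are excluded by (BT2), because $K(\cdot)$ is the maximal blossom and so contains the whole blossom confining them. We may therefore assume that $x$ and $y$ are active and appear in $\vl$. Their level sum is finite, so at termination $\setupper$ has processed the pair, or a later pair in the order. By \cref{lem:effectively_all_executed}, this forces $K(x)=K(y)$ if the search paths share a singleton, or an augmenting path was returned if they do not. Singletons need separate care. An edge between $s\in S$ and a labeled vertex $y\in W$ is handled in the same way. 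For an edge between two singletons, $C^*$ has no rows and columns both indexed by $S$, since both lie in $B^*$; so no such edge exists.

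\textbf{Main obstacle.} The difficult step is the first claim of Step 2. We must show that every edge between two labeled active vertices was actually examined, despite the fact that levels are assigned late and that pivoting adds and deletes edges after both endpoints are labeled. My first approach would be to follow the proof of \cref{lem:effectively_all_executed}. Using \cref{prop:pivoting_new_edge} and \cref{cl:new_edge_rho_c}, any newly created edge has an endpoint that is labeled in the same $\blossom$ call. \cref{cor:blm_monotone} then places such an edge after the current one in the $\setupper$ order, so it is still examined later. Since the while condition in Line~\ref{line:while_worth_searching} runs until $k$ exceeds every level, every pair with finite level sum is eventually reached, and this gives the contradiction.
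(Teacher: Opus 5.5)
Your proof is correct and follows essentially the same route as the paper. The shared steps are: both endpoints are shown to be active via (BT2); an edge into $U$ is ruled out because $\grow$ was applied to its labeled endpoint; and an edge between two labeled components is ruled out because $\setupper$ has scanned it by termination, both via \cref{lem:effectively_all_executed}. The concern in your ``main obstacle'' paragraph is already settled by that lemma, which is proved as an invariant under the edges created by pivoting, so no further argument is needed there. (It is edge additions, not deletions, that could threaten it.)
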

\begin{proof}
  Assume for contradiction that such an edge $e$ exists. Since $e$ bridges different components, its endpoints are both active. If $e$ bridges $U$ and a component of $\call \setminus \qty{U}$, the endpoint that does not belong to $U$ must be labeled. However, given the negation of Line~\ref{line:while_worth_searching} in \cref{alg:sap_lmp}, this labeled endpoint cannot be connected to any vertex of $U$ by \cref{lem:effectively_all_executed}, a contradiction. Similarly, $e$ cannot bridge two distinct components of $\call \setminus \qty{U}$.
\end{proof}

We then examine the support graph $G_{\Phi}^* = (V_{\Phi}^*, E_{\Phi}^*)$ of $\Phi^*$. Crucially, $G^*_{\Phi}$ contains $G^*$ as its subgraph (see \cref{fig:matroid_parity_pfaffian_graph}).
We define $\kappa$ as the following map: given a set $X \subseteq V^*$, it attaches the copies of $X \cap J$ to $X$. More formally, we define $\kappa \colon 2^{V^*} \to 2^{V^*_{\Phi}}$ by $\kappa \colon X \mapsto X \cup \iota(X \cap J)$, where $\iota \colon J \to J'$ is a natural bijection.
Then, the following lemma holds.
\begin{lemma}
  \label{lem:block_phi}
  Define a partition of $V^*_{\Phi}$ by $\kappa(\call) \coloneq \qty{\kappa(X) \mid X \in \call}$. By eliminating $u(\calw)$ and their incident edges from the graph $G^*_{\Phi}$, no remaining edge connects two different components of $\kappa(\call)$.
\end{lemma}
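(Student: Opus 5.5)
We reduce the claim to \cref{lem:partition_graph} by classifying the edges of $G^*_\Phi$ according to which block of $\Phi^*$ produces them. There are three kinds of nonzero off-diagonal entries in $\Phi^*$.
\begin{enumerate*}[label=(\arabic*)]
  \item The entries of the $C^*$ block and its negated transpose give exactly the edges of $G^*$.
  \item The identity blocks between $J$ and $J'$ give edges $(u, \iota(u))$ for $u \in J$.
  \item The block $\Delta$ gives edges $(\iota(u), \iota(\bar{u}))$ for each line $\{u, \bar{u}\} \subseteq J$.
\end{enumerate*}
Every vertex of $J'$ is an image $\iota(u)$, and $\kappa(X)$ contains $\iota(u)$ exactly when $u \in X$. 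So each component $\kappa(X)$ of $\kappa(\call)$ is $X$ together with the copies of its $J$-vertices.

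For edges of type (1), both endpoints lie in $V^*$ and the edge is an edge of $G^*$. Removing $u(\calw)$ from $G^*_\Phi$ removes the same vertices from $V^* \subseteq V^*_\Phi$. If such an edge survived and joined $\kappa(X)$ to $\kappa(Y)$ with $X \neq Y$, its endpoints would lie in $X$ and $Y$ respectively, since $\kappa(X) \cap V^* = X$. That contradicts \cref{lem:partition_graph}. Edges of type (2) join $u$ and $\iota(u)$, and these lie in the same component $\kappa(X)$ with $X \ni u$ by definition of $\kappa$. If $u \in u(\calw)$, the edge is removed anyway.

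The only step needing care is type (3): $\iota(u)$ and $\iota(\bar{u})$ must fall in the same component. It suffices to show that $u$ and $\bar{u}$ lie in the same member of $\call$. This holds because each member of $\call$ other than the singletons $\{s\}$ is a union of lines. $U$ consists of unlabeled vertices whose mates are unlabeled, so it is closed under taking mates. Each $K(x)$ is either a blossom, which is a union of lines by the observation before \cref{prop:kx_properties}, or the line $\{x, \bar{x}\}$ itself. Singletons are not in $J$, so the lines of $J$ are never split by them.

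One subtlety concerns the vertices of $J'$ that are copies of vertices in $u(\calw)$. These copies are not removed, only $u(\calw) \subseteq V^*$ is, but they still lie in the same $\kappa(W)$ as their originals and their mates' copies, so no crossing edge arises. With all three edge types handled, the lemma follows; the main (mild) obstacle is just confirming that $\kappa$ respects lines, which is the union-of-lines property above.
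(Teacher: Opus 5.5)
Your overall strategy is the paper's: reduce to \cref{lem:partition_graph} and check that every edge of $G^*_\Phi$ not already in $G^*$ stays inside one member of $\kappa(\call)$. However, your list of nonzero off-diagonal entries of $\Phi^*$ is incomplete.

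The block $\Delta$ is indexed by $J' \cup (V \setminus J)$, not by $J'$ alone. It is $n \times n$ with one $2 \times 2$ block per original line, and $J'$ accounts for only $\abs{J}/2$ of these. So $\Delta$ also produces the edges $(v, \bar{v})$ for every original line $\qty{v, \bar{v}} \subseteq V \setminus J$. These are not edges of $G^*$: both endpoints lie in $V^* \setminus B^*$, and $G^*$ is bipartite between $B^*$ and $V^* \setminus B^*$. Your type~(1) argument and \cref{lem:partition_graph} therefore say nothing about them. Your claim that there are exactly three kinds of entries is false, and as written the case analysis misses a class of edges.

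The repair is immediate with the tool you already use for type~(3). Every member of $\call$ other than the $\qty{s}$ is a union of lines, so $v$ and $\bar{v}$ lie in the same member $X$ and hence in the same $\kappa(X)$. If either endpoint is in $u(\calw)$, the edge is deleted anyway.

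With this fourth edge type added, your proof coincides with the paper's. The paper lists exactly these additions to $G^*$: the copies $J'$, the $\Delta$-edges on the lines of $J' \cup (V \setminus J)$, and the identity edges between $J$ and $J'$. It then observes that none of them bridges two members of $\kappa(\call)$.
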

\begin{proof}
  The graph $G_\Phi^*$ is constructed from $G^*$ by adding the following:
  \begin{itemize}[nosep]
    \item the copies $J'$ of $J$,
    \item the edges given by $\Delta$ that connect the two vertices of each line in $J' \cup (V \setminus J)$, and
    \item the edges given by unit matrices between each element in $J$ and its copy in $J'$.
  \end{itemize}
  Then, the lemma follows from \cref{lem:partition_graph} and the fact that these new edges do not bridge different components of $\kappa(\call)$.
\end{proof}

Finally, we evaluate $\rank \Phi^*$ as follows. While the following discussion derives the bound in Eq.~\eqref{eq:tutte_like} by a linear-algebraic argument, it also admits a graph-theoretic proof as described in \cref{rem:graphic}.

Since removing one row or column reduces the rank by at most one, we have
\begin{equation}
  \label{eq:remove_cost}
  \rank \Phi^* - 2 \abs{u(\calw)} \le \rank \Phi^*[V^*_{\Phi} \setminus u(\calw)].
\end{equation}
After removing $u(\calw)$, \cref{lem:block_phi} ensures that the resulting matrix $\Phi^*[V^*_{\Phi} \setminus u(\calw)]$ has a block-diagonal structure. The partition of $V^*_{\Phi} \setminus u(\calw)$ can be explicitly given by
\begin{equation}
  \label{eq:partition_k_def}
  \calk = \qty{\qty{s} \mid s \in S} \cup \qty{\kappa(U)} \cup \qty{\kappa(W) \setminus \qty{u(W)} \mid W \in \calw}.
\end{equation}

Using the partition $\calk$, we have
\begin{equation}
  \rank \Phi^*[V^*_{\Phi} \setminus u(\calw)] = \sum_{X \in \calk} \rank \Phi^*[X].
\end{equation}
Since an \skewsymmetric{} matrix of odd size is singular, we have $\rank \Phi^*[X] \le \abs{X} - 1$ for odd-sized $X \in \calk$. Applying this bound to the summation yields
\begin{equation}
  \label{eq:removed_ub}
  \rank \Phi^*[V^*_{\Phi} \setminus u(\calw)] \le \abs{V^*_{\Phi} \setminus u(\calw)} - \rmodd(\calk),
\end{equation}
where $\rmodd(\calk)$ denotes the number of odd components in $\calk$.
Combining Eqs.~\eqref{eq:remove_cost} and \eqref{eq:removed_ub} gives
\begin{equation}
  \label{eq:tutte_like}
  \rank \Phi^* \le \abs{V^*_{\Phi}} + \abs{u(\calw)} - \rmodd(\calk).
\end{equation}
The description of $\calk$ given in Eq.~\eqref{eq:partition_k_def} implies that each component of $\calk$ is odd-sized except for $\kappa(U)$, which leads to $\rmodd(\calk) = \abs{S} + \abs{\calw}$. Substituting this equality, along with $\abs{u(\calw)} = \abs{\calw}$ and $\abs{V^*_\Phi} = n + r + \abs{T}$, into Eq.~\eqref{eq:tutte_like} yields
\begin{equation}
  \label{eq:graphic_ub}
  \rank \Phi^* \le \qty(n + r + \abs{T}) + \abs{\calw} - \qty(\abs{S} + \abs{\calw}) = n + r + \abs{T} - \abs{S}.
\end{equation}

Finally, combining Eqs.~\eqref{eq:phi_prime_nu} and \eqref{eq:graphic_ub} leads to
\begin{equation}
  \nu + n + \abs{T} = \rank \Phi^* \le n + r + \abs{T} - \abs{S},
\end{equation}
which shows $\nu \le r - \abs{S} = \abs{J}$. This completes the optimality proof.

\begin{remark}
  \label{rem:graphic}
  We can rephrase the derivation of Eq.~\eqref{eq:tutte_like} in a graph-theoretic manner as follows. The rank of an \skewsymmetric{} matrix is bounded above by twice the maximum size of a matching on its support graph~\cite[Remark~7.3.13]{Murota2010MatricesMatroids}. Thus, $\rank \Phi^*$ is at most twice the maximum size of a matching on $G^*_{\Phi}$. This can also be bounded as in Eq.~\eqref{eq:tutte_like} by using the easier direction of the Tutte--Berge formula and the partition given in \cref{lem:block_phi}. Notably, this alternative proof indicates that the graph structure $G^*$, without the numeric information $C^*$, suffices to certify optimality.
\end{remark}

\subsection{Implementation Details}
\label{sec:implementation_details}

In this section, we provide several implementation details that are crucial for achieving the desired time bounds. The objective of this section is to demonstrate that a single search for a shortest augmenting path completes in $\Order(nr)$ time.
\begin{lemma}
    Let $B = J \cup S$ be a base consisting of an independent parity set $J$ and a set $S$ of singletons. Given the fundamental cocircuit matrix $C$ with respect to $B$, a shortest augmenting path can be identified in $\Order(nr)$ time, or its non-existence can be determined within the same time bound.
\end{lemma}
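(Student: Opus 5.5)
Correctness (validity, minimality, optimality) is already settled in the preceding subsections, so what remains is to bound the running time of one call to \textsc{FindShortestAugmentingPath} by $\Order(nr)$. I would charge every step of the algorithm to one of three budgets: (a) scanning the entries of the fundamental cocircuit matrix, (b) the blossom bookkeeping (maintaining $K(x)$, finding $c$ and $d$ in \ref{item:detail_2}, and relabelling paths), and (c) the pivoting operations in \ref{item:detail_3} that insert dummy lines. Since $C^*$ has $\Order(r)$ rows, $\Order(n)$ columns, and at most $\Order(n)$ dummy lines are ever added, budget (a) is $\Order(nr)$ as long as each entry $C^*_{xy}$ is examined $\Order(1)$ times amortized.

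\textbf{Budget (a).} I would argue that $\grow(v)$ is called once per vertex of $\vl$ and scans only the row or column of $v$; this gives $\Order(n)$ calls of cost $\Order(r+n)$. I would not enumerate the pairs $(u,v)$ in $\setupper$ naively by nested loops. Instead, each labeled vertex keeps its adjacency list bucketed by the neighbour's level, built when the vertex is first scanned. Then $\setupper(k,\kmax)$ visits only actual edges whose level sum is $2k+1$, and \cref{lem:effectively_all_executed} ensures each edge is handled at most once. Edges created later by pivoting are, by \cref{prop:pivoting_new_edge} and \cref{cl:new_edge_rho_c}, incident to vertices labeled during that same $\blossom$. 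Such edges can be inserted into the buckets at creation time, and this cost is charged to the pivot.

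\textbf{Budget (c) is the main obstacle.} A full pivot around $\{b,t\}$ costs $\Order(nr)$, and there may be many blossoms. The plan is to avoid full pivots by using (BT2) and the structure shown in \cref{fig:blossom_pivoting}. The only entries that change are in $H\times(V^*\setminus H)$, plus scalings of the $b$ and $t$ rows and columns. Moreover, a changed entry $(x,y)$ requires $C^*_{xc}\neq0$ and $C^*_{gy}\neq0$; this is a rank-one update restricted to rows or columns of vertices in $H$ that have not yet been scanned. I would therefore perform the update lazily: store the rank-one corrections and apply them only when a row or column is actually read, either by the first $\grow$ or by building the adjacency bucket of a newly labeled vertex. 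Each vertex is newly labeled once, and every vertex of $H$ receives its corrections before being read. Using the laminar nesting of blossoms, a vertex accumulates corrections from all enclosing blossoms. I would bound the total work by showing that a row or column is corrected only for the dummy lines created while it is still unscanned, and that these corrections compose so that applying them costs $\Order(n)$ per vertex overall. This is the step I am least sure of. A fallback is to accept $\Order(n\cdot\#\text{blossoms})$ and bound the number of blossoms by $\Order(r)$ via (MB1), since each blossom absorbs an unlabeled vertex.

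\textbf{Budget (b).} For \ref{item:detail_2}, \cref{rem:kc_is_lca} lets me find $K(c)$ by walking up the contracted tree from $K(u)$ and $K(v)$ alternately, synchronized as in Micali--Vazirani. The walk is charged to blossoms that are absorbed, so it costs $\Order(n)$ in total. Maximal blossoms are maintained with union--find, giving near-linear cost. Search paths are never stored explicitly; they are represented via $\rho$ and blossom-forming edges, so the update in \ref{item:detail_4} is implicit. The output path is reconstructed once at the end in $\Order(n)$ time by recursion on the $L(x)$ decomposition of \cref{lem:ell_disjoint_decomp}. Summing the three budgets gives $\Order(nr)$.
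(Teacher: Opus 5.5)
\textbf{There is a genuine gap in budget (c).} This is exactly the step that decides whether the bound is $\Order(nr)$.

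\textbf{The lazy scheme and the fallback both fail as stated.} The lazy-correction amortization is unproven, as you acknowledge. The fallback rests on a false count. (MB1) gives one unlabeled vertex per maximal blossom, but the number of upper vertices, and hence of blossoms, can be $\Theta(n)$ rather than $\Order(r)$. Concretely, one call $\grow(s)$ on a singleton forms a separate degenerate blossom, with its own dummy line and pivot, for every line $\qty{u,\bar u}\subseteq V\setminus J$ whose two ends are both adjacent to $s$. Already with $r=2$ this gives $\Theta(n)$ pivots in one search, so $\Order(n)$ per pivot yields $\Order(n^2)$.

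Moreover, you never justify even $\Order(n)$ per pivot. The correction $C^*_{xy}\mapsto C^*_{xy}-C^*_{by}C^*_{xt}/C^*_{bt}$ touches the row (or column) of every $x\in H$ with $C^*_{xc}\neq 0$, and you give no bound on how many such $x$ exist.

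\textbf{Budget (a) has a related hole.} Each dummy line adds a row and a column to $C^*$. Unless tips and retired vertices are deleted from $C^*$ after every pivot, $C^*$ reaches $\Theta(n)$ rows, and scanning nonbase columns alone costs $\Omega(n^2)$. Your premise that $C^*$ has $\Order(r)$ rows silently assumes this pruning. Also, ``$\Order(n)$ calls of cost $\Order(r+n)$'' is $\Order(n^2)$ as written. The correct count is at most $r$ base vertices of $\vl$ scanning rows of length $\Order(n)$, and at most $n$ nonbase vertices scanning columns of length $\Order(r)$.

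\textbf{What is missing is a sparsity bound plus an asymmetric count.}
\begin{itemize}
\item By \cref{lem:subtree_rho,lem:search_order_grow} (the same reasoning as in \cref{cl:new_edge_rho_c}), a vertex of $H$ adjacent to $c$ must be a child $w$ of $c$ in $\st$. Since $H$ lies in at most two child subtrees of $c$, at most two vertices of $H$ are adjacent to $c$ (\cref{lem:at_most_two}).
\item Hence a pivot rewrites $\Order(1)$ rows of length $\Order(n)$ when the bud ends up in the base (a basic blossom). Otherwise it rewrites $\Order(1)$ columns of length $\Order(r)$, provided the tip's row is never materialized.
\item Basic blossoms that insert a dummy line number only $\Order(r)$. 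A degenerate one contains its own original line of $B$, and any other one merges two distinct original lines of $B$. Nonbasic ones number $\Order(n)$.
\item The total is $\Order(r)\cdot\Order(n)+\Order(n)\cdot\Order(r)=\Order(nr)$ (\cref{lem:blm_count,lem:pivoting_time}).
\end{itemize}

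\textbf{Two smaller points.} First, the plain nested loops of $\setupper$ already cost $\Order(nr)$. Each pair is examined at most once, and by parity of levels it always pairs a base vertex with a nonbase vertex. Your bucketing is therefore unnecessary, and it would also have to reproduce the processing order on which \cref{lem:effectively_all_executed}, and hence validity, relies. Second, $\Order(nr)$ find operations over $\Order(n)$ elements leave an inverse-Ackermann factor when $r\ll n$, so ``near-linear'' is not $\Order(nr)$. The paper removes this factor by running union--find only on the $\Order(r)$ lines contained in $B$ and attaching every other line to one of them by a pointer.
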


To this end, we begin by establishing two fundamental bounds. We first evaluate the number of nodes $\abs{V^*}$. Since blossoms form a laminar family and every non-minimal blossom merges two original lines from different blossoms, the number of blossoms is $\Order(n)$, which leads to $\abs{V^*} = \Order(n)$. Therefore, the variables related to the vertices, such as $\rho$ and $\vl$, can be managed in $\Order(n)$ space. 

We next analyze the nested loops of $\setupper$. Given that $g$ and $t$ are not added to $\vl$ in each insertion of a dummy line, $\vl$ has at most $r$ elements from $B^*$ and at most $n$ elements from $V^* \setminus B^*$. Consequently, the algorithm
examines at most $nr$ pairs of vertices in the nested loop.

\paragraph{Blossoms.}

To maintain blossoms, we utilize Tarjan's algorithm~\cite{Tarjan1975EfficiencyGood} combined with a trick for optimizing the time complexity. Given that $\Order(nr)$ $\find$ operations are performed over $\Order(n)$ vertices, a naive application results in a time bound of $\Order(nr \alpha(nr, n))$, where $\alpha$ denotes the inverse Ackermann function.

While Gabow and Stallmann~\cite{GabowS1986AugmentingPath} suggested using the incremental tree set union algorithm~\cite{GabowT1985LineartimeAlgorithm} to remove this $\alpha$ factor, their description lacks details and appears to struggle with blossoms sharing the same root.\footnote{Blossoms in their framework correspond to ours without dummy lines; hence, children of a vertex can be arbitrarily grouped into blossoms.} Furthermore, the RAM model is necessitated.

Alternatively, we present a simple trick that restricts the number of vertices to $\Order(r)$, thereby reducing the factor $\alpha(nr, n)$ to $\alpha(nr, r)$, which is constant.\footnote{The overall bound $\Order(nr)$ can also be seen from the fact that the total cost of path compression is $\Order(r \log r)$.} By excluding singletons and relabeling $g$ as $b$ upon the introduction of a dummy line, the problem becomes a disjoint set union (DSU) over the original lines $L$. We partition $L$ into $L_\rmb \coloneqq \qty{\ell \in L \mid \ell \subseteq B}$ and $L_\rmn = L \setminus L_\rmb$. To implement the DSU over $L$, we construct an internal DSU on $L_\rmb$ via Tarjan's algorithm and maintain a pointer $\pi \colon L_\rmn \to L_\rmb$. Note that this internal DSU handles $\Order(r)$ elements. Let $\union_\rmb$ and $\find_\rmb$ denote the internal operations on $L_\rmb$. Given that $\union$ always merges a line from $L_\rmb$ and another from $L_\rmn$, we can define the operation $\union(\ell_\rmb, \ell_\rmn)$ for $\ell_\rmb \in L_\rmb$ and $\ell_\rmn \in L_\rmn$ as follows.
\begin{itemize}
  \item If $\pi(\ell_\rmn)$ is defined, perform $\union_\rmb(\pi(\ell_\rmn),\ell_\rmb)$.
  \item Otherwise, set $\pi(\ell_\rmn) \coloneqq \ell_\rmb$.
\end{itemize}
Correspondingly, $\find(\ell)$ for $\ell \in L$ is supported as follows.
\begin{itemize}
  \item If $\ell \in L_\rmb$, perform $\find_\rmb(\ell)$.
  \item If $\ell \in L_\rmn$ and $\pi(\ell)$ is defined, execute $\find_\rmb(\pi(\ell))$.
  \item Otherwise, return $\ell$.
\end{itemize}
Consequently, this DSU over $L$ ensures that the $\Order(nr)$ $\find$ operations take only linear time in total.

\paragraph{Search Paths.}

We next describe the management of search paths. While path recovery is associated with both backtracking during $\blossom$ operations and identifying augmenting paths, the former does not strictly require explicit recovery.

Alternatively, we employ the DSU for blossoms. By maintaining a map from the results of $\find$ operations to the unique unlabeled vertices of the components, we can assume that each $\find$ operation retrieves such a unique unlabeled vertex. Consequently, the unlabeled vertices on $P(u|c)$ and $P(v|d)$ can be efficiently collected by iteratively applying $\find$ and $\rho$. This procedure effectively performs backtracking on a tree where each maximal blossom is contracted into a node. Given that each blossom is merged immediately after being traced, the total cost of this backtracking process can be bounded by the number of blossoms, $\Order(n)$.

Hence, we only need to recover a search path when an augmenting path is detected. The following recovery process is based on \cite{GabowS1986AugmentingPath,IzumiKY2025ForgettingAlternation}. For each vertex $x$ labeled by \ref{item:detail_5}, the algorithm associates the blossom-forming edge $(u, v)$ and records whether $x$ belongs to $P(u)$ or $P(v)$.

The recovery process takes $x$ and an optional $y$ to construct $yP(x|y)$, where $yP(x|y) = P(x)$ if $y$ is null. If $y$ is specified, it must have been unlabeled at the moment $x$ was labeled. The procedure is defined recursively as follows.
\begin{itemize}
  \item As the base case, directly compute the path segment $yP(x|y)$ if $x$ is a singleton.
  \item If $x$ is associated with a blossom-forming edge $(u, v)$, where $x \in P(u)$, recursively construct $xP(u|x)$ and $yP(v|y)$, which are then concatenated to form $yP(v|y)\overline{P(u|x)}x$.
  \item Otherwise, define $P$ as follows: if $\bar{x} = y$, set $P$ to an empty path; if $\bar{x} \neq y$, recursively construct $yP(\rho(\bar{x})|y)$ and set $P$ to this path. Finally, return $P \bar{x} x$.
\end{itemize}
The validity of this process is straightforward for newly defined search paths and is preserved under the insertion of dummy lines by \cref{cor:justify_line_4}. The procedure runs in time linear in the path length and thereby completes in $\Order(n)$ time.

Alternatively, we can efficiently recover a search path excluding its tips and retired vertices in $\Order(r)$ time by simulating the behavior of the Gabow--Stallmann algorithm. Essentially, this is equivalent to constructing another search tree, where dummy lines are not inserted, $g$ is replaced with $b$, and search paths are similarly maintained.

\paragraph{Pivoting.}

Next, we address the pivoting operations utilized during the search procedure. A naive implementation increases the height of $C^*$ from $r$ to $\Order(n)$, which inevitably leads to a total time complexity of $\Omega(n^2)$. Preventing the growth of $C^*$ is therefore crucial to achieving the desired bound of $\Order(nr)$. Our method first saves a copy of $C$ and then prunes $C^*$ upon pivoting by eliminating rows and columns associated with tips and retired vertices. Since these vertices are neither involved in the search nor designated as $c$ or $g$ in subsequent pivoting operations, their removal allows us to maintain $C^*$ as an $r \times n$ matrix throughout the search procedure.

Furthermore, each pivoting operation can be optimized so that the total time complexity is bounded by $\Order(nr)$. The analysis depends on whether the bud $b$ belongs to the base; we categorize blossoms as \textit{basic} if their bud is part of the base, and \textit{nonbasic} otherwise. The following lemmas are essential for obtaining the overall complexity.
\begin{lemma}
  \label{lem:blm_count}
  The number of basic (resp.~nonbasic) blossoms that insert a dummy line is $\Order(r)$ (resp.~$\Order(n)$).
\end{lemma}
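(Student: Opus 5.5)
The nonbasic half needs no new argument. The paper has already observed that blossoms form a laminar family, that every non-minimal blossom merges original lines from different blossoms, and hence that there are $\Order(n)$ blossoms in total. Every blossom inserting a dummy line is one of them, so the $\Order(n)$ bound for nonbasic blossoms is immediate. The content of the lemma is the $\Order(r)$ bound for basic blossoms, and my plan is a charging argument against base elements.

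First I would pin down what ``basic'' means in terms of \ref{item:detail_3}. In the case $c \in B^*$, $g \notin B^*$, the bud $b$ is a new row, so $b \in B^*$ before pivoting; after $B^* \gets B^* \symdiff \qty{b, t}$ it leaves the base. Symmetrically, in the case $c \notin B^*$, $g \in B^*$, the bud ends up in the base. So a blossom is basic exactly when the retired vertex $g$ lies in $B^*$, and then $b$ takes over $g$'s role as a base vertex. This matches the relabeling trick already used for the DSU: identify $b$ with $g$. Under this identification, every basic insertion retires a base vertex and replaces it by a new unlabeled base bud standing for the same original base line in $L_\rmb$, or for a singleton.

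The charge will be a potential on the base side, namely the number $\Phi_\rmb$ of maximal blossoms (in the line-level DSU over $L$) that contain a line of $L_\rmb$, plus the unabsorbed singletons. This quantity is at most $\abs{L_\rmb} + \abs{S} = \Order(r)$ and never increases. I would show that each basic insertion strictly decreases it. Since $g \in B^*$ is unlabeled and $P(u|c)$ is nonempty, the new blossom $H$ absorbs $K(g)$, whose representative (after relabeling) is a base line. I then need to show that $H$ also absorbs at least one further component that contains a base line. For this I would use \cref{rem:kc_is_lca} together with the bipartite alternation of search paths from \cref{prop:path_property}. Along $P(u|c) \cup P(v|d)$ the unlabeled vertices alternate between base and nonbase sides. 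The blossom-forming edge $(u, v)$ joins a base vertex to a nonbase vertex, so the traced segments cannot consist of $K(g)$ alone. In DSU terms the step is then a genuine $\union_\rmb$ of two $L_\rmb$-classes, or it absorbs a singleton, rather than a mere setting of the pointer $\pi$.

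This last step is where I expect the difficulty. The degenerate blossoms created inside $\grow$, and the cases where $P(v|d)$ is empty, may merge only $K(g)$ with one nonbase line. For those I would instead charge the insertion to the lower vertex of the grown line, which is a base vertex labeled exactly once by $\grow$. This still gives $\Order(r)$ such events, because $\vl$ contains at most $r$ base vertices, as noted above. Summing the two charges gives $\Order(r)$ basic blossoms.
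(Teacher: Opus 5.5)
Your skeleton is the paper's proof. The nonbasic bound follows from the $\Order(n)$ total. Basic blossoms created inside $\grow$ are charged to the distinct base line found there. Every other basic blossom is counted as a merge of two distinct classes that contain original base lines, and your potential just counts these merges.

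The gap is in the step you flag yourself. Your claim that the unlabeled vertices on $P(u|c) \cup P(v|d)$ alternate between base and nonbase does not hold in general. Only the lines alternate along a search path, and a path may cross an even number of lines inside one blossom before leaving it. More importantly, no parity argument can do the job. The configuration you must exclude is $P(v|d)=\emptyset$ (so $v=c\notin B^*$) with $P(u|c)\subseteq K(g)$ and $u\in B^*$. This configuration is parity-consistent, and it is exactly what happens in the degenerate blossom inside $\grow$. Your fallback charges ``the cases where $P(v|d)$ is empty'' to a grown line, but such a line exists only for $\blossom$ calls made within $\grow$. For calls from $\setupper$ you have not ruled these cases out.

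The missing ingredient is the search-order invariant of \cref{lem:search_order_grow}. It holds at every $\blossom$ call from $\setupper$ and is violated only transiently inside $\grow$. The argument goes as follows.
\begin{enumerate}
  \item By \cref{lem:subtree_valve}, $\rho(g)=c$ and $P(u|c)\subseteq \st(g)$.
  \item As in the proof of \cref{lem:at_most_two}, every upper vertex $w\in\st(\bar g)$ has $\rho(w)>c$ by \cref{lem:subtree_rho}. So (SO1) shows that $c$ is adjacent to no vertex of $\st(\bar g)\setminus\{\bar g\}$.
  \item Since $g$ is unlabeled and hence active, (SO2) shows that $c$ is not adjacent to $\bar g$.
  \item The labeled endpoint $u\in\st(g)\setminus\{g\}$ is therefore not adjacent to $c$. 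Hence $v\neq c$, i.e., $P(v|d)\neq\emptyset$.
  \item The first vertex $h$ of $P(v|d)$ is adjacent to $c\notin B^*$, so $h\in B^*$.
  \item $K(h)\neq K(g)$ by the choice of $c$ as the last vertex of $P(u)$ whose maximal blossom meets $P(v)$.
\end{enumerate}
Thus $H$ merges two distinct classes, each containing an original base line under your $g\mapsto b$ relabeling. This is the strict decrease of your potential that you need. The separate charge is then required only for blossoms formed inside $\grow$. There $g$ is the grown vertex, so basicness means the grown line itself lies in $B$.
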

\begin{proof}
  As the total number of blossoms is $\Order(n)$, it suffices to show the bound of $\Order(r)$ on the number of basic blossoms that insert a dummy line. Since each basic blossom created by a $\blossom$ operation within $\grow$ contains a single original line from $B$, the number of such blossoms is $\Order(r)$. The number of remaining basic blossoms is likewise $\Order(r)$, because each merges two distinct original lines from $B$ when creating a dummy line.
\end{proof}
\begin{lemma}
  \label{lem:at_most_two}
  At the start of \ref{item:detail_3}, there are at most two vertices in $H$ that are adjacent to $c$.
\end{lemma}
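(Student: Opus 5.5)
The plan is to show that every vertex $w \in H$ adjacent to $c$ at the start of \ref{item:detail_3} is the unique unlabeled vertex of a maximal blossom $K(w)$ that is a child of $K(c)$ in the contracted tree of \cref{rem:kc_is_lca}. Only two such children can lie in $H$, namely those entered by $P(u|c)$ and $P(v|d)$. Throughout we are in the case $c = d$, so $c \notin H$ and $H$ is the union of the maximal blossoms met by $P(u|c)$ and $P(v|d)$.

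First, fix $w \in H$ with $(w, c) \in E^*$, and let $w^+$ be the upper vertex of $\qty{w, \bar{w}}$. I will follow the first part of the proof of \cref{cl:new_edge_rho_c}.
\begin{itemize}
\item Since $w$ is adjacent to $c$, (SO1) of \cref{lem:search_order_grow} gives $c \ge \rho(w^+)$.
\item Since $w^+ \in H$, \cref{rem:kc_is_lca} makes $c$ a proper ancestor of $w^+$ in $\st$. Taking the child $x$ of $c$ with $w^+ \in \st(x)$, \cref{lem:subtree_rho} gives $\rho(w^+) \ge \rho(x) = c$.
\end{itemize}
Hence $\rho(w^+) = c$ and $w^+$ is a child of $c$ in $\st$.

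Next I show that $w = w^+$ and that $w$ is active and unlabeled. Because $K(w) \subseteq H$ while $c \notin H$, we have $K(w) \neq K(c)$. By (BT2), a tip or retired vertex is not adjacent to anything outside its blossom, so both $w$ and $c$ must be active. If $w^+$ is active, then (SO2) forbids $c = \rho(w^+)$ from being adjacent to $\bar{w}^+$, so $w = w^+$.

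The delicate subcase is when $w^+$ is retired and $w$ is its lower mate. I plan to exclude it as follows. A retired $w^+$ arising from a $\grow$ call is adjacent only to its new tip, and its mate's tree parent is that tip, not $c$. A retired $w^+$ arising from $\blossom$ has been replaced by a bud, so $\rho$-children of $c$ are again active upper vertices. I expect this bookkeeping on retired vertices to be the main obstacle, and I would settle it by appealing to the argument already given for \cref{cl:new_edge_rho_c}.

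Finally, $w$ is a child of $c$ lying in a different maximal blossom. By (MB1)--(MB3), $w$ is the unique unlabeled vertex of $K(w)$, and $K(w)$ is a child of $K(c)$ in the contracted tree. By \cref{rem:kc_is_lca}, the blossoms forming $H$ are exactly those on the two contracted-tree paths from $K(c)$ down to $K(u)$ and to $K(v)$, excluding $K(c)$ itself. Only the first blossom on each path, namely $K(u')$ for the first vertex $u'$ of $P(u|c)$ and $K(v')$ for the first vertex $v'$ of $P(v|d)$ if it exists, is a child of $K(c)$. Each contributes a single unlabeled vertex by (MB1), so at most the two vertices $u'$ and $v'$ of $H$ can be adjacent to $c$.
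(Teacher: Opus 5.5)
Your argument is sound for a $\blossom$ call made from $\setupper$. It rests on the same facts as the paper's proof, namely (SO1), (SO2) and \cref{lem:subtree_rho}. The paper simply runs them from the other side: $H$ lies in the subtrees of the at most two children of $c$ entered by $P(u|c)$ and $P(v|d)$, and within each such subtree only its root can be adjacent to $c$.

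The gap is that you never separate out a $\blossom$ invoked inside $\grow$ (Line~\ref{line:degblm} of \cref{alg:grow}), which also reaches \ref{item:detail_3}. In that case $c = d = v$, $H = \{u, \bar{u}\}$, $P(v|d) = \emptyset$, and $c$ is adjacent to both $u$ and $\bar{u}$. Thus (SO2) is temporarily violated for $x = u$, as the paper notes just before \cref{cl:new_edge_rho_c}. Your step ``(SO2) forbids $c = \rho(w^+)$ from being adjacent to $\overline{w^+}$'' therefore fails for $w = \bar{u}$. Your final count would then give at most one neighbor, namely $u' = u$, which is false. The lemma is trivial in this case because $|H| = 2$, and the paper's proof opens by disposing of it exactly this way. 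You need the same case split.

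Separately, your retired subcase needs no bookkeeping. Deferring to the proof of \cref{cl:new_edge_rho_c} would not settle it, since that proof also applies (SO2) after showing only that $w$, not $w^+$, is active. Here is the direct fix. Once you have $\rho(w^+) = c$, the strict inequality in \cref{lem:subtree_rho} forces $w^+$ to be the child $x$ itself. That child is the first vertex of $P(u|c)$ or of $P(v|d)$, because $H$ lies in the subtrees of those two vertices. By (MB2) this vertex is the unlabeled vertex of its maximal blossom. Retired vertices are always labeled, so $w^+$ is active and (SO2) applies directly. Your sketch that a retired vertex's $\rho$ is always a tip is true, but it is not needed.
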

\begin{proof}
  As the claim holds trivially if the $\blossom$ operation is called within a $\grow$ operation, we assume otherwise. By \cref{prop:kx_properties,rem:kc_is_lca}, $H$ is contained within at most two child subtrees of $c$, each containing $P(u|c)$ or $P(v|d)$. For every such subtree $\st(w)$, where $w$ is a child of $c$, \cref{lem:subtree_rho,lem:search_order_grow} together ensure that $w$ is the unique vertex in $\st(w)$ that is adjacent to $c$. Consequently, $c$ has at most two neighbors within these subtrees, and thus in $H$.
\end{proof}
\begin{lemma}
  \label{lem:pivoting_time}
  A pivoting operation forming a basic (resp.~nonbasic) blossom can be implemented in $\Order(n)$ (resp.~$\Order(r)$) time, provided that the tip and the retired vertex are removed from $C^*$ immediately after every operation.
\end{lemma}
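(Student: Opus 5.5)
The plan is to write the pivot around the dummy line $\qty{b, t}$ as a rank-one update and to show that one of its two factors has $\Order(1)$ nonzero entries. By \cref{lem:at_most_two}, that factor is the one copied from $c$. We treat the two cases of \ref{item:detail_3} separately and identify each with the basic or nonbasic case according to where the bud $b$ lies in $B^*$ after the update $B^* \gets B^* \symdiff \qty{b, t}$.

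Since $\qty{b, t}$ is a $1 \times 1$ pivot, Eq.~\eqref{eq:pivoting} specializes to the following. The pivot entry is replaced by its inverse. The pivot row and pivot column are scaled by $\pm 1/C^*_{bt}$ (resp.~$C^*_{tb}$). Every other entry is updated as $C^*_{xy} \gets C^*_{xy} - C^*_{x,\mathrm{col}} C^*_{\mathrm{row},y} / C^*_{\mathrm{piv}}$, as already used in the proof of \cref{prop:bt}. Hence only entries $(x, y)$ with both $C^*_{x,\mathrm{col}} \neq 0$ and $C^*_{\mathrm{row},y} \neq 0$ change. The cost is therefore bounded by the cost of building the new row and column plus the product of the supports of the pivot row and column.

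First case: $c \notin B^*$ and $g \in B^*$. Here $b$ is a new column, $t$ is a new row, and after pivoting $b$ enters the base, so the blossom is basic. The new column is $C^*_{xb} = C^*_{xc}$ for $x \in H$ and $0$ otherwise. By \cref{lem:at_most_two}, it has at most two nonzero entries, and this remains trivially true for a $\blossom$ call inside $\grow$. We locate these entries by scanning column $c$ restricted to $H$ in $\Order(r)$ time, or directly as the children of $c$ in $\st$. The new row $t$ copies row $g$ outside $H$, which takes $\Order(n)$ time. The rank-one update then touches at most $2 \times \Order(n)$ entries, and rescaling row $t$ and column $b$ costs $\Order(n + r)$. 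Deleting the tip row $t$ and the retired row $g$ afterwards only drops two rows, so the total is $\Order(n)$.

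Second case: $c \in B^*$ and $g \notin B^*$. Here $b$ is a new row, and after pivoting $b$ leaves the base, so the blossom is nonbasic. Symmetrically, row $b$ has $C^*_{by} = C^*_{cy}$ only for $y \in H$, again at most two nonzeros by \cref{lem:at_most_two}. Column $t$ copies column $g$ outside $H$ in $\Order(r)$ time. The update touches $\Order(r) \times 2$ entries. Discarding the tip column $t$ and the retired column $g$ leaves $C^*$ of size $r \times n$, for a total of $\Order(r)$. Rows of $C^*$ other than those involved are never rewritten, so no hidden $\Order(n)$ cost appears.

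The main obstacle is to justify that pruning $t$ and $g$ right after each pivot is harmless. We must check that no later pivot uses them as $c$ or $g$, and that no search step reads their entries. For this we intend to invoke (BT2) together with the fact that tips and retired vertices are excluded from $\vl$, so they are never scanned. We must also check that the sparse description of the $c$-copy is exact on the current, pruned $C^*$. A second point to verify is that \cref{lem:at_most_two} bounds the neighbors of $c$ in $H$ with respect to the matrix at the start of \ref{item:detail_3}, which is precisely the matrix used to define the copied row or column.
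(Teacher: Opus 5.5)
Your route is the paper's: a $1\times 1$ pivot is a rank-one update, and \cref{lem:at_most_two} leaves the factor copied from $c$ with at most two off-pivot nonzeros. Hence only three rows (basic) or three columns (nonbasic) are touched. However, your bookkeeping ignores the index swap in Eq.~\eqref{eq:pivoting}. After pivoting around $\{b,t\}$, the pivot row is re-indexed by the element entering the base, and the pivot column by the element leaving it.

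In the nonbasic case ($c \in B^*$, $g \notin B^*$), the row you created as $b$ (the copy of row $c$ on $H$) becomes the \emph{tip} row $t$. The column you created as $t$ (the copy of column $g$ off $H$) becomes the \emph{bud} column $b$. In the basic case, the row created as $t$ becomes the bud row, and the column created as $b$ becomes the tip column. So ``deleting the tip row $t$ and the retired row $g$'' and ``discarding the tip column $t$ and the retired column $g$'' both remove the bud and keep the tip. The size count also fails: dropping two rows, or two columns, from the $(r+1)\times(n+1)$ matrix does not give back $r\times n$. Each operation must delete one row and one column.

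This matters exactly where the paper's proof does its work. In your version, the sparse $c$-copy row survives as the bud's row in the dense $r\times n$ array. Writing it there costs $\Order(n)$, which destroys the $\Order(r)$ bound for nonbasic blossoms. Your remark that rows ``other than those involved'' are never rewritten does not cover this row.

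The paper's point is that this row becomes the tip row and is deleted right after the pivot, so it is never materialized. You only need its at most three nonzeros, and these must be read off the tree (the first vertices of $P(u|c)$ and $P(v|d)$). Scanning row $c$ over $H$, which your ``symmetrically'' suggests, itself costs $\Order(n)$. What survives is the bud column, a scaled copy of column $g$ off $H$, which takes over $g$'s slot in $\Order(r)$ time. Dually, the bud row in the basic case is a scaled copy of row $g$ and costs $\Order(n)$.

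Once you track the swap, your argument goes through. Also, the removal of $t$ and $g$ is a hypothesis of the lemma, so the harmlessness you list as the main obstacle need not be proved here.
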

\begin{proof}
  Since there are at most two vertices in $H$ that are adjacent to $c$ by \cref{lem:at_most_two}, the pivoting operation only involves three rows (resp.~columns) when forming a basic (resp.~nonbasic) blossom. Hence, the time bound of $\Order(n)$ follows directly for a basic blossom. To attain the $\Order(r)$ time bound for nonbasic blossoms, it suffices to avoid explicitly appending a new row to $C^*$, which corresponds to the tip after the operation. Given that this row is only used for the pivoting operation and is subsequently deleted, we can omit explicitly inserting this row, reducing the time complexity from $\Order(n)$ to $\Order(r)$.
\end{proof}
Given the bounds in \cref{lem:blm_count,lem:pivoting_time}, we obtain the claimed bound of $\Order(nr)$ for all pivoting operations during the search.

\section{Overall Complexity}
\label{sec:overall_time_bound}

Since the search procedure completes in $\Order(nr)$ time, the total time complexity of each augmentation step is dominated by the pivoting operation along an augmenting path. For a shortest augmenting path of length $\ell$, the time complexity is $\Order(n r \ell)$ or $\Order(n r \ell^{\omega - 2})$, depending on the application of fast matrix multiplication. Given the bound on $\ell$ from \cref{cor:lmp_sap_length}, we can evaluate the total time complexity as follows.

\ourtimebound*

\begin{proof}
    Recall that \cref{cor:lmp_sap_length} ensures $\ell \le 2 \abs{J} / (\nu - \abs{J}) + 1 \le 2\nu / (\nu - \abs{J})$, where $J$ is the current independent parity set and $\nu$ denotes the maximum size of an independent parity set.
   
    We first consider the analysis without fast matrix multiplication. By summing the upper bound of $\ell$ across $\abs{J} = 0, 2, \dots, \nu - 2$, the total length of the shortest augmenting paths can be bounded by $\Order(\nu \log \nu)$ via the harmonic series. Hence, the total complexity of the algorithm can be evaluated as $\Order(n r \nu \log \nu)$, or more simply, $\Order(n r^2 \log r)$.

    We then consider the application of fast matrix multiplication. Setting $j = (\nu - \abs{J}) / 2$, we can bound the total running time by $\Order(\sum_{j = 1}^{\nu / 2} n r (\nu / j)^{\omega-2})$, which can be evaluated as $\Order(n r \nu)$ via the following integration:
   \begin{equation}
       \sum_{j = 1}^{\nu / 2} \qty(\frac{\nu}{j})^{\omega-2}
       \le 
       \nu^{\omega-2} + \int_{1}^{\nu / 2} \qty(\frac{\nu}{x})^{\omega-2} \dd{x}
       = \frac{(\nu / 2^{3-\omega}) - (\omega - 2) \nu^{\omega - 2}}{3 - \omega}
       = \Order(\nu).
   \end{equation}
   Hence, we obtain the bound of $\Order(n r \nu)$, and hence of $\Order(n r^2)$.
\end{proof}

The proof of \cref{thm:our_bound} implies the following refined bounds.

\ourrefinedtimebound*

The former bound in \cref{thm:our_bound} generalizes Cunningham's time bound for the linear matroid intersection algorithm~\cite{Cunningham1986ImprovedBounds}. Although Cunningham's analysis did not account for fast matrix multiplication, a counterpart to the latter bound can also be derived for this intersection algorithm, which shaves off the log factor. Note that any exponent $\omega \in [2, 3)$, including the original value $\omega = \log_2 7$ by Strassen~\cite{Strassen1969GaussianElimination}, yields the same time bounds up to a constant factor.

\section{Conclusion}
\label{chap:conclude}

In this paper, we have provided a shortest augmenting path algorithm for linear matroid parity. This work improves upon the fastest known deterministic algorithm of Gabow and Stallmann~\cite{GabowS1986AugmentingPath} by extending the efficacy of the shortest augmenting path framework for linear matroid intersection and graph matching.

In \cref{chap:sap_length}, we have derived a bound on the lengths of shortest augmenting paths for linear matroid parity, which generalizes Cunningham's bound on the lengths of shortest augmenting paths for linear matroid intersection. As a corollary, we have verified in \cref{sec:local_search} that the local search algorithm, studied by Lee, Sviridenko, and Vondr{\'a}k~\cite{LeeSV2013MatroidMatching} for general matroid parity, requires a much smaller neighborhood to guarantee a PTAS when applied to linear matroid parity.

In \cref{chap:sap_algo_lmp,sec:overall_time_bound}, we have developed a shortest augmenting path algorithm for linear matroid parity, established its correctness, and analyzed its time complexity. Consequently, we have demonstrated that the problem can be deterministically solved in $\Order(n r^2 \log r)$ or $\Order(n r^2)$ time, depending on the use of fast matrix multiplication. These time bounds improve on the previous best deterministic bounds of $\Order(n r^3)$~\cite{GabowS1986AugmentingPath,Orlin2008FastSimpler} and $\Order(n r^\omega)$~\cite{GabowS1986AugmentingPath}, and narrow the gap with the best randomized bounds of $\Order(n r^2)$ and $\Order(n r^{\omega - 1})$~\cite{CheungLL2014AlgebraicAlgorithms}.

A natural research direction is to examine whether the shortest augmenting path framework generalizes to the linear delta-matroid parity algorithm~\cite{GeelenIM2003LinearDeltamatroid}. Since our verification process builds upon their proof techniques, such an extension appears highly plausible. However, removing singletons and generalizing the analysis of the shortest augmenting path lengths may present unforeseen challenges.

Given that the time complexity of our algorithm aligns with that of Cunningham~\cite{Cunningham1986ImprovedBounds} for linear matroid intersection, it would be intriguing to explore computational improvements or weighted extensions by leveraging existing techniques for linear matroid intersection. Specifically, Gabow and Xu~\cite{GabowX1996EfficientTheoretic} developed more efficient linear matroid intersection algorithms for unweighted and integer-weighted cases by extending Cunningham's observations and utilizing the weight-splitting theorem by Frank~\cite{Frank1981WeightedMatroid}. Generalizing this approach to linear matroid parity using our framework and the primal-dual relation for the weighted problem~\cite{IwataK2022WeightedLinear} is a compelling future direction.

\section*{Acknowledgements}

The authors thank Takuro Fukunaga, Jim Geelen, and Vijay Vazirani 
for their suggestions and encouragements.
This work was supported by JSPS KAKENHI Grant Number JP26KJ0928 and JST ASPIRE Grant Number JPMJAP2302. 

\bibliography{reference.bib}
\bibliographystyle{abbrvnat}

\end{document}